\setlist{itemsep=0em, topsep=0em, parsep=0em}
\setlist[enumerate]{label=(\alph*)}
\DeclareFontFamily{U}{min}{} \DeclareFontShape{U}{min}{m}{n}{<-> udmj30}{}
\newcommand{\hirayo}{\text{\usefont{U}{min}{m}{n}\symbol{'210}}}
\newcommand{\hirata}{\text{\usefont{U}{min}{m}{n}\symbol{'137}}}
\newcommand{\repl}{\mathrm{repl}}
\newcommand{\img}{\mathrm{img}}
\newcommand{\optic}[2]{\langle #1 \mid #2 \rangle}
\newcommand{\trv}{\operatorname{trv}}
\newcommand{\Optic}{\mathbf{Optic}}
\newcommand{\Sets}{\mathbf{Sets}}
\newcommand{\Ran}{\mathsf{Ran}}
\newcommand{\Prof}{\mathbf{Prof}}
\newcommand{\C}{\mathbf{C}}
\newcommand{\D}{\mathbf{D}}
\newcommand{\M}{\mathbf{M}}
\newcommand{\N}{\mathbf{N}}
\newcommand{\mact}{\underline{m}}
\newcommand{\nact}{\underline{n}}
\newcommand{\iact}{\underline{i}}
\newcommand\id{\mathrm{id}}
\newcommand\Nat{\mathrm{Nat}}
\newcommand\tonat{\Rightarrow}
\newcommand\V{{\cal{V}}}
\theoremstyle{plain}
\newtheorem{theorem}{Theorem}
\newtheorem{proposition}[theorem]{Proposition}
\newtheorem{lemma}[theorem]{Lemma}
\newtheorem{conjecture}[theorem]{Conjecture}
\newtheorem{corollary}[theorem]{Corollary}
\theoremstyle{definition}
\newtheorem{definition}[theorem]{Definition}
\theoremstyle{remark}
\newtheorem{remark}[theorem]{Remark}
\newtheorem{exampleth}[theorem]{Example}
\theoremstyle:=definition,remark,plain\do{\expandafter\g@addto@macro\csname th@\theoremstyle\endcsname{\addtolength\thm@preskip\parskip}}\endgroup
\numberwithin{theorem}{section}
\author{Mario Román}
\date{September 2, 2019}
\title{Profunctor optics and traversals}
\begin{document}

\maketitle
\tableofcontents

\begin{abstract}
\emph{Optics} are bidirectional accessors of data structures.
They provide a powerful abstraction of many common data transformations. 
This abstraction is compositional thanks to a representation
in terms of profunctors endowed with an algebraic structure called
\emph{Tambara module} \cite{milewski17}.

There exists a general definition of optic \cite{boisseau17,riley18} in
terms of coends that, after some elementary application of the Yoneda
lemma, particularizes in each one of the basic optics.  \emph{Traversals}
used to be the exception; we show an elementary derivation of
traversals and discuss some other new derivations for optics.  We
relate our characterization of traversals to the previous ones showing
that the coalgebras of a comonad that represents and split into shape
and contents are traversable functors.

The representation of optics in terms of profunctors has many
different proofs in the literature; we discuss two ways of proving it,
generalizing both to the case of mixed optics for an arbitrary action.
Categories of optics can be seen as Eilenberg-Moore categories for a
monad described by Pastro and Street \cite{pastro08}. This gives us two different
approaches to composition between profunctor optics of different \emph{families}:
using distributive laws between the monads defining them, and using coproducts of
monads. The second one is the one implicitly used in Haskell
programming; but we show that a refinement of the notion of optic is
required in order to model it faithfully.

We provide experimental implementations of a library of optics in Haskell and
partial Agda formalizations of the profunctor representation theorem.
\end{abstract}

\begin{acknowledgements}
I am first grateful to my dissertation supervisor 
\textsf{Jeremy Gibbons} for generously taking time to make this dissertation
possible, for the advice, patience, ideas, pointers to the
literature, and all the suggestions that made it
readable.  His work both on optics \cite{boisseau18,pickering17} and traversable functors 
\cite{gibbons09} was many times, and sometimes in unexpected ways, an inspiration for this text.

My study of optics has been part of a joint project with a wonderful
group of people. I want to thank them all for their ideas,
encouragement and kindness.  Coordinating conversations on optics
over four different timezones was not easy, and they put a lot of
effort into making that happen. \textsf{Bartosz Milewski} crafted a
theory \cite{milewski17}, posed a nice problem, and then wholeheartedly
shared with us all his intuitions and insights.  \textsf{Derek Elkins}
put time, wisdom and patience into guiding me on how to transform
ideas into actual mathematics.
\textsf{Bryce Clarke} and \textsf{Emily Pillmore} were fantastic
colleagues during those days, and discussing both category theory and
the intricacies of Haskell with them was particularly stimulating.  I
learned the art of \emph{coend-fu} through the lucid teachings of
\textsf{Fosco Loregian} \cite{loregian15}, who also took the time to
clarify many of my doubts. 
\textsf{Daniel Cicala}, 
\textsf{Jules Hedges} and 
\textsf{Destiny Chen} organized the ACT School and made
this collaboration possible in the first place.  Thanks also go to 
\textsf{Guillaume Boisseau}, 
\textsf{Fatimah Ahmadi}, \textsf{Giovanni De Felice}, 
\textsf{Dan Marsden}, \textsf{Bob Coecke}, \textsf{David Reutter},
\textsf{David J. Myers}, \textsf{Brendan Fong}, \textsf{Jamie Vicary}, and 
\textsf{Carmen Constantin} for their engaging teachings on category theory and/or
conversations during the writing of this dissertation.  The text relies on
\textsf{Dan Marsden}'s macros for string diagrams.  I thank \textsf{Sam Staton} and
\textsf{Paul Blain Levy} for corrections and comments on the earlier versions of
this dissertation.

On a more personal note, friends and family in Granada have been
supporting me for years.
Besides them, I would like to thank
\textsf{Andrés},
\textsf{Daphne},
\textsf{Joel},
\textsf{Jean} and
\textsf{Lewis}, who
made enjoyable my day-to-day life during the MFoCS. 
Special thanks go to
\textsf{Elena} for all her affection and for making me recover the
passion for mathematics and category theory.
\textsf{A.} 
\textsf{P.} 
\textsf{A.} 
\textsf{E.} and her made the final 
days of writing this dissertation cozy. Grazie mille.
\end{acknowledgements}

\chapter{Introduction}
\label{sec:org099b9b5}
\section{Background and scope}
\label{sec:orge445c41}
In programming, \textbf{optics} are a compositional representation of data
accessors provided by libraries such as Kmett's \cite{kmett15}.  
Code needs to
deal with complex and nested data structures: \emph{records} with multiple
fields, \emph{union types} with multiple alternative contents, \emph{containers}
with lists of elements inside, and many other similar examples. In all
of these cases, we want to be able to focus on some internal part and
access it \emph{bidirectionally}, that is, we want to be able to read it, but
also to update it with new contents, propagating the changes to the
bigger data structure.  The most obvious pattern of this kind is a
\emph{lens}: a data accessor for the particular subfield of some data
structure.

\subsection{Lenses}
\label{sec:orgecfd106}

\begin{definition}
In a cartesian category \(\C\), a \textbf{lens} from a pair of objects \((s,t)\) with \emph{focus}
in a pair of objects \((a,b)\) is given by two morphisms: a \emph{view} \(\C(s , a)\), representing
the reading operation, and an \emph{update} \(\C(s \times b , t)\), representing the writing operation.
\[\mathbf{Lens}
\left( \begin{pmatrix} a \\ b \end{pmatrix}, \begin{pmatrix} s \\ t \end{pmatrix} \right) 
= \C(s , a) \times \C(s \times b , t).\]
\end{definition}

\begin{figure}[htbp]
\centering
\includegraphics[width=5cm]{./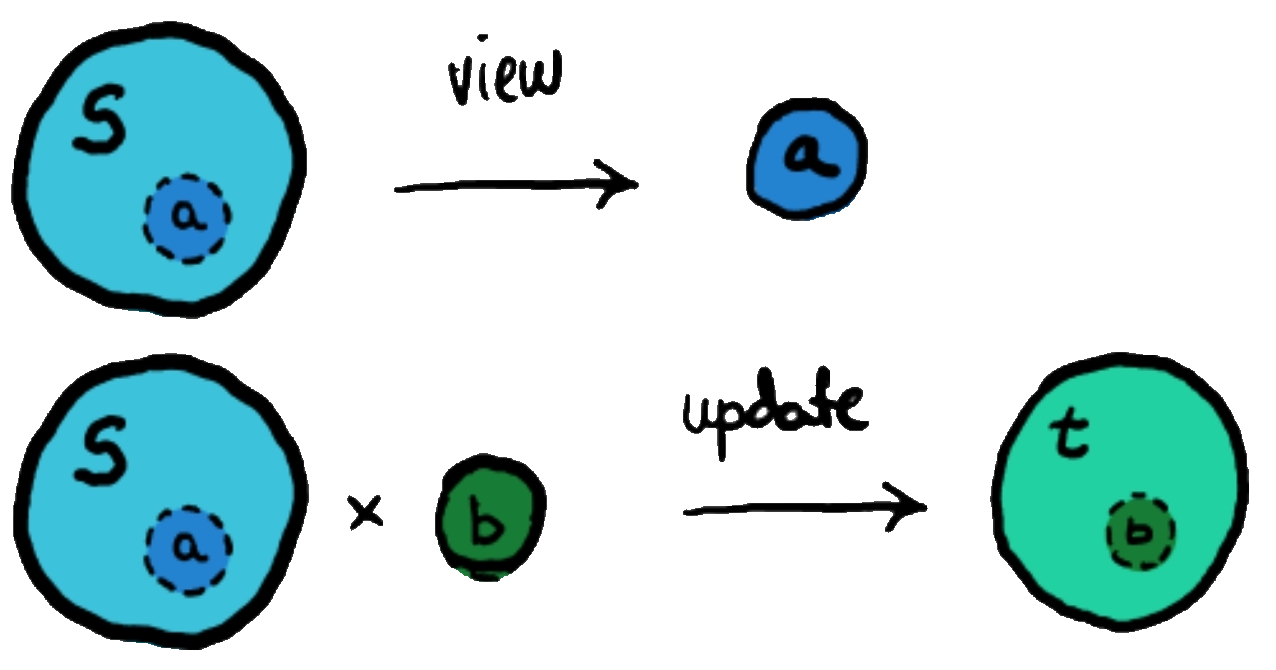}
\caption{\label{fig:org34ce678}
In \(\Sets\), a big data structure \(s\) contains a subfield \(a\) and the first function \((s \to a)\) provides a way of accessing it. The second function \((s \times b \to t)\) plugs something of type \(b\) in the place of \(a\), updating \(s\) and getting a new complex structure of type \(t\).}
\end{figure}

\begin{exampleth}
\label{org729dff4}
Suppose we have a data structure representing a \emph{postal address}, with the
\emph{ZIP code} being a subfield. We want to be able to view the ZIP code inside
an address and to modify it inside the bigger address.  This can be encoded
as an element of \(\mathbf{Lens}(( \mathbf{Postal} , \mathbf{Postal} ) , ( \mathbf{Zip} , \mathbf{Zip}))\), which is in turn
given by two functions with the following signature.
\[\begin{aligned}
\mathrm{viewZipCode} \colon & \mathbf{Postal} \to \mathbf{Zip} \\
\mathrm{updateZipCode} \colon & \mathbf{Postal} \times \mathbf{Zip} \to \mathbf{Postal}
\end{aligned}\]
\end{exampleth}

\begin{remark}
This definition leaves open the possibility of updating a subfield of type \(a\) in a 
bigger data structure \(s\) with an element of a different type \(b\), yielding
a structure of a new type \(t\).  The reader not interested in changing
types can just consider optics from \((s,s)\) to \((a,a)\); these are sometimes called
\emph{monomorphic lenses} (for instance, in \cite{hedges18l}) or \emph{simple lenses} in \cite{kmett15}.
\end{remark}

Lenses were first described in categorical terms in Oles' thesis
\cite{oles82}. They are considered as data accessors in
\cite{pickering17}. They have been also used in \emph{compositional game
theory}, as described in \cite{ghani18}, and in \emph{supervised learning}, as
described in \cite{fong19}.

\subsection{Prisms}
\label{sec:org6c6d74c}
We can consider, however, other patterns for data accessing:
\emph{lenses are optics, but not all} \emph{optics are lenses!} \emph{Prisms} give
a second data accessing pattern whose focusing deals with alternatives.

\begin{definition}
In a cocartesian category \(\C\), a \textbf{prism} from a pair of sets \((s,t)\) with
focus in \((a,b)\) is given by two functions: a \emph{match} \(s \to t + a\) representing
pattern-matching on \(a\), and a \emph{build} \(b \to t\) representing the construction
of the data structure from one of the alternatives.
\[\mathbf{Prism}
\left( \begin{pmatrix} a \\ b \end{pmatrix}, \begin{pmatrix} s \\ t \end{pmatrix} \right) 
= \C(s , t + a) \times \C(b , t).\]
\end{definition}

\begin{figure}[htbp]
\centering
\includegraphics[width=4.8cm]{./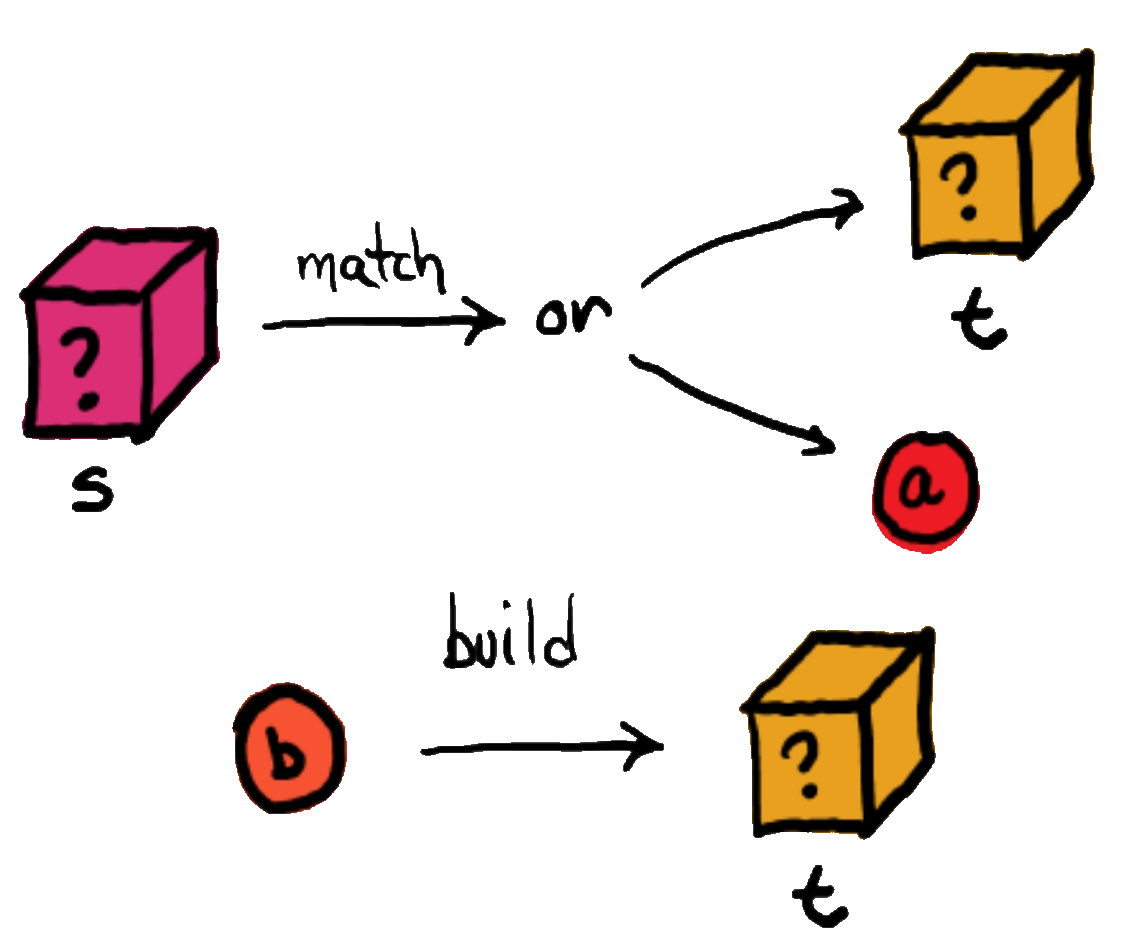}
\caption{\label{fig:org820f944}
An abstract data structure \(s\) could have different internal structures. In particular, we can try to see if it is of type \(a\) and, in case of failure, return something of type \(t\).  The build function takes some concrete data structure \(b\) to some more abstract structure \(t\).}
\end{figure}

\begin{exampleth}
\label{org47c3ee8}
Suppose we have a string we want to parse as an \emph{address}. That 
address could be in particular a postal address like in Example \ref{org729dff4}
and in that case we want to return a specific type capturing this.
That is, we can build an address from a postal address; and we can try to fit an address into
a postal address.  This can be encoded as an element of
\(\mathbf{Prism}((\mathbf{Postal} ,\mathbf{Postal} ) , (\mathbf{String} , \mathbf{Address}))\), which is, in turn,
given by two functions with the following signature.
\[\begin{aligned}
\mathrm{matchAddress} \colon & \mathbf{String} \to \mathbf{Address} + \mathbf{Postal} \\
\mathrm{buildAddress} \colon & \mathbf{Postal} \to \mathbf{Address}
\end{aligned}\]
\end{exampleth}

\begin{remark}
Prisms are dual to lenses.  A prism is precisely a lens in the
opposite category.  As we will see later, they are to the coproduct
what lenses are to the product.
\end{remark}

\subsection{Traversals}
\label{sec:orgc2fc4c8}
We can go further: optics do not necessarily need to deal with a single
focus. Traversals are optics that access a list of foci at the same time.

\begin{definition}
In a cartesian closed category with natural number-indexed limits and colimits \(\C\), a \textbf{traversal}
from a pair of sets \((s,t)\) with focus in \((a,b)\) is given by a single function: an \emph{extract}
\(\C(s , \sum\nolimits_{n \in \mathbb{N}} a^n \times (b^n \to t))\) that represents extracting a list of some length
and a function that takes a list of the same size to create a new structure.
\[\mathbf{Traversal}
\left( \begin{pmatrix} a \\ b \end{pmatrix}, \begin{pmatrix} s \\ t \end{pmatrix} \right) = 
\C\left(s , \sum\nolimits_{n \in \mathbb{N}} a^n \times (b^n \to t)\right).\]
\end{definition}

\begin{figure}[htbp]
\centering
\includegraphics[width=8cm]{./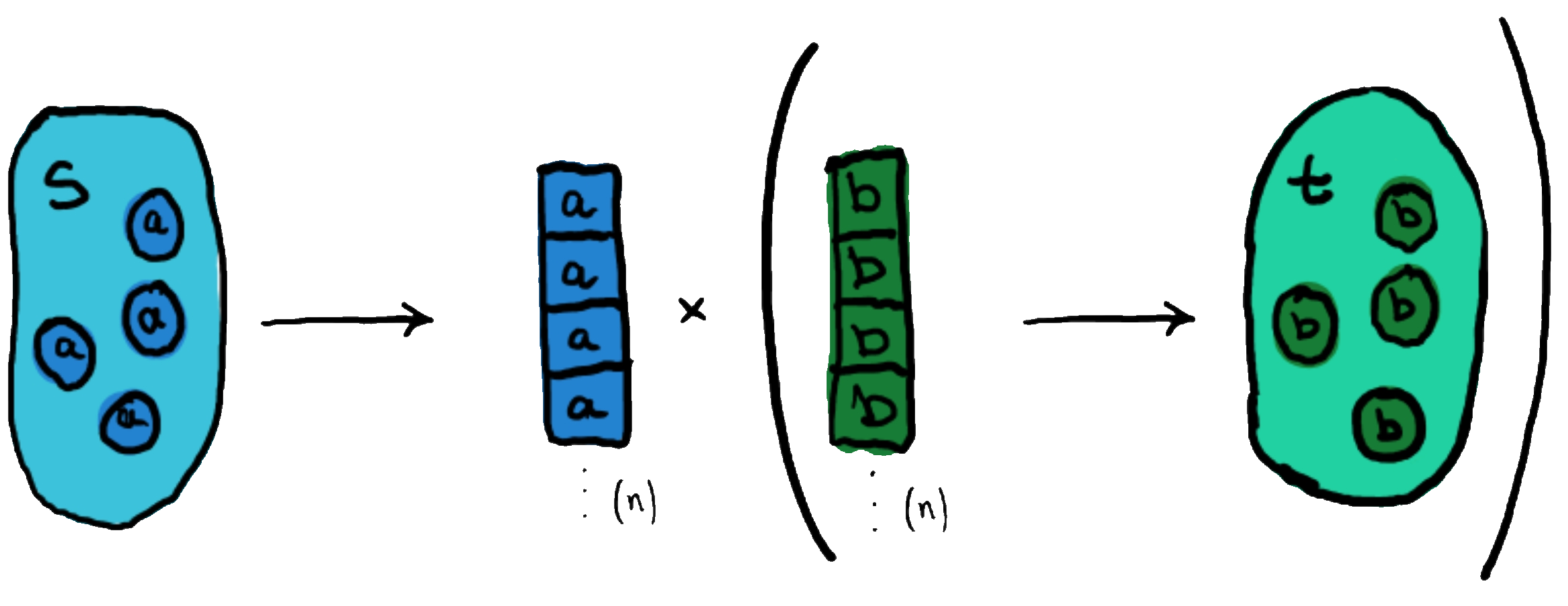}
\caption{\label{fig:orgd40965a}
From a complex data structure \(s\) we can extract (1) some list of values of type \(a\) of length \(n\) for some \(n \in \mathbb{N}\), and (2) a way of updating this structure with a new list of length \(n\).}
\end{figure}

\begin{exampleth}
\label{orgadaa07e} Suppose we have now a \emph{mailing list}
containing multiple email addresses associated to other data, such as
names or subscription options.  An accessor for the email addresses
is an element of
\(\mathbf{Traversal}(( \mathbf{MailList}, \mathbf{MailList} ) , ( \mathbf{Email} , \mathbf{Email}))\), which is given
by a single function with the following signature.
\[\begin{aligned}
\mathrm{extract} \colon & \mathbf{MailList} \to \sum_{n \in \mathbb{N}} \mathbf{Email}^n \times \left(\mathbf{Email}^n \to \mathbf{MailList}\right). \\
\end{aligned}\]
\end{exampleth}

\begin{remark}
One could think that the traversal can be rewritten in the same style as a lens:
that is, as two functions \(s \to a^n\) and \(s \times b^n \to t\). The fact that the two \(n \in \mathbb{N}\) need
to be the same prevents this split.
This is also what makes a traversal fundamentally
different from a lens focusing on a list or tuple type. For the traversal, the length of the
output can vary, but we need a list of the same length to be able to update.
\end{remark}

\subsection{The problem of modularity}
\label{sec:orge1a42d6}
A problem arises when accessing compound data structures.  The focus of
an optic can be itself a complex data structure in which we can make use
of an optic again.  We would expect optics to behave \emph{compositionally}, in
the sense that it should be straightforward to give a description of the
optic that, from the bigger data structure, applies two optics to focus
on the innermost subfield.

\begin{exampleth}
More concretely, consider the lens in Example \ref{org729dff4}
and the prism in Example \ref{org47c3ee8}. We would like to have a
composite optic that tries to access the Zip code from any
arbitrary string.  This problem can be formulated and solved for any \emph{lens} and
\emph{prism}, but there is no obvious unified way to solve it for any
pair of optics. Moreover, note that the resulting accessor
need be neither a lens nor a prism.

Explicitly, in a bicartesian closed category, assume we have a prism 
from \((s,t)\) to \((a,b)\) given by \(m \in \C(s, t + a)\)
and \(l \in \C(b,t)\); and a lens \((a,b)\) to \((x,y)\)
given by \(v \in \C(a,x)\) and \(u \in \C(a \times y , b)\).  We want to
compose them into a combined optic from \((s,t)\) to \((x,y)\).  A suitable composition
is the following, although it is tedious and not particularly illuminating to write it down.
Here, we write \(\Lambda\) for the product-exponential adjunction.
\[
\mathbf{m} \circ [ \mathrm{id}_t , \mathbf{v} \times \Lambda(\mathbf{b} \circ \mathbf{u})]
\in 
\C(s , t + x \times (y \to t)).
\]
Note that \(\Lambda(l \circ u) \in \mathbf{C}(a, y \to t)\), and the 
rest is just a combination of morphisms that uses the
product and the coproduct.
This naive approach to composition is not practical: we would need to write down a composition
for every pair of optics, and its output could be each time a completely new
kind of optic.
We need a general way of solving the problem of composition of optics.
\end{exampleth}

\subsection{Solving modularity}
\label{sec:orgf2413f6}
The problem has been solved in programming libraries such as
\cite{purescriptlens} using \textbf{profunctor optics}.
Perhaps surprisingly, some optics turn out to be equivalent to parametric
functions over certain families of profunctors.  For instance, we can prove
an isomorphism between lenses
\(\C(s , a) \times \C(s \times b , t)\) and families of morphisms
\(p(a,b) \to p(s,t)\) for profunctors \(p\) of a certain class that we will describe later.
This provides a solution for the composition of optics: composing an
optic from \((x,y)\) to \((a,b)\) with an optic from \((a,b)\) to \((s,t)\) becomes
function composition of the form \(p(x,y) \to p(a,b) \to p(s,t)\).

In summary, the profunctor representation of optics transforms all the
complicated cases of composition of optics into \emph{just} \emph{ordinary}
\emph{function} \emph{composition}.  This technique is documented originally in \cite{pickering17}.
We will justify in the following sections why something like this works
in terms of category theory and look into particular cases.

\section{Outline}
\label{sec:org13f314d}

\begin{itemize}
\item \textbf{Chapter 2} introduces main results that are not specific to the
theory of optics.  It starts with a brief summary of the (co)end calculus
as in \cite{loregian15}, underlining the importance of the Yoneda lemma and of Kan extensions.
It describes the formal theory of monoids capturing also the multiple uses of 
monads and monoidal categories that we will encounter in this text.  We take a moment to
describe replete subcategories and how they provide a well-behaved notion of 2-categorical
image of a functor.  Finally, we describe the bicategory of profunctors,
promonads and Kleisli categories.

\item \textbf{Chapter 3} provides the definition of optic and shows that it
actually captures the examples we were interested in.  Apart from
these main ones, it uses the definition to provide new examples of
optic.

\item \textbf{Chapter 4} is centered around traversals, a particular kind of optic
for which we give two different derivations. We relate both describing
traversals as coalgebras for a comonad that represents a split into
shape and contents.  We will show how they are related to flat combinatorial
species and how to use this to construct yet another new optic.

\item \textbf{Chapter 5} introduces and proves the main result of the theory of
optics: the \emph{profunctor representation theorem} (Theorem
\ref{org9a4afc7}), that links every optic to a
profunctor optic in an uniform way.  We also describe optics both
as Kleisli categories for a monad and Kleisli objects for a promonad.

\item \textbf{Chapter 6} discusses how function composition of profunctor optics
works.  The first part is motivated by how monads can be composed
when a suitable distributive law between them exists. The second
part is motivated by how Haskell joins constraints on
parametric functions to compose profunctor optics.  These two \emph{compositions}
do not necessarily give the same results, but we can reconcile
them considering some subclass of well-behaved optics that we call \emph{clear optics}.

\item \textbf{Chapter 7} concerns applications.  In particular, we have built a
minimalistic optics library in Haskell that directly implements the
profunctor representation theorem.  We also formalize the main optic
derivations in the Agda proof assistant, showing that the approach
based on Yoneda is particularly suited for formal verification.
\end{itemize}

\section{Contributions}
\label{sec:org4af46ab}
Parts of this dissertation started as a joint project; this section
lists the original contributions of the author of this dissertation
and gives acknowledgement for the rest.

\begin{itemize}
\item \textbf{Chapter 2} comes with no claim of originality apart from the
presentation of the ideas.  The propositions and proofs can be
found in the literature and we put special care into acknowledge
them.  \textsf{Fosco Loregian} helped me understand the coproduct
of monoidal actions.

\item In \textbf{Chapter 3}, I slightly generalize Mitchell Riley's construction
of the category of optics \cite{riley18} in a direction already
proposed there.  I derive traversals as the optic for power series
functors, solving a problem posed by Milewski \cite{milewski17}; \textsf{Bartosz Milewski}
suggested to me that the traversal was possibly an optic for an
action involving multiple products or exponentials.  In \S 3.4, I
define and derive concrete forms of \emph{Kaleidoscopes} and \emph{Algebraic}
\emph{lenses}, optics not present on the literature to the best of my knowledge; \textsf{Emily Pillmore}
helped me collect definitions of optics from Haskell libraries. \textsf{Guillaume Boisseau}
let me know about the \emph{achromatic lens}. I prove
that the \emph{generalized lens} is a mixed optic; this notion was introduced
to me by \textsf{David Jaz Myers}.  I present a mechanism for getting 
optics for (co)free, which is a specialization of the coalgebraic 
optics of \cite{riley18}, both requiring more hypothesis and getting 
a stronger result.

\item In \textbf{Chapter 4}, I prove in \S 4.1 that traversals are coalgebras for
a \emph{shape-contents} comonad that I define there; a related result using
parameterized comonads can be found in \cite{jaskelioff15}.  In Lemma
\ref{org8b21ea5}, I prove
that the natural family of transformations defining a traversal is
given by any coalgebra for the \emph{shape-contents} functor, trying to simplify
the approach in \cite{jaskelioff15}. Using there that
ends distribute over discrete colimits to complete the derivation was 
suggested to me by \textsf{Fosco Loregian} and \textsf{Derek Elkins}.
I connect in this way the derivation using traversables that could be
found in Riley's \cite{riley18} to the derivation using power series 
functors. I describe the \emph{unsorted traversal} and derive a concrete form
for it.  In the related Appendix \ref{appendixtraversables}, I construct cofree traversables.

\item \textbf{Chapter 5} is mostly an expository chapter.  I make explicit the
proof that can be found in \cite{pastro08} of the profunctor
representation theorem, extending it to mixed optics over arbitrary
actions; a proof of the same result with a different technique can
be found in \cite{riley18} and \cite{boisseau18}.

\item In \textbf{Chapter 6}, I study distributive laws for Pastro-Street comonads
and propose a composition of optics based on Kleisli categories. I
construct \emph{glasses} with this technique and I show that \emph{affine traversals} can be
also obtained with it.  In \S \ref{orga66504e}, I study
how composition of optics works in Haskell and provide a categorical
description in terms of coproduct actions.  The fact that a naive
composition of Tambara modules does not work as one would wish in
this case was pointed to me by \textsf{Bryce Clarke}.
I discuss the need for
a restricted definition of optic in order to recover the lattice of
optics and I propose a definition of \emph{clear optics}, addressing this
problem.  I prove that some
of the common optics are clear and that the composition of profunctor \emph{lenses}
and \emph{prisms} is precisely an \emph{affine traversal} in this setting.

\item In \textbf{Chapter 7}, I present a library of optics in Haskell, a
formalization of some (co)end derivations in Agda, and an
example usage of the \emph{kaleidoscope}.
\end{itemize}

\chapter{Preliminaries}
\label{sec:org001709f}
\section{(co)End calculus}
\label{sec:orga0bcef0}
Our basic tool will be (co)end calculus as described
by Loregian \cite{loregian15}.  For completeness, we replicate here the main
definitions and results we are going to use.

\subsection{(co)Ends}
\label{sec:orgb527d9c}
The intuition is that \emph{ends} should be a sort of universal quantifier
over the objects of a category plus some \emph{naturality} conditions; whereas
\emph{coends} can be thought of as their existential version. In fact, when
encoding optics in a programming language that provides parametricity and
existential types, these are used in place of coends, expecting that the syntax
of the language will enforce their naturality conditions.
(co)Ends provide a rich calculus based on the Yoneda lemma that we
will use throughout this text, being the basic building
block of many of our proofs.  A description on how to turn coends into
a calculus that can be used in theorem provers is described in \cite{caccamo01}.
\begin{definition}
The \textbf{end} of a functor \(p \colon \C^{op} \times \C \to \D\) is the universal
object \(\int\nolimits_{c \in \C} p(c,c)\) endowed with morphisms \(\pi_{a} \colon \left(  \int\nolimits_{c \in \C} p(c,c) \right) \to p(a,a)\) 
for every \(a \in \C\) such that, for any morphism \(f \colon a \to b\) in \(\C\), they
satisfy \(p(\id,f) \circ \pi_a = p(f,\id) \circ \pi_b\).
It is universal in the sense that any other object \(d\) endowed with
morphisms \(j_a \colon d \to p(a,a)\) satisfying the same condition factors uniquely through it.
\[\begin{tikzcd}
& d\ar[bend left]{dddr}{j_b} \ar[bend right,swap]{dddl}{j_{a}} \ar[dashed]{dd}{\exists!} & \\
& & \\
& \int_c p(c,c) \drar{\pi_b} \dlar[swap]{\pi_a} & \\
p(a,a) \drar[swap]{p(\id,f)} && p(b,b) \dlar{p(f,\id)} \\
& p(a,b)
\end{tikzcd}\]
In other words, the end is the equalizer of the action of morphisms
on both arguments of the functor.
\[
\int_{c \in \C} p(c,c) \cong \mathrm{eq} \left( \begin{tikzcd}
\prod_{c \in \C} p(c,c) \rar[yshift=0.5ex]\rar[yshift=-0.5ex] & \prod_{f \colon a \to b} p(a,b) 
\end{tikzcd}
\right)
\]
\end{definition}
\begin{definition}
Dually, the \textbf{coend} of a functor \(p \colon \C^{op} \times \C \to \D\) is the universal object
\(\int^{c \in \C} p(c,c)\) endowed with morphisms \(i_a \colon p(a,a) \to \left( \int\nolimits^{c \in \C} p(c,c) \right)\)
for every \(a \in \C\) such that, for any morphism \(f \colon b \to a\) in \(\C\),
they satisfy \(i_b \circ p(f, \id) = i_a \circ p(\id, f)\). It is universal in
the sense that any other object \(d\) endowed with morphisms \(j_a \colon p(a,a) \to d\)
satisfying the same condition factors uniquely through it.
\[\begin{tikzcd}
& p(a,b) \dlar[swap]{p(\id, f)}\drar{p(f,\id)} & \\
p(a,a) \ar[bend right, swap]{dddr}{j_a} \drar[swap]{i_a} && p(b,b) \dlar{i_b} \ar[bend left]{dddl}{j_b} \\
& \int^c p(c,c) \ar[dashed]{dd}{\exists!} & \\
&& \\
& d &
\end{tikzcd}\]
In other words, the coend is the coequalizer of the action on
morphisms on both arguments of the functor.
\[
\int^{c \in \C} p(c,c) \cong \mathrm{coeq}\left( \begin{tikzcd}
\bigsqcup_{f \colon b \to a} p(a,b) \rar[yshift=-0.5ex, swap] \rar[yshift=0.5ex] &
\bigsqcup_{x \in \C} p(x,x) 
\end{tikzcd}\right)
\]
\end{definition}
\begin{remark}
\label{orgd5ad7d1}
(Co)ends are particular cases of (co)limits. As such, they are unique
up to isomorphism when they exist. (Co)continuous functors preserve
them and, in particular, the co(ntra)variant hom-functor commutes with them.
For any \(p \colon \C^{op} \times \C \to \D\) and every \(d \in \D\), there
exist canonical isomorphisms with the following signatures.
\[
\D \left( \int^{c \in \C} p(c,c) , d \right) \cong
\int_{c \in \C} \D(p(c,c), d),\qquad
\D \left( d, \int_{c \in \C} p(c,c) \right) \cong
\int_{c \in \C} \D(d, p(c,c)).
\]
Note also that all (co)ends exist in (co)complete categories.
\end{remark}
\begin{proposition}[\cite{loregian15}, Remark 1.4]
(co)Ends are functorial. We can define a functor \([\C^{op} \times \C , \D] \to \D\)
that sends a profunctor \(p \colon \C^{op} \times \C \to \D\) to its end \(\int_c p(c,c) \in \D\).
\end{proposition}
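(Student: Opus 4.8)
The object assignment $p \mapsto \int_{c} p(c,c)$ is already given, so the plan is to define the action on morphisms and then verify the functor axioms. Assuming (as the preceding Remark guarantees whenever $\D$ is complete, or restricting to those profunctors whose ends exist) that every $p$ in $[\C^{op}\times\C,\D]$ has an end, I would write $\pi^p_a \colon \int_c p(c,c) \to p(a,a)$ for the structure projections. Given a natural transformation $\alpha \colon p \Rightarrow q$ with components $\alpha_{a,b}\colon p(a,b) \to q(a,b)$, the idea is to produce the morphism $\int_c \alpha \colon \int_c p(c,c) \to \int_c q(c,c)$ purely from the universal property of the target end: I exhibit a family of maps out of $\int_c p(c,c)$ into $q$ that satisfies the wedge condition, and let $\int_c \alpha$ be the unique factorization through $\int_c q(c,c)$.

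The family I would use is $j_a \coloneqq \alpha_{a,a} \circ \pi^p_a \colon \int_c p(c,c) \to q(a,a)$. The central step is checking that these form a wedge for $q$, i.e. that $q(\id,f)\circ j_a = q(f,\id)\circ j_b$ for every $f\colon a\to b$. This follows by combining the naturality of $\alpha$ with the wedge condition for $p$:
\[
q(\id,f)\circ(\alpha_{a,a}\circ\pi^p_a)
= \alpha_{a,b}\circ p(\id,f)\circ\pi^p_a
= \alpha_{a,b}\circ p(f,\id)\circ\pi^p_b
= q(f,\id)\circ(\alpha_{b,b}\circ\pi^p_b),
\]
where the outer two equalities are the naturality squares of $\alpha$ at the morphisms $(\id,f)$ and $(f,\id)$ of $\C^{op}\times\C$, and the middle equality is exactly the defining wedge condition $p(\id,f)\circ\pi^p_a = p(f,\id)\circ\pi^p_b$ for the end of $p$. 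The universal property then yields a unique $\int_c \alpha$ with $\pi^q_a \circ \int_c\alpha = \alpha_{a,a}\circ\pi^p_a$ for all $a$.

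Functoriality is then a uniqueness argument. For $\alpha = \id_p$ the family $j_a$ is just $\pi^p_a$, so the identity on $\int_c p(c,c)$ satisfies the factorization condition and hence, by uniqueness, $\int_c \id_p = \id$. For a composite $\beta\circ\alpha$ with $\beta\colon q\Rightarrow r$, both $\int_c(\beta\circ\alpha)$ and $\int_c\beta \circ \int_c\alpha$ satisfy $\pi^r_a\circ(-) = (\beta\circ\alpha)_{a,a}\circ\pi^p_a$, so uniqueness of the factorization through $\int_c r(c,c)$ forces them to agree. (Alternatively, one could simply invoke that ends are particular limits, as recorded in the Remark, and that limits are functorial in their diagrams; the wedge computation above is the explicit unwinding of that fact.) I do not expect a genuine obstacle here: the only delicate point is the variance bookkeeping in the first argument, ensuring that the naturality squares of $\alpha$ are invoked at the correct morphisms $(\id,f)$ and $(f,\id)$ so that they line up with the wedge condition for $p$.
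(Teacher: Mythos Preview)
Your proof is correct and follows essentially the same approach as the paper: define the action on a natural transformation via the universal property of the target end by exhibiting the wedge $\alpha_{a,a}\circ\pi^p_a$, then obtain functoriality from uniqueness. The paper packages the wedge verification into a single commutative cube rather than your chain of equalities, but the content is identical.
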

\begin{proof}
Let \(\eta \colon p \tonat q\) be a natural transformation between two profunctors.
By the universal property of the ends, we have an induced
\(\eta_{\ast} \colon\int\nolimits_c p(c,c) \to \int\nolimits_c q(c,c)\) constructed as the unique morphism making the
following diagram commute for any \(x,x' \in \C\) and any \(f \colon x \to x'\).
\[\begin{tikzcd}
& \int\nolimits_c q(c,c) \ar{rr}{\pi_{x'}}\ar{dd}[near end]{\pi_x} && q(x',x') \ar{dd}{q(f,\id)} \\
\int\nolimits_c p(c,c) \ar{dd}[swap]{\pi_x} \ar[dashed]{ur}{\exists! \eta_{\ast}} \ar{rr}[near end]{\pi_{x'}} && p(x',x') \ar{dd}[near start]{p(f,\id)} \ar{ur}{\eta_{x',x'}} & \\
& q(x,x) \ar{rr}[near start]{q(\id,f)} && q(x,x') \\
p(x,x) \ar{ur}{\eta_{x,x}} \ar{rr}[swap]{p(\id,f)} && p(x,x') \ar{ur}[swap]{\eta_{x,x'}} &
\end{tikzcd}\]
We can check that the identity makes the diagram commute for the identity
natural transformation, getting \(\mathrm{Id}_{\ast} = \id\); and that composition is
preserved because it makes the composite diagram commute, getting \((\eta \circ \sigma)_{\ast} = \eta_{\ast} \circ \sigma_{\ast}\)
for any pair of natural transformations \(\eta \colon p \tonat q\) and \(\sigma \colon r \tonat p\).
\end{proof}

\subsection{Fubini rule}
\label{sec:orgc47fe17}
Writing ends and coends as integrals suggests the following sort
of Fubini rule. Note however that, 
contrary to what happens with the theorem of classical analysis, this rule can be \emph{always}
applied.
\begin{lemma}[Fubini rule for ends, {{\cite[\S1, Exercise 10]{loregian15}}}]
Let \(p \colon \C^{op} \times \C \times \D^{op} \times \D \to \mathbf{E}\) be a functor. We can consider
the following three ends, that are different in principle.

\begin{itemize}
\item Identify \(\C^{op} \times \C \times \D^{op} \times \D \to \mathbf{E}\) with \(\C^{op} \times \C \to (\D^{op} \times \D \to \mathbf{E})\).
Take the end over it to obtain \(\int\nolimits_c p(c,c,-,-) \colon \D^{op} \times \D \to \mathbf{E}\), and then
the end over the resulting functor to get \(\int\nolimits_d\int\nolimits_c p(c,c,d,d)\).

\item Identify \(\C^{op} \times \C \times \D^{op} \times \D \to \mathbf{E}\) with \(\D^{op} \times \D \to (\C^{op} \times \C \to \mathbf{E})\).
Take the end over it to obtain \(\int\nolimits_d p(-,-,d,d) \colon \C^{op} \times \C \to \mathbf{E}\), and then
the end over the resulting functor to get \(\int\nolimits_c\int\nolimits_d p(c,c,d,d)\).

\item Identify \(\C^{op} \times \C \times \D^{op} \times \D \to \mathbf{E}\) with \((\C \times \D)^{op} \times (\C \times \D) \to \mathbf{E}\).
Take the end over this product category to get \(\int\nolimits_{(c,d)} p(c,c,d,d)\).
\end{itemize}

The \textbf{Fubini rule} for coends states the following isomorphisms.
\[
\int_{(c,d) \in \C \times \D} p(c,c,d,d) \cong \int_{c \in \C} \int_{d \in \D} p(c,c,d,d) \cong \int_{d \in \D} \int_{c \in \C} p(c,c,d,d).
\]
\end{lemma}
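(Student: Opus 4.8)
The plan is to reduce the statement to the case of $\Set$-valued ends via the Yoneda lemma, where an end is concretely a set of wedges, and then to verify the combinatorial heart: a wedge over the product category $\C \times \D$ is exactly a family that is separately a wedge in each variable.

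First I would fix an arbitrary object $e \in \mathbf{E}$ and apply the hom-commutation isomorphism of Remark \ref{orgd5ad7d1} to each of the three candidate objects. For the first this gives $\mathbf{E}\bigl(e, \int_{(c,d)} p(c,c,d,d)\bigr) \cong \int_{(c,d)} \mathbf{E}(e, p(c,c,d,d))$. For the iterated ones I would apply the same isomorphism twice, pulling $\mathbf{E}(e,-)$ first past the outer end and then, using functoriality of ends (the preceding Proposition) to guarantee naturality in the remaining variable, past the inner end, obtaining $\int_c \int_d \mathbf{E}(e, p(c,c,d,d))$ and $\int_d \int_c \mathbf{E}(e, p(c,c,d,d))$ respectively. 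Thus all three are, up to natural isomorphism in $e$, ends of the single $\Set$-valued functor $G = \mathbf{E}(e, p(-,-,-,-))$ computed in the three different ways.

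Next I would make the $\Set$-valued ends explicit through the equalizer description of the Definition: $\int_{(c,d)} G$ is the subset of $\prod_{(c,d)} G(c,c,d,d)$ consisting of families $(\xi_{c,d})$ satisfying the wedge condition for every morphism of $\C \times \D$, while $\int_c \int_d G$ is the set of $\C$-wedges whose components are themselves $\D$-wedges, and symmetrically for $\int_d \int_c G$. The core step is to show these name the same subset. Here I would use that every morphism $(f,g)\colon (c,d) \to (c',d')$ factors as $(\id_{c'},g) \circ (f,\id_{d})$: expanding the wedge equation for $(f,g)$ along this factorization shows it is implied by the two equations for $(f,\id)$ and $(\id,g)$, while conversely setting $g = \id$ (resp. $f = \id$) recovers each separate wedge condition. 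Hence the three subsets of $\prod G$ coincide and the three $\Set$-ends are equal.

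Finally I would conclude by Yoneda: since the three objects represent the same functor $\mathbf{E}(e,-)$ up to natural isomorphism, they are isomorphic in $\mathbf{E}$. The main obstacle is not the combinatorial core, which is routine once the morphism factorization is in hand, but the bookkeeping in the reduction step — ensuring the iterated application of hom-commutation is natural in the outer variable, so that the outer end is well-defined and the resulting isomorphism is natural in $e$ — together with tracking the existence hypotheses so that the statement reads as ``whenever one side exists, so do the others and they agree''. The coend version follows dually, replacing ends by coends and wedges by cowedges throughout.
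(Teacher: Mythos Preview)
Your argument is correct but follows a genuinely different route from the paper. The paper works directly in \(\mathbf{E}\): it builds the comparison map \(\int_{(c,d)} p \to \int_c \int_d p\) by first constructing, for each fixed \(x\), a morphism \(\sigma_x \colon \int_{(c,d)} p \to \int_d p(x,x,d,d)\) via the universal property of the inner end (the needed \(\D\)-wedge being a special case of the product wedge), then checks by a further diagram that the family \((\sigma_x)_x\) is itself a \(\C\)-wedge, and finally produces the inverse with a large commutative diagram exploiting naturality of the projections \(\pi_y\). Your approach sidesteps all of this diagram chasing by passing to \(\Set\) through \(\mathbf{E}(e,-)\), where the equalizer description makes each end a literal subset of \(\prod G(c,c,d,d)\); the factorisation \((f,g) = (\id,g)\circ(f,\id)\) then reduces the product-wedge condition to the two separate ones in one line. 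What you gain is concision and a clean conceptual core; what the paper's hands-on argument buys is that it never leaves \(\mathbf{E}\), so the comparison morphisms are exhibited explicitly rather than extracted from a representability statement. Your closing remark about existence transfer is a real subtlety worth flagging: your argument shows the three \(\Set\)-valued end functors agree, so any representing object for one represents all three---but the iterated end \(\int_c \int_d p\) only makes sense once each inner end \(\int_d p(c,c,-,-)\) already exists in \(\mathbf{E}\), and that is not automatic from the existence of \(\int_{(c,d)} p\) alone.
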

\begin{proof}
We will just prove that \(\int\nolimits_{(c,d) \in \C \times \D} p(c,c,d,d) \cong \int\nolimits_{c \in \C} \int\nolimits_{d \in \D} p(c,c,d,d)\), the other
isomorphism is similar.  First, we construct the morphism from the left hand side
to the right hand side.  We start by constructing a family of morphisms
\(\sigma_x \colon \int\nolimits_{(c,d) \in \C \times \D} p(c,c,d,d) \to \int\nolimits_{d \in \D} p(x,x,d,d)\) as the unique ones making the
following diagram commute for any \(g \colon y \to y'\). Note that the external square commutes
as a particular case of dinaturality of the coend over the product.
\[\begin{tikzcd}
& \int_{c,d} p(c,c,d,d) \dlar[swap]{\pi_{x,y}} \drar{\pi_{x,y'}} \dar[dashed]{\exists! \sigma_x} & \\
p(x,x,y,y) \drar[swap]{p(-,-,-,g)} & \lar{\pi_{y}} \int_d p(x,x,d,d) \rar[swap]{\pi_{y'}} & p(x,x,y',y') \dlar{p(-,-,g,-)} \\
& p(x,x,y,y') &
\end{tikzcd}\]
We need to prove that the family \(\sigma_x\) is actually dinatural in \(x\). This comes also
as a particular case of the dinaturality of the end over the product, as the following
diagram shows.
\[\begin{tikzcd}
& \int_{c,d} p(c,c,d,d) \drar{\sigma_{x'}} \dlar[swap]{\sigma_x} \ar[bend left]{ddl}{\pi_{x,y}} \ar[bend right]{ddr}[swap]{\pi_{x',y'}} & \\
\int_d p(x,x,d,d) \drar[near start]{f_{\ast}} \dar[swap]{\pi_x} && \int_d p(x',x',d,d) \dlar[swap]{f_{\ast}} \dar{\pi_{x'}}\\
p(x,x,y,y) \drar[swap]{p(f,-,-,-)} & \int_d p(x',x,d,d) \dar{\pi_y} & p(x',x,y,y) \dlar{p(-,f,-,-)} \\
& p(x',x,y,y) &
\end{tikzcd}\]
Secondly, we construct a morphism from the right hand side to the left hand
side using the projections. The following diagram shows dinaturality for any \(f \colon x \to x'\)
and any \(g \colon y \to y'\), which amounts to saying that it is dinatural for any \((f,g) \colon (x,y) \to (x',y')\).
Commutativity for this diagram uses that
\(\pi_y \colon \int\nolimits_d p(-,-,d,d) \tonat p(-,-,y,y)\) must be given by a natural transformation for
any \(y \in \D\) because we are considering an end over a category of functors.
\[\begin{tikzcd}[column sep=-5ex]
                                                 &                                                             &                                 & {\int_c\int_dp(c,c,d,d)} \arrow[lld, "\pi_x"'] \arrow[rrd, "\pi_{x'}"] &                                  &                                                               &                                                    \\
                                                 & {\int_d p(x,x,d,d)} \arrow[rrd, "f"'] \arrow[ldd, "\pi_y"'] &                                 &                                                                        &                                  & {\int_dp(x',x',d,d)} \arrow[lld, "f"] \arrow[rdd, "\pi_{y'}"] &                                                    \\
                                                 &                                                             &                                 & {\int_dp(x,x',d,d)} \arrow[ldd, "\pi_y"'] \arrow[rdd, "\pi_{y'}"]      &                                  &                                                               &                                                    \\
{p(x,x,y,y)} \arrow[rdd, "g"'] \arrow[rrd, "f"'] &                                                             &                                 &                                                                        &                                  &                                                               & {p(x',x',y',y')} \arrow[ldd, "g"] \arrow[lld, "f"] \\
                                                 &                                                             & {p(x,x',y,y)} \arrow[rdd, "g"'] &                                                                        & {p(x,x',y',y')} \arrow[ldd, "g"] &                                                               &                                                    \\
                                                 & {p(x,x,y,y')} \arrow[rrd, "f"']                             &                                 &                                                                        &                                  & {p(x',x',y,y)} \arrow[lld, "f"]                               &                                                    \\
                                                 &                                                             &                                 & {p(x,x',y,y')}                                                         &                                  &                                                               &                                                   
\end{tikzcd}\]
\end{proof}
\begin{remark}
\label{remark-limits-commute}
It can be shown that (co)ends over profunctors that are mute in
the contravariant variable are canonically isomorphic to (co)limits.
As a consquence, the Fubini rule is also valid for limits. Given
any \(F \colon I \times J \to \C\), we have the following isomorphisms.
\[
\mathsf{lim}_{I} \mathsf{lim}_J F \cong
\mathsf{lim}_{I \times J} F \cong
\mathsf{lim}_J \mathsf{lim}_I F 
\]
In particular, ends distribute over products.
\end{remark}

\subsection{Yoneda lemma}
\label{sec:orgbf5145d}
We will now partially justify the intuition that ends are universal
quantifiers plus some naturality conditions showing that natural
transformations are particular cases of ends.  This motivates
a rewrite of the Yoneda lemma in terms of (co)ends.
\begin{proposition}
\label{org2c434cd}
The set of natural transformations between two functors \(F,G \colon \C \to \D\) is given by
the following end.
\[
\mathrm{Nat}(F,G) = \int_{c \in \C} \D(Fc,Gc).
\]
\end{proposition}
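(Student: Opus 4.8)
The statement claims that the set of natural transformations is literally the end of the hom-functor $\D(F-,G-)$. The plan is to show that the universal property defining this end coincides, cone-by-cone, with the defining data of a natural transformation. First I would verify that $\D(F-,G-)$ is genuinely a profunctor $\C^{op}\times\C \to \Set$: it is contravariant in the first slot via precomposition with $Ff$ and covariant in the second via postcomposition with $Gf$, so the end $\int_c \D(Fc,Gc)$ makes sense by the definition given earlier.

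The core of the argument is to unfold the dinaturality condition from the definition of the end and observe that it is exactly the naturality square. An element of the end is, by the equalizer description, a family $(\eta_c)_{c\in\C}$ with $\eta_c \in \D(Fc,Gc)$ — that is, a morphism $\eta_c \colon Fc \to Gc$ for each $c$ — satisfying, for every $f \colon a \to b$, the equation $p(\id,f)\circ\pi_a = p(f,\id)\circ\pi_b$ instantiated at this profunctor. Here the left action $p(\id,f)$ sends a morphism $g$ to $Gf \circ g$, while the right action $p(f,\id)$ sends $g$ to $g \circ Ff$. Writing $\eta_a = \pi_a(\text{element})$ and $\eta_b = \pi_b(\text{element})$, the dinaturality condition becomes
\[
Gf \circ \eta_a = \eta_b \circ Ff,
\]
which is precisely the commuting naturality square for $\eta$ at $f$. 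Thus an element of the end is exactly a choice of components satisfying naturality, i.e.\ a natural transformation $F \tonat G$.

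Having matched the underlying data, I would then check that the two universal properties agree so that the isomorphism is canonical and not merely a bijection of underlying sets. Concretely, I would exhibit the map sending an element of $\int_c \D(Fc,Gc)$ to the family $(\pi_c)_c$ of its components, show it lands in $\mathrm{Nat}(F,G)$ by the computation above, and conversely send a natural transformation $\eta$ to the cone $(\eta_c)_c$, which factors uniquely through the end by its universal property; these two assignments are mutually inverse. Since we work in $\Set$, where limits (and hence ends, by the equalizer presentation in the definition) are computed as the evident subset of the product, this is essentially a restatement of how the equalizer is built.

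The main obstacle — really the only subtlety — is bookkeeping the variances correctly: one must be careful that the contravariant slot of $\D(F-,G-)$ uses $Ff$ by precomposition and the covariant slot uses $Gf$ by postcomposition, since a sign error there would produce the wrong equation and obscure the match with the naturality square. Once the two actions are pinned down, the identification is immediate from the equalizer description of the end in the definition above.
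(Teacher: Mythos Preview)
Your proposal is correct and follows essentially the same approach as the paper: both unfold the wedge/dinaturality condition for the end of $\D(F-,G-)$ and observe that it coincides exactly with the naturality square $Gf \circ \eta_a = \eta_b \circ Ff$. The paper's version is terser, phrasing elements of the end as maps $1 \to \int_c \D(Fc,Gc)$ and identifying these with the required families, while you work through the equalizer description more explicitly; but the content is the same.
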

\begin{proof}
A natural transformation \(\alpha \colon F \tonat G\) is equivalently a family of morphisms
\(\alpha_c \colon 1 \to \D(Fc, Gc)\) indexed by \(c \in \C\) and such that, for any \(f \colon c \to d\),
it holds that \(\alpha_c \circ \D(\id,f) = \D(f,\id) \circ \alpha_d\).  This means that the elements
of the set that the end defines, \(1 \to \int\nolimits_{c \in \C} \D(Fc,Gc)\), are precisely the
natural transformations \(F \tonat G\).
\end{proof}

\begin{remark}
In the case of \(\Sets\), we will allow ourselves to sometimes write the internal hom
as \((a \to b) \cong \Sets(a,b)\).  This makes many (co)end derivations less noisy
and closely follows the arrow notation of some functional programming
languages.  Natural transformations between copresheaves \(F,G \colon \C \to \Sets\)
can be written as
\[
\Nat(F,G) = \int_{c \in \C} Fc \to Gc.
\]
In the enriched case, the same could be said for any cartesian Benabou
cosmos \(\V\), taking it to be enriched over itself.
\end{remark}
In an arbitrary category \(\C\), consider the representable functors \(\C(-,a)\)
and \(\C(a,-)\) for some \(a \in \C\). In the cases where we want to keep the category \(\C\)
implicit, we will use the hiragana ``yo'' to write them as
\(\hirayo^a \colon \C^{op} \to \Sets\) and \(\hirayo_a \colon \C \to \Sets\), respectively.  This notation is
inspired both from the one used in \cite{loregian15} and from the main role
the \emph{Yoneda lemma} will play in this text. The Yoneda lemma says that,
for any copresheaf \(F \colon \C \to \Sets\) and any \(c \in \C\), the set of
natural transformations \(\Nat(\hirayo_a, F)\) is in bijection with the set \(Fa\).
The bijection is natural in both \(F\) and \(a\), and it is constructed from
the fact that any natural transformation is uniquely determined by the
image of \(\id \in \hirayo_a(a)\).

Using (co)ends, we can rephrase this result as the fact that
every (co)presheaf can be written as a (co)end.  This is called
the \emph{ninja Yoneda lemma} after a comment by T. Leinster \cite{ninjayoneda}.

\begin{lemma}[Ninja Yoneda lemma]
For any functor \(F \colon \C \to \Sets\), we have canonical isomorphisms
\[
Fa \cong \int_{c \in \C} \hirayo_ac \to Fc,
\qquad\quad
Fa \cong \int^{c \in \C} Fc \times \hirayo^ac,
\]
which we call the \emph{Yoneda reduction} and \emph{Coyoneda reduction}.
\end{lemma}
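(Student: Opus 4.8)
The plan is to treat the two isomorphisms separately, reducing each to the classical Yoneda lemma quoted above together with the end-characterization of natural transformations from Proposition \ref{org2c434cd}. For the \emph{Yoneda reduction} this is almost immediate: since in $\Sets$ the integrand $\hirayo_a c \to Fc$ is the internal hom $\Sets(\C(a,c), Fc)$, Proposition \ref{org2c434cd} identifies the end $\int_{c \in \C} \Sets(\C(a,c),Fc)$ with $\Nat(\hirayo_a, F)$. The classical Yoneda lemma then supplies a bijection $\Nat(\hirayo_a, F) \cong Fa$, natural in both $a$ and $F$, sending a natural transformation $\alpha$ to $\alpha_a(\id_a)$ with inverse $x \mapsto (f \mapsto Ff\,x)$. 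Composing these identifications yields the first isomorphism, and its canonicity is inherited from the naturality of the Yoneda bijection.

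The \emph{Coyoneda reduction} is the harder of the two, because Proposition \ref{org2c434cd} relates \emph{ends} — not coends — to hom-sets, so there is no direct reading of $\int^{c \in \C} Fc \times \hirayo^a c$ as a set of natural transformations. My plan is to prove it representably. For an arbitrary $X \in \Sets$, I would first apply the isomorphism of Remark \ref{orgd5ad7d1} (the covariant hom-functor carries coends to ends) to obtain
\[
\Sets\left( \int^{c \in \C} Fc \times \hirayo^a c,\; X \right)
\cong \int_{c \in \C} \Sets\left( Fc \times \hirayo^a c,\; X \right),
\]
then curry the integrand through the product--exponential adjunction in $\Sets$ to rewrite it as $\int_{c \in \C} \Sets\!\left(\hirayo^a c, \Sets(Fc, X)\right)$. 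Since $\hirayo^a = \C(-,a)$ and $c \mapsto \Sets(Fc,X)$ are both presheaves on $\C$, this end is exactly $\Nat(\hirayo^a, \Sets(F(-),X))$ computed in $\C^{op}$, so the Yoneda lemma applied in $\C^{op}$ collapses it to $\Sets(Fa, X)$. Chaining these isomorphisms exhibits $\Sets(\int^{c \in \C} Fc \times \hirayo^a c, X) \cong \Sets(Fa, X)$ naturally in $X$, whence the Yoneda lemma in $\Sets$ delivers the desired $\int^{c \in \C} Fc \times \hirayo^a c \cong Fa$.

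The main obstacle I anticipate is keeping the variances straight in the coend case: the representable $\hirayo^a$ is contravariant, so the end that appears is a set of natural transformations of \emph{presheaves}, and Yoneda must be invoked in $\C^{op}$ rather than in $\C$. A reader preferring an explicit construction can instead verify directly that the cowedge $Fc \times \C(c,a) \to Fa$ sending $(y,g) \mapsto Fg\,y$ is dinatural and induces a two-sided inverse to $x \mapsto i_a(x,\id_a)$; the only nontrivial point there is checking the cowedge condition against the coend's coequalizer presentation, which is precisely the functoriality of $F$.
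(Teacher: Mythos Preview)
Your proof is correct and follows essentially the same route as the paper: the Yoneda reduction is handled identically via Proposition~\ref{org2c434cd} and the classical Yoneda lemma, and for the Coyoneda reduction both you and the paper argue representably by mapping into an arbitrary set, using continuity of the hom, currying, and then the Yoneda reduction in $\C^{op}$. Your closing remark about the explicit cowedge $(y,g) \mapsto Fg\,y$ is a helpful addition not present in the paper's version.
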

\begin{proof}
The Yoneda reduction follows from the usual statement of the 
Yoneda lemma, knowing that natural transformations can be written 
as ends as in Proposition \ref{org2c434cd}.
We will prove the Coyoneda reduction from the Yoneda reduction on the
opposite category.  Take an arbitrary set \(s \in \Sets\), and consider the
following chain of natural isomorphisms.
\[\begin{aligned}
& \Sets \left(\int^{c \in \C} Fc \times \hirayo^ac \ ,\ s \right) \\
\cong & \qquad\mbox{(Continuity)} \\
& \int_{c \in \C}  \Sets (Fc \times \hirayo^ac , s) \\
\cong & \qquad\mbox{(Exponential)} \\
& \int_{c \in \C} \Sets (\hirayo^ac , \Sets (Fc , s)) \\
\cong & \qquad\mbox{(Yoneda reduction in $\C^{op}$)} \\
& \Sets(Fa, s).
\end{aligned}\]
Because of (the usual) Yoneda lemma, this means that
\(Fa\) is isomorphic to \(\int\nolimits^{c \in \C} Fc \times \hirayo^ac\).
\end{proof}

\begin{remark}
An intuition on the Yoneda lemma (see \cite{loregian15} Remark 2.6) is
that it allows us to integrate interpreting the \(\hirayo\) as a Dirac's delta
for ends.
\end{remark}

\subsection{Kan extensions}
\label{sec:org8723af3}
In this text we will be working with \emph{global} Kan extensions that arise
as adjoints to functor precomposition.  However, \emph{local} Kan extensions
with their usual definition can exist in particular cases even if the
adjunction that we are using to define them does not.  We refer, for
instance, to Chapter 6 of \cite{riehl17} for the more usual definition
and study of \emph{local} Kan extensions.

\begin{definition}
\label{org711ebd7}
The \emph{left} and \emph{right} \textbf{Kan extensions} along a functor \(F \colon \C \to \D\) are the
left and right adjoints, respectively, to the functor given by precomposition
\((- \circ F) \colon [\D, \Sets] \to [\C, \Sets]\).  In other words, we have the following
natural isomorphisms.
\begin{align*}
[\D, \Sets]( \mathsf{Lan}_FG , H ) \cong [\C, \Sets](G, H \circ F). \\
[\C, \Sets]( H \circ F , G ) \cong [\D, \Sets](H, \mathsf{Ran}_F G).
\end{align*}
We write \(\mathsf{Lan}_{F}\) for the left Kan extension along \(F\) and \(\mathsf{Ran}_{F}\) for the right
Kan extension along \(F\).  We say that \(\mathsf{Lan}_FG\) and \(\mathsf{Ran}_FG\) are the \emph{left/right} Kan
extensions of \(G\) \emph{along} \(F\).
\end{definition}

In some specially well-behaved categories, and particularly in \(\Sets\),
the left and right Kan extensions exist and have a formula in terms of
(co)ends \cite[\S 2.1]{loregian15}.

\begin{proposition}
\label{orge75964b}
For any \(F \colon \C \to \mathbf{D}\) and any \(G \colon \C \to \Sets\), the left and right Kan
extensions of \(G\) along \(F\) exist and are canonically isomorphic to the
following (co)ends.
\[
\mathsf{Ran}_F G \cong \int_{c \in \C} \Sets(\D(-, Fc) ,Gc),
\qquad
\mathsf{Lan}_F G \cong \int^{c \in \C} \D(Fc, -) \times Gc.
\]
\end{proposition}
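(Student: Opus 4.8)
The plan is to verify that each proposed (co)end satisfies the defining adjunction of Definition \ref{org711ebd7}. Since the right adjoint to precomposition $(-\circ F)$ is determined by a representability condition, exhibiting an explicit object that represents the appropriate functor simultaneously establishes existence and pins down the formula. I would treat $\Ran_F G$ first and obtain $\Lan_F G$ by a formally dual argument; existence of the candidate (co)ends themselves is automatic since $\Sets$ is complete and cocomplete.

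For the right Kan extension, fix an arbitrary $H \colon \D \to \Sets$ and compute the hom-set $[\D, \Sets]\bigl(H, \int_{c} \Sets(\D(-, Fc), Gc)\bigr)$ by a chain of natural isomorphisms. First rewrite this natural-transformation set as an end over $\D$ using Proposition \ref{org2c434cd}; then push the outer hom inside the inner end by continuity (Remark \ref{orgd5ad7d1}); then apply the product--exponential adjunction and swap the two ends by the Fubini rule to reach $\int_{c} \int_{d} \Sets(\D(d, Fc), \Sets(Hd, Gc))$. The inner end over $d$ is exactly the set of natural transformations out of the representable presheaf $\D(-,Fc) = \hirayo^{Fc}$ into the presheaf $d \mapsto \Sets(Hd, Gc)$, so the Yoneda reduction (applied in $\D^{op}$) collapses it to $\Sets(H(Fc), Gc)$. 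Re-assembling the outer end and invoking Proposition \ref{org2c434cd} once more gives $\int_{c} \Sets((H\circ F)c, Gc) \cong [\C, \Sets](H \circ F, G)$, which is precisely the isomorphism defining $\Ran_F G$.

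The computation for $\Lan_F G$ is the dual. Starting from $[\D, \Sets](\int^{c} \D(Fc,-)\times Gc, H)$, continuity turns the coend appearing in the first argument into an end over $\D$; the product--exponential adjunction and Fubini then rearrange the expression to $\int_{c} \int_{d} \Sets(\D(Fc,d), \Sets(Gc, Hd))$. Here the inner end over $d$ runs over the corepresentable $\D(Fc,-) = \hirayo_{Fc}$, so the Yoneda reduction (this time in the form stated directly in the Ninja Yoneda lemma) reduces it to $\Sets(Gc, H(Fc))$, and Proposition \ref{org2c434cd} recovers $[\C, \Sets](G, H\circ F)$, the isomorphism defining $\Lan_F G$.

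The only point needing care, rather than a genuine obstacle, is bookkeeping the variances so that each Yoneda reduction is applied to the correct (co)representable: the end over $d$ sees the contravariant $\hirayo^{Fc}$ in the $\Ran$ case but the covariant $\hirayo_{Fc}$ in the $\Lan$ case, and the hom-set must be curried in the right order before Fubini is invoked. One should also record that every isomorphism in the chain is natural in $H$, so the displayed representability is a natural isomorphism of functors in $H$ and therefore genuinely exhibits the claimed adjoint.
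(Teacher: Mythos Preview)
Your proposal is correct and follows essentially the same route as the paper: write the natural-transformation set as an end, commute with the inner (co)end via continuity, apply Fubini, and collapse with Yoneda. The only cosmetic difference is that in the $\Ran$ case the paper uncurries to produce a coend $\int^{d}\D(d,Fc)\times Hd$ and invokes the coYoneda reduction, whereas you curry the other way and invoke the Yoneda reduction directly on the representable presheaf; these are interchangeable one-line moves.
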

\begin{proof}
\cite[\S 2.1]{loregian15}
The following chain of natural isomorphisms proves the bijection. It
relies on the previous results in (co)end calculus.  We will first
prove the adjunction for the right Kan extension.
\begin{align*}
& \Nat \left( H , \int_{c \in \C} \Sets (\D(-,Fc) , Gc) \right) \\
\cong & \qquad \mbox{(Natural transformation as an end)} \\
& \int_{d \in \D}  \Sets \left(   Hd , \int_{c \in \C} \Sets (\D(d,Fc) , Gc) \right) \\
\cong & \qquad \mbox{(Continuity)} \\
& \int_{d \in \D} \int_{c \in \C}  \Sets \left(   Hd ,  \Sets (\D(d,Fc) , Gc) \right) \\
\cong & \qquad \mbox{(Exponential)} \\
& \int_{d \in \D} \int_{c \in \C}  \Sets \left(  \D(d,Fc) \times Hd , Gc \right) \\
\cong & \qquad \mbox{(Fubini)} \\
&  \int_{c \in \C} \int_{d \in \D}  \Sets \left(  \D(d,Fc) \times Hd , Gc \right) \\
\cong & \qquad \mbox{(Continuity)} \\
&  \int_{c \in \C}   \Sets \left( \int^{d \in \D} \D(d,Fc) \times Hd , Gc \right) \\
\cong & \qquad \mbox{(Yoneda lemma)} \\
&  \int_{c \in \C}   \Sets \left( HFc , Gc \right) \\
\cong & \qquad \mbox{(Natural transformation as an end)} \\
&  \Nat \left( H \circ F , G \right). \\
\end{align*}
A similar reasoning can be used for the case of the left Kan extension.
\begin{align*}
& \Nat \left( \int^{c \in \C} \D(Fc, -) \times Gc , H\right) \\
\cong & \qquad \mbox{(Natural transformation as an end)} \\
& \int_{d \in \D} \Sets \left(\int^{c \in \C} \D(Fc, d) \times Gc , Hd\right) \\
\cong & \qquad \mbox{(Continuity)} \\
& \int_{d \in \D} \int_{c \in \C} \Sets \left( \D(Fc, d) \times Gc , Hd\right) \\
\cong & \qquad \mbox{(Exponential)} \\
& \int_{d \in \D} \int_{c \in \C} \Sets \left(  Gc , \Sets(\D(Fc, d), Hd)\right) \\
\cong & \qquad \mbox{(Fubini)} \\
& \int_{c \in \C}\int_{d \in \D}  \Sets \left(  Gc , \Sets(\D(Fc, d), Hd)\right) \\
\cong & \qquad \mbox{(Continuity)} \\
& \int_{c \in \C}  \Sets \left(  Gc , \int_{d \in \D} \Sets(\D(Fc, d), Hd)\right) \\
\cong & \qquad \mbox{(Yoneda lemma)} \\
& \int_{c \in \C}  \Sets \left(  Gc , HFc\right) \\
\cong & \qquad \mbox{(Natural transformation as an end)} \\
& \Nat\left(  G , H \circ F\right).
\end{align*}
As adjoints are unique up to isomorphism, these two derivations imply the result.
\end{proof}

\section{Monoids}
\label{sec:orgbff6d35}
Monoids will appear repeteadly in this text in various forms.  Instead
of describing each one of these separately, we will make use of the
formal theory of monoids and monads \cite{street72}, that defines them
as objects in an arbitrary 2-category.  We will even consider
\emph{pseudomonoids} (see \cite{nlab}), monoids up to isomorphism in a monoidal
2-category.

\subsection{The category of monoids}
\label{sec:orgf2864db}

\begin{definition}
In a monoidal category \((\M, \otimes, i, \lambda, \rho, \alpha)\), a \textbf{monoid} is an object \(m \in \M\) endowed
with morphisms \(e \colon i \to m\) and \(\mu \colon m \otimes m \to m\), called respectively \emph{unit}
and \emph{multiplication}; and such that the following equalities hold.  These are
called \emph{left/right unitality} and \emph{associativity}.
\[\begin{tikzcd}
m \rar{\lambda} & i\otimes m \dar{e \otimes \id} & m \otimes i \dar[swap]{\id \otimes e} & m \lar[swap]{\rho} \\
& m \otimes m \ular{\mu} & m \otimes m \urar[swap]{\mu} &
\end{tikzcd}\]
\[\begin{tikzcd}
\dar{\mu} m \otimes (m \otimes m) \ar{rr}{\alpha} && (m \otimes m) \otimes m \dar{\mu} \\
m \otimes m \rar{\mu} & m & m \otimes m \lar[swap]{\mu}
\end{tikzcd}\]
Diagramatically, following \cite{marsden14}, these equations are the following.

\[\diagramone = \diagramtwo = \diagramthree\]
\[\diagramfour = \diagramfive\]

The same definition can be done in a bicategory after fixing a 0-cell.
Recall that a monoidal category can be seen as a bicategory with a
single object. If we relax the equalities to be isomorphisms, we get the notion of
\textbf{pseudomonoid} in a monoidal 2-category. Finally, \textbf{comonoids} are monoids
in the opposite category.
\end{definition}
\begin{definition}
A \textbf{morphism of monoids} between \((m, \mu, e)\) and \((n, \mu', e')\) in the monoidal
category \(\M\) is given by a morphism \(f \colon m \to n\) such that the following
diagrams commute.
\[\begin{tikzcd}[column sep=tiny]
m \ar{rr}{f} && n & m \otimes m \ar{rrr}{f \otimes f} \dar[swap]{\mu} &&\phantom{a}& n \otimes n \dar{\mu'} \\
& i \ular{e}\urar[swap]{e'} &  & m\ar{rrr}{f} &&& n
\end{tikzcd}\]
In other words, it preserves the unit and the multiplication.
\end{definition}
We can construct a category of monoids, \(\mathbf{Mon}(\M)\), over the
original monoidal category. This makes it possible to talk about
freely generated monoids.  In some cases, we have a formula for constructing
them.

\begin{proposition}[{{\cite[\S VII.3, Theorem 2]{maclane71}}}]
\label{orgd94a312}
Let \(\C\) be a category with coproducts indexed by the natural numbers and where
the tensor product distributes over the coproducts. The forgetful functor \(U \colon \mathbf{Mon}(\M) \to \M\)
has a left adjoint \((-)^{\ast} \colon \mathbf{Mon}(\M) \to \M\) which is given by the following
\emph{geometric series}.
\[
x^{\ast} = i + x + x \otimes x + x \otimes x \otimes x + \dots
\]
\end{proposition}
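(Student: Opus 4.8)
The plan is to show that $x^{\ast} = \sum_{n \in \nat} x^{\otimes n}$ carries a canonical monoid structure and that this monoid is the value of a left adjoint to the forgetful functor $U \colon \mathbf{Mon}(\M) \to \M$. I would split the argument into three parts: equipping $x^{\ast}$ with unit and multiplication, checking the monoid axioms, and verifying the universal property that exhibits the adjunction.

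First I would construct the monoid structure on $x^{\ast}$. The unit $e \colon i \to x^{\ast}$ is the coproduct coprojection into the $0$-th summand $x^{\otimes 0} = i$. For the multiplication, the key observation is that, since the tensor distributes over the coproducts, we get
\[
x^{\ast} \otimes x^{\ast} \cong \left( \sum_{m \in \nat} x^{\otimes m} \right) \otimes \left( \sum_{n \in \nat} x^{\otimes n} \right) \cong \sum_{m,n \in \nat} x^{\otimes m} \otimes x^{\otimes n}.
\]
On each summand we have a canonical associativity isomorphism $x^{\otimes m} \otimes x^{\otimes n} \cong x^{\otimes (m+n)}$, which composed with the coprojection into the $(m+n)$-th summand of $x^{\ast}$ gives, by the universal property of the coproduct, the multiplication $\mu \colon x^{\ast} \otimes x^{\ast} \to x^{\ast}$.

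Next I would verify unitality and associativity. These reduce, summand by summand, to the coherence isomorphisms $\lambda$, $\rho$, $\alpha$ of the monoidal structure; because coproduct coprojections are jointly epic and tensoring preserves coproducts, checking the monoid diagrams amounts to checking them on each $x^{\otimes m} \otimes x^{\otimes n}$ (or triple product), where they follow from Mac Lane's coherence theorem for monoidal categories. Finally, for the universal property, given any monoid $(n, \mu', e')$ and a morphism $f \colon x \to n$ in $\M$, I would define $\hat{f} \colon x^{\ast} \to n$ on the summand $x^{\otimes k}$ as the $k$-fold multiplication $\mu'^{(k)} \circ f^{\otimes k}$ (with $\hat{f}$ on the $0$-th summand being $e'$), and assemble these into a single morphism by the universal property of the coproduct. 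One checks $\hat{f}$ is a monoid morphism and that it is the unique one restricting to $f$ on the first summand, which establishes the bijection $\mathbf{Mon}(\M)(x^{\ast}, n) \cong \M(x, Un)$ natural in $n$, and hence the adjunction $(-)^{\ast} \dashv U$.

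The main obstacle is bookkeeping rather than conceptual: making the distributivity isomorphism, the associators $x^{\otimes m} \otimes x^{\otimes n} \cong x^{\otimes(m+n)}$, and the iterated multiplications $\mu'^{(k)}$ all coherent enough that the associativity square and the uniqueness claim genuinely commute, without appealing to strictness. I expect to lean heavily on coherence to suppress the associators, and the one point requiring care is that the iterated multiplication $\mu'^{(k)}$ on the target monoid is well-defined independently of the bracketing, which is exactly the associativity of $(n,\mu',e')$; this is what guarantees $\hat{f}$ respects $\mu$ and so lands in $\mathbf{Mon}(\M)$.
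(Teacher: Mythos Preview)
Your proposal is correct and follows the standard construction of the free monoid. Note that the paper itself does not give a proof of this proposition: it simply cites Mac Lane \cite[\S VII.3, Theorem 2]{maclane71} and moves on. What you have written is essentially the argument one finds there --- construct the unit and multiplication on $x^{\ast}$ using distributivity, verify the monoid axioms via coherence, and check the universal property by defining the induced map summand-by-summand --- so you are supplying the proof the paper deliberately omits.
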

Finally, we can consider monoids acting on another object of the
bicategory, called a \emph{monoid module}. This can also be extended to 
the case of a 2-category. In the more general case of a monoidal
2-category, the equalities are relaxed again to isomorphisms and
we can define actions of a pseudomonoid, which are called \emph{actegories}
\cite{nlab}.

\begin{definition}
\label{def:module}
Let \((m, e, \mu)\) be a monoid in a 2-category \(\M\) with \(m \colon A \to A\).  A \textbf{module}
for this monoid is a 1-cell \(n \colon B \to A\) endowed with a 2-cell \(h \colon m \otimes n \to n\).
This requires \(n\) to be composable with \(m\) on one side.  The module
must satisfy some axioms saying that it interplays nicely with 
unitality and multiplication, as described in the following diagrams
and in Figure \ref{figaxiomsmodule}.
\[\begin{tikzcd}
(m \otimes m) \otimes n \dar[swap]{\mu}\ar{rr}{\alpha} && m \otimes (m \otimes n) \dar{h} & n \rar{\lambda} & i \otimes n \dar{e \otimes \id} \\
m \otimes n \rar{h} & n & m \otimes n \lar[swap]{h} && m \otimes n \ular{h}
\end{tikzcd}\]
\end{definition}

The graphical representation of a monoid \(m \in \M\) acting on some
object \(a \in \M\) where \(\M\) is a 2-category can be seen in Figure \ref{figaxiomsmodule}.
\begin{figure}[h]
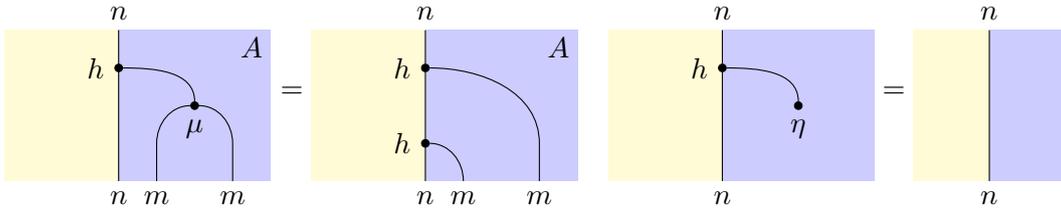

\begin{center}
\[ \monoidone = \monoidtwo \quad \monoidthree = \monoidfour\]
\end{center}
\caption{Axioms for a module in a 2-category, following Definition \ref{def:module}}
\label{figaxiomsmodule}
\end{figure}
\begin{definition}[\cite{street72}]
\label{def:kleisli}
Let \((m,e,\mu)\) be a monoid in a 2-category with \(m \colon A \to A\).
The \textbf{Kleisli object} for this monoid is given by a module \((A_m, f_m, \lambda)\) where 
\(f_m \colon A \to A_m\) is a 1-cell and \(\lambda \colon f_m \circ m \tonat f_m\) is a 2-cell.  The Kleisli object
is the universal module, in the sense that any other module \((B,g,\kappa)\) can be
written uniquely as this module in parallel with some \(u \colon A_m \to B\).
\end{definition}
\begin{figure}[h]
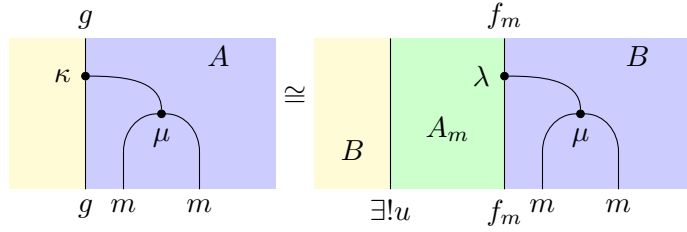

\begin{center}
\[\kleislione \cong \kleislitwo\]
\end{center}
\caption{A diagrammatic description of the universal property of the Klesli object, following Definition \ref{def:kleisli}.}
\label{diagramkleisli}
\end{figure}
In the particular case of monads in \(\mathbf{Cat}\), one recovers the usual notion
of Kleisli category.
\subsection{Applicative functors}
\label{sec:org0a9bd56}
\begin{definition}
Let \((\M,\otimes, i)\) be a monoidal category. The category of copresheaves over
it, \([\M , \Sets]\) can be endowed with the structure of a monoidal category
using \textbf{Day convolution}.  We write Day convolution of two copresheaves
\(F, G \colon \M \to \Sets\) as \(F \ast G\); it is given by the following coend.
\[
(F \ast G)(m) = \int^{x,y \in \M} \M(x \otimes y , m) \times F(x) \times G(y)
\]
\end{definition}

Using coend calculus, we can check both associativity and the fact
that the unit for this monoidal category is \(\hirayo_i\), the
representable copresheaf on the unit of the monoidal category.
See \cite[\S6]{loregian15} for details.
It makes sense now to ask what are the monoids of this new monoidal
category. In the case where we also take our base monoidal category
to be \(\Sets\), these are the \emph{applicative functors} widely used in
programming contexts and defined in \cite{mcbride08}. The description
of applicatives as monoids for Day convolution can be found in 
\cite{rivas17}.

\begin{definition}
\label{org478a9a6}
An \textbf{applicative functor} is a monoid on the category of endofunctors
on \(\Sets\) endowed with Day convolution as a tensor product.
\end{definition}

This means that it is endowed with natural transformations \(e \colon \hirayo_i \tonat F\)
and \(\mathbf{\mu} \colon F \ast F \tonat F\). Considering we are working in a cartesian closed
category, we can reduce the first to a family \(u_a \colon a \to Fa\) natural in \(a\); and the
second can be reduced via (co)end calculus to a natural family of morphisms
\(w_{a,b} \colon Fa \times Fb \to F(a \times b)\) natural on \(a, b \in \Sets\). 
\begin{align*}
& F \ast F \tonat F \\
\cong & \qquad\mbox{(Natural transformation as an end)} \\
& \int_c \Sets((F \ast F)c ,  Fc) \\
\cong & \qquad\mbox{(Definition of Day convolution)} \\
& \int_c \Sets\left(\left(  \int^{a,b} Fa \times Fb \times \Sets(a \times b , c) \right) , Fc \right) \\
\cong & \qquad\mbox{(Continuity)} \\
& \int_c\int_{a,b} \Sets(Fa \times Fb \times \Sets(a \times b , c) , Fc) \\
\cong & \qquad\mbox{(Fubini)} \\
& \int_{a,b}\int_c \Sets(Fa \times Fb \times \Sets(a \times b , c) , Fc) \\
\cong & \qquad\mbox{(Yoneda lemma)} \\
& \int_{a,b}\Sets(Fa \times Fb , F(a \times b)).
\end{align*}
We will write now the monoid axioms in terms of this representation.
Knowing how the Yoneda lemma is constructed, one can see how to write
\(\mu\) in terms of \(u\).  The laws then become the following diagrams; we left
implicit the unitors and associators of the cartesian product.
\[\begin{tikzcd}
Fa \drar[swap]{\id}\rar{u_1} & Fa \times F1 \dar{\mu_{a,1}} & F1 \times Fa \dar[swap]{\mu_{1,a}} & Fa\lar[swap]{u_1} \dlar{\id} \\
& Fa & Fa &
\end{tikzcd}\]
\[\begin{tikzcd}
Fa \times Fb \times Fc \rar{w_{a,b}} \dar[swap]{w_{b, c}} & F(a \times b) \times Fc \dar{w_{a \times b, c}} \\
Fa \times F(b \times c) \rar{w_{a,b \times c}} & F(a \times b \times c)
\end{tikzcd}\]
Applicative functors are also defined sometimes with a family of morphisms of the
form \(\mu'_{b,c} \colon F(b \to c) \times Fb \to Fc\). This is again equivalent via
the Yoneda lemma.
\begin{align*}
& \int_{a,b} \Sets(Fa \times Fb , F(a \times b)) \\
\cong & \qquad\mbox{(Yoneda lemma)} \\
& \int_{a,b} \Sets \left(Fa \times Fb ,\int_c \Sets \left(  \Sets(a \times b , c) , F(c) \right)\right) \\
\cong & \qquad\mbox{(Exponential)} \\
& \int_{a,b} \Sets \left(Fa \times Fb ,\int_c \Sets \left(  \Sets(a, b \to c) , F(c) \right)\right) \\
\cong & \qquad\mbox{(Continuity)} \\
& \int_{a,b} \int_c \Sets \left(Fa \times Fb , \Sets \left(  \Sets(a, b \to c) , F(c) \right)\right) \\
\cong & \qquad\mbox{(Fubini)} \\
& \int_{b,c} \int_{a} \Sets \left(Fa \times Fb , \Sets \left(  \Sets(a, b \to c) , F(c) \right)\right) \\
\cong & \qquad\mbox{(Exponential)} \\
& \int_{b,c} \int_{a} \Sets \left(Fa \times Fb \times \Sets(a, b \to c), F(c) \right) \\
\cong & \qquad\mbox{(Exponential)} \\
& \int_{b,c} \int_{a} \Sets \left(\Sets(a, b \to c), \Sets(Fa \times Fb , F(c)) \right) \\
\cong & \qquad\mbox{(Yoneda)} \\
& \int_{b,c} \Sets(F(b \to c) \times Fb , F(c)). \\
\end{align*}
In any of these cases, we can define a category \(\mathbf{App}\) of applicative functors
with monoid morphisms between them.
Being monoids, the next question is if we can generate \emph{free} applicative functors.
We will apply Proposition \ref{orgd94a312}, but we first need a lemma showing that
that our case actually satisfies the conditions of the theorem.

\begin{lemma}
\label{orgf98c8e5}
Day convolution distributes over coproducts.  Recall that colimits in a
category of presheaves are computed pointwise. We are saying then that
\(F \ast (G + H) \cong (F \ast G) + (F \ast H)\) for any functors \(F,G,H \in [\C , \Sets]\).
\end{lemma}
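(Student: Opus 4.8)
The plan is to unfold the definition of Day convolution pointwise and then push the coproduct through the coend, relying on the fact that coends are particular colimits and that colimits commute with colimits in $\Sets$. Since the lemma already reminds us that colimits in $[\M, \Sets]$ are computed pointwise, it suffices to produce, for each $m \in \M$, an isomorphism in $\Sets$ between $(F \ast (G+H))(m)$ and $((F \ast G) + (F \ast H))(m)$, and to check that it is natural in $m$.

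First I would write out the left hand side using the coend formula, obtaining $(F \ast (G+H))(m) = \int^{x,y} \M(x \otimes y , m) \times F(x) \times (G+H)(y)$. Because coproducts in $[\M, \Sets]$ are computed pointwise, I may replace $(G+H)(y)$ by $G(y) + H(y)$, so the integrand becomes $\M(x \otimes y , m) \times F(x) \times (G(y) + H(y))$. Next I would invoke that the cartesian product in $\Sets$ distributes over coproducts, rewriting the integrand as a coproduct of two terms, $(\M(x \otimes y , m) \times F(x) \times G(y)) + (\M(x \otimes y , m) \times F(x) \times H(y))$.

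The key step is then to commute the coend with this coproduct. A coend is by definition a coequalizer of coproducts, hence a particular colimit; since colimits commute with colimits, the coend of a pointwise coproduct is the coproduct of the two coends. This yields the isomorphism $\left(\int^{x,y} \M(x \otimes y , m) \times F(x) \times G(y)\right) + \left(\int^{x,y} \M(x \otimes y , m) \times F(x) \times H(y)\right)$, which is exactly $(F \ast G)(m) + (F \ast H)(m) = ((F \ast G) + (F \ast H))(m)$.

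The main obstacle is precisely this commutation of the coend past the coproduct; once the general colimit-commutation principle is accepted, everything else is a routine chain of canonical isomorphisms. I would finally observe that $m$ enters only through the functorial factor $\M(x \otimes y , m)$, so each isomorphism in the chain is natural in $m$; the pointwise isomorphisms therefore assemble into the claimed natural isomorphism $F \ast (G+H) \cong (F \ast G) + (F \ast H)$ of copresheaves.
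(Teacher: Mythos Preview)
Your proposal is correct and follows essentially the same chain of isomorphisms as the paper: unfold Day convolution, replace $(G+H)(y)$ by $G(y)+H(y)$ using that coproducts of copresheaves are pointwise, distribute the product over the coproduct in $\Sets$, and then commute the coend past the binary coproduct. The paper states this last step as ``End distributes over sum'' (a typo for coend), whereas you justify it via the general principle that colimits commute with colimits; this is the same manoeuvre, and your added remark on naturality in $m$ is a welcome bit of extra care.
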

\begin{proof}
The proof is straightforward using (co)end calculus and making the coend
distribute over the coproduct.  We compute the following chain of isomorphisms
natural in \(c \in \Sets\).
\begin{align*}
& F \ast (G + H)  \\
\cong & \qquad\mbox{(Definition of Day convolution)} \\
& \int^{x,y} \Sets(a \times b , -) \times (G + H)a \times Fb \\
\cong & \qquad\mbox{(The coproduct of functors is computed pointwise)} \\
& \int^{x,y} \Sets(a \times b , -) \times (Ga + Ha) \times Fb \\
\cong & \qquad\mbox{(Product distributes over sum)} \\
& \int^{x,y} \Sets(a \times b , -) \times Ga \times Fb + \Sets(a \times b , -) \times Ha \times Fb\\
\cong & \qquad\mbox{(End distributes over sum)} \\
& \left(  \int^{x,y} \Sets(a \times b , -) \times Ga \times Fb \right) +  \left(  \int^{x,y} \Sets(a \times b , -) \times Ha \times Fb \right)\\
\cong & \qquad\mbox{(Definition of Day convolution)} \\
& (F \ast G) + (F \ast H). & \qedhere
\end{align*}
\end{proof}

\begin{theorem}
\label{org4c8a4e9}
Let \(X \colon \Sets \to \Sets\) be an endofunctor. The free applicative functor
over it, \(X^{\ast}\), can be computed as the following colimit.
\[
X^{\ast} = \mathrm{Id} + X + X \ast X + X \ast X \ast X + \dots
\]
In particular, we have the following adjunction \(\mathbf{App}(X^{\ast}, F) \cong [\C, \Sets](X,F)\)
natural in \(F\), an applicative functor; and \(X\), an arbitrary functor.
\end{theorem}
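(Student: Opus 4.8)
The plan is to recognise this as a direct instance of the formal free-monoid construction. By Definition \ref{org478a9a6}, an applicative functor is exactly a monoid in the monoidal category $([\Sets,\Sets], \ast, \hirayo_i)$ of endofunctors equipped with Day convolution, so $\mathbf{App} = \mathbf{Mon}([\Sets,\Sets])$ and the claimed adjunction $\mathbf{App}(X^{\ast}, F) \cong [\Sets,\Sets](X,F)$ is precisely the free--forgetful adjunction $(-)^{\ast} \dashv U$ asserted by Proposition \ref{orgd94a312}. It therefore suffices to check that this monoidal category satisfies the hypotheses of that proposition and then to identify the terms of the resulting geometric series.

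First I would verify that $[\Sets,\Sets]$ has coproducts indexed by $\mathbb{N}$: since $\Sets$ is cocomplete and colimits of copresheaves are computed pointwise, all such coproducts exist and are again computed pointwise. Next, Proposition \ref{orgd94a312} requires the tensor product to distribute over these coproducts. Lemma \ref{orgf98c8e5} establishes exactly this for binary coproducts; the only additional work is to observe that the same coend computation goes through verbatim for a countable family, since the cartesian product of $\Sets$ preserves arbitrary coproducts (being a left adjoint in each variable, as $\Sets$ is cartesian closed) and coends, being colimits, commute with coproducts. This is the one step that is not a literal citation, and so constitutes the main, albeit mild, obstacle.

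With the hypotheses in place, Proposition \ref{orgd94a312} produces the free monoid on $X$ as the geometric series
\[
X^{\ast} = \hirayo_i + X + X \ast X + X \ast X \ast X + \dots,
\]
together with the natural isomorphism $\mathbf{App}(X^{\ast}, F) \cong [\Sets,\Sets](X,F)$. The final step is to rewrite the leading term: the unit of Day convolution is the representable copresheaf $\hirayo_i$ on the unit object of the base monoidal category, which here is $(\Sets, \times, 1)$. Hence $\hirayo_i = \hirayo_1 = \Sets(1,-) \cong \mathrm{Id}_{\Sets}$, and substituting this identification gives the stated formula $X^{\ast} = \mathrm{Id} + X + X \ast X + X \ast X \ast X + \dots$. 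Naturality of the adjunction in $F$ is inherited directly from the free--forgetful adjunction.
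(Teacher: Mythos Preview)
Your proposal is correct and follows essentially the same approach as the paper: apply Proposition \ref{orgd94a312} after checking its hypotheses via Lemma \ref{orgf98c8e5}, then identify the Day-convolution unit $\hirayo_1 \cong \Sets(1,-)$ with the identity functor. The paper's proof is terser and does not make explicit the passage from binary to countable distributivity that you flag, but otherwise the structure is identical.
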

\begin{proof}
We apply Proposition \ref{orgd94a312}.  The category
of copresheaves has all small coproducts and Day convolution
distributes over coproducts because of Lemma \ref{orgf98c8e5}. Note
also that the unit of Day convolution in this case is given by \(\Sets(1,-)\),
which is naturally isomorphic to the identity functor.
\end{proof}

We will be interested in the following particular case both during our
study of traversable functors and while describing an optic for applicative
functors.

\begin{corollary}
\label{org79dc269}
Let \(S(c) = a \times (b \to c)\) be a functor for some fixed \(a,b \in \Sets\). Its free applicative
functor is \(S^{\ast}(c) = \sum\nolimits_n a^n \times (b^n \to c)\).
\end{corollary}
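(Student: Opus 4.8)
The plan is to invoke Theorem~\ref{org4c8a4e9} directly. Since $S^{\ast}$ is the free applicative functor on $S$, it is computed as the colimit $S^{\ast} = \mathrm{Id} + S + S \ast S + S \ast S \ast S + \dots = \sum_{n} S^{\ast n}$, where $S^{\ast n}$ denotes the $n$-fold Day convolution of $S$ with itself. Because coproducts of copresheaves are computed pointwise, it suffices to identify each $S^{\ast n}(c)$ with $a^n \times (b^n \to c)$ and then read off the coproduct over $n \in \nat$. So the real content is a formula for the iterated Day convolution.

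I would establish $S^{\ast n}(c) \cong a^n \times (b^n \to c)$ by induction on $n$, using associativity of Day convolution to write $S^{\ast(n+1)} \cong S \ast S^{\ast n}$. The base cases are immediate: $S^{\ast 0} = \mathrm{Id}$ matches $a^0 \times (b^0 \to c) = 1 \times (1 \to c) \cong c$, and $S^{\ast 1} = S$ matches $a^1 \times (b^1 \to c)$. For the inductive step, I would expand the convolution by its defining coend and pull out the constant factors $a$ and $a^n$, using that the product distributes over the relevant colimits:
\begin{align*}
S^{\ast(n+1)}(c)
&\cong \int^{x,y} \Sets(x \times y, c) \times \bigl(a \times (b \to x)\bigr) \times \bigl(a^n \times (b^n \to y)\bigr) \\
&\cong a^{n+1} \times \int^{x,y} \Sets(x \times y, c) \times (b \to x) \times (b^n \to y).
\end{align*}

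The heart of the argument is then the double coend, which I would resolve by two applications of the Yoneda reduction from the Ninja Yoneda lemma. After currying $\Sets(x \times y, c) \cong \Sets(x, y \to c) = \hirayo^{y \to c}(x)$ and writing $(b \to x) = \hirayo_b(x)$, the integral over $x$ has the shape $\int^{x} \hirayo_b(x) \times \hirayo^{y \to c}(x) \cong \hirayo_b(y \to c) = \Sets(b, y \to c)$. Recurrying this as $\Sets(y, b \to c) = \hirayo^{b \to c}(y)$ and writing $(b^n \to y) = \hirayo_{b^n}(y)$, the remaining integral over $y$ is $\int^{y} \hirayo_{b^n}(y) \times \hirayo^{b \to c}(y) \cong \hirayo_{b^n}(b \to c) = \Sets(b^n, b \to c) \cong \Sets(b^{n+1}, c) = (b^{n+1} \to c)$. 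Reassembling yields $S^{\ast(n+1)}(c) \cong a^{n+1} \times (b^{n+1} \to c)$, closing the induction.

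Finally I would assemble the pieces: Theorem~\ref{org4c8a4e9} gives $S^{\ast}(c) \cong \sum_{n} S^{\ast n}(c) \cong \sum_{n} a^n \times (b^n \to c)$, naturally in $c$ since every step is a natural (co)end manipulation. I do not expect a serious obstacle here; the only place demanding care is the bookkeeping of currying and uncurrying, so that at each stage the coend is presented in exactly the form $\int^{x} Fx \times \hirayo^a x$ needed to fire the Yoneda reduction, and the observation that the product commutes past the constant factors $a^n$ and the coproduct, which is guaranteed by the distributivity hypotheses already inherited from Theorem~\ref{org4c8a4e9}.
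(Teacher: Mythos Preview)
Your proof is correct and is essentially the same argument as the paper's: both invoke Theorem~\ref{org4c8a4e9}, reduce to showing $S^{\ast(n+1)}(c) \cong a^{n+1}\times(b^{n+1}\to c)$ from $S^{\ast n}(c)\cong a^n\times(b^n\to c)$, pull the constant $a^{n+1}$ out of the coend, curry, apply coyoneda in $x$, re-curry, and apply coyoneda in $y$. The only cosmetic difference is that you phrase the Yoneda reductions explicitly via the $\hirayo$ notation while the paper simply labels each step ``Yoneda lemma''.
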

\begin{proof}
We will show that the Day convolution of \(a^n \times (b^n \to -)\) with \(a \times (b \to -)\) is
precisely \(a^{n+1} \times (b^{n+1} \to -)\).
\begin{align*}
& \int^{x, y} \Sets(x \times y , c) \times (a \times (b \to x)) \times (a^n \times (b^n \to y)) \\
\cong & \qquad\mbox{(Fubini rule)} \\
& a^{n + 1} \times \int^{x, y} \Sets(x \times y , c) \times (b \to x) \times (b^n \to y) \\
\cong & \qquad\mbox{(Exponential)} \\
& a^{n + 1} \times \int^{x, y} \Sets(x , y \to c) \times (b \to x) \times (b^n \to y) \\
\cong & \qquad\mbox{(Yoneda lemma)} \\
& a^{n + 1} \times \int^{y} (b \to (y \to c)) \times (b^n \to y) \\
\cong & \qquad\mbox{(Exponential)} \\
& a^{n + 1} \times \int^{y} (y \to (b \to c)) \times (b^n \to y) \\
\cong & \qquad\mbox{(Yoneda lemma)} \\
& a^{n + 1} \times (b^n \to (b \to c)) \\
\cong & \qquad\mbox{(Exponential)} \\
& a^{n + 1} \times (b^{n+1} \to c)
\end{align*}
We note also that the identity functor can be written as \(a^0 \times (b^0 \to -)\). Thus,
we can apply Lemma \ref{org4c8a4e9} and complete the proof by induction.
\end{proof}

\subsection{Morphisms of monads}
\label{sec:org0c11977}
\begin{definition}[\cite{reddy95}]
\label{org4babfba}
A \textbf{monad morphism} between two monads, \((S, \mu, \eta)\) and \((T, \mu', \eta')\)
is a natural transformation \(\alpha \colon S \to T\) preserving units and multiplications.
Diagrammatically, it must make the following diagram commute.
\[\begin{tikzcd}
S^2 \rar{\mu} \dar[swap]{\alpha^2} & S \dar{\alpha} & \lar[swap]{\eta} \mathrm{Id} \dar{\id} \\
T^2 \rar[swap]{\mu'} & T & \lar{\eta'} \mathrm{Id} \\
\end{tikzcd}\]
In other words, it is a morphism of monoids in the monoidal category of endofunctors.
Dually, a morphism of comonads is a morphism of comonoids in the category of
endofunctors.
\end{definition}

Every monad morphism \(\alpha \colon S \to T\) induces a functor \(T\mbox{-Alg} \to S\mbox{-Alg}\) 
between the Eilenberg-Moore categories of the monads. A morphism of \(T\mbox{-algebras}\) \(f \colon (a, \rho) \to (b, \rho')\),
can be reinterpreted as a morphism of \(S\mbox{-algebras}\) as follows.
\[\begin{tikzcd}
Sa \dar[swap]{\alpha} \rar{Sf} & Sb \dar{\alpha} \\
Ta \dar[swap]{\rho} \rar{Tf} & Tb \dar{\rho'} \\
a  \rar{f} & b
\end{tikzcd}\]
In this diagram, the upper square commutes because of the definition of
monad morphism. The lower square commutes because \(f\) is a morphism of
\(T\mbox{-algebras}\).

\begin{proposition}
\label{org88623d6}
Let \((S, \mu, \eta)\) and \((T, \mu', \eta')\) be two monads on the same category \(\C\).
Let \(A \colon T\mbox{-Alg} \to S\mbox{-Alg}\) be a functor preserving the forgetful
functor.
\[\begin{tikzcd}[column sep=0ex]
T\mbox{-Alg} \drar[swap]{U}\ar{rr}{A} && S\mbox{-Alg} \dlar{U}\\
& \C &
\end{tikzcd}\]
Then \(A\) is induced by some monad morphism in the way described earlier.
\end{proposition}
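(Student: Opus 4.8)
The plan is to reverse-engineer the monad morphism from the functor $A$. Recall that the construction described earlier sends a monad morphism $\alpha \colon S \to T$ to the functor carrying a $T$-algebra $(a,\sigma)$ to the $S$-algebra $(a, \sigma \circ \alpha_a)$; so I must recover $\alpha$ from the behaviour of $A$ on the free $T$-algebras. Concretely, for each $c \in \C$ the free $T$-algebra is $(Tc, \mu'_c)$, and since $A$ preserves the forgetful functor its image $A(Tc, \mu'_c)$ is an $S$-algebra whose carrier is again $Tc$; call its structure map $\rho_c \colon S(Tc) \to Tc$. I then define $\alpha_c \colon Sc \to Tc$ as the composite $\alpha_c = \rho_c \circ S\eta'_c$. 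This definition is forced: if $A$ genuinely came from a morphism $\alpha$, one computes $\rho_c = \mu'_c \circ \alpha_{Tc}$, and then $\rho_c \circ S\eta'_c = \alpha_c$ by naturality of $\alpha$ and the unit law of $T$.

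Next I would check that $\alpha$ is a monad morphism. Naturality of $\alpha$ follows by applying $A$ to the free-algebra morphism $Tf \colon (Tc, \mu'_c) \to (Tc', \mu'_{c'})$: functoriality of $A$ together with preservation of $U$ gives $Tf \circ \rho_c = \rho_{c'} \circ S(Tf)$, and combining this with naturality of $\eta'$ yields $\alpha_{c'} \circ Sf = Tf \circ \alpha_c$. Unit preservation, $\alpha_c \circ \eta_c = \eta'_c$, is then immediate from naturality of $\eta$ and the $S$-algebra unit law $\rho_c \circ \eta_{Tc} = \id$.

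The multiplication law is the step I expect to be the main obstacle. The clean route is to first establish the identity $\rho_c = \mu'_c \circ \alpha_{Tc}$. For this I use that $\mu'_c$ is itself a $T$-algebra morphism $(TTc, \mu'_{Tc}) \to (Tc, \mu'_c)$ — this is exactly the associativity of $T$ — so that applying $A$ and preservation of $U$ gives $\mu'_c \circ \rho_{Tc} = \rho_c \circ S\mu'_c$; then $\mu'_c \circ \alpha_{Tc} = \mu'_c \circ \rho_{Tc} \circ S\eta'_{Tc} = \rho_c \circ S(\mu'_c \circ \eta'_{Tc}) = \rho_c$ using the unit law of $T$. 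Granting this, I expand $\alpha_c \circ \mu_c$ using naturality of $\mu$ and the $S$-algebra associativity $\rho_c \circ \mu_{Tc} = \rho_c \circ S\rho_c$ to rewrite it as $\rho_c \circ S\alpha_c$, which equals $\mu'_c \circ \alpha_{Tc} \circ S\alpha_c = \mu'_c \circ \alpha^2_c$, the required right-hand side.

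Finally I would verify that the functor induced by this $\alpha$ coincides with $A$. Here the key observation is that for any $T$-algebra $(a,\sigma)$ the structure map $\sigma$ is a $T$-algebra morphism $(Ta, \mu'_a) \to (a,\sigma)$; writing $A(a,\sigma) = (a,\tau)$, applying $A$ yields the $S$-algebra morphism identity $\sigma \circ \rho_a = \tau \circ S\sigma$, whence $\sigma \circ \alpha_a = \sigma \circ \rho_a \circ S\eta'_a = \tau \circ S(\sigma \circ \eta'_a) = \tau$ by the algebra unit law. Thus $A(a,\sigma) = (a, \sigma \circ \alpha_a)$, which is precisely the $S$-algebra assigned by the morphism $\alpha$; and since both functors act as the identity on underlying morphisms, and $S$-algebra maps are determined by their carriers, they agree on morphisms as well. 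This identifies $A$ with the functor induced by $\alpha$, completing the argument.
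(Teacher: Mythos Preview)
Your proof is correct and follows essentially the same approach as the paper: you define the same $\alpha_c = \rho_c \circ S\eta'_c$ from the $S$-algebra structure $A$ puts on the free $T$-algebras, verify naturality and the unit law identically, and check agreement with $A$ via the structure map $\sigma$ viewed as a $T$-algebra morphism from the free algebra. Your treatment of the multiplication law is slightly more streamlined than the paper's, in that you isolate the identity $\rho_c = \mu'_c \circ \alpha_{Tc}$ explicitly before invoking the $S$-algebra associativity of $\rho_c$, whereas the paper packs the same ingredients into a single large diagram; but the content is the same.
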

\begin{proof}
We know that, for any \(x \in \C\), the functor \(A\) takes the free \(T\mbox{-algebra}\)
\((\mu_x' \colon T^2x \to Tx)\) into some \(S\mbox{-algebra}\) that we call \((\overline{\mu}_x \colon STx \to Tx)\).
We will show that this defines a natural transformation \(\overline{\mu} \colon ST \tonat T\).
In fact, given any \(f \colon x \to y\), the morphism \(Tf \colon Tx \to Ty\) is a morphism
between the free algebras because of naturality of \(\mu'\). Given that \(A\) 
preserves the forgetful functor, \(Tf\) must also be a morphism between the
algebras given by \(\overline{\mu_x}\) and \(\overline{\mu_y}\).  This is precisely naturality.
\[\begin{tikzcd}
STx \dar[swap]{\overline{\mu}_x} \rar{STf} & STy \dar{\overline{\mu}_y}\\
Tx \rar{Tf} & Ty \\
T^2x \uar{\mu'} \rar{T^2f} & T^2y \uar[swap]{\mu'}
\end{tikzcd}\]
Now we define the natural transformation \(\alpha = \overline{\mu} \circ S\eta \colon S \tonat T\). We will
show it is a monad morphism. It preserves the unit because of commutativity
of the following diagram, that uses that \(\eta\) is a natural transformation and
\(\overline{\mu}\) is a monad algebra.
\[\begin{tikzcd}
x \dar{\eta_{x}}\rar{\eta'_{x}} & Tx \dar{\eta_{Tx}} \drar{\id} & \\
Sx \rar[swap]{S\eta'_{x}} & STx \rar[swap]{\overline{\mu}_{x}}& Tx
\end{tikzcd}\]
It also preserves the multiplication because of commutativity of the following
diagram. Here we use that \(\mu\) is a natural transformation, that \(\overline{\mu}\) is a monad
algebra, the unitality of the monad \(T\), and the fact that \(\mu'\) must be a morphism
between the corresponding \(S\mbox{-algebras}\) thanks to the action of the functor.
\[\begin{tikzcd}
S^2x \rar{S^2\eta_x} \dar{\mu_x} & S^2Tx \rar{S\overline{\mu}_x} \dar{\mu_{Tx}}& STx \rar{S\eta'_{Tx}}\dar{\overline{\mu}_x} & ST^2x \dar{\overline{\mu}_{Tx}} \lar[bend left]{S\mu'_x} \\
Sx \rar{S\eta'_x} & STx \rar{\overline{\mu}_x} & Tx & T^2x \lar[swap]{\mu'_x} \\
&& T^2x \uar{\mu_{x}} & T^3x \lar{T\mu'_x} \uar[swap]{\mu_{Tx}}
\end{tikzcd}\]
We can finally check that this \(\alpha\) induces the original functor \(A\). In fact,
for any \(T\mbox{-algebra}\) \((l \colon Tx \to x)\), we have an algebra morphism from the free
algebra \((\mu' \colon T^2x \to Tx)\) given precisely by \(l \colon Tx \to x\).  This must be sent by
the functor to the following \(S\mbox{-algebra}\) morphism, whose commutativity
determines that \(A\) is induced by \(\alpha\). \qedhere
\[\begin{tikzcd}
STx\dar{\overline{\mu}_{x}} \rar{Sl}& Sx \lar[bend left]{S\eta} \dar \\
Tx\rar{l} & x
\end{tikzcd}\]
\end{proof}

\subsection{Monoidal actions}
\label{sec:org807ac02}
A monoidal category can be seen as a pseudomonoid in \(\mathbf{Cat}\), and
we can consider strong monoidal functors between them that preserve
the action up to isomorphism.

\begin{definition}
\label{orga75d84e}
A \textbf{strong monoidal functor} between two monoidal categories \((\M,\otimes,i)\) and \((\N,\boxtimes,j)\)
is given by a functor \(F \colon \M \to \N\) together with \emph{structure isomorphisms}
\(\phi_i \colon j \cong F(i)\) and \(\phi_{m,n} \colon F(m) \boxtimes F(n) \cong F(m \otimes n)\) that interplay nicely
with the associators and unitors of the monoidal category in the sense that 
they make the following diagrams commute.
\[\begin{tikzcd}
(Fx \boxtimes Fy) \boxtimes Fz \rar{\alpha_{\N}} \dar[swap]{\phi_{x,y}} & Fx \boxtimes (Fy \boxtimes Fz) \dar{\phi_{y,z}}\\
F(x \otimes y) \boxtimes Fz \dar[swap]{\phi_{(x \otimes y),z}} & Fx \boxtimes F(y \otimes z) \dar{\phi_{x,(y \otimes z)}}\\
F((x \otimes y) \otimes z) \rar{\alpha_M} & F(x \otimes (y \otimes z)) \\
\end{tikzcd}\]
\[\begin{tikzcd}
Fi \boxtimes Fx \dar[swap]{\phi_{i,x}}  & \lar[swap]{\phi_i} j \boxtimes Fx \dar{\lambda_\N} & Fx \boxtimes j \rar{\phi_i}\dar[swap]{\rho_{\N}} & Fx \boxtimes Fi \dar{\phi_{x,i}}\\
F(i \otimes x) \rar[swap]{F\lambda_{\M}} & Fx              & Fx             & F(x \otimes i) \lar{F\rho_{\M}}
\end{tikzcd}\]
Let us denote by \(\mathbf{MonCat}\) the category of monoidal categories with
strong monoidal functors between them. Note that the identity
and the composition of two strong monoidal functors are again
strong monoidal functors.
\end{definition}

\begin{definition}
\label{org7b5a63d}
A \textbf{monoidal action} from a monoidal category \(\M\) into a category \(\C\) is a strong monoidal
functor \(F \colon \M \to [\C,\C]\) that has as target the monoidal category of endofunctors of \(\C\)
endowed with composition as the monoidal product. We can consider the slice category
\(\mathbf{MonCat}/[\C,\C]\) as the category of monoidal actions.
\end{definition}

It can be shown that these are precisely pseudomonid modules \(\M \times \C \to \C\),
whose laws are true up to some isomorphism.  In the
case of strong monoidal actions, because of the strictness of the
monoidal category of endofunctors, the coherence diagrams become
simplified.

\section{The bicategory of profunctors}
\label{sec:orgc46321e}
\begin{definition}
The bicategory \(\mathbf{Prof}\) has categories as 0-cells. 1-cells
between two categories \(\mathbf{C}\) and \(\mathbf{D}\) are profunctors
\(\mathbf{C} \nrightarrow \mathbf{D}\), and 2-cells between two profunctors are natural transformations.
The composition of two profunctors \(p \colon \mathbf{C} \nrightarrow \mathbf{D}\) and \(q \colon \mathbf{D} \nrightarrow \mathbf{E}\) is
written as \((q \diamond p) \colon \mathbf{C} \nrightarrow \mathbf{E}\), and is given by the following coend.
\[
(q \diamond p)(c,e) = \int^{d \in \mathbf{D}} p(c,d) \times q(d,e).
\]
The identity for this composition in a category \(\mathbf{C}\) is the hom-profunctor 
\(\mathbf{C}(-,-) \colon \mathbf{C}^{op} \times \mathbf{C} \to \mathbf{Sets}\). A detailed description of this bicategory,
together with proofs for unitality and associativity can be found for
instance in \cite[\S5]{loregian15}, where profunctors are called \emph{relators}.
\end{definition}

\subsection{Promonads}
\label{sec:orgaa7d52c}
\begin{definition}
We give the name \textbf{promonads} to the monoids in the 2-category \(\mathbf{Prof}\).
\end{definition}

Let \(\C\) be a category and \(p \colon \C \nrightarrow \C\) an endoprofunctor.  A promonad
structure on this endoprofunctor is given by some \emph{unit}, a family
of functions \(\eta_{a,b} \colon \C(a,b) \to p(a,b)\) natural in \(a,b \in \C\); and some
\emph{multiplication}, a family of functions 
\[\mu_{a,c} \colon \left(   \int^b p(a,b) \otimes p(b,c) \right) \to p(a,c)\]
natural in \(a,b \in \C\). Using continuity, we can rewrite the multiplication
as an element of the following end, that resembles function composition.
\[
\mu_{a,b} \colon \int_b p(a,b) \times p(b,c) \to p(a,c)
\]
It can be shown that unitality for the promonad is the fact that
the identity \(\id \in \C(a,a)\) is sent to some element in \(p(a,a)\) that acts as
the identity for this composition. Associativity for the promonad is the
fact that this composition is associative.  In this sense, a promonad could
be seen as embedding the category \(\C\) into a category with the same objects
but new morphisms.  This intuition can be made precise by the following
proposition.

\begin{proposition}[\cite{street72}]
\label{orgb64b29f}
A promonad induces an identity-on-objects functor to the Kleisli
category of the promonad.  Every identity-on-objects functor arises in
this way for some promonad.
\end{proposition}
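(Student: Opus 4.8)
The plan is to treat the two assertions separately, in each case unwinding the abstract Kleisli object of Definition~\ref{def:kleisli} into concrete terms inside $\Prof$.

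For the forward direction, I would describe the Kleisli category $\C_p$ of the promonad $(p,\eta,\mu)$ directly: it has the same objects as $\C$, hom-sets $\C_p(a,b) = p(a,b)$, identity on $a$ given by $\eta_{a,a}(\id_a)$, and composition given by the multiplication $\mu$. The candidate functor $J \colon \C \to \C_p$ is the identity on objects and sends $f \colon a \to b$ to $\eta_{a,b}(f)$. The key observation is that the monoid unit laws for $p$ say precisely that $\mu$ against $\eta$ reproduces the left and right unitors of $\Prof$, which by the coYoneda reduction are the functorial actions of $p$; concretely $\mu(\eta_{a,b}(f), s) = p(f,\id)(s)$ and $\mu(r,\eta_{b,c}(g)) = p(\id,g)(r)$. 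Feeding $s = \eta_{b,c}(g)$ into the first identity and invoking naturality of $\eta \colon \C(-,-) \tonat p$ then yields $\mu(\eta(f),\eta(g)) = p(f,\id)(\eta(g)) = \eta(g \circ f)$, which is functoriality of $J$; preservation of identities is immediate, and that $\C_p$ is a category at all is exactly associativity and unitality of the monoid $p$.

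For the converse, given an identity-on-objects functor $J \colon \C \to \D$, I would define the endoprofunctor $p(a,b) = \D(Ja, Jb)$, take $\eta_{a,b}$ to be the action of $J$ on morphisms, and take $\mu_{a,c}$ to be composition in $\D$. The only point needing care is that $\mu$ must be a map out of the coend $\int^{b} \D(Ja,Jb)\times\D(Jb,Jc)$: the wedge condition for this coend is the associativity of composition in $\D$ applied to $Jg$ for $g\colon b\to b'$, so composition descends to the coend and is well defined. The promonad axioms for $(p,\eta,\mu)$ are then just the functoriality of $J$ together with the identity and associativity laws of $\D$. Finally, unwinding the first part for this $p$ recovers $\C_p$ with $\C_p(a,b) = \D(a,b)$, composition equal to that of $\D$, and induced functor equal to $J$, so every identity-on-objects functor does arise in this way.

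I expect the main obstacle to be bookkeeping at the level of coends rather than any conceptual difficulty: one must keep track of the variance of $p$, identify the $\Prof$-unitors with the actions $p(f,\id)$ and $p(\id,g)$ through the coYoneda reduction, and verify the dinaturality and wedge conditions so that $\mu$ is genuinely a coend map. Once this coherence is in place, the category and functor axioms transfer mechanically in both directions.
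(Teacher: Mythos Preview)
Your proposal is correct, and there is little to compare against: the paper does not supply its own proof of this proposition but simply attributes it to Street~\cite{street72}. The informal discussion the paper gives just before the statement---that the promonad unit picks out identities, that the promonad multiplication furnishes an associative composition, and that this amounts to embedding $\C$ into a category with the same objects but new morphisms---is exactly the content you have made precise in the forward direction. Your converse construction, taking $p(a,b)=\D(a,b)$ with $\eta$ the action of $J$ and $\mu$ the composition of $\D$, is the standard one and the checks you list are the right ones. The only cosmetic point is that since $J$ is identity on objects you may as well write $p(a,b)=\D(a,b)$ from the outset rather than $\D(Ja,Jb)$, as you do in your final sentence.
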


\subsection{Promonad modules}
\label{sec:org5e9c872}
\begin{lemma}
\label{org8d8bb19}
If \(\psi \colon \C^{op} \times \C \to \Sets\) is a promonad, then \((\psi \diamond -) \colon [\C^{op} \times \C, \Sets] \to [\C^{op} \times \C, \Sets]\) is a monad.
Modules over the promonad \(\psi\) are precisely the algebras over the monad.
\end{lemma}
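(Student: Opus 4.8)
The plan is to recognise the whole statement as the specialisation to $\Prof$ of a purely formal fact: for any monoid $m$ in a monoidal category, left tensoring $(m \otimes -)$ is a monad whose Eilenberg--Moore algebras are exactly the left $m$-modules. First I would record that the endo-hom-category $[\C^{op} \times \C, \Sets]$ of $\Prof$ at the $0$-cell $\C$ is monoidal under profunctor composition $\diamond$, with unit the hom-profunctor $\C(-,-)$; this is precisely the structure relative to which a promonad was defined as a monoid $(\psi, \eta, \mu)$, with $\eta \colon \C(-,-) \tonat \psi$ and $\mu \colon \psi \diamond \psi \tonat \psi$.

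For the first claim, I would equip $T := (\psi \diamond -)$ with the monad structure induced by the monoid structure of $\psi$. The unit $\eta^T \colon \Id \tonat T$ is $p \cong \C(-,-) \diamond p \xrightarrow{\eta \diamond p} \psi \diamond p$, using the left unitor of $\diamond$, and the multiplication $\mu^T \colon T^2 \tonat T$ is
\[
\psi \diamond (\psi \diamond p) \xrightarrow{\ \cong\ } (\psi \diamond \psi) \diamond p \xrightarrow{\ \mu \diamond p\ } \psi \diamond p,
\]
using the associator of $\diamond$. The two monad laws then follow from the associativity and unitality of the monoid $\psi$ together with the coherence of the associator and unitors of $\diamond$; since $\diamond$ is associative and unital only up to coherent isomorphism, the only work is to insert these isomorphisms, which by Mac Lane coherence can be done unambiguously.

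For the second claim, I would compare definitions directly. Taking the module's domain to be $\C$ itself, a module over $\psi$ (Definition \ref{def:module}) is an endoprofunctor $p \in [\C^{op} \times \C, \Sets]$ together with a $2$-cell $h \colon \psi \diamond p \tonat p$ subject to the unitality and associativity axioms there; a $T$-algebra is an object $p$ with a structure map $a \colon Tp = \psi \diamond p \tonat p$ subject to the Eilenberg--Moore axioms. Because the monad structure on $T$ was built out of exactly $\eta$ and $\mu$, the two axiom sets coincide clause by clause: the algebra unit law $a \circ \eta^T_p = \id$ is the module unitality axiom, and the algebra associativity $a \circ \mu^T_p = a \circ Ta$ is the module associativity axiom. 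An arrow $f$ is a morphism of $T$-algebras iff $f \circ a = a' \circ Tf$, which is exactly the condition that $f$ commute with the module actions; hence the category of $\psi$-modules and the Eilenberg--Moore category of $T$ have the same objects and the same morphisms, and are therefore isomorphic.

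The argument is essentially formal, so I do not expect a deep obstacle; the one point requiring care is the bookkeeping of the associator and unitor isomorphisms of $\diamond$, which are non-strict for profunctor composition. I expect this to be the only genuine step, and I would handle it by invoking coherence rather than by explicit coend manipulation, although one could alternatively unfold everything into the coend $(\psi \diamond p)(c,c') = \int^{d} p(c,d) \times \psi(d,c')$ and verify the identifications pointwise.
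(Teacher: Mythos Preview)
Your proposal is correct and takes essentially the same approach as the paper: both define the monad's unit and multiplication from the promonad's $\eta$ and $\mu$ via the unitor and associator of $\diamond$, and then identify the module axioms with the Eilenberg--Moore algebra axioms clause by clause. Your write-up is more explicit (framing it as the general fact that $(m\otimes-)$ is a monad for any monoid $m$, and invoking coherence for the bookkeeping), but the underlying argument is identical to the paper's brief sketch.
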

\begin{proof}
The unit of the monad is given by the unit of the promonad \(\eta_p \colon p \cong \C(-,-)\diamond p \tonat \psi \diamond p\).
The multiplication is in turn given by the multiplication of the promonad as
\(\mu_p \colon \psi \diamond (\psi \diamond p) \cong (\psi \diamond \psi) \diamond p \tonat \psi \diamond p\).  It can be seen that the axioms for the
monad are precisely the axioms for the promonad.

Now a module over the promonad is given by \(\psi \diamond p \tonat p\), which is precisely 
the data for an algebra over the monad \((\psi \diamond -)\).  Because of the definition
of the unit and multiplication of the promonad, the axioms of the monad
algebra are precisely the axioms for the module.
\end{proof}

\section{Pseudomonic functors and replete subcategories}
\label{sec:org2dc131b}
\label{orga37ce19}
We will follow \cite{nlab} into getting the necessary definitions to construct
a well-behaved notion of 2-categorical image.

\begin{definition}
\label{orged771c2}
A functor \(F \colon \C \to \D\) is \textbf{pseudomonic} if it is faithful and full on
isomorphisms.  That is, given two \(c, c' \in \C\), every isomorphism
\(Fc \cong Fc'\) is the image under the functor of an isomorphism \(c \cong c'\).
\end{definition}

An alternative definition of pseudomonic functor is that they are
precisely the functors \(F \colon \C \to \D\) such that the following square
is a pullback.  It can be seen that pseudomonic functors are stable under
pullback.
\[\begin{tikzcd}
\C \rar{\id}\dar[swap]{\id} & \C\dar{F} \\
\C \rar{F} & \D
\end{tikzcd}\]

\begin{definition}
\label{org5ebfdb2}
A subcategory \(\C \subseteq \D\) is \textbf{replete} if, for all \(c \in \C\), the existence
of an isomorphism \(f \colon c \cong d\) in \(\D\) implies that \(d \in \C\) and \(f \in \C(c,d)\).
That is, the subcategory respects the isomorphisms of the original
category.
\end{definition}

The smallest replete subcategory containing a subcategory \(\C \subseteq \D\) is called
its \textbf{repletion}, \(\repl(\C)\). It can be explicitly constructed taking all objects 
of \(\D\) that admit an isomorphism to an object in \(\C\) and taking all morphisms
that can be written as composites of morphisms in \(\C\) and isomorphisms in \(\D\).

\begin{proposition}
\label{orge16820a}
Let \(\C \subseteq \D\) be a subcategory.
The inclusion \(\C \to \repl(\C)\) is an equivalence if and only if the inclusion
\(\C \to \D\) is pseudomonic.
\end{proposition}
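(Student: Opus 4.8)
The plan is to first reduce the stated equivalence to a single fullness condition, and then match that condition with pseudomonicity. Write $J \colon \C \to \repl(\C)$ for the inclusion. This functor is automatically faithful, being the inclusion of a subcategory, and it is automatically essentially surjective: by the explicit construction of $\repl(\C)$, every object of $\repl(\C)$ admits a $\D$-isomorphism to an object of $\C$, and such an isomorphism, being an isomorphism of $\D$, already lies in $\repl(\C)$. Since a functor is an equivalence exactly when it is fully faithful and essentially surjective, $J$ is an equivalence if and only if it is full, i.e. if and only if $\repl(\C)(c,c') = \C(c,c')$ for all $c,c' \in \C$. It therefore suffices to show that this fullness is equivalent to pseudomonicity of the inclusion $\C \to \D$.

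For the direction assuming $\C \to \D$ is pseudomonic, I would take a morphism $h \in \repl(\C)(c,c')$ with $c,c' \in \C$ and use the explicit description of the morphisms of $\repl(\C)$ to write $h$ as a composite of morphisms of $\C$ and isomorphisms of $\D$. After merging any adjacent isomorphisms into a single isomorphism, this composite alternates between morphisms of $\C$ and isomorphisms of $\D$. The key observation is that at every junction where an isomorphism of $\D$ meets a morphism of $\C$ the intervening object lies in $\C$, since morphisms of $\C$ connect only objects of $\C$, while the two outer endpoints are $c$ and $c'$, also in $\C$. Hence each isomorphism factor is an isomorphism of $\D$ between objects of $\C$, so by pseudomonicity it lies in $\C$. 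Thus $h$ is a composite of morphisms of $\C$, giving $h \in \C(c,c')$ and hence fullness.

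Conversely, assuming $J$ is an equivalence and therefore full, I would take any isomorphism $g \colon c \to c'$ in $\D$ with $c,c' \in \C$. Since the isomorphisms of $\D$ are among the generators of $\repl(\C)$, both $g$ and $g^{-1}$ are morphisms of $\repl(\C)$ between objects of $\C$; fullness then places both in $\C$, so $g$ is an isomorphism in $\C$. Combined with the automatic faithfulness of $\C \to \D$, this is exactly pseudomonicity. The step I expect to be the main obstacle is the pseudomonic-implies-full direction: one must argue carefully, from the generators-and-composites description of $\repl(\C)$, that any factorization of a morphism between objects of $\C$ can be normalized so that each isomorphism factor has both endpoints in $\C$, as it is precisely this normalization that lets pseudomonicity be applied factor by factor.
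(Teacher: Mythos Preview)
Your proposal is correct and follows essentially the same approach as the paper: both directions reduce to the fullness of \(J\), and for the pseudomonic-implies-full direction you and the paper alike decompose a morphism of \(\repl(\C)\) into alternating \(\C\)-morphisms and \(\D\)-isomorphisms and then use pseudomonicity on each isomorphism factor (the paper phrases this as an induction on the number of \(g\circ f\circ h\) pieces, which is exactly your normalization). The only cosmetic difference is that for the converse the paper argues by composing the pseudomonic inclusion \(\repl(\C)\hookrightarrow\D\) with the equivalence, whereas you argue directly that any \(\D\)-isomorphism between objects of \(\C\) already lies in \(\repl(\C)\) and hence in \(\C\) by fullness.
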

\begin{proof}
Note that the inclusion of a replete subcategory, \(\repl(\C) \to \D\), is pseudomonic by definition.
If \(\C \to \repl(\C)\) is an equivalence, then it is fully faithful and pseudomonic in particular.
These two facts make the composite \(\C \to \mathrm{repl}(\C) \to \D\) pseudomonic.

Let \(\C \to \D\) be pseudomonic. Morphisms in \(\repl(\C)\) can be written
as the composition of morphisms of the form \(g\circ f \circ h\) where \(f\) is
a morphism in \(\C\) and \(g,h\) are isomorphisms in \(\D\).  We will show
that for any \(c,c' \in\C\), \(\repl(\C)(c,c') \subseteq \C(c,c')\) by induction on
the minimum number of pieces \((g\circ f \circ h)\) required to form the morphism.
In the case of a morphism \((g\circ f \circ h) \in \repl(\C)(c,c')\), the
isomorphisms \(g\) and \(h\) have their source and target in \(\C\), and they must
be morphisms of \(\C\) because of the pseudomonic condition.  In
the case of a morphism \((g_1 \circ f_1 \circ h_1) \circ \dots \circ (g_k \circ f_k \circ h_k)\), we can
apply the inductive hypothesis to show that \(f_1 \circ h_1 \circ \dots \circ g_k \circ f_k\) is 
a morphism in \(\C\) and then apply the same reasoning as in the base
case to conclude that \(g_1\) and \(h_k\) are morphisms in \(\C\).

This shows that the inclusion \(\C \to \repl(\C)\) is full. It is faithful
and essentially surjective by definition, which shows it is an equivalence.
\end{proof}

Given an arbitrary functor \(F \colon \C \to \D\), that does not need to be pseudomonic,
we can define its \textbf{image}, \(\mathrm{img}(F)\), as the least subcategory containing all objects
and morphisms that are images of the functor.  Note that this does not mean
that every morphism in \(\mathrm{img}(F)\) will be the image of some morphism in \(\C\) under
the functor \(F\); there could exist morphisms that become composable only after
applying the functor.  The image of any functor is a subcategory, and we
can define the \textbf{replete image} of a functor as the repletion of its image, \(\repl( \mathrm{img}(F) )\).

\chapter{Existential optics}
\label{sec:org06c3c04}
\section{Existential optics}
\label{sec:org7d4abd3}
\label{org2dbf2a9}
The structure that is common to all optics is that they split a bigger
structure \(s\) into the focus \(a\) and some context \(m\) acting on it. In
some sense, we cannot access or act on that context, only on its shape.
The definition will capture this fact using the dinaturality condition
of a coend. However, we can still use this context to update the
original data structure, replacing the current focus by a new element.

\begin{definition}
\label{org8ab3812}
Consider a monoidal category \(\M\) and two arbitrary categories \(\C\) and \(\D\).
Let \((\underline{\phantom{a}}) \colon \M \to [ \C , \C ]\) and \((\underline{\phantom{a}}) \colon \M \to [ \D , \D ]\) be two strong monoidal
functors and let \(s,a \in \C\) and \(t,b \in \D\). An \textbf{optic} from \((s,t)\) with focus on \((a,b)\) is an element
of the following set described as a coend.
\[\Optic
\left(\begin{pmatrix} a \\ b \end{pmatrix} , \begin{pmatrix} s \\ t \end{pmatrix}\right) =
\int^{m \in \M} \C(s, \underline{m} a) \times \D(\underline{m} b, t)
\]
\end{definition}
\begin{remark}
The definition of \emph{optic} given by \cite{riley18}, or the one considered in
\cite{boisseau18}, deal only with the particular case in which \(\D\) and
\(\C\) are the same category and both actions are identified.  We are
actually defining what \cite{riley18} calls \emph{mixed optics}, that do not
have this limitation.  The following construction of the category
of optics is similar to that of
Proposition 2.0.3 in \cite{riley18} but it provides a more general result,
as it is considering arbitrary monoidal actions instead of monoidal
products and \emph{mixed} optics instead of assuming \(\C = \D\).
\end{remark}

It can be shown directly that \(\Optic\) can be given the
structure of a category; but note that we could also wait for the
profunctor representation theorem (Theorem \ref{org9a4afc7})
to describe \(\Optic\) as a Kleisli category. The main idea is that 
we can compose \(s \to \mact a\) and \(a \to \nact x\) into \(s \to \underline{(m \otimes n)}x\); and in the
same way, we can compose \(\mact b \to t\) and \(\nact y \to b\) into \(\underline{(m \otimes n)}y \to t\). The
following definition and proofs are just a formality around the structure
morphisms of the monoidal category and the monoidal functor that ensures that
everything works nicely.

In order to show that
\(\Optic\) is a category we use a notation directly taken from \cite{riley18}.
We write elements of the end \(\int\nolimits^{m\in\M} \C(s,\mact a) \times \D(\mact b, t)\) as pairs of
functions \(\optic{l}{r}\) for \(l \in \C(s,\mact a)\) and \(r \in \D(\mact b, t)\).  These pairs are
quotiented by a relation that equates \(\optic{\alpha \circ l}{r} \sim \optic{l}{r \circ \alpha}\) for any
\(\alpha \in \M(m,n)\), any \(l \in \C(s,\mact a)\) and any \(r \in \D(\nact b, t)\).

The \emph{identity} of \(\Optic\) is defined as \(\id = \optic{\phi^{-1}_i}{\phi_i}\), where 
\(\phi_i \colon \iact a \to a\) is the structure map of the monoidal action. 
Consider two optics \(\optic{l}{r} \in \Optic((a,b),(s,t))\)
and \(\optic{l'}{r'} \in \Optic((a,b),(x,y))\) given by \(l \in \C(s,\mact a)\), \(l' \in \C(a, \nact x)\),
\(r' \in \D(\nact y , b)\) and \(r \in \D(\mact b, t)\); their \emph{composition} is defined as \(\optic{\phi^{-1}_{m,n} \circ \mact l' \circ l}{r \circ \mact r' \circ \phi_{m,n}}\),
where \(\phi_{m,n} \colon \underline{(m \otimes n)} a \to \mact\nact a\) is the structure map of the monoidal action.
We show now that composition is well-defined with respect
to the equivalence relation. In fact, for any \(\alpha \colon \M(m, m')\) and
\(\beta \colon \M(n, n')\), the relations \(\optic{\alpha \circ l}{r} \sim \optic{l}{r \circ \alpha}\) and
\(\optic{\beta \circ l'}{r'} \sim \optic{l'}{r'\circ \beta}\) translate into the following relation.
\begin{align*}
& \optic{\phi^{-1}_{m,n} \circ \mact'(\beta \circ l') \circ (\alpha \circ l)}{r \circ \mact r' \circ \phi_{m,n}} \\
= & \qquad\mbox{(Functoriality of $\mact'$)} \\
& \optic{\phi^{-1}_{m,n} \circ \mact'\beta \circ \mact'l' \circ \alpha \circ l}{r \circ \mact r' \circ \phi_{m,n}} \\
= & \qquad\mbox{(Functoriality of the action makes $\alpha$ natural)} \\
& \optic{\phi^{-1}_{m,n} \circ \mact'\beta \circ  \alpha \circ \mact l' \circ l}{r \circ \mact r' \circ \phi_{m,n}} \\
= & \qquad\mbox{(Naturality of $\phi$)} \\
& \optic{(\beta \otimes \alpha) \circ \phi^{-1}_{m,n} \circ \mact l' \circ l}{r \circ \mact r' \circ \phi_{m,n}} \\
\sim & \qquad\mbox{(Equivalence relation)} \\
& \optic{\phi^{-1}_{m,n} \circ \mact l' \circ l}{r \circ \mact r' \circ \phi \circ (\beta \otimes \alpha)} \\
= & \qquad\mbox{(Naturality of $\phi$)} \\
& \optic{\phi^{-1}_{m,n} \circ \mact l' \circ l}{r \circ \mact r'  \circ \mact\beta \circ \alpha \circ \phi_{m,n}} \\
= & \qquad\mbox{(Functoriality of $\mact$)} \\
& \optic{\phi^{-1}_{m,n} \circ \mact l' \circ l}{r \circ \mact (r'  \circ \beta) \circ \alpha \circ \phi_{m,n}} \\
= & \qquad\mbox{(Functoriality of the action makes $\alpha$ natural)} \\
& \optic{\phi^{-1}_{m,n} \circ \mact l' \circ l}{(r \circ \alpha) \circ \mact' (r'  \circ \beta) \circ \phi_{m,n}} \\
\end{align*}
We proceed now to check that the identity is neutral with respect to
composition and that composition is associative.  For the first, we
are going to use that the definition of strong monoidal functor
imposes \(\phi^{-1}_{m,i} \circ \mact \phi_i = \underline{\lambda}\).  Composition with the identity on the left
goes as follows.
\begin{align*}
  & \optic{l}{r} \circ \optic{\phi_i^{-1}}{\phi_i} \\
= & \qquad\mbox{(Definition of composition)} \\
  & \optic{\phi_{m,i}^{-1} \circ m\phi_i^{-1} \circ l}{r \circ m\phi_i\circ \phi_{m,i}} \\
= & \qquad\mbox{(Conditions on a strong monoidal functor)} \\
  & \optic{\lambda^{-1} \circ l}{r \circ \lambda} \\
\sim & \qquad\mbox{(Equivalence relation)} \\
  & \optic{l}{r} \\
\end{align*}
And composition on the right follows a similar reasoning.
\begin{align*}
  & \optic{\phi_i^{-1}}{\phi_i} \circ \optic{l}{r} \\
= & \qquad\mbox{(Definition of composition)} \\
  & \optic{\phi_{m,i}^{-1} \circ \iact l \circ \phi_i^{-1}}{\phi_i \circ \iact r\circ \phi_{m,i}} \\
= & \qquad\mbox{(Naturality of $\phi$)} \\
  & \optic{\phi_{m,i}^{-1} \circ \phi_i^{-1} \circ l}{r \circ\phi_i \circ \phi_{m,i}} \\
= & \qquad\mbox{(Conditions on a strong monoidal functor)} \\
  & \optic{\lambda^{-1} \circ l}{r \circ \lambda} \\
\sim & \qquad\mbox{(Equivalence relation)} \\
  & \optic{l}{r} \\
\end{align*}
Finally, associativity holds because of the following chain of
equations. Let \(s,a,x,u \in \C\) and \(t,b,y,v \in \D\). Let, on one
side, \(l \colon s \to \mact a\), \(l' \colon a \to \nact x\) and \(l'' \colon x \to \underline{k}u\) be morphisms in \(\C\);
let, on the other side, \(r \colon \mact b \to t\), \(r' \colon \nact y \to b\) and \(r'' \colon \underline{k}v \to y\) be
morphisms in \(\D\).
\begin{align*}
  & \optic{l}{r} \circ (\optic{l'}{r'} \circ \optic{l''}{r''}) \\
= & \qquad\mbox{(Definition of composition)} \\
  & \optic{l}{r} \circ \optic{\phi^{-1}_{n,k} \circ \nact l''\circ l'}{r' \circ \nact r'' \circ \phi_{n,k}} \\
= & \qquad\mbox{(Definition of composition)} \\
  & \optic{\phi^{-1}_{m,n\otimes k}\circ \mact (\phi^{-1}_{n,k} \circ \nact l''\circ l') \circ l}{r \circ \mact(r' \circ \nact r'' \circ \phi_{n,k}) \circ \phi_{m,n \otimes k}} \\
= & \qquad\mbox{(Functoriality of $\mact$)} \\
  & \optic{\phi^{-1}_{m,n\otimes k}\circ \mact \phi^{-1}_{n,k} \circ \mact\nact l'' \circ \mact l' \circ l}{r \circ \mact r' \circ \mact\nact r'' \circ \mact\phi_{n,k} \circ \phi_{m,n \otimes k}} \\
\sim & \qquad\mbox{(Equivalence relation)} \\
  & \optic{\alpha_{m,n,k} \circ \phi^{-1}_{m,n\otimes k}\circ \mact \phi^{-1}_{n,k} \circ \mact\nact l'' \circ \mact l' \circ l}{r \circ \mact r' \circ \mact\nact r'' \circ \mact\phi_{n,k} \circ \phi_{m,n \otimes k} \circ \alpha_{m,n,k}^{-1}} \\
= & \qquad\mbox{(Axioms of a strong monoidal functor)} \\
  & \optic{\phi^{-1}_{m\otimes n, k}\circ \phi^{-1}_{m,n} \circ \mact\nact l'' \circ \mact l' \circ l}{r \circ \mact r' \circ \mact\nact r'' \circ \phi_{m,n} \circ \phi_{m \otimes n, k}} \\
= & \qquad\mbox{(Naturality of $\phi$)} \\
  & \optic{\phi^{-1}_{m\otimes n, k}\circ  \underline{(m \otimes n)} l'' \circ \phi^{-1}_{m,n} \circ\mact l' \circ l}{r \circ \mact r' \circ \phi_{m,n} \circ \underline{(m \otimes n)} r'' \circ \phi_{m \otimes n, k}} \\
= & \qquad\mbox{(Definition of composition)} \\
  & \optic{\phi^{-1}_{m,n} \circ\mact l' \circ l}{r \circ \mact r' \circ \phi_{m,n}} \circ \optic{l''}{r''} \\
= & \qquad\mbox{(Definition of composition)} \\
  & \optic{l}{r} \circ \optic{l'}{r'} \circ \optic{l''}{r''}
\end{align*}

\section{Lenses and prisms}
\label{sec:orgbd71d99}
The next step is to show that Definition \ref{org8ab3812} actually captures
our motivating examples.  We will need to make use of the Yoneda lemma
to translate between the two forms of the optic.  We go from the form of the optic 
that the definition prescribes (the \emph{existential optic}) to their description 
after eliminating the coend  (the \emph{concrete optic}).

\begin{proposition}[\cite{milewski17}]
\label{prop:lenses}
Lenses are optics for the cartesian product.
\[\begin{pmatrix}\includegraphics[width=0.3\linewidth]{./images/lens.png}\end{pmatrix}
\cong
\begin{pmatrix}\includegraphics[width=0.3\linewidth]{./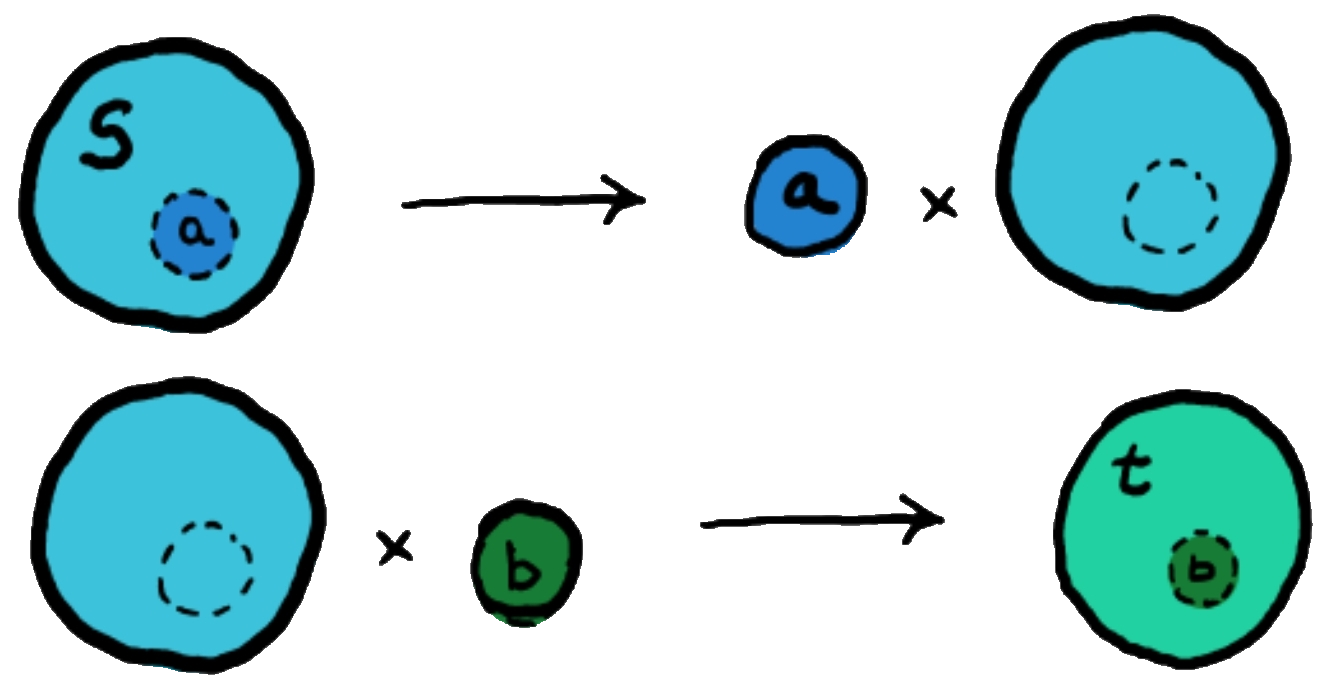}\end{pmatrix}\]
\end{proposition}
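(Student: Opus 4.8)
The plan is to read off Definition~\ref{org8ab3812} at the cartesian structure and then collapse the coend by elementary (co)end calculus. Concretely, take $\M = \C = \D$ and let the monoidal action be the cartesian product, $\mact = m \times (-)$; this is the strong monoidal functor sending $m$ to $m \times (-) \in [\C,\C]$. With this choice the existential optic becomes
\[
\Optic\left(\begin{pmatrix} a \\ b \end{pmatrix}, \begin{pmatrix} s \\ t \end{pmatrix}\right)
= \int^{m \in \C} \C(s, m \times a) \times \C(m \times b, t),
\]
and the goal is to exhibit a natural isomorphism between this set and $\C(s,a) \times \C(s \times b, t) = \mathbf{Lens}$.

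First I would split the left-hand hom-set using the universal property of the product: a morphism $s \to m \times a$ is exactly a pair of morphisms $s \to m$ and $s \to a$, so $\C(s, m \times a) \cong \C(s,m) \times \C(s,a)$, naturally in $m$ (and in $s,a$). Since this isomorphism is natural in the bound variable $m$, it descends to the quotient defining the coend, and the integrand may be rewritten as $\C(s,m) \times \C(s,a) \times \C(m \times b, t)$. Next, the factor $\C(s,a)$ is constant in $m$; because $(- \times \C(s,a))$ preserves colimits in $\Sets$ (it is a left adjoint), I can pull it outside the coend to obtain
\[
\C(s,a) \times \int^{m \in \C} \C(s,m) \times \C(m \times b, t).
\]

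The remaining coend has exactly the shape of the Coyoneda reduction from the Ninja Yoneda lemma: regarding $m \mapsto \C(m \times b, t)$ as a presheaf $\C^{op} \to \Sets$, the reduction contracts $\int^{m} \C(s,m) \times \C(m \times b, t)$ to its value at $s$, namely $\C(s \times b, t)$, with $\C(s,m)$ playing the role of the representable. Assembling the factors gives $\C(s,a) \times \C(s \times b, t)$, the concrete lens.

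The step I expect to be most delicate is this final contraction, where one must track the variance carefully: the functor $m \mapsto \C(m \times b, t)$ is contravariant in $m$, so it is the \emph{presheaf} form of the Coyoneda reduction — equivalently, the copresheaf reduction performed in $\C^{op}$ — that applies, not the covariant one. The only other thing to verify is that the product-splitting isomorphism is genuinely natural in $m$, so that it is compatible with the dinaturality identifications of the coend; this follows immediately from naturality of the product projections, so no hidden work remains there.
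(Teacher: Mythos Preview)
Your proof is correct and follows exactly the same route as the paper's own argument: split the hom into a product using the universal property of $\times$, then apply the (co)Yoneda reduction to eliminate the coend. Your additional remarks on pulling out the constant factor and on the contravariant variance of $m \mapsto \C(m \times b, t)$ simply make explicit what the paper leaves implicit in its two-line derivation.
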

\begin{proof}
\[\begin{aligned}
& \int^{c \in \C} \C(s , c \times a) \times \C(c \times b , t) \\
\cong & \qquad\mbox{(Product)} \\
& \int^{c \in \C} \C(s , c) \times \C(s , a) \times \C(c \times b , t) \\
\cong & \qquad\mbox{(Yoneda lemma)} \\
& \C(s , a) \times \C(s \times b , t). \qedhere
\end{aligned}\]
\end{proof}

\begin{proposition}[\cite{milewski17}]
\label{prop:prisms}
Dually, prisms are optics for the coproduct.
\[\begin{pmatrix}\includegraphics[width=0.3\linewidth]{./images/prism.png}\end{pmatrix}
\cong
\begin{pmatrix}\includegraphics[width=0.3\linewidth]{./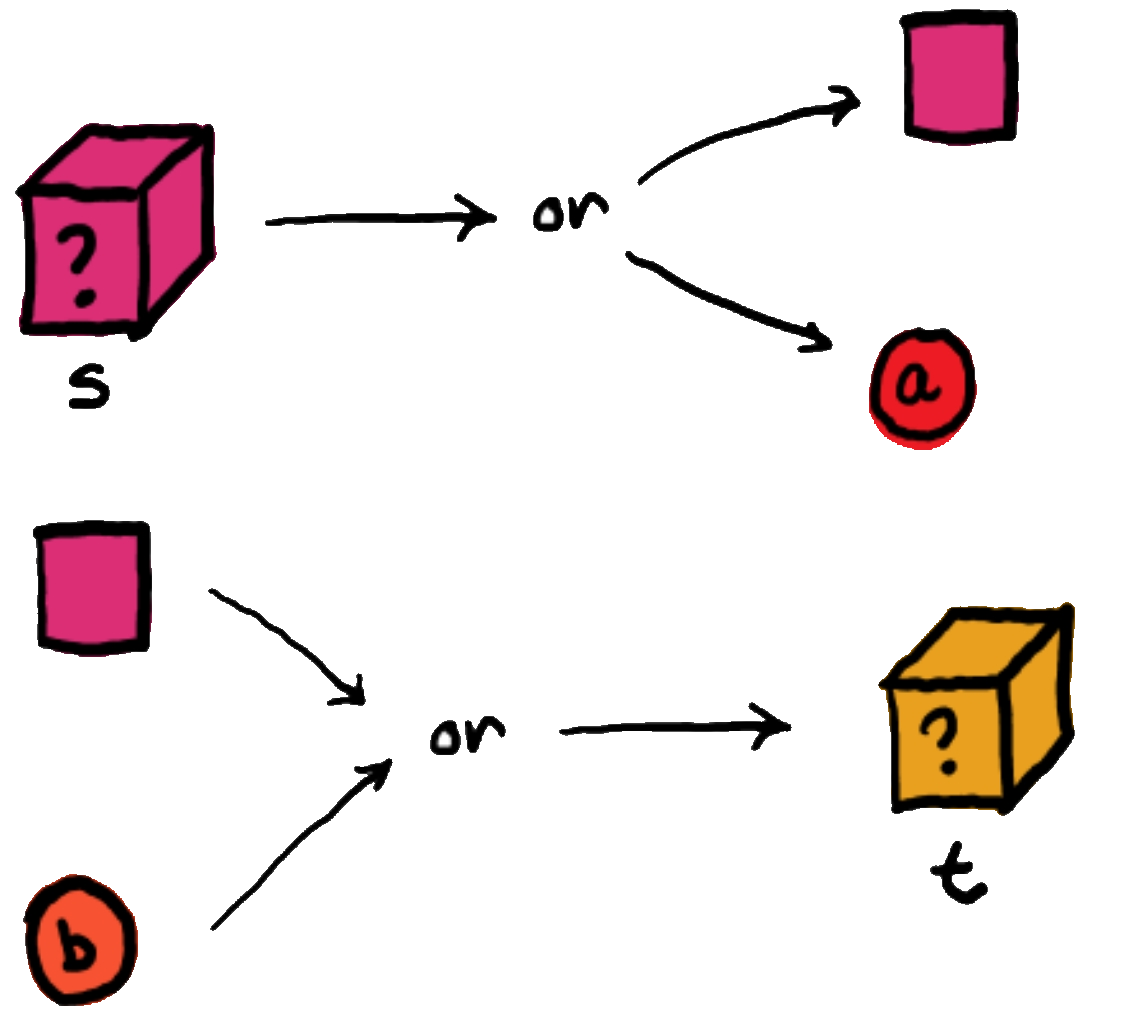}\end{pmatrix}\]
\end{proposition}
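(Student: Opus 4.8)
The plan is to run the computation dual to the one just used for lenses in Proposition~\ref{prop:lenses}, exploiting the observation that a prism is precisely a lens in the opposite category. For the coproduct action we have $\underline{c}a = c + a$, so the existential optic to be simplified is
\[
\int^{c \in \C} \C(s, c + a) \times \C(c + b, t),
\]
and the target is the concrete form of a prism, $\C(s, t + a) \times \C(b, t)$.

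First I would apply the universal property of the coproduct to the second factor — this is the step dual to the use of the product in the lens proof — rewriting $\C(c + b, t) \cong \C(c, t) \times \C(b, t)$. The resulting copy of $\C(b, t)$ is independent of the coend variable $c$, so it can be extracted from the coend, leaving
\[
\C(b, t) \times \int^{c \in \C} \C(s, c + a) \times \C(c, t).
\]

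Next I would discharge the remaining coend with the Coyoneda reduction. The factor $\C(s, c + a)$ is covariant in $c$, so taking $F(c) = \C(s, c + a)$, while $\C(c, t) = \hirayo^t c$ is exactly the contravariant representable appearing in the reduction $\int^{c} Fc \times \hirayo^t c \cong Ft$. Applying it substitutes $c := t$ and yields $F(t) = \C(s, t + a)$; reassembling with the factor extracted above gives $\C(s, t + a) \times \C(b, t)$, as claimed.

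The argument is routine once variances are tracked, so I do not anticipate a genuine obstacle; the only delicate point is the bookkeeping of co- and contravariance inside the coend. One must confirm that, after splitting the coproduct, it is the slot $\C(c, t)$ that matches $\hirayo^t$ and that the surviving factor $\C(s, c + a)$ is covariant, so that Coyoneda performs the substitution $c \mapsto t$ rather than $c \mapsto s$. This is exactly where the product/coproduct duality with the lens case manifests, since there the split was on the first factor and the substitution landed on $s$ instead.
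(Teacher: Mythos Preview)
Your proposal is correct and follows essentially the same route as the paper: split $\C(c+b,t)$ via the coproduct universal property, then apply the (co)Yoneda reduction along $\C(c,t)$ to substitute $c \mapsto t$ in the remaining covariant factor $\C(s,c+a)$. Your explicit variance bookkeeping is a welcome elaboration of what the paper leaves implicit.
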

\begin{proof}
\[\begin{aligned}
& \int^{m \in \C} \C(s , m + a) \times \C(m + b , t) \\
\cong &\qquad\mbox{(Coproduct)} \\
& \int^{m \in \C} \C(s , m + a) \times \C(m , t) \times \C(m + b , t) \\
\cong &\qquad\mbox{(Yoneda lemma)} \\
& \C(s , t + a) \times \C(b , t). \qedhere
\end{aligned}\]
\end{proof}

\section{Traversals}
\label{sec:org03e5f83}
Given some functor \(c \in [\mathbb{N}, \C]\) from the discrete category of the natural numbers,
we can define a \emph{power series} functor \(F \colon \C \to \C\)
given by \(F(a) = \sum\nolimits_{n \in \mathbb{N}} c_n \times a^n\).  This induces a monoidal action that we call
\(\mathrm{Series} \colon [ \mathbb{N} , \C ] \to [ \C , \C ]\).  The monoidal product for this action
is given by substitution of morphisms, and it corresponds to the fact that two power series
\(F(a) = \sum\nolimits_{n \in \mathbb{N}} c_n \times a^n\) and \(G(a) = \sum\nolimits_{n \in \mathbb{N}} d_n \times a^n\) can be composed
into \(GF(a) = \sum\nolimits_{n \in \mathbb{N}} d_n  \times \left(  \sum\nolimits_{m \in \mathbb{N}} c_m \times a^m \right)^n\), which can be seen again as
a power series functor.  A detailed description of this substitution can be
found in \cite{kock09} or \cite{yorgey14} for the case of general combinatorial
species.

\begin{proposition}
\label{prop:traversal}
Traversals are optics for the action \(\mathrm{Series} \colon [ \mathbb{N} , \C ] \to [ \C , \C ]\) given by
evaluation of the power series functor.
\[\begin{pmatrix}\includegraphics[width=0.4\linewidth]{./images/traversal.png}\end{pmatrix}
\cong
\begin{pmatrix}\includegraphics[width=0.4\linewidth]{./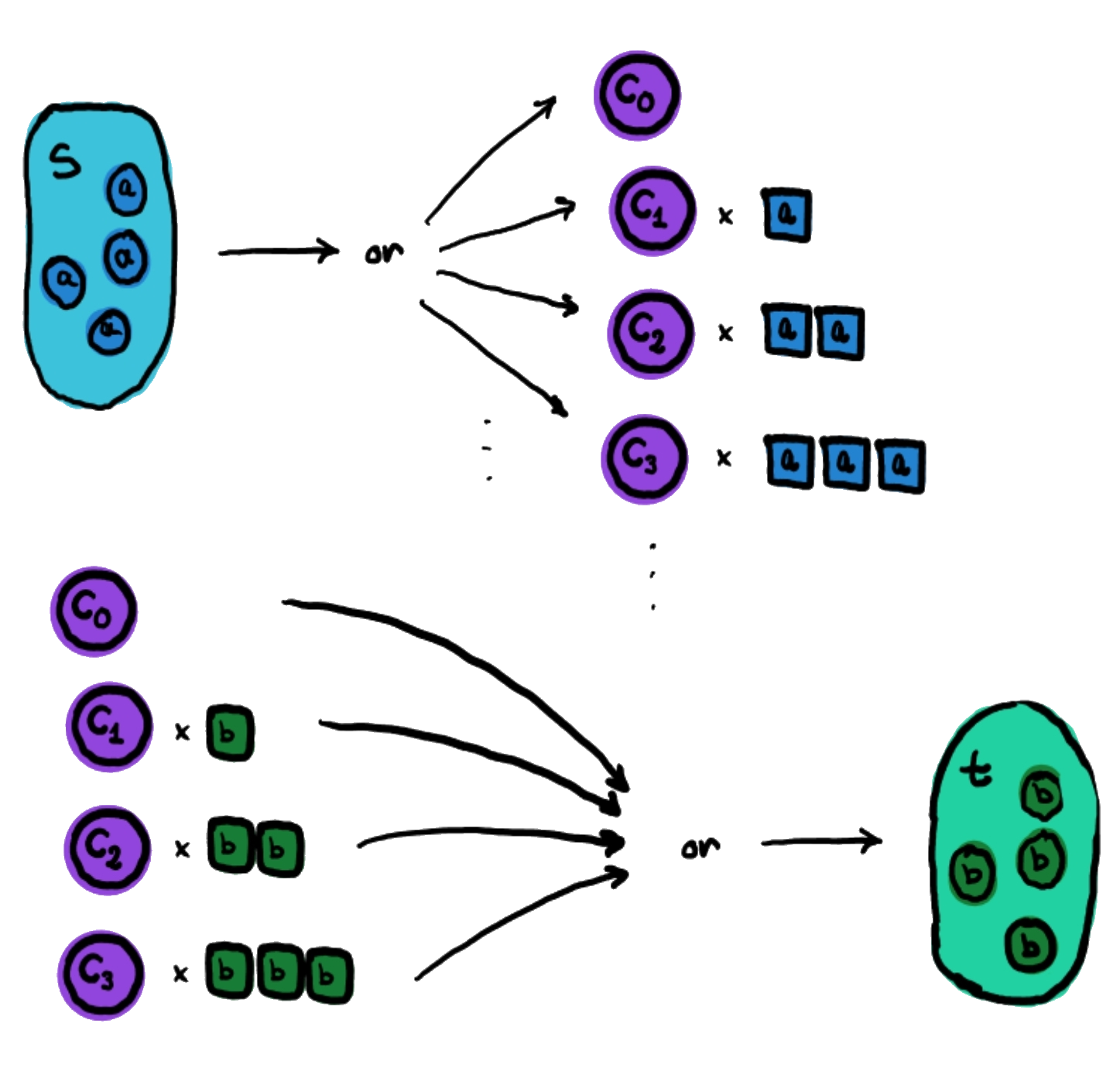}\end{pmatrix}\]
\end{proposition}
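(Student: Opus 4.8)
The plan is to reduce the defining coend directly, without passing through applicative or traversable functors, using only the universal property of the coproduct, the cartesian closed structure of $\C$, and the co-Yoneda reduction. Here the acting monoidal category is $\M = [\mathbb{N}, \C]$, so the coend variable is a sequence $c = (c_n)_{n \in \mathbb{N}}$ and the action sends it to the power series functor $\underline{c}\,a = \sum_{n \in \mathbb{N}} c_n \times a^n$. Thus the object to compute is
\[
\int^{c \in [\mathbb{N}, \C]} \C\left(s, \sum_{n} c_n \times a^n\right) \times \C\left(\sum_{n} c_n \times b^n, t\right).
\]

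First I would rewrite the backward leg. Since a map out of a coproduct is a product of maps and $\C$ is cartesian closed, there are natural isomorphisms
\[
\C\left(\sum_n c_n \times b^n, t\right) \cong \prod_n \C(c_n \times b^n, t) \cong \prod_n \C(c_n, b^n \to t).
\]
The key observation is then that, because $\mathbb{N}$ is discrete, morphisms in the functor category $[\mathbb{N}, \C]$ are computed componentwise, so $\prod_n \C(c_n, b^n \to t)$ is precisely the hom-set $[\mathbb{N}, \C]\big(c, (b^n \to t)_n\big)$. In other words, the backward leg is the representable presheaf $\hirayo^{\beta}$ on $[\mathbb{N}, \C]$ evaluated at $c$, where $\beta = (b^n \to t)_{n}$ is a fixed object of $[\mathbb{N}, \C]$.

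Second, the forward leg $G(c) = \C(s, \sum_n c_n \times a^n)$ is manifestly a copresheaf $[\mathbb{N}, \C] \to \Sets$: a morphism of sequences induces a map of the corresponding power series objects, which is then postcomposed into the hom. The coend therefore has the shape $\int^{c} G(c) \times \hirayo^{\beta}(c)$, to which the co-Yoneda (Ninja Yoneda) reduction applies, collapsing it to $G(\beta)$. Substituting $\beta_n = (b^n \to t)$ yields
\[
G(\beta) = \C\left(s, \sum_n a^n \times (b^n \to t)\right),
\]
which is exactly the concrete traversal prescribed by the definition.

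The only real subtlety is the bookkeeping that licenses the co-Yoneda step over the functor category $[\mathbb{N}, \C]$ rather than over $\C$ itself. I would make sure the Ninja Yoneda lemma is instantiated with its base category taken to be $[\mathbb{N}, \C]$, check that the two legs really do carry the correct variance as functors on $[\mathbb{N}, \C]$ (covariant forward, contravariant backward, matching the dinaturality of the original coend), and confirm the existence of the $\mathbb{N}$-indexed coproducts and of the exponentials $(b^n \to t)$ used in the rewriting — all guaranteed by the standing hypothesis that $\C$ is cartesian closed with $\mathbb{N}$-indexed limits and colimits. The step worth spelling out carefully is the identification of $\prod_n \C(c_n, b^n \to t)$ with a single representable on the product category, since this is exactly what turns an unwieldy coend over an infinite product of copies of $\C$ into one clean application of co-Yoneda, and is the heart of why the two occurrences of $n$ are forced to coincide.
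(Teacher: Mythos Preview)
Your proof is correct and follows essentially the same route as the paper: rewrite the backward leg via cocontinuity and the exponential adjunction, recognise the resulting product $\prod_n \C(c_n, b^n \to t)$ as the hom-set $[\mathbb{N},\C](c,\beta)$ with $\beta_n = (b^n \to t)$, and then apply the (co)Yoneda reduction over $[\mathbb{N},\C]$. The paper even highlights the same point you flag as the key subtlety, namely that the Yoneda step is taken in the functor category $[\mathbb{N},\C]$ rather than in $\C$.
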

\begin{proof}
Unfolding the definitions, the formula we want to prove is the
following one.
\[\int^{c \in [ \mathbb{N} , \C]} 
\C \left(  s , \sum\nolimits_n c_n \times a^n \right) \times
 \C\left(\sum\nolimits_n c_n \times b^n , t \right)
\cong
\C \left( s , \sum\nolimits_n a^n \times (b^n \to t) \right).\]
This is Yoneda, this time for functors \(c \colon \mathbb{N} \to \C\). Note that,
because we are taking the discrete category of the natural numbers, ends over this category
are products, and a natural transformation \([ \mathbb{N} , \C ](F,G)\) can be written as \(\prod\nolimits_{n \in \mathbb{N}} \C(F(n) , G(n))\).
\begin{align*}
& \int^{c} \C \left(  s , \sum_{n \in \mathbb{N}} c_n \times a^n \right) \times \C\left( \sum_{n \in \mathbb{N}} c_n \times b^n , t\right) \\
\cong &\qquad\mbox{{(Cocontinuity)}}\\
& \int^{c} \C \left(  s , \sum_{n \in \mathbb{N}} c_n \times a^n \right) \times \prod_{n \in \mathbb{N}} \C\left( c_n \times b^n , t\right) \\
\cong & \qquad\mbox{{(Exponential)}}\\
& \int^{c} \C \left(  s , \sum_{n \in \mathbb{N}} c_n \times a^n \right) \times \prod_{n \in \mathbb{N}} \C\left( c_n , b^n \to t\right) \\
\cong & \qquad\mbox{{(Natural transformation as an end)}}\\
& \int^{c} \C \left(  s , \sum_{n \in \mathbb{N}} c_n \times a^n \right) \times [ \mathbb{N} , \C ] \left( c_{(-)} , b^{(-)} \to t\right) \\
\cong & \qquad\mbox{{(Yoneda lemma)}}\\
& \C \left(  s , \sum_{n \in \mathbb{N}} a^n \times (b^n \to t) \right). & \qedhere
\end{align*}
\end{proof}

This derivation solves the problem posed in \cite{milewski17} of finding
a derivation of the Traversal fitting the same elementary pattern as
the other optics described there.  It should be noted, however, that derivations of the traversal
as the optic for a certain kind of functors called \textbf{Traversables} (which should
not be confused with traversals themselves) have been previously described
by \cite{boisseau18} and \cite{riley18}.  For a derivation using Yoneda, \cite{riley18}
recalls a parameterised adjunction that has an equational proof in \cite{jaskelioff15}.
The \emph{Traversable} characterization is the one that was known and commonly used in programming libraries.
This characterization in terms of \emph{power series polynomials} could be considered
a more elementary description, although the profunctor description can give problems when implementing
it in languages such as Haskell with only partial support for families of types
indexed by natural numbers.  The concrete description can be implemented as
a \emph{nested datatype} \cite{bird98}.

\section{More examples of optics}
\label{sec:orgf3e3cb1}
\label{org087c8c4}
\subsection{Grates}
\label{sec:org974f2f8}
In this section, we assume we are working with a bicartesian closed category \(\C\)
instead of detailing the precise requisites that we would need to make each one
of these optics definable in the category.

\begin{proposition}[\cite{milewski17}]
\label{prop:grates}
Grates are optics for the action of the exponential \(( \to ) \colon \C^{op} \to [ \C , \C ]\).
\end{proposition}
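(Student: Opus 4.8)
The plan is to follow the same two-move pattern used for lenses (Proposition \ref{prop:lenses}) and prisms (Proposition \ref{prop:prisms}): unfold the definition of optic for this action into a coend, then collapse that coend by an exponential adjunction followed by a Yoneda reduction. First I would pin down the action concretely. The strong monoidal functor $(\to) \colon \C^{op} \to [\C,\C]$ sends an object $m$ to the endofunctor $\underline{m}(a) = m \to a$; currying supplies the structure isomorphisms $\phi_{m,n} \colon \underline{(m \otimes n)}a \cong \underline{m}\,\underline{n}a$, which forces the monoidal product on $\C^{op}$ to be the cartesian product of $\C$ and its unit to be the terminal object. With this reading of the action, unfolding Definition \ref{org8ab3812} gives
\[
\Optic\left(\begin{pmatrix} a \\ b \end{pmatrix}, \begin{pmatrix} s \\ t \end{pmatrix}\right)
= \int^{m \in \C^{op}} \C(s, m \to a) \times \C(m \to b, t),
\]
and the concrete form I expect to reach is $\C((s \to a) \to b, t)$, so that a grate is a morphism $((s \to a) \to b) \to t$.

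The computation itself is short. Applying the product--hom adjunction to the left-hand factor rewrites $\C(s, m \to a) \cong \C(m, s \to a)$, naturally in $m$, so the integrand becomes $\C(m, s \to a) \times \C(m \to b, t)$. The second factor, viewed as $F(m) = \C(m \to b, t)$, is covariant in $m \in \C$ (the contravariant exponential composed with the contravariant hom), while $\C(m, s \to a) = \hirayo^{s \to a}(m)$ is the corresponding contravariant representable. This is exactly the shape of the Coyoneda reduction from the Ninja Yoneda lemma, so the coend evaluates $F$ at $s \to a$, yielding $F(s \to a) = \C((s \to a) \to b, t)$.

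The step I expect to need the most care is the bookkeeping of variances around the coend over $\C^{op}$: after the adjunction one must confirm that the integrand is dinatural in the intended sense and that the factor to be eliminated is the representable $\hirayo^{s \to a}$ rather than $F$. Here it is convenient to note that the coend over $\C^{op}$ of this integrand coincides with the coend over $\C$ of the same expression, since the coequalizer defining it is indexed by the same diagonal with its two parallel maps merely interchanged; consequently the reduction $Fa \cong \int^{c} Fc \times \hirayo^a c$ applies verbatim with the category taken to be $\C$ and $a = s \to a$. Everything else is the same formal manipulation of the exponential adjunction and the Yoneda lemma already rehearsed for lenses and prisms.
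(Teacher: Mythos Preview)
Your proposal is correct and follows exactly the same two-step route as the paper: apply the exponential adjunction to rewrite $\C(s, m \to a)$ as $\C(m, s \to a)$, then eliminate the coend by a (co)Yoneda reduction to obtain $\C((s \to a) \to b, t)$. The paper's proof is terser and does not spell out the variance bookkeeping you discuss, but the underlying argument is identical.
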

\begin{proof}
\begin{align*}
& \int^{c \in \C^{op}} \C(s , c \to a) \times \C(c \to b , t) \\
\cong & \qquad\mbox{(Exponential)} \\
& \int^{c \in \C^{op}} \C(c , s \to a) \times \C(c \to b , t) \\
\cong & \qquad\mbox{(Yoneda lemma)} \\
& \C((s \to a) \to b, t).& \qedhere
\end{align*}
\end{proof}

\subsection{Achromatic lenses}
\label{sec:org4654c07}
\begin{proposition}
\label{prop:achromatic}
\textbf{Achromatic lenses} (described by \cite{boisseau17}) are optics for the
action \((1 + (-)) \times (-) \colon \C \times \C \to \C\). They have a concrete description
\[ \mathbf{AchrLens}
\left( \begin{pmatrix} a \\ b \end{pmatrix}, \begin{pmatrix} s \\ t \end{pmatrix} \right) 
= \C(s , (b \to t) + 1) \times \C(s,a) \times \C(b,t).
\]
\end{proposition}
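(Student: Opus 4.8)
The plan is to unfold Definition~\ref{org8ab3812} for the action at hand and then eliminate the coend by exactly the two moves used for lenses (Proposition~\ref{prop:lenses}) and prisms (Proposition~\ref{prop:prisms}): split the hom-sets along the (co)cartesian structure and finish with a Coyoneda reduction. The acting monoidal category here is $\C$ itself, with $m \in \C$ acting by $\underline{m}a = (1 + m) \times a$; that is, the functor $(1 + (-)) \times (-)$ is read as $\C \times \C \to \C$, sending the monoidal parameter $m$ and the argument $a$ to $(1+m)\times a$. The object to be simplified is therefore
\[
\int^{m \in \C} \C(s, (1+m) \times a) \times \C((1+m) \times b, t).
\]

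First I would split the two legs. Since $\C(s, -)$ preserves products, the view leg factors as $\C(s, (1+m)\times a) \cong \C(s, 1+m) \times \C(s, a)$. For the update leg I would use distributivity of the bicartesian closed category, $(1+m)\times b \cong b + m \times b$, together with the universal property of the coproduct, to obtain $\C((1+m)\times b, t) \cong \C(b, t) \times \C(m \times b, t)$. Both $\C(s, a)$ and $\C(b, t)$ are constant in $m$, so they come out of the coend, leaving
\[
\C(s,a) \times \C(b,t) \times \int^{m \in \C} \C(s, 1+m) \times \C(m \times b, t).
\]

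To finish, I would curry the surviving update factor by the exponential adjunction, $\C(m \times b, t) \cong \C(m, b \to t)$, and then recognise $\C(s, 1 + (-))$ as a covariant functor of $m$; the Coyoneda reduction collapses $\int^{m} \C(s, 1+m) \times \C(m, b \to t)$ to $\C(s, 1 + (b \to t))$. Reordering the product factors and writing $1 + (b\to t) \cong (b \to t) + 1$ then yields the claimed concrete form $\C(s, (b \to t) + 1) \times \C(s, a) \times \C(b, t)$.

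The (co)cartesian hom splittings are routine; the one step that needs care is the final Coyoneda reduction, where one must correctly identify the covariant functor $F(m) = \C(s, 1+m)$ and apply $\int^{m} F(m) \times \C(m, X) \cong F(X)$ with $X = b \to t$, rather than attempting to "distribute" the hom over the coproduct $1 + m$ inside the integrand, which is not available. Distributivity is invoked only on the domain side of the update leg, and is justified there because $(-) \times b$ is a left adjoint in a bicartesian closed $\C$ and so preserves the coproduct $1 + m$.
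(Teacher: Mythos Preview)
Your proposal is correct and follows essentially the same route as the paper: split the view leg via the product, the update leg via distributivity and the coproduct, curry using the exponential, and then apply the (co)Yoneda reduction to the remaining coend. Your explicit caution about not trying to split $\C(s,1+m)$ along the coproduct is well placed, and the paper likewise leaves that factor intact until the final Yoneda step.
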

\begin{proof}
Again applying the Yoneda lemma.
\begin{align*}
& \int^{c \in \C} \C(s , (c + 1) \times a) \times \C((c + 1) \times b , t) \\
\cong & \qquad\mbox{(Product)} \\
& \int^{c \in \C} \C(s ,c + 1) \times \C(s , a) \times \C((c + 1) \times b , t) \\
\cong & \qquad\mbox{(Distributivity)}\\
& \int^{c \in \C} \C(s ,c + 1) \times \C(s , a) \times \C(c \times b + b , t) \\
\cong & \qquad\mbox{(Coproduct)} \\
& \int^{c \in \C} \C(s ,c + 1) \times \C(s , a) \times \C(b , t) \times \C(c\times b , t) \\
\cong & \qquad\mbox{(Exponential)} \\
& \int^{c \in \C} \C(s ,c + 1) \times \C(s , a) \times \C(b , t) \times \C(c , b \to t) \\
\cong & \qquad\mbox{(Yoneda lemma)} \\
& \C(s , (b \to t) + 1) \times (s \to a) \times (b \to t). & \qedhere
\end{align*}
\end{proof}

\subsection{Kaleidoscopes}
\label{sec:org86f8912}
\begin{proposition}
\label{prop:kaleidoscope}
\textbf{Kaleidoscopes} are optics for the evaluation of applicative
functors (Definition \ref{org478a9a6}), \(\mathbf{App} \to [ \Sets , \Sets ]\). They have a concrete description
\[\mathbf{Kaleidoscope}
\left( \begin{pmatrix} a \\ b \end{pmatrix}, \begin{pmatrix} s \\ t \end{pmatrix} \right) =
\prod_n \Sets\left(s^n \times (a^n \to b),t \right).
\]

\begin{figure}[htbp]
\centering
\includegraphics[width=8cm]{./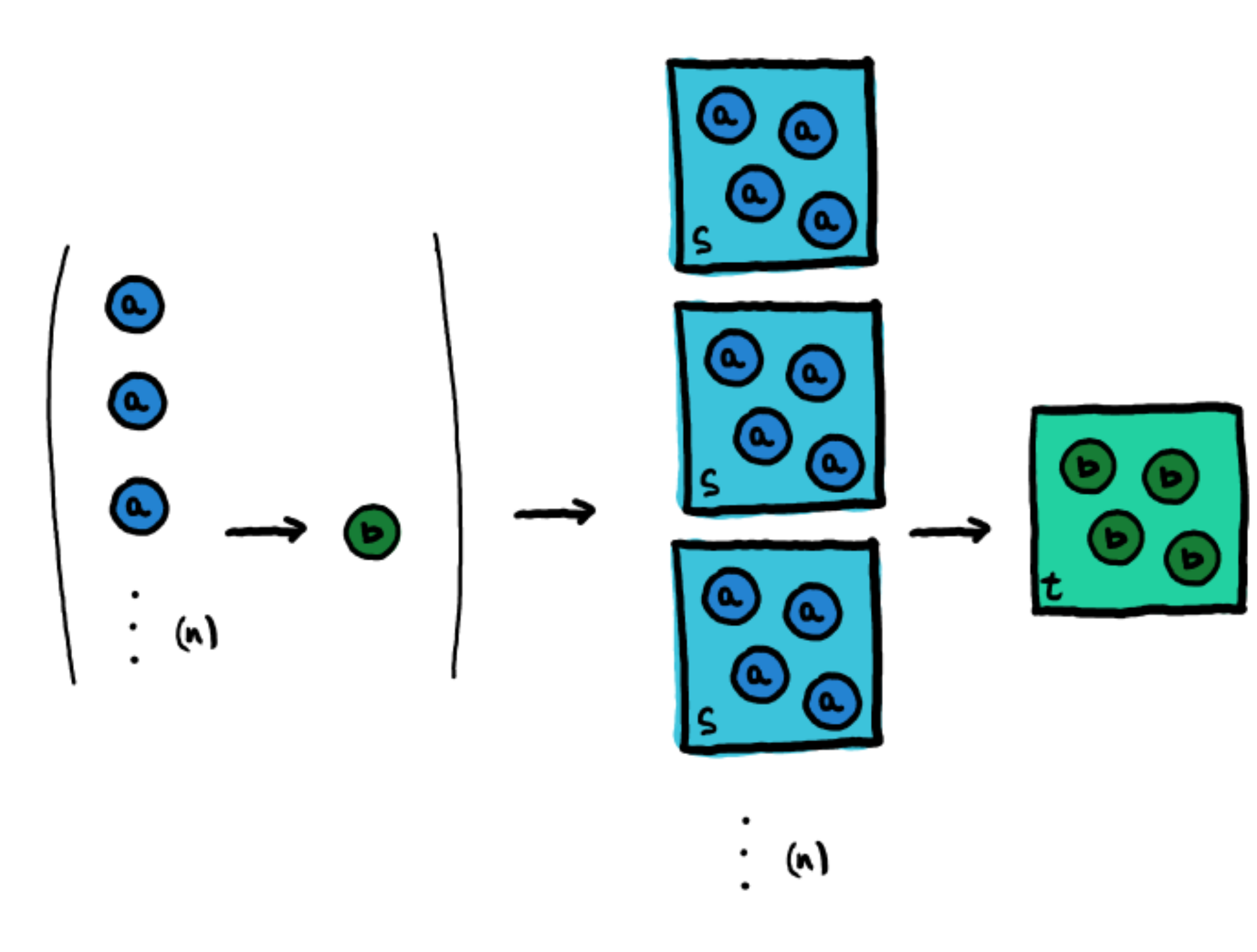}
\caption{\label{fig:org4553fac}
A big data structure \(s\) contains many substructures of type \(a\); a way of folding them \((a^n \to b)\) gives a way of folding the big data structure \((s^n \to t)\).}
\end{figure}
\end{proposition}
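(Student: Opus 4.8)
The plan is to mirror the other derivations in this section: collapse the coend to a single hom-set by recognising the first leg of the optic as a \emph{representable} functor on $\mathbf{App}$, apply the coYoneda reduction, and then let Corollary \ref{org79dc269} supply the concrete power-series form.

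First I would unfold the definition. For the evaluation action $\mathbf{App} \to [\Sets,\Sets]$ the optic is
\[\Optic\left(\begin{pmatrix} a \\ b\end{pmatrix}, \begin{pmatrix} s \\ t\end{pmatrix}\right) = \int^{F \in \mathbf{App}} \Sets(s, Fa) \times \Sets(Fb, t),\]
where $F$ ranges over applicative functors and $Fa$, $Fb$ denote evaluation. The key step is to identify the object representing the first leg. Writing $P$ for the endofunctor $P(c) = s \times (a \to c)$, I claim $\Sets(s, Fa) \cong \mathbf{App}(P^{\ast}, F)$ naturally in $F$. This combines the Yoneda lemma with the free-applicative adjunction of Theorem \ref{org4c8a4e9}: by that adjunction $\mathbf{App}(P^{\ast}, F) \cong [\Sets,\Sets](P, F) = \int_c \Sets(s \times (a \to c), Fc)$, and currying together with continuity rewrites this end as $\Sets(s, \int_c \Sets(\hirayo_a c, Fc))$, whose inner end is $\Nat(\hirayo_a, F) \cong Fa$ by Proposition \ref{org2c434cd} and the Yoneda lemma. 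Naturality in $F$ is immediate since every isomorphism used is natural in $F$.

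Substituting this into the coend turns it into $\int^{F} \mathbf{App}(P^{\ast}, F) \times \Sets(Fb, t)$. The factor $\Sets(Fb, t)$ is contravariant in $F$, so the coYoneda reduction (applied in $\mathbf{App}^{op}$, where $\mathbf{App}(P^{\ast}, -)$ is the relevant representable) collapses the coend to the value of this contravariant functor at $P^{\ast}$, namely $\Sets(P^{\ast} b, t)$. Finally, Corollary \ref{org79dc269} — applied with $s$ and $a$ in place of $a$ and $b$ — gives $P^{\ast} b = \sum_n s^n \times (a^n \to b)$, and continuity of the hom-functor converts the coproduct into a product:
\[\Sets\left(\sum_n s^n \times (a^n \to b), t\right) \cong \prod_n \Sets(s^n \times (a^n \to b), t),\]
which is exactly the claimed concrete description.

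I expect the main obstacle to be the key step: correctly pinning down the representing object $P^{\ast} = (s \times (a \to -))^{\ast}$ and verifying that $\Sets(s, Fa) \cong \mathbf{App}(P^{\ast}, F)$ is natural in $F$, since only then is the coYoneda reduction over $\mathbf{App}$ legitimate. A secondary point of care is the size and existence of the coend over the large category $\mathbf{App}$, which I would treat at the same informal level as the traversal derivation over $[\mathbb{N}, \C]$.
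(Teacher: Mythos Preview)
Your proposal is correct and follows essentially the same route as the paper: rewrite $\Sets(s,Fa)$ via Yoneda and the free--forgetful adjunction for applicatives as $\mathbf{App}\bigl((s\times(a\to-))^{\ast},F\bigr)$, apply the (co)Yoneda reduction over $\mathbf{App}$, and then invoke Corollary~\ref{org79dc269} to identify the free applicative as $\sum_n s^n\times(a^n\to -)$. The only cosmetic difference is that you package the first few steps as ``identifying the representing object'' while the paper unrolls them one at a time.
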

\begin{proof}
We will make use of the construction of free applicatives given in
Corollary \ref{org79dc269}. Note that we are implicitly applying a forgetful
functor over the applicative \(F\).
\begin{align*}
& \int^{F \in \mathbf{App}} \Sets(s, Fa) \times \Sets(Fb,t)
\\ \cong & \qquad\mbox{{(Yoneda lemma)}} \\
& \int^{F \in \mathbf{App}}  \Sets\left(s, \int_{c} (a \to c) \to Fc \right) \times \Sets(Fb,t) 
\\ \cong & \qquad\mbox{(Continuity)}\\
& \int^{F \in \mathbf{App}} \left( \int_{c}  \Sets\left(s, (a \to c) \to Fc \right) \right) \times \Sets(Fb,t) 
\\ \cong & \qquad\mbox{(Exponential)}\\
& \int^{F \in \mathbf{App}} \left( \int_{c} \Sets\left(s \times (a \to c),  Fc \right) \right) \times \Sets(Fb,t) 
\\ \cong & \qquad\mbox{(Natural transformations as ends)}\\
& \int^{F \in \mathbf{App}} \mathrm{Nat}\left(  s \times (a \to (-)) , F \right) \times \Sets(Fb,t) 
\\ \cong & \qquad\mbox{(Free-forgetful adjunction for applicative functors)} \\
& \int^{F \in \mathbf{App}} \mathbf{App}\left(  \sum_n s^n \times \left(  a^n \to (-) \right) , F \right) \times \Sets(Fb,t) 
\\ \cong & \qquad\mbox{(Yoneda lemma)} \\
& \Sets \left(  \sum_n s^n \times (a^n \to b),t \right).
\end{align*}
In \(\Sets\), we write this as \(\prod\nolimits_n (a^n \to b) \to (s^n \to t)\).
\end{proof}

\subsection{Algebraic lenses}
\label{sec:orgcb14fa0}
\begin{remark}
\label{org2930c80}
The action that defines lenses is not included in the action that
gives kaleidoscopes because not every product by 
a set gives a lax monoidal functor.  As we will see in
\S \ref{org7bbc905}, this implies that not every lens induces a kaleidoscope. 
However, when the object \(c\) is a monoid, the unit and
multiplication induce functions \(1 \to c \times 1\) and \((c \times a)\times (c \times b) \to c \times (a \times b)\),
making products a particular example of applicative functor.
This observation inspires the following optic.
\end{remark}

\begin{proposition}
\label{prop:algebraiclens}
Let \(\psi\) be a monad in a category \(\C\).  We consider the
action of its algebras \(\psi\mbox{-Alg}\) given by
forgetting about the algebra structure and taking the cartesian
product.  We know that the product of two algebras has again algebra
structure and that the terminal object has an algebra structure: the
forgetful functor from the Eilenberg-Moore category 
\(U \colon \psi\mbox{-Alg} \to \C\) creates all limits that exist in \(\C\).
\end{proposition}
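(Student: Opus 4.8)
The plan is to prove the stated fact that the forgetful functor $U \colon \psi\mbox{-Alg} \to \C$ creates all limits existing in $\C$; the claims that binary products of algebras carry an algebra structure and that the terminal object does so then follow as the finite-product special cases, and these assemble into the desired monoidal action. This is the standard creation-of-limits result for Eilenberg--Moore categories, so I would follow the usual recipe of lifting a limit cone along $U$ and then checking that the lift is unique and universal.

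First I would fix a diagram $D \colon J \to \psi\mbox{-Alg}$, writing $D(j) = (c_j, \gamma_j)$ with $\gamma_j \colon \psi c_j \to c_j$, and suppose the underlying diagram $U \circ D$ has a limit $(L, \{\pi_j\}_{j \in J})$ in $\C$. The key construction is the algebra structure on $L$: the family $\gamma_j \circ \psi\pi_j \colon \psi L \to c_j$ forms a cone over $U \circ D$, since each $\gamma_j$ is compatible with the transition morphisms of $D$ (which are themselves algebra morphisms), so the universal property of $L$ yields a unique $\gamma \colon \psi L \to L$ satisfying $\pi_j \circ \gamma = \gamma_j \circ \psi\pi_j$ for every $j$.

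Next I would verify that $(L, \gamma)$ is a $\psi$-algebra and that each $\pi_j$ is an algebra morphism. Both the unit law $\gamma \circ \eta_L = \id_L$ and the associativity law $\gamma \circ \mu_L = \gamma \circ \psi\gamma$ can be checked by post-composing with the jointly monic family $\{\pi_j\}$ and reducing, via naturality of $\eta$ and $\mu$ together with the algebra axioms for each $(c_j, \gamma_j)$, to identities that already hold in the diagram. The same joint monicity gives uniqueness of $\gamma$, which is precisely the creation clause: any algebra structure on $L$ turning the $\pi_j$ into algebra morphisms must agree with $\gamma$. To conclude that $(L, \gamma)$ is the limit in $\psi\mbox{-Alg}$, I would take an arbitrary cone of algebra morphisms $f_j \colon (X,\xi) \to (c_j,\gamma_j)$, obtain the unique mediating map $f \colon X \to L$ in $\C$, and check that $f$ is an algebra morphism by the same jointly-monic argument.

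With creation of limits in hand the monoidal action is immediate: since $\C$ is cartesian, $U$ creates the terminal object and binary products, so $\psi\mbox{-Alg}$ becomes cartesian monoidal with $U$ preserving these products, and the action sends an algebra $(c,\gamma)$ to the endofunctor $U(c,\gamma) \times (-)$. The structure isomorphisms required by Definition \ref{orga75d84e} come from the associators of the product in $\C$ together with the canonical isomorphisms $U(c \times d) \cong U(c) \times U(d)$ and $U(1) \cong 1$. I expect the only genuine work to lie in the diagram chases of the middle step; once joint monicity of the limit cone is exploited, every verification collapses to an equation that holds componentwise in the diagram, so there is no real obstacle beyond bookkeeping.
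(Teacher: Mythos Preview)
Your argument for creation of limits is correct and is the standard textbook proof, and your remark that this makes \(\psi\mbox{-Alg}\) cartesian and hence yields a monoidal action is also fine. However, you have proved the wrong thing. The proposition statement is phrased as background setup: the sentence beginning ``We know that\ldots'' is signalling that creation of limits is being \emph{cited}, not asserted as something to be established. What the paper actually proves under this heading is the derivation of the concrete form of the algebraic lens,
\[
\int^{c \in \psi\text{-}\mathrm{Alg}} \C(s, Uc \times a) \times \C(Uc \times b, t) \;\cong\; \C(s,a) \times \C(\psi s \times b, t),
\]
via the same Yoneda pattern used for the other optics in this chapter: split the product to extract \(\C(s,Uc)\), rewrite that as \(\psi\mbox{-Alg}(\psi s, c)\) using the free--forgetful adjunction for the Eilenberg--Moore category, and then apply the coYoneda lemma over \(\psi\mbox{-Alg}\) to eliminate the coend. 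Your proposal does not touch this computation at all. So while nothing you wrote is wrong, you have supplied a proof of the preliminary remark and omitted the content the paper is actually after.
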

\begin{proof}
This gives a concrete optic we can call \textbf{algebraic lens}.  Note that these
are different from the \emph{monadic lenses} studied in \cite{abou16}.
\begin{align*}
& \int^{c \in \psi\mbox{-Alg}} \C(s, c \times a) \times \C(c \times b,t) 
\\\cong & \qquad\mbox{{(Product)}} \\
& \int^{c \in \psi\mbox{-Alg}} \C(s, c) \times \C(s, a) \times \C(c \times b,t) 
\\\cong & \qquad\mbox{{(Free-forgetful adjunction for the algebras)}} \\
& \int^{c \in \psi\mbox{-Alg}} \mathbf{\psi\mbox{-Alg}}(\psi s, c) \times \C(s, a) \times \C(c \times b,t) 
\\ \cong & \qquad\mbox{{(Yoneda lemma)}} \\
& \C(s, a) \times \C(\psi s \times b,t). & \qedhere \\
\end{align*}
\end{proof}

In particular, taking \(\psi\) to be the \emph{list} monad makes Remark \ref{org2930c80}
work appropiately.  We call \(\mathbf{ListLens}\) to this particular case of algebraic
lens.  \emph{Coalgebraic prisms} work in exactly the same way.

\subsection{Setters and adapters}
\label{sec:org1af143e}
\begin{proposition}
\label{prop:setters}
For the case of setters, we will work in the category of \(\Sets\).
\textbf{Setters} are optics for the action given by evaluation of any endofunctor
\(\mathrm{ev} \colon [\mathbf{Sets} , \mathbf{Sets}] \to [\mathbf{Sets} , \mathbf{Sets}]\).  They have a concrete description
\[
\mathbf{Setter}
\left( \begin{pmatrix} a \\ b \end{pmatrix}, \begin{pmatrix} s \\ t \end{pmatrix} \right) = 
\mathbf{Sets}(a \to b , s \to t).
\]
\end{proposition}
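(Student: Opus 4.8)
The plan is to compute the existential optic coend for the evaluation action of endofunctors on $\Sets$ and show it reduces to $\Sets(a \to b, s \to t)$ via the (co)Yoneda lemma. The action here sends an endofunctor $F \in [\Sets, \Sets]$ to itself, acting by evaluation, so the optic I want to compute is
\[
\int^{F \in [\Sets,\Sets]} \Sets(s, Fa) \times \Sets(Fb, t).
\]
First I would apply the coYoneda reduction to the first factor, rewriting $Fa$ using the fact that every copresheaf can be recovered from its values: specifically $\Sets(s, Fa)$ carries a natural transformation out of the representable-like data, and the key observation is that the whole integrand is a profunctor in $F$, so the coend glues together the contravariant use of $F$ in $Fb$ with the covariant use in $Fa$.

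The main technical step is a Yoneda-style reduction over the functor category $[\Sets, \Sets]$. I would first expand $\Sets(s, Fa)$ using the ninja Yoneda lemma in the form $Fa \cong \int_c (\hirayo_a c \to Fc)$, i.e. $Fa \cong \Nat(\Sets(a,-), F)$, turning the first factor into a set of natural transformations $\Nat(s \times \Sets(a,-), F)$ after using continuity and the exponential adjunction, exactly as in the Kaleidoscope derivation (Proposition \ref{prop:kaleidoscope}). This re-expresses the integrand as $\Nat(s \times \Sets(a,-), F) \times \Sets(Fb, t)$, so the coend over $F$ becomes a coend of the shape to which the coYoneda reduction applies, collapsing the $F$ to the evaluation of the covariant data at $b$. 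The result should land on $\Sets(s \times \Sets(a,b), t)$, which by the exponential adjunction is precisely $\Sets(\Sets(a,b), \Sets(s,t)) = \Sets(a \to b, s \to t)$.

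I expect the main obstacle to be bookkeeping the direction of the Yoneda reduction correctly: the endofunctor $F$ appears covariantly in $Fa$ but the coend variable ranges over all of $[\Sets,\Sets]$, and I must be careful that the second factor $\Sets(Fb,t)$ is genuinely contravariant in $F$ so that the integrand is a bona fide profunctor $[\Sets,\Sets]^{op} \times [\Sets,\Sets] \to \Sets$ and the coYoneda reduction is legitimately applicable. A clean way to organize this is to run the same chain of isomorphisms as in the Kaleidoscope proof but \emph{without} the free-applicative adjunction step, since here the action is over \emph{all} endofunctors rather than applicatives; the representable $s \times \Sets(a,-)$ is already an object of $[\Sets,\Sets]$, so the Yoneda lemma applies directly and one reads off $\Sets(s \times \Sets(a,b), t)$. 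A final application of the exponential adjunction $\Sets(s \times (a\to b), t) \cong \Sets(a\to b, s\to t)$ gives the stated concrete form.
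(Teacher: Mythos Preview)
Your proposal is correct and follows essentially the same route as the paper: rewrite $\Sets(s,Fa)$ as $\Nat(s \times (a\to -),F)$ via Yoneda, continuity, and the exponential adjunction, then apply the coYoneda reduction over $[\Sets,\Sets]$ to collapse the coend to $\Sets(s \times (a\to b),t)$. The paper stops at this last expression (leaving the final exponential adjunction implicit), while you spell out the extra step $\Sets(s\times(a\to b),t)\cong\Sets(a\to b,s\to t)$; otherwise the arguments are identical.
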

\begin{proof}
\begin{align*}
& \int^{F \in [ \mathbf{Sets} , \mathbf{Sets} ]} \mathbf{Sets}(s, F(a)) \times \mathbf{Sets}(F(b) , t) \\
\cong & \qquad\mbox{{(Yoneda lemma)}} \\
& \int^{F \in [ \mathbf{Sets} , \mathbf{Sets} ]} \mathbf{Sets}\left(  s,  \int_{c} ((a \to c) \to Fc\right) \times \mathbf{Sets}(F(b) , t) \\
\cong & \qquad\mbox{{(Continuity)}} \\
& \int^{F \in [ \mathbf{Sets} , \mathbf{Sets} ]} \int_{c} \mathbf{Sets}\left(  s \times (a \to c), Fc \right) \times \mathbf{Sets}(F(b) , t) \\
\cong & \qquad\mbox{{(Yoneda lemma)}} \\
& \int^{F \in [ \mathbf{Sets} , \mathbf{Sets} ]} \mathrm{Nat}\left(  s \times (a \to \square), F \right) \times \mathbf{Sets}(F(b) , t) \\
\cong & \qquad\mbox{{(Natural transformation as an end)}} \\
& \mathbf{Sets}(s \times (a \to b) , t). && \qedhere \\
\end{align*}
\end{proof}

In \cite{riley18}, a similar derivation is given but in the more general
case where we only ask our category to be powered and copowered over \(\mathbf{Sets}\).
\begin{proposition}
\label{prop:adapters}
\textbf{Adapters} are optics for the single action of the identity functor.
By definition, they have a concrete description
\[
\mathbf{Adapters}
\left( \begin{pmatrix} a \\ b \end{pmatrix}, \begin{pmatrix} s \\ t \end{pmatrix} \right) = 
\mathbf{C}(s , a) \times \mathbf{C}(b,t).
\]
\end{proposition}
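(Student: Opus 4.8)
The plan is to recognise that the ``single action of the identity functor'' forces the indexing monoidal category $\M$ to be the terminal monoidal category $\mathbf 1$ — one object, only its identity morphism — whose unique object is sent by the strong monoidal functor to the monoidal unit of $[\C,\C]$, namely the identity endofunctor $\Id$. The structure isomorphism $\phi_i \colon \Id \cong F(i)$ of Definition~\ref{orga75d84e} already pins the action down to $\Id$ up to isomorphism, so that $\mact a \cong a$ and $\mact b \cong b$ for the unique object $m$.

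First I would substitute this action into Definition~\ref{org8ab3812}, obtaining
\[
\Optic\left(\begin{pmatrix} a \\ b \end{pmatrix}, \begin{pmatrix} s \\ t \end{pmatrix}\right)
= \int^{m \in \mathbf 1} \C(s, \mact a) \times \C(\mact b, t)
\cong \int^{m \in \mathbf 1} \C(s, a) \times \C(b, t).
\]
Then I would observe that a coend indexed by $\mathbf 1$ collapses to the value of the integrand at the unique object: in the coequaliser presentation of the coend there are no nonidentity morphisms $f$ to range over, so the two parallel arrows coincide and the coend is just $\C(s,a) \times \C(b,t)$. Equivalently, in the $\optic{\cdot}{\cdot}$ notation introduced after Definition~\ref{org8ab3812}, the quotienting relation $\optic{\alpha \circ l}{r} \sim \optic{l}{r \circ \alpha}$ is vacuous because the only available $\alpha$ is the identity, so every equivalence class is a singleton and the pair $\optic{l}{r}$ is literally a pair in $\C(s,a) \times \C(b,t)$.

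There is no genuine obstacle here; the content is essentially definitional, which is exactly why the proposition asserts that the description holds ``by definition.'' The only point deserving a word of care is the justification that a strong monoidal functor out of $\mathbf 1$ is forced to be (isomorphic to) the identity action — but this is immediate from the coherence isomorphism $\phi_i$, which identifies the image of the unit with the monoidal unit $\Id$ of the endofunctor category. Once that identification is in place, the collapse of the trivially-indexed coend completes the argument.
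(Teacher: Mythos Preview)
Your proposal is correct. The paper in fact provides no proof for this proposition at all: the phrase ``by definition'' in the statement is the entire argument, and the proposition is not followed by a proof environment. What you have written is exactly the unpacking that justifies the paper's claim --- identifying $\M$ with the terminal monoidal category, noting that the strong monoidal structure forces the image to be (isomorphic to) $\Id$, and observing that a coend over $\mathbf{1}$ is the value at the unique object --- so your approach is not so much different from the paper's as it is a spelling-out of what the paper leaves implicit.
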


\subsection{Generalized lens}
\label{sec:orgd39ada6}
The reader will notice that even when we were allowing the two parts
of the optic to live in different categories, we are not using the construction
in that generality.  There are not many examples where \emph{mixed} optics have an
application, but we will show one example.  The following generalization
of the concept of lens was described to the author by David J. Myers \cite{david}, and
similar generalizations have been described in \cite[\S2.2]{spivak19}. They
were described for the study of autopoietic systems and more generally,
as a definition of lens that can work in a huge variety of different categories.

\begin{definition}
Let \(\C\) be a symmetric monoidal category. A \textbf{generalized lens} in this monoidal
category is a mixed optic for the action of the tensor product of cocommutative comonoids both on
the (symmetric monoidal) category of cocommutative comonoids in \(\C\), that
we call \(\mathbf{Comon}\); and in \(\C\) itself.  They have a concrete description
given by the following formula, where we call \({\cal U} \colon \mathbf{Comon} \to \mathbf{C}\) to the forgetful functor.
\[ \mathbf{GeneralizedLens}
\left( \begin{pmatrix} a \\ b \end{pmatrix}, \begin{pmatrix} s \\ t \end{pmatrix} \right) =
\mathbf{Comon}(s, a) \times \C({\cal{U}}s \otimes b, t)
\]
\end{definition}
\begin{proof}
The main result we need to use is that a comonoid homomorphism between cocommutative comonoids \(s \to c \otimes a\) can
be split uniquely as the monoidal product of two comonoid homomorphisms
\(s \to c\) and \(s \to a\).
\begin{align*}
& \int_{c \in \mathbf{Comon}} \mathbf{Comon}(s, c \otimes a) \times \mathbf{C}({\cal U} c \otimes b , t) \\
\cong & \qquad\mbox{(Split of the comonoid morphism)} \\
& \int_{c \in \mathbf{Comon}} \mathbf{Comon}(s, c) \times \mathbf{Comon}(s , a) \times \mathbf{C}({\cal U} c \otimes b , t) \\
\cong & \qquad\mbox{(Yoneda lemma)} \\
& \mathbf{Comon}(s , a) \times \mathbf{C}({\cal U} s \otimes b , t). & \qedhere \\
\end{align*}
\end{proof}

\subsection{Optics for (co)free}
\label{sec:org64be483}
\label{opticscofree}

\begin{remark}
In the derivation of the concrete Kaleidoscope (Proposition \ref{prop:kaleidoscope})
we have only used the fact that we can generate free applicative functors.
On the other hand, in the derivation of the Traversal on the next chapter
(Proposition \ref{prop:traversal2}) we only use the fact that we can
generate cofree traversable functors.  These two observations can be generalized
into a class of concrete optics.  For a different but similar class of optics and their
laws, see \cite[\S4.4]{riley18}.
\end{remark}

We start by considering the following two functors for some fixed
\(s,a \in \Sets\) and some fixed \(b,t \in \Sets\).
\[
L_{s,a} = s \times (a \to (-)), \qquad
R_{t,b} = ((-) \to b) \to t.
\]
The names come from their similarity to left and right Kan extensions,
which will be justified from the fact that they arise from (co)Yoneda
reductions.  In fact, the property that is interesting to us is that
for any given functor \(H \colon \Sets \to \Sets\), the following isomorphisms hold.
\[
\Sets(s,Ha) \cong [\Sets,\Sets](L_{s,a},H),\qquad
\Sets(Hb,t) \cong [\Sets,\Sets](H,R_{t,b}).
\]
We can prove this again using the Yoneda lemma.
\begin{align*}
& \mathbf{Sets}(s,Ha) && \mathbf{Sets}(Hb,t) \\
\cong & \qquad\mbox{(Yoneda lemma)} &\cong&\qquad\mbox{(Coyoneda lemma)} \\
& \Sets\left(  s, \int_{c \in \Sets} (a \to c) \to Hc \right) && \Sets \left(  \int^{c \in \Sets} Hc \times (c \to b) , t \right)\\
\cong & \qquad\mbox{(Continuity)} &\cong&\qquad\mbox{(Continuity)} \\
& \int_{c \in \Sets} \Sets\left(  s,  (a \to c) \to Hc \right) && \int_{c \in \Sets}\Sets \left( Hc \times (c \to b) , t \right)\\
\cong & \qquad\mbox{(Exponential)} & \cong & \qquad\mbox{(Exponential)} \\
& \int_{c \in \Sets} \Sets\left(  s \times (a \to c) , Hc \right) && \int_{c \in \Sets}\Sets \left( Hc  , (c \to b) \to t \right)\\
\cong & \qquad\mbox{(Natural transformations as ends)} &\cong & \qquad\mbox{(Natural transformations as ends)}  \\
& [\Sets, \Sets]\left( L_{s,a} , H \right). && [\Sets,\Sets]\left( H,R_{b,t} \right). \\
\end{align*}

\begin{proposition}
\label{orgb2fe82b}
Let a monoidal action \(U \colon \M \to [\Sets,\Sets]\) have a left adjoint given
by some \(F \colon [\Sets,\Sets] \to \M\), that is, \([\Sets,\Sets](f,Ug) \cong \M(Fg,f)\). 
The optic determined by that monoidal action has a concrete form 
given by \(\Sets\left( UFL_{s,a}(b) , t\right)\).  Dually, let it have a right adjoint
given by some \(G \colon [\Sets, \Sets] \to \M\), that is, \([\Sets,\Sets](Uf,g) \cong \M(g,Gf)\).
The optic determined by that monoidal action has then a concrete form
given by \(\Sets(s,UGR_{b,t}(a))\).
\end{proposition}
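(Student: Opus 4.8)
The plan is to prove both halves of Proposition~\ref{orgb2fe82b} by inserting the adjunction into the existential coend and then collapsing it with the Yoneda lemma, exactly in the spirit of the earlier derivations of the Kaleidoscope (Proposition~\ref{prop:kaleidoscope}) and the Setter (Proposition~\ref{prop:setters}). I would treat the left-adjoint case first and then remark that the right-adjoint case is dual. Throughout I would use the two key isomorphisms established just before the statement, namely $\Sets(s, Ha) \cong [\Sets,\Sets](L_{s,a}, H)$ and $\Sets(Hb, t) \cong [\Sets,\Sets](H, R_{t,b})$, together with the adjunction $[\Sets,\Sets](f, Ug) \cong \M(Fg, f)$.

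For the left-adjoint case I would start from the optic $\int^{m \in \M} \Sets(s, Um\,a) \times \Sets(Um\,b, t)$, writing $Um$ for the endofunctor that $m$ acts as. The first step is to rewrite the reading part $\Sets(s, Um\,a)$: applying the isomorphism $\Sets(s, Ha) \cong [\Sets,\Sets](L_{s,a}, H)$ with $H = Um$ turns it into $[\Sets,\Sets](L_{s,a}, Um)$. The second step uses the adjunction to move the free functor across, giving $\M(FL_{s,a}, m)$. At this point the coend has the shape $\int^{m \in \M} \M(FL_{s,a}, m) \times \Sets(Um\,b, t)$, and the third step is a coYoneda reduction in $\M$: the representable $\M(FL_{s,a}, -)$ integrates out the coend, setting $m = FL_{s,a}$ and leaving $\Sets\big(U(FL_{s,a})\,b, t\big)$, which is exactly $\Sets(UFL_{s,a}(b), t)$ as claimed. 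I would lay this out as an \texttt{align*} chain with the justifications (Yoneda, adjunction, coYoneda) between lines.

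The right-adjoint case runs dually: I would use $\Sets(Hb, t) \cong [\Sets,\Sets](H, R_{b,t})$ on the writing part $\Sets(Um\,b, t)$ to get $[\Sets,\Sets](Um, R_{b,t})$, then the right adjunction $[\Sets,\Sets](Uf, g) \cong \M(f, Gf)$ to obtain $\M(m, GR_{b,t})$, and finally a Yoneda reduction against the representable $\M(-, GR_{b,t})$ to collapse the coend and produce $\Sets(s, U GR_{b,t}(a))$. The only subtlety worth flagging is that the coend $\int^{m \in \M}$ should be reducible by (co)Yoneda in $\M$, which requires the relevant hom-functor of $\M$ to commute with the coend; this is the standard Yoneda-reduction step and is legitimate whenever the coend exists, but it is the one point where one is implicitly assuming $\M$ behaves well enough (e.g. is cocomplete, or that the particular coend exists) for the reduction to apply.

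The main obstacle is therefore bookkeeping rather than conceptual: one must be careful that the adjunction $[\Sets,\Sets](f, Ug) \cong \M(Fg, f)$ is natural in the variable $m = g$ so that it can be threaded through the coend, and that after transporting $\Sets(Um\,b, t)$ remains the ``profunctor in $m$'' against which the representable is reduced. Once naturality in $m$ is confirmed, the Yoneda/coYoneda collapse is immediate and the two concrete forms drop out. I expect the write-up to be short, with each case a five- or six-line \texttt{align*} display mirroring the Kaleidoscope proof.
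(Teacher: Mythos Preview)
Your proposal is correct and matches the paper's own proof essentially step for step: rewrite one factor using the $L_{s,a}$ (resp.\ $R_{b,t}$) isomorphism, apply the adjunction to obtain a representable $\M(FL_{s,a},-)$ (resp.\ $\M(-,GR_{b,t})$), and collapse the coend by (co)Yoneda. Your worry about extra hypotheses on $\M$ for the (co)Yoneda collapse is unnecessary---the ninja Yoneda reduction $\int^{m}\M(x,m)\times P(m)\cong P(x)$ holds for any copresheaf $P$ on $\M$ without cocompleteness assumptions on $\M$---but otherwise there is nothing to change.
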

\begin{proof}
\begin{align*}
& \int^{f \in \M} \Sets(s, Uf(a)) \times \Sets(Uf(b),t) && \int^{g \in \M} \Sets(s, Ug(a)) \times \Sets(Ug(b),t)
\\ \cong & \qquad\mbox{(Definition of $L_{s,a}$)} & \cong & \qquad\mbox{(Definition of $R_{b,t}$)}\\
& \int^{f \in \M} \Nat\left(  L_{s,a} , Uf \right) \times \Sets(Uf(b),t) && \int^{f \in \M} \Sets(s,Ug(a))  \times \Nat \left( Ug, R_{b,t} \right)
\\ \cong & \qquad\mbox{(Adjunction)} & \cong & \qquad\mbox{(Adjunction)}\\
& \int^{f \in \M} \M\left(  FL_{s,a} , f \right) \times \Sets(Uf(b),t) && \int^{f \in \M} \Sets(s,Ug(a)) \times \M\left( g , GR_{b,t} \right)
\\ \cong & \qquad\mbox{(Yoneda lemma)} & \cong & \qquad\mbox{(Yoneda lemma)} \\
& \Sets \left( UFL_{s,a} (b) ,t \right). && \Sets \left( s , UGR_{b,t}a \right). & \qedhere
\end{align*}
\end{proof}

\chapter{Traversals}
\label{sec:org6b5a11d}
\label{org598c0f5}
We have characterized traversals as the optic for \emph{power series}
functors in Proposition \ref{prop:traversal}. However, the result that is usually presented and used in
programming libraries is that traversals are the optic for \emph{traversable functors}.
Recall that we are taking power series functors to be these that can be written
as \(T(a) = \sum\nolimits_n c_n \times a^n\) for some \(c \colon \mathbb{N} \to \Sets\); while a functor \(T\) will be
traversable if it has a distributive law \(TF \tonat FT\) for every applicative functor
\(F \in \mathbf{App}\) (Definition \ref{org478a9a6}).  We will show that both give rise to the optic called \emph{traversal}.

\section{Traversables as coalgebras}
\label{sec:org7bbbd12}
\subsection{Traversable functors}
\label{sec:orgb60444e}

\begin{definition}[\cite{rypacek12}]
A \textbf{traversable} structure on a functor \(T \colon \C \to \C\) is a family
of transformations \(\trv_F \colon TF \tonat FT\) satisfying three
additional rules called \emph{naturality}, \emph{unitarity} and \emph{linearity;} which
can be expressed respectively as the commutativity of the following diagrams,
where \(\alpha \colon F \tonat G\) is a morphism of applicative functors.
\[\begin{tikzcd}[row sep=tiny]
TF \rar{\trv_F} \ar{dd}[swap]{T \alpha} & FT\ar{dd}{\alpha}  & &  & TFG \ar{ddr}[swap]{\trv_{FG}} \rar{\trv_F} & FTG \ar{dd}{F\trv_G} \\
&& T \rar[bend left]{\trv_1} \rar[bend right,swap]{\id} & T &     &     \\
TG \rar{\trv_G} & GT  &                                     &     & & FGT
\end{tikzcd}\]
\end{definition}
\begin{remark}
The first of the three rules is equivalent to a dinaturality condition
over an end.  With this in mind, we can define the traversable structure
to be given by \(\int\nolimits_{F \in \mathbf{App}} (TF \tonat FT)\) instead.  Because of the definition
of right Kan extensions, we can rewrite this as \(T \tonat \int\nolimits_{F \in \mathbf{App}} \Ran_FFT\).
This motivates our study of this particular end.
\end{remark}
We will characterize traversable functors as coalgebras; the first step will
be to simplify the end \(\int\nolimits_{F \in \mathbf{App}} \Ran_FFT\). 
The following lemma relies on the construction of a particular
free applicative functor from Corollary \ref{org79dc269}.  Recall that the
free applicative functor over
\((a \times (b \to -))\) is precisely
\[
(a \times (b \to -))^{\ast} \cong \sum_{n \in \mathbb{N}} a^n \times (b^n \to -).
\]

\begin{lemma}
\label{org8b21ea5}
There exists an isomorphism with the following signature, natural
on \(T \colon \Sets \to \Sets\).
\[\int_{F \in \mathbf{App}} \mathsf{Ran}_FFT(a) \cong \sum_{n \in \mathbb{N}} a^n \times T(n).\]
\end{lemma}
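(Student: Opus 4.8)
The plan is to reduce the end over \(\App\) to a purely set-level computation by corepresenting \(\Sets(a,Fc)\) as a \emph{representable} functor on \(\App\) and then invoking the Yoneda lemma in that category. First I would unfold the right Kan extension with its end formula (Proposition \ref{orge75964b}), obtaining
\[
\int_{F \in \App} \Ran_F FT(a) \cong \int_{F \in \App} \int_{c \in \Sets} \Sets\big(\Sets(a,Fc),\, F(Tc)\big).
\]
Here \(F\) occurs both covariantly (in \(F(Tc)\)) and contravariantly (inside \(\Sets(a,Fc)\)), so this is a genuine end over \(\App\) and cannot be simplified termwise; the whole point is to dissolve the contravariant occurrence.

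The crucial step corepresents that occurrence. Exactly as in the derivation of the kaleidoscope (Proposition \ref{prop:kaleidoscope}), a Yoneda reduction followed by the free--forgetful adjunction for applicatives (Theorem \ref{org4c8a4e9}) and the explicit free applicative of Corollary \ref{org79dc269} gives
\[
\Sets(a,Fc) \cong \Nat\big(a \times (c \to -),\, F\big) \cong \App(G_c, F), \qquad G_c := \sum\nolimits_k a^k \times (c^k \to -).
\]
Substituting and applying Fubini, the inner end becomes \(\int_{F \in \App} \Sets(\App(G_c,F),\, F(Tc))\). Viewing \(F \mapsto F(Tc)\) as a copresheaf on \(\App\), this is \(\Nat(\App(G_c,-),\, \mathrm{ev}_{Tc})\), which the Yoneda lemma (Proposition \ref{org2c434cd} together with the Ninja Yoneda lemma) collapses to \(G_c(Tc) = \sum_k a^k \times (c^k \to Tc)\). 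The expression thus reduces to \(\int_{c} \sum_k a^k \times \Sets(c^k, Tc)\).

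It then remains to pull the coproduct out of the end and apply one last Yoneda step. Since \(a^k\) is constant in \(c\) and \(c^k = \Sets(k,c) = \hirayo_k c\), the Yoneda reduction yields \(\int_c \Sets(\hirayo_k c,\, Tc) \cong Tk\) for each \(k\), leaving \(\sum_k a^k \times Tk\) as claimed; naturality in \(T\) is automatic, since every isomorphism used is natural. The main obstacle is justifying the interchange \(\int_c \sum_k(\dots) \cong \sum_k \int_c(\dots)\): ends are limits and do not commute with colimits in general. What rescues the argument is that this is a \emph{discrete} colimit over a connected base. A dinatural family valued in the disjoint union \(\sum_k\) must carry a locally constant tag, and because \(\Sets\) is connected (any two sets are joined by a zigzag through \(\emptyset\)) the tag is forced to be globally constant, so such a family is equivalently a tagged family of dinatural families in a single component. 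This is precisely the sense in which ends distribute over discrete colimits, and it is the only non-formal ingredient of the derivation.
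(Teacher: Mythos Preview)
Your proposal is correct and follows essentially the same route as the paper's own proof: unfold \(\Ran\), corepresent \(\Sets(a,Fc)\) via the free applicative on \(a\times(c\to -)\) to obtain an \(\App\)-hom, apply Yoneda over \(\App\), then interchange the remaining end over \(\Sets\) with the discrete coproduct and finish with a final Yoneda reduction using \(c^k\cong\Sets(k,c)\). The paper records the interchange step under the slogan ``coproducts commute with connected limits'' and separately splits off the constant factor \(a^k\), but the content is identical to your argument.
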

\begin{proof}
Note for this derivation that coproducts commute with connected limits (see for instance \cite{nlab}).  
Intuitively, if we need to choose a number for
each set and it has to be preserved by morphisms, it needs to be constant.
\begin{align*}
&\int_{F \in \mathbf{App}} \mathsf{Ran}_FFT(a) \\
\cong & \qquad\mbox{(Formula for a right Kan extension)} \\
& \int_{F \in \mathbf{App}} \int_{b \in \Sets} (a \to Fb) \to FT(b) \\
\cong & \qquad\mbox{(Yoneda lemma)} \\
& \int_{F \in \mathbf{App}} \int_{b \in \Sets} \left(a \to \int_{c \in \Sets}(b \to c) \to  Fc\right) \to FT(b) \\
\cong & \qquad\mbox{(Continuity)} \\
& \int_{F \in \mathbf{App}} \int_{b \in \Sets} \left(\int_{c \in \Sets} a \to (b \to c) \to  Fc\right) \to FT(b) \\
\cong & \qquad\mbox{(Currying)} \\
& \int_{F \in \mathbf{App}} \int_{b \in \Sets} \left(\int_{c \in \Sets} a \times (b \to c) \to  Fc\right) \to FT(b) \\
\cong & \qquad\mbox{(Natural transformation as an end)} \\
& \int_{F \in \mathbf{App}} \int_{b \in \Sets} \mathrm{Nat}\left( a \times (b \to -) ,  F\right) \to FT(b) \\
\cong & \qquad\mbox{(Free-forgetful adjunction for applicative functors)} \\
& \int_{F \in \mathbf{App}} \int_{b \in \Sets} \mathbf{App}\left( (a \times (b \to -))^{\ast} ,  F\right) \to FT(b) \\
\cong & \qquad\mbox{(Free applicative functor)} \\
& \int_{F \in \mathbf{App}} \int_{b \in \Sets} \mathbf{App}\left( \sum_{n \in \mathbf{N}} a^n \times (b^n \to -) ,  F\right) \to FT(b) \\
\cong & \qquad\mbox{(Fubini)} \\
&  \int_{b \in \Sets} \int_{F \in \mathbf{App}} \mathbf{App}\left( \sum_{n \in \mathbf{N}} a^n \times (b^n \to -) ,  F\right) \to FT(b) \\
\cong & \qquad\mbox{(Yoneda lemma)} \\
&  \int_{b \in \Sets} \sum_{n \in \mathbf{N}} a^n \times (b^n \to T(b)) \\
\cong & \qquad\mbox{(Ends distribute over discrete colimits)} \\
&  \sum_{n \in \mathbf{N}} \int_{b \in \Sets} a^n \times (b^n \to T(b)) \\
\cong & \qquad\mbox{(Fubini, as in Remark \ref{remark-limits-commute})} \\
&  \sum_{n \in \mathbf{N}} \left(\int_{b \in \Sets} a\right)^n \times \left(  \int_{b \in \Sets} b^n \to T(b) \right)\\ 
\cong & \qquad\mbox{(Connected end over a constant functor)} \\
&  \sum_{n \in \mathbf{N}} a^n \times \left(  \int_{b \in \Sets} b^n \to T(b) \right)\\ 
\cong & \qquad\mbox{(Exponential as function from a finite set)} \\
& \sum_{n \in \mathbf{N}} a^n \times \left( \int_{b \in \Sets} (n \to b) \to T(b) \right) \\
\cong & \qquad\mbox{(Yoneda lemma)} \\
&  \sum_{n \in \mathbf{N}} a^n \times T(n). & \qedhere
\end{align*}
\end{proof}

\subsection{The shape-contents comonad}
\label{sec:org44febd2}
\label{orgab52623}
We will be studying the following higher-order functor \(K \colon [\Sets,\Sets] \to [\Sets,\Sets]\)
defined as \(KT(a) = \sum_{n \in \mathbb{N}} T(n) \times a^n\). It is meant to represent a split between the
shape and the contents of \(T\), regarded as a container.  
Because of this, we write
the elements of \(KT(a)\) as \((n ; s , c)\) with \(n \in \mathbb{N}\) the \emph{length}, \(s \in T(n)\) the \emph{shape},
and \(c \in a^n\) the \emph{contents}.  The inspiration comes from \cite{gibbons09}, which mentions
how traversables provide this kind of shape-contents split, studied in \cite{jay94};
our goal is to show that it is
precisely what characterizes them. One can see that, actually, the elements of \(T(n)\)
are more than the valid shapes; indexes could be repeated or not even present at all. We claim,
however, that the coalgebra axioms are enough to ensure a valid \emph{shape-contents} split.
\begin{proposition}
There exists a functor \(K \colon [\Sets,\Sets] \to [\Sets,\Sets]\) defined on objects
by \(KT = \sum_{n \in \mathbb{N}} T(n) \times a^n\) that can be given a comonad structure.
\end{proposition}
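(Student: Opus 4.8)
The plan is to present $K$ as a composite of adjoint functors, so that its comonad structure is inherited from the general fact that every adjunction $L \dashv R$ induces a comonad $LR$ on the domain of $R$, with counit the counit of the adjunction and comultiplication $L\eta R$ built from the unit $\eta$. This avoids checking the comonad axioms directly: they become the triangle identities and naturality of the adjunction.

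First I would pin down the adjunction. Let $\mathbb{N}$ be the discrete category of natural numbers and let $\iota \colon \mathbb{N} \to \Sets$ send $n$ to the $n$-element set, so that precomposition $(- \circ \iota) \colon [\Sets, \Sets] \to [\mathbb{N}, \Sets]$ restricts a functor $T$ to the sequence $n \mapsto T(n)$. By Definition \ref{org711ebd7} this restriction has a left adjoint, the left Kan extension $\Lan_\iota$, which exists and is computed by the coend of Proposition \ref{orge75964b}. Evaluating that formula, using that a coend over the discrete category $\mathbb{N}$ is a coproduct and that $\Sets(\iota n, a) \cong a^n$, I obtain
\[
\Lan_\iota (T \circ \iota)(a) \cong \int^{n \in \mathbb{N}} \Sets(\iota n, a) \times T(n) \cong \sum_{n \in \mathbb{N}} a^n \times T(n) = KT(a).
\]
Hence $K \cong \Lan_\iota \circ (- \circ \iota)$; in particular $K$ is a functor, being a composite of functors, and it carries the comonad structure of the adjunction $\Lan_\iota \dashv (- \circ \iota)$.

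Finally I would record the explicit data, which is what the shape--contents reading requires. Writing an element of $KT(a)$ as $(n; s, c)$ with $s \in T(n)$ the shape and $c \in a^n \cong \Sets(n, a)$ the contents, the counit $\epsilon \colon K \tonat \Id$ is the adjunction counit, namely evaluation $\epsilon_T(n; s, c) = T(c)(s)$, and the comultiplication $\delta = \Lan_\iota\, \eta\, (- \circ \iota) \colon K \tonat KK$ unpacks to $\delta_T(n; s, c) = (n; (n; s, \id_n), c)$, inserting the identity tuple $\id_n \in n^n$ as the inner contents. The only genuine step is the identification $K \cong \Lan_\iota \circ (- \circ \iota)$; after that the comonad laws come for free, and the one place demanding a little care is unpacking the abstract comultiplication into this concrete insertion map, which is precisely the formula that makes $\delta$ split a container into its shape and its contents.
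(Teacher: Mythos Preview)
Your proof is correct and takes a genuinely different route from the paper. The paper proceeds entirely by hand: it defines $\varepsilon(n;s,c) = T(c)(s)$ and $\delta(n;s,c) = (n;(n;s,\id),c)$ explicitly and then verifies counitality and coassociativity by unfolding these definitions elementwise. You instead recognise $K$ as the composite $\Lan_\iota \circ (-\circ\iota)$ of the left Kan extension along $\iota\colon\mathbb{N}\to\Sets$ with restriction, so that the comonad structure is the one any adjunction $L \dashv R$ places on $LR$, and the axioms are simply the triangle identities. This is more conceptual and avoids all the explicit diagram chases; it also explains \emph{why} the formulas for $\varepsilon$ and $\delta$ take the form they do, since the counit of the Kan extension adjunction is evaluation and the unit at $n$ is the insertion $s \mapsto (n;s,\id_n)$. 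The paper's approach, by contrast, is self-contained and does not rely on the reader knowing the adjunction/comonad correspondence, which fits its more computational, element-level style. Both arrive at exactly the same explicit data, so your unpacking at the end matches the paper's definitions on the nose.
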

\begin{proof}
First, we check that it is indeed a well-defined functor. If we want it to
be well-defined on objects, we need to check that \(\sum\nolimits_{n \in \mathbb{N}} T(n) \times a^n\) is
a functor for any \(T \in [\Sets, \Sets]\). Given \(f \colon a \to b\), the
corresponding \(KT(f) \colon KT(a) \to KT(b)\) is defined by 
\(KT(f)(n; s , c) = (n; s, f \circ c)\). We can see that this is functorial.
Now we need to define its action on morphisms. Given any natural
transformation \(h \colon T \tonat R\), we can define \(Kh(n;s,c) = (n;h(s),c)\), which
is also functorial.

Now we define the \emph{counit} \(\varepsilon \colon KT \tonat T\) as the morphism given by evaluation
\(T(n) \times a^n \to T(a)\) in every possible \(n \in \mathbb{N}\). In other words, \(\varepsilon(n; s,c) = T(c)(s)\).
The \emph{comultiplication} \(\delta \colon KT \tonat K^2T\) is given by the universal property of
the coproduct on the following diagram.
\[\begin{tikzcd}
\sum_n T(n) \times a^n \ar[dashed]{rr}{\exists!} && \sum_m \left(   \sum_l T(l) \times m^l \right) \times a^m  \\
T(n) \times a^n \ar{rr}{(\id,(\id),\id)} \uar{i_n} && T(n) \times n^n \times a^n \uar{i_{n,n}}
\end{tikzcd}\]
That is, we choose both \(m\) and \(l\) to be \(n\) and then we use the identity. In
other words, \(\delta(n;s,c) = (n;(n; s, \id),c)\). 

We will check now counitality and coassociativity.
For \emph{counitality}, we
need the following two diagrams to commute.
\[\begin{tikzcd}
\sum_{n} T(n) \times a^n & \lar[swap]{K\varepsilon_T} \sum_m \left( \sum_l T(l) \times m^l \right) \times a^m 
 \rar{\varepsilon_{KT}} & \sum_n T(n) \times a^n \\
& \sum_n T(n) \times a^n \ular{\id} \uar[swap]{\delta_T} \urar[swap]{\id} &
\end{tikzcd}\]
They do commute because of the following chains of equations that arise from
unfolding the definitions.
\begin{align*}
& K\varepsilon_T(\delta(n;s,c))                           & & \varepsilon_{KT}(\delta(n;s,c)) \\
= & \qquad\mbox{(Definition of $\delta$)}             & = & \qquad\mbox{(Definition of $\delta$)} \\
& K\varepsilon_T(n;(n,s,\id),c)                           & & \varepsilon_{KT}(n;(n;s,\id),c) \\
= & \qquad\mbox{(Definition of $K$)}                  & = & \qquad\mbox{(Definition of $\varepsilon$)} \\
& (n;\varepsilon_T(n,s,\id),c)                            & & KTc(n;s,\id) \\
= & \qquad\mbox{(Definition of $\varepsilon$)}        & = & \qquad\mbox{(Definition of $KT$)} \\
& (n;T(\id)(s),c)                                         & & (n;s,c \circ \id) \\
= & \qquad\mbox{(Identity)}                           & = & \qquad\mbox{(Identity)} \\
& (n;s,c).                                                 & & (n;s,c).
\end{align*}
\emph{Coassociativity} is the fact that the following diagram commutes.
\[\begin{tikzcd}
\sum_n \left( \sum_m  \left( \sum_l T(l) \times m^l \right) \times n^m \right) \times a^n  &
\sum_n \left( \sum_l T(l) \times n^l \right) \times a^n \lar[swap]{K\delta_T}\\
\sum_n \left( \sum_m T(m) \times n^m \right) \times a^n \uar{\delta_{KT}}&
\sum_n T(n) \times a^n \uar{\delta_T} \lar{\delta_T}
\end{tikzcd}\] 
It does because of the following chain of equations, again following the
definitions.
\begin{align*}
& \delta_{KT}(\delta_T(n;s,c)) \\
= &\qquad\mbox{(Definition of $\delta$)} \\
& \delta_{KT}(n;(n;s,\id),c) \\
= &\qquad\mbox{(Definition of $\delta$)} \\
& (n;(n;(n;s,\id),\id),c) \\
= &\qquad\mbox{(Definition of $\delta$)} \\
& (n;\delta_T(n;s,\id),c) \\
= &\qquad\mbox{(Definition of $K$)} \\
& K\delta_T(n;(n;s,\id),c) \\
= &\qquad\mbox{(Definition of $\delta$)} \\
& K\delta_T(\delta_T(n;s,\id)).
\end{align*}
This finishes the construction of a comonad over \(K\).
\end{proof}
The coalgebra axioms follow from the structure, but we will write them explicitly
and comment on them.  Let \(\sigma \colon Ta \to \sum_n Tn \times a^n\) be a coalgebra.
The first axiom is \emph{counitality}, and in our case, it says that the
following diagram commutes.
\[\begin{tikzcd}
Ta \drar[swap]{\id} \rar{\sigma} & \sum_n Tn \times a^n \dar{\varepsilon} \\
   &  Ta 
\end{tikzcd}\] 
When \(\sigma(t) = (n;s,c)\), we have that \(Tc(s) = t\). This is to say that, 
if we split into shape and contents and then we
put back the contents onto the shape, we should get back our original
structure.  The second axiom is \emph{coassociativity}, and in our case, it
says that the following diagram commutes.
\[\begin{tikzcd}
Ta\dar[swap]{\sigma} \rar{\sigma} & \sum_{n} Tn \times a^n  \dar{K\sigma} \\
\sum_{n} Tn \times a^n \rar{\delta} & \sum_{n} \left(  \sum_{m} Tm \times n^m \right) \times a^n
\end{tikzcd}\]  
When \(\sigma(t) = (n;c,s)\), we have that \(\sigma(s) = (n;\id,s)\). This is to say that
the shape of a shape \(s\) is again \(s\). In this sense, taking the shape is idempotent.

\subsection{Linearity and unitarity from coalgebra laws}
\label{sec:orga9670fb}
We will show that traversables can be defined equivalently as
coalgebras for the shape-contents comonad.  This definition feels
intuitive to us: traversables are precisely functors equipped with a split
into shape and contents.

\begin{theorem}
\label{orge517918}
A coalgebra for the shape-contents comonad is a traversal.
\end{theorem}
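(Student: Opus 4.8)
The plan is to show that the three traversable axioms correspond exactly to the two coalgebra laws together with the naturality that is already built into a natural transformation $\sigma\colon T\tonat KT$. The remark following the definition of traversable functor tells us that the \emph{naturality} rule for $\trv$ is precisely a dinaturality condition, so a naturality-respecting family is an element of $\int_{F\in\App}(TF\tonat FT)\cong\Nat\left(T,\int_{F\in\App}\Ran_F FT\right)$; by Lemma~\ref{org8b21ea5} the inner end is $KT(a)=\sum_n T(n)\times a^n$. Hence the underlying data of a traversable structure (modulo naturality) are exactly those of a natural transformation $\sigma\colon T\tonat KT$, i.e.\ the carrier of a $K$-coalgebra. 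First I would make this correspondence concrete by tracing the isomorphism of the lemma: the associated $\trv_F\colon T(Fa)\to F(Ta)$ takes $t\in T(Fa)$, writes $\sigma_{Fa}(t)=(n;s,c)$ with $s\in T(n)$ and $c=(c_1,\dots,c_n)\in (Fa)^n$, sequences the contents with the applicative structure of $F$ into $\mathrm{seq}_F(c)\in F(a^n)$, and then applies $F(\mathrm{ev}_s)$, where $\mathrm{ev}_s\colon a^n\to Ta$ sends $x$ (viewed as $x\colon n\to a$) to $T(x)(s)$. The task is then to verify that unitarity and linearity of $\trv$ translate into the counit and coassociativity laws of $\sigma$.

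The unitarity case is immediate, and I would dispose of it first. Taking $F$ to be the unit (identity) applicative, so that $\trv_1=\trv_{\Id}$, the sequencing $\mathrm{seq}_{\Id}$ is the identity and $\trv_{\Id}(t)=\mathrm{ev}_s(c)=T(c)(s)$; this agrees with $t$ for all $t$ exactly when $\varepsilon\circ\sigma=\id$, which is the counitality axiom (``putting the contents back onto the shape recovers the original structure'').

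The linearity case is the main obstacle, and it is where coassociativity does the work. Fixing applicatives $F,G$ and an element $t\in T(F(Ga))$ with $\sigma_{F(Ga)}(t)=(n;s,c)$, I would chase $t$ around both legs of the linearity square. The key sub-lemma is the identity $\trv_G\circ\mathrm{ev}_s=G(\mathrm{ev}_s)\circ\mathrm{seq}_G$ as maps $(Ga)^n\to G(Ta)$. To prove it, apply $\trv_G$ to $\mathrm{ev}_s(y)=T(y)(s)$ for $y\in (Ga)^n$: naturality of $\sigma$ along $y\colon n\to Ga$ gives $\sigma_{Ga}\bigl(T(y)(s)\bigr)=KT(y)\bigl(\sigma_n(s)\bigr)$, and coassociativity---precisely the statement that $\sigma_n(s)=(n;s,\id_n)$, ``the shape of a shape is itself''---reduces this to $(n;s,y)$, whence $\trv_G(\mathrm{ev}_s(y))=G(\mathrm{ev}_s)(\mathrm{seq}_G(y))$. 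Using this identity inside $F$, together with the compatibility of composite-applicative sequencing $\mathrm{seq}_{FG}=F(\mathrm{seq}_G)\circ\mathrm{seq}_F$ (part of how the composite applicative is built from $F$ and $G$), both $\trv_{FG}(t)$ and $F\trv_G\bigl(\trv_F(t)\bigr)$ collapse to $(FG)(\mathrm{ev}_s)\bigl(\mathrm{seq}_{FG}(c)\bigr)$, giving linearity. I expect tracking the indices through $\delta(n;s,c)=(n;(n;s,\id),c)$ and matching it against the nested sequencing to be the delicate bookkeeping step.

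Finally, for the converse I would run the same computations backwards: starting from a traversable structure, the lemma produces $\sigma$, unitarity forces $\varepsilon\circ\sigma=\id$, and specializing linearity to free applicatives of the form $(a\times(b\to -))^{\ast}$ recovers $\sigma_n(s)=(n;s,\id_n)$, i.e.\ coassociativity. This establishes that the assignments between coalgebra structures and traversable structures are mutually inverse, proving the equivalence.
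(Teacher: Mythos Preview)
Your proposal is correct and follows essentially the same route as the paper. Both arguments invoke Lemma~\ref{org8b21ea5} to identify the underlying data, derive unitarity from counitality via the identity applicative, derive linearity from coassociativity using the factorisation of the composite-applicative traverse (your $\mathrm{seq}_{FG}=F(\mathrm{seq}_G)\circ\mathrm{seq}_F$ is the paper's $n_{FG}=Fn_G\circ n_F$), and recover coassociativity from linearity by specialising to the free applicatives $(a\times(b\to-))^{\ast}$. The only difference is presentational: you compute with elements $(n;s,c)$, $\mathrm{seq}_F$ and $\mathrm{ev}_s$, whereas the paper packages the same computations into natural transformations $n_{T,F}$ and $m_{T,F}=F\varepsilon_T\circ n_{T,F}$ and reasons with commuting diagrams; your $\mathrm{ev}_s$ is exactly the counit $\varepsilon$ at the relevant component, and your key sub-lemma $\trv_G\circ\mathrm{ev}_s=G(\mathrm{ev}_s)\circ\mathrm{seq}_G$ is the pointwise unwinding of the paper's diagram equating $K\sigma\circ\sigma$ with $\delta\circ\sigma$.
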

\begin{proof}
Because of Lemma \ref{org8b21ea5}, we already have a bijection between natural
transformations \(T \tonat \sum_nTn \times (-)^n\) and natural transformations \(T \tonat \int\nolimits_{F \in \mathbf{App}}\mathsf{Ran}_FFT\). 
We have already shown that \(KT = \sum\nolimits_n T(n) \times (-)^n\) acts as a comonad;
we will show that linearity and unity follow from the coalgebra laws.

We first show that there exists a family \(n_{T,F} \colon KT \circ F \tonat F \circ KT\) natural
in both \(T \in [\Sets,\Sets]\) and \(F \in \mathbf{App}\).  In fact, we can use the multiplication
of \(F\) and the fact that every functor is lax monoidal with respect to the coproduct
to construct the following map and check that it is natural.  Let \(\alpha \colon F \tonat G\)
be a morphism of applicatives, which must preserve the multiplication and unit, and make
the first and second squares commute.  The third square commutes because of naturality of \(\alpha\).
\[\begin{tikzcd}
\sum_n (Fa)^n \times Tn \rar{u_{F,Tn}} \ar[bend left=13]{rrr}{n_{T,F}} \dar{\alpha}& 
\sum_n F(a^n) \times FTn \rar{w_{F}} \dar{\alpha}& 
\sum_n F(a^n\times Tn) \rar \dar{\alpha}& 
F \left( \sum_n a^n \times Tn \right) \dar{\alpha}\\
\sum_n (Ga)^n \times Tn \rar{u_{G,Tn}} \ar[bend right=13]{rrr}{n_{T,G}} & 
\sum_n G(a^n) \times GTn \rar{w_{G}} & 
\sum_n G(a^n\times Tn) \rar & 
G \left( \sum_n a^n \times Tn \right) \\
\end{tikzcd}\]
We define now \(m_{T,F} = F\varepsilon_T \circ n_{T,F} \colon KT \circ F \tonat F \circ T\), and the traverse of our
functor \(T\) will be \(\trv_{T,F} = m_{T,F} \circ \sigma_T \colon T \circ F \tonat F \circ T\).

We now prove \emph{unitarity} from the counitality axiom \(\varepsilon \circ \sigma = \id\).  The following is the
relevant diagram, showing how they both imply each other. Note that \(n_{T,\id} = \id\).
\[\begin{tikzcd}
T \circ \id \ar{rr}{\sigma_T} && KT \circ \id \ar{rr}{m_{T,\id}} \ar{dr}[swap]{n_{T,\id} = \id} && \id \circ T \\
&&& \id \circ KT \urar[swap]{\varepsilon_T} &
\end{tikzcd}\]

We now prove \emph{linearity} from the coalgebra axiom \(K\sigma \circ \sigma = \delta \circ \sigma\). In order to
do that, we will simplify the two sides of the linearity equation to make them
match coassociativity. The first side of the equation can be simplified as follows.
\[\begin{tikzcd}
T \circ F \circ G \rar{\sigma} \ar[bend left=15]{rr}{\trv_{F \circ G}}& 
KT \circ F \circ G \rar{m_F} \dar{K\sigma}&
F \circ T \circ G \dar{F\sigma} \ar[bend left=70]{dd}{F\trv_F}\\
& K^2T \circ F \circ G\rar{m_F} & 
F \circ KT \circ G \dar{m_G}\\ &&
F \circ G \circ T &
\end{tikzcd}\]
The second side of the equation can be also simplified.  In order to do this, we first write
\(m_{F \circ G}\) in terms of \(m_F\) and \(m_G\).  Applicative functors are composed composing
their units and multiplications; this makes \(n_{FG} = Fn_G \circ n_F\).
\begin{align*}
& m_{F \circ G} \\
= & \qquad\mbox{(Definition of $m$)} \\
& FG\varepsilon_T \circ n_{F\circ G} \\
= & \qquad\mbox{(Using that $n_{FG} = Fn_G \circ n_F$)} \\
& FG\varepsilon_T \circ Fn_G \circ n_F \\
= & \qquad\mbox{(Functoriality)} \\
& F(G\varepsilon_T \circ n_G) \circ n_F \\
= & \qquad\mbox{(Definition of $m$)} \\
& Fm_G \circ n_F \\
= & \qquad\mbox{(Counitality for the algebra)} \\
& Fm_G \circ F\varepsilon \circ F\delta \circ n_F \\
= & \qquad\mbox{(Naturality of $n$)} \\
& Fm_G \circ F\varepsilon \circ n_F \circ \delta \\
= & \qquad\mbox{(Definition of $m$)} \\
& Fm_G \circ m_F \circ \delta. \\
\end{align*}
We now proceed to simplify the diagram with the other side of the linearity
equation as follows.
\[\begin{tikzcd}
T \circ F \circ G \ar[bend left=15]{rr}{\trv_{F \circ G}}\rar{\sigma} & KT \circ F \circ G\dar[swap]{\delta} \rar{m_{FG}} & F \circ G \circ T \\
& K^2T \circ F \circ G \rar{m_F} & F \circ KT \circ G \uar[swap]{m_G}
\end{tikzcd}\]
We can see now that coassociativity implies linearity. In fact, when
we use \(\delta\circ\sigma = K\sigma \circ \sigma\), we get the desired result. 

Let us show now that linearity implies coassociativity. Let \(a,b \in \C\) and consider the
functor \(L_{a,b} = a \times (b \to (-))\) we described in \S \ref{opticscofree}
and its free applicative \(A_{x,y} = \sum\nolimits_{n} x^n \times (y^n \to (-))\).  Note that we have
a trivial monomorphism \(i \colon a \to A_{a,b}(b)\).  It can be
checked that \(A_{a,b} T(b) \cong KT(a)\), but also that the following diagram commutes.
\[\begin{tikzcd}
KT(a) \rar{i} \drar[swap]{\cong} & KT (A_{a,b}(b)) \dar{n_{R_{a,b}}} \\
& R_{a,b}T(b) 
\end{tikzcd}\]
By linearity, the following diagram commutes. Note that the internal square does
not necessarily commute yet, but we will show that coassociativity follows from
this.
\[\begin{tikzcd}
TA_{a,b}(b) \rar{i} & TA_{a,b}A_{b,c}(c) \rar{\sigma}\dar{\sigma} & \dar{\delta} KT A_{a,b}A_{b,c}(c) & A_{a,b}A_{b,c}T(c)\\
T(a) \uar{i} & KTA_{a,b}A_{b,c}(c) \rar{\sigma}& K^2TA_{a,b}A_{b,c}(c)\rar{n_{A_{a,b}}} & A_{a,b}KTA_{b,c}(c) \uar{n_{A_{b,c}}}
\end{tikzcd}\]
Naturality allows us to rewrite this into the following diagram. Because the last
part of the diagram is an isomorphism, the internal square commutes.
\[\begin{tikzcd}
T(a) \dar{\sigma}\rar{\sigma} & KT(a)\dar{\delta} & \\
KT(a) \rar{\sigma}& K^2T(a) \dar{i} \\ & \int_b K^2TA_{a,b}(b) \rar{n}
& \int_b A_{a,b}KT(b) \dar{i} \\&& \int_{b,c} A_{a,b}KTA_{b,c}(c) \rar{n} & \int_{b,c} A_{a,b}A_{b,c}T(c)
\end{tikzcd}\]
\end{proof}

\begin{remark}
\label{remark:algebrafromlinearity}
At the moment it is not clear to us that the isomorphism constructed
in Lemma \ref{org8b21ea5} is precisely the one used to construct the
traversal from the coalgebra in Theorem \ref{orge517918}.  
The main idea here is that Lemma \ref{org8b21ea5} should be a morphism of comonads,
and linearity and unitarity should correspond (in general) to the comonad axioms
of the left hand side.  We assume this for \S\ref{sec:traversablesasoptic}; see
also the Appendix \ref{appendixtraversables}.
\end{remark}

Relating traversables and coalgebras makes us consider \emph{cofree traversables}
given by \(KH\) for an arbitrary functor \(H\).  If we consider coalgebra morphisms
between traversables to define a category \(\mathbf{Trv}\), we have the adjunction
\(\mathbf{Trv}(T,KH) \cong [\Sets,\Sets](T,H)\), see Appendix \ref{appendixtraversables}.

\section{Traversals as the optic for traversables}
\label{sec:org9e10ef2}
\label{sec:traversablesasoptic}
The definition of traversables as coalgebras and the construction of cofree
traversables also provides a new description of the traversals as optics for
the evaluation of traversable functors.  This is the result widely used by
optic libraries to provide a profunctor description of the traversal.

\begin{proposition}
\label{prop:traversal2}
In \(\Sets\), the traversal is the optic for traversable functors.
\end{proposition}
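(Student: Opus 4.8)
The plan is to identify the evaluation of traversable functors as the action $U\colon \mathbf{Trv} \to [\Sets,\Sets]$ of the forgetful functor out of the category of coalgebras, and then to exploit the cofree traversable adjunction rather than attempt a coend chase from the comonad structure directly. Concretely, the optic in question is by definition
\[
\int^{T \in \mathbf{Trv}} \Sets(s, T(a)) \times \Sets(T(b), t),
\]
and I would collapse it exactly as in the ``optics for (co)free'' argument of Proposition \ref{orgb2fe82b}, right-adjoint case, taking $\M = \mathbf{Trv}$ and $G = K$.

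First I would assemble the two ingredients produced earlier. From \S\ref{opticscofree} the functor $R_{b,t} = ((-) \to b) \to t$ satisfies $\Sets(Hb, t) \cong [\Sets,\Sets](H, R_{b,t})$ for every $H$, so the second factor rewrites as $\Nat(UT, R_{b,t})$. From the discussion following Theorem \ref{orge517918}, the cofree traversable construction supplies the adjunction $\mathbf{Trv}(T, KH) \cong [\Sets,\Sets](UT, H)$, that is $U \dashv K$; transposing $R_{b,t}$ across it turns $\Nat(UT, R_{b,t})$ into $\mathbf{Trv}(T, K R_{b,t})$. The coend then becomes $\int^{T} \Sets(s, T(a)) \times \mathbf{Trv}(T, K R_{b,t})$, and since $T \mapsto \Sets(s, T(a))$ is covariant, the coYoneda (density) reduction of the ninja Yoneda lemma collapses it to $\Sets(s, U K R_{b,t}(a))$, which is precisely the concrete form predicted by Proposition \ref{orgb2fe82b}.

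It then remains only to evaluate $K R_{b,t}$. Using $KH(x) = \sum_n H(n) \times x^n$ I obtain
\[
K R_{b,t}(a) = \sum_n \bigl((n \to b) \to t\bigr) \times a^n,
\]
and rewriting the exponential over the finite set $n$ as $(n \to b) \cong b^n$ yields $\sum_n a^n \times (b^n \to t)$. Hence the optic equals $\Sets\bigl(s, \sum_n a^n \times (b^n \to t)\bigr)$, which is the traversal, in agreement with the power-series derivation of Proposition \ref{prop:traversal}.

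The main obstacle is not the computation but its legitimacy. Invoking Proposition \ref{orgb2fe82b} presupposes that $U\colon \mathbf{Trv} \to [\Sets,\Sets]$ is a genuine monoidal action (composition of coalgebras being traversable, via linearity) and, more delicately, that the adjunction $U \dashv K$ used above is the one genuinely underlying the comonad–coalgebra presentation of Theorem \ref{orge517918}, i.e. that $K R_{b,t}$ really is the cofree traversable on $R_{b,t}$ and that the resulting isomorphism matches the split of Lemma \ref{org8b21ea5}. This is exactly the coherence flagged in Remark \ref{remark:algebrafromlinearity}: whether the isomorphism of Lemma \ref{org8b21ea5} is a comonad morphism compatible with the coalgebra structure. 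I would therefore either assume this compatibility, as the author does for this section, or else spend the bulk of the argument verifying that $K$ is cofree as a right adjoint to evaluation, with the remaining steps being the routine Yoneda manipulations sketched above.
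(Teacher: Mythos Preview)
Your proposal is correct and is essentially the paper's own proof: the paper rewrites \(\Sets(Tb,t)\) as \(\Nat(UT,R_{b,t})\) via coYoneda, transposes across the cofree-traversable adjunction to \(\mathbf{Trv}(T,KR_{b,t}) = \mathbf{Trv}\big(T,\sum_n(-)^n\times(b^n\to t)\big)\), and then applies Yoneda to collapse the coend. You have packaged these steps as an instance of Proposition~\ref{orgb2fe82b} and added the explicit evaluation of \(KR_{b,t}(a)\), but the argument and its caveats (Remark~\ref{remark:algebrafromlinearity}) are identical.
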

\begin{proof}
\begin{align*}
& \int^{T \in \mathbf{Trv}}\Sets(s, Ta) \times \Sets(Tb , t) \\
\cong & \qquad\mbox{(Yoneda lemma)} \\
& \int^{T \in \mathbf{Trv}}\Sets(s, Ta) \times \Sets\left( \int^c Tc \times (c \to b) , t \right) \\
\cong & \qquad\mbox{(Continuity)} \\
&\int^{T \in \mathbf{Trv}}\Sets(s, Ta) \times \int_c \Sets\left(  Tc \times (c \to b) , t \right) \\
\cong & \qquad\mbox{(Exponential)} \\
& \int^{T \in \mathbf{Trv}}\Sets(s, Ta) \times \int_c \Sets\left(  Tc , (c \to b) \to t \right) \\
\cong & \qquad\mbox{(Natural transformation as an end)} \\
&\int^{T \in \mathbf{Trv}}\Sets(s, Ta) \times \mathrm{Nat} \left(  T , ((-) \to b) \to t \right) \\
\cong & \qquad\mbox{(Cofree traversable)} \\
&\int^{T \in \mathbf{Trv}}\Sets(s, Ta) \times \mathbf{Trv} \left(  T , \sum_n (-)^n \times (b^n \to t) \right) \\
\cong & \qquad\mbox{(Yoneda lemma)} \\
&\Sets \left(  s , \sum\nolimits_n a^n \times (b^n \to t) \right) & \qedhere
\end{align*}
\end{proof}

\section{Species}
\label{sec:org9f485b8}
A final observation on power series functors \(P\) is that they are
precisely the left Kan extensions of some family of sets indexed
by the natural numbers \(S \colon \mathbb{N} \to \Sets\) over the inclusion \(i \colon \mathbb{N} \to \Sets\)
of every natural number as a set with that cardinality.
\[\begin{tikzcd}
\mathbb{N}  \dar[hook, swap]{i} \rar{S} & \Sets \\
\Sets \urar[swap]{P} & 
\end{tikzcd}\]
The formula for left Kan extensions (Proposition \ref{orge75964b}) gives us \(P(a) = \sum_{n \in \mathbb{N}}a^n \times S(n)\),
where the fact that the category is discrete turns the end into a
sum.  \emph{Linear species}, as in \cite{bergeron98}, are described on this way,
just substituting natural numbers for an equivalent category given by
linearly ordered sets with monotone bijections.  \emph{Combinatorial species}
(or just \emph{species}) generalize these. This motivates the idea of extending
our discussion of traversals to arbitrary species.

We follow \cite{yorgey14} in discussing combinatorial species. 
A \textbf{combinatorial species} is a copresheaf on the
groupoid of finite sets with bijections, \(F \colon \mathbf{B} \to \Sets\). 
In other words, we are assigning a set of shapes to
any finite set of labels, in such a way that for any bijection that we
apply to the labels we get back a bijection for the shapes.

\begin{definition}[Joyal, 1986]
\textbf{Analytic functors} are those that arise as the left Kan extension of a
combinatorial species \(F \colon \mathbf{B} \to \Sets\) along the inclusion \(\mathbf{B} \to \Sets\).
\[\begin{tikzcd}
\mathbf{B}  \dar[hook,swap]{i} \rar{F} & \Sets \\
\Sets \urar[swap]{\hat F} & 
\end{tikzcd}\]
The formula for left Kan extensions gives us \(\hat{F}(a) = \int\nolimits^{l \in \mathbf{B}} (il \to a) \times F(l)\).
Note that taking a coend over the category \(\mathbf{B}\) instead of over a discrete
category means we need to quotient by the group of permutations.
\[
\hat{F}(a) = \int^{n \in \mathbf{B}} F(n) \times a^n
\]
\end{definition}

Following the analogy with traversals, we can define the
\textbf{unsorted traversal} as the optic associated with the evaluation
of analytic functors.  Note that we have an operation that composes species
described in \cite{yorgey14} and that the inclusion is unital with respect
to this operation.

\begin{proposition}
\label{prop:unsortedtraversal}
The unsorted traversal has a concrete form given by
\[
\mathbf{UnsortedTraversal}
\left( \begin{pmatrix} a \\ b \end{pmatrix}, \begin{pmatrix} s \\ t \end{pmatrix} \right) 
=\Sets \left( s, \int^{n \in \mathbf{B}} a^{n} \times \left(b^n \to t\right) \right).
\]
\end{proposition}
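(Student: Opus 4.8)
The plan is to transcribe the derivation of the power--series traversal (Proposition~\ref{prop:traversal}) almost verbatim, replacing the discrete category $\mathbb{N}$ by the groupoid $\mathbf{B}$. The action in question is $\mathrm{Analytic}\colon[\mathbf{B},\Sets]\to[\Sets,\Sets]$ sending a species $F$ to its analytic functor $\hat F$; the substitution product on species together with the unitality of the inclusion $i\colon\mathbf{B}\to\Sets$ (recalled just before the statement) makes this a strong monoidal action, the inclusion species being sent to the identity functor. Unfolding the optic over this action, the object to compute is the coend
\[
\int^{F\in[\mathbf{B},\Sets]}\Sets\!\left(s,\textstyle\int^{n\in\mathbf{B}}F(n)\times a^n\right)\times\Sets\!\left(\textstyle\int^{n\in\mathbf{B}}F(n)\times b^n,\,t\right),
\]
taken, exactly as in the power--series case, over the domain $[\mathbf{B},\Sets]$ of the action.

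First I would simplify the right--hand hom--set, the only factor that sees the contents. Using continuity of the hom--functor (Remark~\ref{orgd5ad7d1}), then currying, then writing a $\mathbf{B}$--indexed end as a set of natural transformations (Proposition~\ref{org2c434cd}), one gets
\[
\Sets\!\left(\textstyle\int^{n\in\mathbf{B}}F(n)\times b^n,\,t\right)\cong\int_{n\in\mathbf{B}}\Sets(F(n),\,b^n\to t)\cong\Nat\!\left(F,\;b^{(-)}\to t\right).
\]
This is the same sequence of moves as in the traversal, the only difference being that ends over $\mathbf{B}$ are not bare products (as over discrete $\mathbb{N}$) but equalise the permutation action; that equivariance is precisely what $\Nat$ over the groupoid records.

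Finally I would collapse the coend in $F$ by the coYoneda (density) reduction in the copresheaf category $[\mathbf{B},\Sets]$. Since $F\mapsto\Sets(s,\hat F(a))$ is covariant and $\Nat(F,-)$ is the hom it is paired against, density yields
\[
\int^{F\in[\mathbf{B},\Sets]}\Sets(s,\hat F(a))\times\Nat\!\left(F,\;b^{(-)}\to t\right)\cong\Sets\!\left(s,\;\widehat{(b^{(-)}\to t)}\,(a)\right).
\]
Unfolding the analytic functor on the right, $\widehat{(b^{(-)}\to t)}(a)=\int^{n\in\mathbf{B}}a^n\times(b^n\to t)$, gives the stated concrete form.

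The step needing the most care is the last one, because it is where the passage from a discrete sum to a genuine coend happens: the coYoneda reduction over the \emph{groupoid} $\mathbf{B}$ reintroduces the permutation quotient implicit in $\int^{n\in\mathbf{B}}$ (where the $\mathbb{N}$--indexed traversal had only $\sum_{n}$), and one must check that this is exactly the quotient appearing in the definition of $\hat F$ so that the two $\mathbf{B}$--coends match up. Everything else is a line--by--line copy of Proposition~\ref{prop:traversal} under $\mathbb{N}\rightsquigarrow\mathbf{B}$.
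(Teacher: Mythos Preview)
Your proposal is correct and follows essentially the same route as the paper. The paper's proof cites the adjunction \(\Sets(\hat Fb,t)\cong\mathbf{Species}(F,\,b^{(-)}\to t)\) from \cite[\S6.4]{yorgey14} and then applies Yoneda over \(\mathbf{Species}\); you simply derive that adjunction inline via continuity and currying before applying the same coYoneda reduction, so the two arguments coincide up to whether that isomorphism is quoted or unpacked.
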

\begin{proof}
We will make use of the adjunction \(\Sets(\hat{F}b , t) \cong \mathbf{Species}(F , b^{(-)} \to t)\)
described by \cite[\S6.4]{yorgey14}.  Note also that the analytic functor
for the species \(b^{(-)} \to t\) is given by \(\int\nolimits^n (-)^n \times (b^n \to t)\). 
\begin{align*}
& \int^{F \in \mathbf{Species}} \Sets(s , \hat{F}a) \times \Sets(\hat{F}b , t) \\
\cong & \qquad{\mbox{(Adjunction in \cite{yorgey14})}} \\
& \int^{F \in \mathbf{Species}} \Sets(s , \hat{F}a) \times \mathbf{Species}(F , b^{(-)} \to t) \\
\cong & \qquad{\mbox{(Yoneda, and analytic functor for the species)}} \\
& \Sets \left(   s , \int^{n \in \mathbf{B}} a^{n} \times (b^{n} \to t) \right). &\qedhere \\
\end{align*}
\end{proof}

\chapter{Profunctor optics}
\label{sec:org301ea79}
Tambara modules were first described in \cite{tambara06} for the case of monoidal
categories. Fixing a monoidal category \(\C\), Tambara modules are structures on top
of the endoprofunctors \(\Prof(\C)\). It was shown in \cite{pastro08} that a category
of this structures with morphisms preserving them in some suitable sense is
equivalent to the copresheaves of some category called there the \emph{double} of
a monoidal category.

Our interest in  Tambara modules and their characterization comes from
the fact that profunctor optics are functions parametric precisely
over Tambara modules. In our case, these \emph{doubles} are categories of optics. 
We first provide a proof that these structures
are the coalgebras for a comonad also described in \cite{pastro08}.
From this result, we directly get the profunctor representation
theorem, which relates optics in existential form
to their profunctor form.

\section{Tambara modules}
\label{sec:orgbed2b5a}
During this section we fix a monoidal category \(\M\) that acts both on
two arbitrary categories \(\C\) and \(\D\).  We write \(\mact\) for the image
of \(m \in \M\) both in \([\C,\C]\) and \([\D,\D]\). We write \(\phi\) for the
structure isomorphisms of these strong monoidal actions.

\begin{definition}
\label{org0779c9e}
A \textbf{Tambara module} consists of a profunctor \(p \colon \C^{op} \times \mathbf{D} \to \Sets\)
endowed with a family of morphisms \(\alpha_m \colon p(a,b) \to p(\underline{m} a , \underline{m} b)\)
natural in both \(a \in \C\) and \(b \in \D\), and dinatural  in \(m \in \M\); which additionally
satisfy the two equations \(\alpha_i = p(\phi_{i}^{-1},\phi_{i})\) and \(\alpha_{m \otimes n} = p(\phi_{m,n}^{-1},\phi_{m,n}) \circ \alpha_{m} \circ \alpha_{n}\).
\[\begin{tikzcd}
p(a,b) \drar[swap]{\mathrm{id}}\rar{\alpha_{i}} & p(\iact a, \iact b) \dar{p(\phi^{-1},\phi)} &&
p(a,b) \ar[swap]{drr}{\alpha_{n \otimes m}}\rar{\alpha_n} & p(\underline{n}a,\underline{n}b) \rar{\alpha_m} & p(\underline{mn}a,\underline{mn}b)\dar{p(\phi^{-1},\phi)} \\
& p(a,b) &&&& p(\underline{m \otimes n}a,\underline{m \otimes n}b)
\end{tikzcd}\]
\end{definition}

\begin{remark}
The definition of Tambara module in \cite{pastro08} deals only
with actions that arise from a monoidal product \(\otimes \colon \C \to [ \C , \C ]\). We
have decided to use the term \emph{Tambara module} also for the more general concept,
instead of introducing new nomenclature.  This generalization also includes
\emph{mixed optics}, that were proposed by \cite{riley18} as further work.
\end{remark}

\begin{definition}
In the same way that we introduced \(\hirayo\) to represent the Yoneda embedding, we
will write the hiragana ``ta'', \(\hirata\), to refer to Tambara modules.
Let \(\hirata\) be the category of Tambara modules with morphisms \((p,\alpha) \to (q,\alpha')\) given
natural transformations \(\eta \colon p \tonat q\) that satisfy \(\eta_{a,b} \circ \alpha_{m,a,b} = \alpha'_{m,a,b}\circ \eta_{\underline{m}a,\underline{m}b}\).
Diagrammatically, these are transformations such that the following diagram commutes.
\[\begin{tikzcd}
p(a,b) \dar[swap]{\eta_{a,b}} \rar{\alpha_m} & p(\underline{m}a,\underline{m}b) \dar{\eta_{\underline{m}a, \underline{m}b}}\\
q(a,b) \rar{\alpha'_{m}}& q(\underline{m}a,\underline{m}b) \\
\end{tikzcd}\]
\end{definition}

\section{The Pastro-Street comonad}
\label{sec:orgf55dd77}
Tambara modules are equivalently coalgebras for a comonad studied in
\cite{pastro08}.  That comonad has a left adjoint that must therefore
be a monad, and Tambara modules can be also characterized as algebras
for that monad.  We will get the category of Tambara modules \(\hirata\) as an
Eilenberg-Moore category, and this will be the main lemma towards the
profunctor representation theorem.
\begin{definition}
\label{orgdb10a66}
We define \(\Theta \colon \Prof(\C,\D) \to \Prof(\C,\D)\) as
\[
\Theta(p)(a,b) = \int_{m \in \M} p(\underline{m}a, \underline{m}b).
\]
This is a comonad.
\end{definition}
\begin{proof}
We start by showing that it is indeed a functor. We have already defined its
action on objects, so we proceed to define its action on morphisms. Given two
profunctors \(p , q \in \Prof(\C,\D)\), let \(\eta \colon p \tonat q\) be a natural transformation. The
following diagram constructs a wedge for some \(h \colon m \to n\) that in turns defines a unique map
\(\int_{m} p(ma,mb) \to \int_{m} q(ma,mb)\). The map is natural in \(a \in \C\) and \(b \in \D\), as it is
composed of natural maps. That gives a natural transformation \(\Theta p \tonat \Theta q\).
Squares commute because of naturality of \(\eta\) and dinaturality of the coend in \(p\).
\[\begin{tikzcd}
& & \int_{m} p(\underline{m}a, \underline{m}b) \ar{dr}{\pi_n}\ar{dll}[swap]{\pi_m} \ar[dashed, near end]{ddd}{\Theta\eta_{a,b}} & \\
p(\underline{m}a, \underline{m}b) \ar{ddd}{\eta_{ma,mb}}\ar{dr}{p(\mathrm{id},\underline{h})} & && p(\underline{n}a,\underline{n}b)\ar{dll}[swap]{p(h,\mathrm{id})}\ar{ddd}{\eta_{na,nb}} \\
& p(\underline{m}a, \underline{n}b)\ar[near end]{ddd}{\eta_{ma,nb}} &&& \\
& & \int_{m} q(\underline{m}a, \underline{m}b) \ar{dr}{\pi_n}\ar{dll}[swap]{\pi_m} & \\
q(\underline{m}a, \underline{m}b)  \ar{dr}[swap]{q(\mathrm{id},h)} & && q(\underline{n}a,\underline{n}b) \ar{dll}{q(h,\mathrm{id})} \\
& q(\underline{m}a, \underline{n}b) &&& \\
\end{tikzcd}\]
Functoriality follows from the fact that, when \(\eta = \id\), the identity map makes the diagram
commute; and from the fact that, when \(\eta = \eta_2\circ \eta_1\), the composite \(\Theta\eta_2 \circ \Theta\eta_1\)
makes the diagram commute. More abstractly, we are using the functoriality of the coends and the
naturality to lift the natural transformation.

We proceed to describe the components of the comonad.
The \emph{counit} is \(\varepsilon_{p} = p(\phi^{-1}, \phi) \circ \pi_{i}\), defined by projecting on the 
monoidal unit component.
\[\begin{tikzcd}
\int\nolimits_{m \in \M} p(\underline{m}a , \underline{m}b) \rar{\pi_{i}}&
p(\iact a , \iact b) \rar{p(\phi^{-1}, \phi)}&
p(a , b)
\end{tikzcd}\]
The \emph{comultiplication} \(\delta_{p}\) is obtained by the universal property of the
end as the unique morphism making the following diagram commute.
\[\begin{tikzcd}
p(\underline{m \otimes n}a , \underline{m \otimes n} b) \dar[swap]{p(\phi^{-1},\phi)}& 
\int_{m \in \M} p(\underline{m}a , \underline{m}b) \lar[swap]{\pi_{m \otimes n}}\dar[dashed]{\exists! \delta_{p}} \\
p(\underline{m}\underline{n}a , \underline{m} \underline{n} b) & 
\int_{n\in \M}\int_{m\in \M} p(\underline{m}\underline{n}a , \underline{m} \underline{n} b) \lar[swap]{\pi_{m}\circ\pi_{n}} \\
\end{tikzcd}\]
We will show now that it is indeed a comonad, proving counitality and
coassociativity. \emph{Counitality}, \(\Theta \varepsilon \circ \delta = \mathrm{id}\), 
follows from commutativity of the following diagram. We
use our definitions and the coherence of the end.  The other side of
counitality, \(\varepsilon \circ \delta = \mathrm{id}\), is similar.
\[\begin{tikzcd}
p(\underline{i \otimes u} a , \underline{i \otimes u} a) \dar && \int_m p(\underline{m}a, \underline{m}b) \dar{\delta}\ar{ll}[swap]{\pi_{i\otimes u}} \\
p(\iact \underline{u}a, \iact \underline{u}b) \dar & \int_m p(\underline{m}ua, \underline{m}ub) \lar{\pi_i} & \int_{n}\int_{m} p(\underline{m}\underline{n}a,\underline{m}\underline{n}b) \lar{\pi_u}\dar{\Theta\varepsilon} \\
p(ua,ub) && \int_n p(\underline{n}a , \underline{n}b) \ar{ll}{\pi_u}
\end{tikzcd}\]
\emph{Coassociativity}, \(\Theta\delta\circ\delta = \delta\circ\delta\), follows from commutativity of the following
diagram. The internal squares commute by definition and coherence of the action.
Finally, the two outermost morphisms are the same because of coherence of
the first coend.
\[\begin{tikzcd}[column sep=tiny]
& \int_m p(\underline{m}a,\underline{m}b)\drar[swap]{\delta}\dlar{\delta} & \\
\int_n\int_m p(\underline{m}\underline{n}a,\underline{m}\underline{n}b) \rar[swap]{\Theta\delta}\dar{\pi} & \int_o\int_n\int_m p(\underline{m}\underline{n}\underline{o}a,\underline{m}\underline{n}\underline{o}b) \dar{\pi} & \int_n\int_m p(\underline{m}\underline{n}a,\underline{m}\underline{n}b) \ar{dd}{\pi}\lar{\delta} \\
\int_mp(\underline{m}\underline{w}a,\underline{m}\underline{w}b) \rar\ar{dd}{\pi} & \int_n\int_mp(\underline{mnw}a,\underline{mnw}b) \dar{\pi}& \\
&\int_mp(\underline{mvw}a,\underline{mvw}b) \dar{\pi} & \int_mp(\underline{m(v\otimes w)}a, \underline{m(v\otimes w)}b)\dar{\pi}\lar \\
p(\underline{(u\otimes v)w}a,\underline{(u\otimes v)w}b) \rar\drar & p(\underline{uvw}a,\underline{uvw}b) \rar\dar{\pi} & p(\underline{u(v\otimes w)}a, \underline{u(v\otimes w)}b) \lar\dlar\\
& p(\underline{u\otimes v\otimes w}a, \underline{u\otimes v\otimes w}b) &
\end{tikzcd}\]
\end{proof}
\begin{proposition}
Tambara modules are equivalently coalgebras for this comonad. The category
\(\hirata\) is equivalent, with a bijective-on-objects functor, to the Eilenberg-Moore
category of \(\Theta\).
\end{proposition}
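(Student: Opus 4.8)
The plan is to exhibit a bijective-on-objects, fully faithful functor $\hirata \to \Theta\mbox{-Coalg}$ acting as the identity on the underlying profunctor, and then invoke the fact that such a functor is an equivalence. The whole argument rests on the universal property of the end defining $\Theta p$. A coalgebra structure is a natural transformation $\gamma \colon p \tonat \Theta p$, and by that universal property giving such a $\gamma$ is the same as giving a family $\alpha_m \coloneqq \pi_m \circ \gamma \colon p(a,b) \to p(\mact a, \mact b)$, natural in $a \in \C$ and $b \in \D$, that forms a wedge in $m \in \M$. But that wedge condition is exactly the dinaturality in $m$ demanded of a Tambara module. Hence, before imposing any axioms, the coalgebra structure maps on $p$ are in bijection with the underlying families $\{\alpha_m\}$ of Tambara structures on $p$.

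First I would translate the counit law. Since $\varepsilon_p = p(\phi^{-1},\phi) \circ \pi_i$, postcomposing $\gamma$ with $\varepsilon_p$ and using $\pi_i \circ \gamma = \alpha_i$ gives $\varepsilon_p \circ \gamma = p(\phi^{-1},\phi) \circ \alpha_i$. The coalgebra counit axiom $\varepsilon_p \circ \gamma = \id$ therefore says precisely $p(\phi^{-1},\phi)\circ \alpha_i = \id$, which is the Tambara unit axiom $\alpha_i = p(\phi_i^{-1},\phi_i)$.

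Next I would translate coassociativity. Both composites $\Theta\gamma \circ \gamma$ and $\delta_p \circ \gamma$ land in $\Theta^2 p(a,b) = \int_n \int_m p(\mact\nact a, \mact\nact b)$, so it suffices to compare their projections along $\pi_m \circ \pi_n$. Unfolding the action of $\Theta$ on the natural transformation $\gamma$ gives $\pi_m \circ \pi_n \circ \Theta\gamma \circ \gamma = \alpha_m \circ \alpha_n$, where the inner $\alpha_m$ is taken at the object $(\nact a,\nact b)$ and, by naturality, is the same family. On the other side, the defining property $\pi_m\circ\pi_n\circ\delta_p = p(\phi^{-1},\phi)\circ\pi_{m\otimes n}$ together with $\pi_{m\otimes n}\circ\gamma = \alpha_{m\otimes n}$ gives $\pi_m\circ\pi_n\circ\delta_p\circ\gamma = p(\phi^{-1},\phi)\circ\alpha_{m\otimes n}$. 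Equating the two yields the multiplicativity axiom $\alpha_{m\otimes n} = p(\phi_{m,n}^{-1},\phi_{m,n})\circ\alpha_m\circ\alpha_n$, reading the coherence isomorphism in the appropriate direction (the two occurrences of $p(\phi^{-1},\phi)$ are mutually inverse, since $\phi_{m,n}$ is an isomorphism and $p$ is a functor). Thus a profunctor carries a $\Theta$-coalgebra structure if and only if it carries a Tambara-module structure, and the two determine one another.

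Finally I would check that the two notions of morphism coincide, which makes the comparison fully faithful. A coalgebra morphism $\eta \colon (p,\gamma) \to (q,\gamma')$ is a natural transformation with $\Theta\eta \circ \gamma = \gamma' \circ \eta$; projecting along $\pi_m$ and unfolding $\Theta\eta$ turns this into $\eta_{\mact a,\mact b}\circ \alpha_m = \alpha'_m \circ \eta_{a,b}$, which is verbatim the defining condition of a morphism in $\hirata$. Hence the identity-on-profunctors assignment is a functor that is bijective on objects and fully faithful, therefore an equivalence (indeed an isomorphism of categories once a choice of the end $\Theta p$ is fixed). I expect the only delicate step to be the coassociativity-to-multiplicativity translation, where one must unwind both the action of the comonad $\Theta$ on $\gamma$ and the definition of $\delta_p$ through the iterated projections $\pi_m\circ\pi_n$, keeping careful track of the directions of the isomorphisms $\phi$; the counit and morphism parts are routine once the universal property of the end has been invoked.
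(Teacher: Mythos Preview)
Your proposal is correct and follows essentially the same approach as the paper: use the universal property of the end defining $\Theta p$ to identify a coalgebra structure $\gamma$ with its projections $\alpha_m = \pi_m \circ \gamma$, then match the counit and coassociativity axioms to the two Tambara axioms by projecting, and finally identify coalgebra morphisms with Tambara-module morphisms by the same projection trick. The paper presents these three steps with commutative diagrams rather than prose, but the content is identical; your parenthetical remark about reading $p(\phi^{-1},\phi)$ in the appropriate direction is exactly the bookkeeping the paper leaves implicit.
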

\begin{proof}
Note that the data for a coalgebra is a natural transformation \(\alpha \colon p \tonat \Theta p\)
whose projections are components of the Tambara module, \(\alpha_{m,a,b} = \pi_m \circ \alpha_{a,b}\).
The naturality of \(\alpha\) is exactly the naturality of the components of the
Tambara module; the coherence conditions of the end are precisely the
dinaturality of the components.  The only thing we need to show is that
the coalgebra axioms correspond with the Tambara axioms.

For the counit, we know that \(\mathrm{id} = \varepsilon \circ \alpha = p(\phi^{-1},\phi) \circ \pi_i \circ \alpha = p(\phi^{-1},\phi) \circ \alpha_{i}\),
giving the first axiom. Diagrammatically, we have the following.
\[\begin{tikzcd}
p(a,b) \drar[swap]{\mathrm{id}}\rar{\alpha} \ar[bend left]{rr}{\alpha_i} & 
\int\nolimits_m p(\underline{m}a, \underline{m} b) \dar{\varepsilon}\rar{\pi_i} & 
p(\iact a, \iact b) \dlar{p(\phi^{-1},\phi)}\\
& p(a,b) &
\end{tikzcd}\]
For the comultiplication, we note that the following two diagrams are giving
the same morphism if and only if the Tambara condition holds.  Because of the
universal property of the end, this is the same as to say that \(\delta \circ \alpha = \Theta\alpha \circ \alpha\).
\[\begin{tikzcd}
p(a,b) \rar{\alpha}\drar[swap]{\alpha_{u \otimes v}}& 
\int\nolimits_m p(\underline{m}a, \underline{m} b) \rar{\delta} \dar{\pi}&
\int\nolimits_m\int\nolimits_n p(\underline{mn}a, \underline{mn} b)\dar{\pi} \\
& p(\underline{u \otimes v} a, \underline{u \otimes v} b) \rar{p(\phi^{-1},\phi)} & 
p(\underline{u v} a, \underline{u v} b)
\end{tikzcd}\]

\[\begin{tikzcd}
p(a,b) \rar{\alpha} \drar[swap]{\alpha_v} &
\int_m p(\underline{m}a,\underline{m}b) \rar{\Theta\alpha} \dar{\pi_v} &
\int_n\int_m p(\underline{mn}a,\underline{mn}b) \dar{\pi_v} \\& 
p(\underline{v}a,\underline{v}b) \drar[swap]{\alpha_u} \rar{\alpha} &
\int_m p(\underline{mv}a,\underline{mv}b) \dar{\pi_u} \\ && 
p(\underline{uv}a,\underline{uv}b)
\end{tikzcd}\]
Finally, we will show that morphisms of Tambara modules and coalgebra morphisms
are the same thing. This gives a bijective-on-objects and fully faithful functor
between \(\hirata\) and the Eilenberg-Moore category of \(\Theta\).  Given a natural
transformation \(\eta \colon p \tonat q\) between Tambara modules endowed with \(\alpha\) and \(\alpha'\), the
exterior part of this diagram commutes when \(\eta\) is a morphism of Tambara modules
and the interior part commutes when \(\eta\) is a coalgebra morphism.  By the universal
property of the end and the definition of \(\Theta\), they both imply each other.
\[\begin{tikzcd}
p(a,b) \rar{\alpha_m}\ar[bend left]{rr}{\alpha}\dar[swap]{\eta_{a,b}} & 
\int_m p(\underline{m}a, \underline{m}b) \rar{\pi_{m}}\dar{\Theta \eta_{a,b}} &
p(\underline{m}a, \underline{m}b) \dar{\eta_{\underline{m}a},\underline{m}b}\\
q(a,b) \rar{\alpha'}\ar[bend right]{rr}{\alpha'_m} &
\int_m q(\underline{m}a, \underline{m}b) \rar{\pi_{m}} &
q(\underline{m}a, \underline{m}b) \\
\end{tikzcd}\]
\end{proof}
\begin{proposition}[\cite{pastro08}]
\label{orgb64efb5}
The \(\Theta\) comonad has a left adjoint, which must therefore be a monad. On objects,
it is given by the following formula.
\[\Psi q(x,y) = \int^{m \in \M}\int^{a \in \C,b \in \D}
q(a,b) \times \C(\underline{m}a,x) \times \D(y,\underline{m}b).\]
That is, there exist a natural isomorphism \(\Nat(\Phi q,p) \cong \Nat(q,\Theta p)\).
\end{proposition}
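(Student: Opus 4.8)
The plan is to establish the adjunction $\Psi \dashv \Theta$ by exhibiting a chain of natural isomorphisms $\Nat(\Psi q, p) \cong \Nat(q, \Theta p)$, carried out entirely in the (co)end idiom of this chapter. Since adjoints are determined up to isomorphism by such a hom-isomorphism, and $\Theta$ has already been shown to be a comonad, producing this isomorphism suffices and forces $\Psi$ to be a monad.

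First I would rewrite the left-hand side as an end, $\Nat(\Psi q, p) \cong \int_{x \in \C, y \in \D} \Sets(\Psi q(x,y), p(x,y))$ via Proposition \ref{org2c434cd}, and substitute the defining formula for $\Psi q$. Because $\Psi q(x,y)$ is a coend, continuity of the contravariant hom (Remark \ref{orgd5ad7d1}) turns that coend into an end, and the exponential isomorphism isolates the factor $q(a,b)$, producing an expression of the form
\[
\int_{x,y} \int_m \int_{a,b} \Sets\bigl(q(a,b),\, \Sets(\C(\underline{m}a, x) \times \D(y, \underline{m}b),\, p(x,y))\bigr).
\]
The next move is the Fubini rule, pushing the $a,b$ and $m$ ends outward and the $x,y$ ends inward, followed by continuity once more to slide $\int_{x,y}$ inside the representable $\Sets(q(a,b), -)$. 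This leaves an inner double end $\int_{x,y}\Sets(\C(\underline{m}a,x)\times\D(y,\underline{m}b), p(x,y))$ to be evaluated.

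The crux is this inner evaluation, which is a double application of the ninja Yoneda lemma with careful attention to variance. After currying the product inside the hom, the end over $y$ has the shape $\int_y \Sets(\D(y,\underline{m}b), p(x,y))$; since $\D(-, \underline{m}b)$ is the \emph{contravariant} representable, this is a Yoneda reduction performed in $\D^{op}$ and collapses to $p(x, \underline{m}b)$. The remaining $\int_x \Sets(\C(\underline{m}a, x), p(x, \underline{m}b))$ is a Yoneda reduction against the \emph{covariant} representable $\C(\underline{m}a, -)$ and collapses to $p(\underline{m}a, \underline{m}b)$.

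Feeding this back in yields $\int_{a,b}\int_m \Sets(q(a,b), p(\underline{m}a, \underline{m}b))$; pulling the $m$-end inside the hom by continuity exhibits $\int_m p(\underline{m}a, \underline{m}b) = \Theta p(a,b)$, and reading the outer end as a set of natural transformations returns $\Nat(q, \Theta p)$. I expect the main obstacle to be purely bookkeeping rather than conceptual: keeping the $\mathrm{op}$/co variances straight so that each Yoneda step is applied in the correct category, and sequencing the Fubini swaps so that each representable ends up paired with the integration variable it is meant to cancel.
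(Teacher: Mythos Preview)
Your proposal is correct and is essentially the paper's own computation run in the opposite direction: the paper starts from $\Nat(q,\Theta p)$, \emph{expands} $p(\underline{m}a,\underline{m}b)$ via ninja Yoneda to introduce the representables $\C(\underline{m}a,x)\times\D(y,\underline{m}b)$, and then regroups into $\Nat(\Psi q,p)$, whereas you start from $\Nat(\Psi q,p)$ and \emph{collapse} those same representables by Yoneda to recover $\Theta p$. The ingredients (continuity, Fubini, currying, Yoneda with the correct variances) and their sequencing are the same.
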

\begin{proof}
This comonad can also be written as \(\Theta(p) = \int\nolimits_{m \in \M} p \circ (\underline{m} , \underline{m})\). From this definition we
can see that it has a left adjoint \(\Psi(p) = \int^{m \in \M} \mathsf{Lan}_{(\underline{m},\underline{m})} p\), which must be a monad.
More explicitly, the adjunction can be computed as follows for any
given \(p,q \colon \C^{op} \times \mathbf{D} \to \Sets\).
\begin{align*}
& q \Rightarrow \Theta p \\
\cong & \qquad\mbox{(Natural transformation as an end)} \\
& \int_{(a,b) \in \C^{op} \times \mathbf{D}} q(a,b) \to \Theta p (a,b) \\
\cong & \qquad\mbox{(Fubini rule)} \\
& \int_{a\in\C,b \in\D} q(a,b) \to \Theta p (a,b) \\
\cong & \qquad\mbox{(Definition of $\Theta$)} \\
& \int_{a\in\C,b \in \D} q(a,b) \to \int_{m \in \M} p (\underline{m}a,\underline{m}b) \\
\cong & \qquad\mbox{(Continuity of the end)} \\
& \int_{a \in \C, b \in \D} \int_{m \in \M} q(a,b) \to  p (\underline{m}a,\underline{m}b) \\
\cong & \qquad\mbox{(Fubini rule)} \\
& \int_{m \in \M} \int_{a \in \C, b \in \D} q(a,b) \to  p (\underline{m}a,\underline{m}b) \\
\cong & \qquad\mbox{(Ninja Yoneda lemma)} \\
& \int_{m \in \M} \int_{a \in \C, b \in \D}  q(a,b) \to \int_{x\in\C,y\in\D} \C(\underline{m}a,x) \times \D(y,\underline{m}b) \to p (x,y) \\
\cong & \qquad\mbox{(Continuity of the end)} \\
& \int_{m \in \M} \int_{a \in \C, b \in \D} \int_{x\in\C,y\in\D}  q(a,b) \to \left( \C(\underline{m}a,x) \times \D(y,\underline{m}b) \to p (x,y) \right) \\
\cong & \qquad\mbox{(Currying)} \\
& \int_{m \in \M} \int_{a \in \C, b \in \D} \int_{x\in\C,y\in\D}  q(a,b) \times \C(\underline{m}a,x) \times \D(y,\underline{m}b) \to p (x,y) \\
\cong & \qquad\mbox{(Fubini rule)} \\
& \int_{x\in\C,y\in\D}  \int_{a \in \C, b \in \D}  \int_{m \in \M}  q(a,b) \times \C(\underline{m}a,x) \times \D(y,\underline{m}b) \to p (x,y) \\
\cong & \qquad\mbox{(Cocontinuity of the coend)} \\
& \int_{x\in\C,y\in\D} \left(  \int^{m \in \M}\int^{a \in \C, b \in \D}  q(a,b) \times \C(\underline{m}a,x) \times \D(y,\underline{m}b) \right) \to p (x,y) \\
\cong & \qquad\mbox{(Definition of $\Psi$)} \\
& \int_{x\in\C,y\in\D} \Psi q(x,y) \to p (x,y) \\
\cong & \qquad\mbox{(Natural transformation as an end)} \\
& \Psi q \Rightarrow p. & \qedhere \\
\end{align*}
\end{proof}
\begin{remark}
Because of Lemma \ref{orgb64efb5}, Tambara modules are algebras for the monad
\(\Psi\). In particular, knowing that the category \(\hirata\) of Tambara modules is equivalently
the category of coalgebras for \(\Theta\) or the category of algebras for \(\Phi\), we can
construct free and cofree Tambara modules. Given a Tambara module \(q \in \hirata\) and
some arbitrary functor \(p \in \Prof(\C)\), we have the following adjuctions \(\Phi \dashv U \dashv \Theta\).
\[
\hirata(\Phi p , q) \cong \mathbf{Prof}(p,Uq), \qquad \hirata(q , \Theta p) \cong \Prof(Uq,p).
\]
\end{remark}
\section{The profunctor representation theorem}
\label{sec:org674ebca}
\begin{theorem}[\cite{riley18,boisseau18}]
\label{org9a4afc7}
\[\int_{p \in \hirata} \Sets(p(a,b) , p(s,t))
\cong
\Optic((a,b),(s,t)).
\]
\end{theorem}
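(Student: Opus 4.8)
The plan is to exploit the Pastro--Street comonad machinery set up in the previous section, turning the profunctor side into a representability statement. First I would read the left-hand end as a set of natural transformations between evaluation functors: by the description of natural transformations as an end (Proposition \ref{org2c434cd}), $\int_{p\in\hirata}\Sets(p(a,b),p(s,t))$ is exactly $\Nat(\mathrm{ev}_{a,b},\mathrm{ev}_{s,t})$, where $\mathrm{ev}_{a,b},\mathrm{ev}_{s,t}\colon\hirata\to\Sets$ send a Tambara module $p$ to $p(a,b)$ and $p(s,t)$ respectively. The key observation is that each evaluation functor is \emph{representable}. Writing $P_{a,b}=\C(-,a)\times\D(b,-)$ for the representable profunctor at $(a,b)\in\C^{op}\times\D$, the ordinary Yoneda lemma in $[\C^{op}\times\D,\Sets]$ gives $\Prof(P_{a,b},Uq)\cong q(a,b)$ naturally in $q$; composing with the free--forgetful adjunction $\hirata(\Phi p,q)\cong\Prof(p,Uq)$ from the remark after Proposition \ref{orgb64efb5} yields $\mathrm{ev}_{a,b}\cong\hirata(\Phi P_{a,b},-)$. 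Thus $\mathrm{ev}_{a,b}$ is represented by the free Tambara module $\Phi P_{a,b}$ on the representable profunctor, and likewise for $\mathrm{ev}_{s,t}$.

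With both evaluation functors represented, I would apply the Yoneda lemma in $\hirata$ itself to collapse the natural transformations, obtaining $\Nat(\hirata(\Phi P_{a,b},-),\hirata(\Phi P_{s,t},-))\cong\hirata(\Phi P_{s,t},\Phi P_{a,b})$. Feeding the free--forgetful adjunction in once more turns this into a profunctor hom that Yoneda evaluates: $\hirata(\Phi P_{s,t},\Phi P_{a,b})\cong\Prof(P_{s,t},U\Phi P_{a,b})\cong(U\Phi P_{a,b})(s,t)=\Psi P_{a,b}(s,t)$, using that the monad underlying this adjunction is $\Psi=U\Phi$. At this point the entire left-hand side has been identified with $\Psi P_{a,b}$ evaluated at $(s,t)$.

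The final step is a direct coend computation: substituting $q=P_{a,b}$ into the explicit formula for $\Psi$ from Proposition \ref{orgb64efb5} and absorbing the two representable factors $\C(-,a)$ and $\D(b,-)$ by the (co)Yoneda reductions of the Ninja Yoneda lemma leaves exactly $\int^{m\in\M}\C(s,\underline{m}a)\times\D(\underline{m}b,t)$, which is the defining coend of $\Optic((a,b),(s,t))$. I expect the main obstacle to be the representability step: one must check that the isomorphism $\mathrm{ev}_{a,b}\cong\hirata(\Phi P_{a,b},-)$ is genuinely natural in the Tambara module, so that the subsequent Yoneda lemma in $\hirata$ applies cleanly, and that the free Tambara module $\Phi P_{a,b}$ actually exists (guaranteed here by the adjunction $\Phi\dashv U\dashv\Theta$ recorded after Proposition \ref{orgb64efb5}). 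By contrast, the closing computation is routine, modulo careful bookkeeping of the variances of the two representables when they are contracted away against the $\underline{m}$-twisted hom-factors of $\Psi$.
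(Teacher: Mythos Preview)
Your proposal is correct and follows essentially the same route as the paper's proof: rewrite $p(a,b)$ via Yoneda as $\Prof(\hirayo_{(a,b)},Up)$, pass through the free--forgetful adjunction to $\hirata(\Phi\hirayo_{(a,b)},p)$, apply Yoneda in $\hirata$, and then simplify $\Psi\hirayo_{(a,b)}(s,t)$ by coend calculus. The one difference is that you also represent $\mathrm{ev}_{s,t}$ and pass through $\hirata(\Phi P_{s,t},\Phi P_{a,b})$ before undoing that step again; the paper instead applies the Yoneda lemma directly to the (not necessarily representable) functor $\mathrm{ev}_{s,t}$, which saves those two moves but lands in the same place.
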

\begin{proof}
The proof here is different from the ones in \cite{riley18,boisseau18}, although
it follows the same basic idea.
\begin{align*}
& \int_{p \in \hirata} \Sets(p(a,b), p(s,t)) \\
\cong & \qquad\mbox{(Yoneda lemma)} \\
& \int_{p \in \hirata} \Sets\left(\mathrm{Nat}(\hirayo_{(a,b)}, p) , p(s,t) \right) \\
\cong & \qquad\mbox{(Free-forgetful adjunction for Tambara modules)} \\
& \int_{p \in \hirata} \Sets\left( \hirata(\Psi_\M\hirayo_{(a,b)}, p) , p(s,t) \right) \\
\cong & \qquad\mbox{(Yoneda lemma)} \\
& \Psi\hirayo_{(a,b)}(s,t) \\
\cong & \qquad\mbox{(Definition of $\Psi$)} \\
& \int^{m \in \M} \int^{x\in\C,y \in\D} \C(s,\mact x) \times \D(\mact y, t) \times \hirayo_{(a,b)}(x,y) \\
\cong & \qquad\mbox{(Yoneda lemma)} \\
& \int^{m \in \M} \C(s,\mact a) \times \mathbf{D}(\mact b, t). & \qedhere\\
\end{align*}
\end{proof}

\begin{remark}
In fact, we can stop midway there and say that an optic is an element
of \(\Psi\hirayo_{(a,b)}(s,t)\), which is, again by Yoneda lemma, the same as a natural
transformation \(\hirayo_{(s,t)}
\Rightarrow \Psi\hirayo_{(a,b)}\).  We could have defined
our category of optics to be the opposite of the full subcategory of
representable functors of the Kleisli category for \(\Psi\).
\end{remark}

\section{Examples of profunctor optics}
\label{sec:orgad9cd3d}
We can apply the profunctor representation theorem
\ref{org9a4afc7} to each one of our optics and get their
profunctor versions.

\begin{itemize}
\item \textbf{Lenses} are described in Proposition \ref{prop:lenses} as optics for the
product.  Tambara modules for the cartesian product are
called \emph{strong profunctors} \cite{kmett15} or \emph{cartesian profunctors} \cite{boisseau18}.
The following is the consequence of the profunctor representation theorem.
\[
   \C(s , a) \times \C(s \times b , t) \cong \int_{p \in \hirata(\times)} \Sets(p(a,b), p(s,t)).
   \]

\item \textbf{Prisms} are described in Proposition \ref{prop:prisms} as optics for the
coproduct.  Tambara modules for the coproduct are called \emph{choice profunctors}
\cite{kmett15} or \emph{cocartesian profunctors} \cite{boisseau18}.
\[
   \C(s , t + a) \times \C(b , t) \cong \int_{p \in \hirata(+)} \Sets(p(a,b), p(s,t)).
   \]

\item \textbf{Traversals} are described in Proposition \ref{prop:traversal} as optics for
the evaluation of power series functors.  We can write them as functions
parametric over Tambara modules for power series functors
\[
   \C \left(  s , \sum_{n \in \mathbb{N}} a^n \times (b^n \to t) \right) \cong
   \int_{p \in \hirata( \mathrm{Series})} \Sets(p(a,b), p(s,t)).
   \]
The commonly used characterization follows from Proposition \ref{prop:traversal2}, which
describes traversals as the optic for traversable functors.
\[
   \Sets \left(  s , \sum\nolimits_n a^n \times (b^n \to t) \right)
   \cong
   \int_{p\in \hirata(  \mathbf{Trv})} \Sets(p(a,b), p(s,t)).
   \]
\end{itemize}

A relevant question here is what happens when we compose two optics of
two different kinds: for instance a lens with a prism.  If we follow what Haskell implementations do,
we compose pointwise on profunctors that have Tambara structures for
\emph{both} monoidal actions.  Lenses and prisms would compose into something of 
the following form.
\[
\int_{(p,p) \in \hirata(\times) \times \hirata(+)} \Sets(p(a,b), p(s,t)).
\]
In our example, we would be taking an end over
the full subcategory of \(\hirata(\times) \times \hirata(+)\) given by structures with the
same underlying functor.  Categorically, this would be the following pullback
of categories.
\[\begin{tikzcd}
\hirata(\times) \times_{\Prof} \hirata( + ) \dar[swap]{\pi}\rar{\pi}& \hirata(\times) \dar{U}\\
\hirata( + ) \rar{U} & \Prof
\end{tikzcd}\]
We discuss this composition in \S \ref{orga66504e}.

\section{The category of optics as a Kleisli object}
\label{sec:org31966dc}
We have tried to give a geodesic to the profunctor representation theorem,
but the proof outlined in \cite{pastro08} for the case of monoidal products
can be repeated in full generality.  The following lemma is a consequence
of Proposition \ref{orgb64b29f}.

\begin{lemma}
There exists an identity on objects functor \(\C^{op} \times \D \to \Optic\), which
means that the following \(\psi \colon (\C^{op} \times \D) \times (\C^{op} \times \D)\), given by the
set of morphisms of \(\Optic\), is a promonad.
\[
\psi((a,b),(s,t)) = \int^{m \in \M} \C(s , \mact a) \times \D(s , \mact b).
\]
Moreover, \(\Optic\) is the Kleisli object for this promonad.
\end{lemma}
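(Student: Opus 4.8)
The plan is to prove two things: first that there is an identity-on-objects functor $\C^{op} \times \D \to \Optic$ whose associated promonad is the $\psi$ given by the optic hom-sets, and second that $\Optic$ realizes the Kleisli object for this promonad. I would lean heavily on Proposition \ref{orgb64b29f}, which says that promonads are in bijection with identity-on-objects functors and that each such functor factors universally through the Kleisli category of its promonad.

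First I would exhibit the identity-on-objects functor $J \colon \C^{op} \times \D \to \Optic$. On objects it sends $(a,b)$ to itself. On a morphism $(f,g) \colon (a,b) \to (a',b')$ in $\C^{op}\times\D$ — that is, $f \colon a' \to a$ in $\C$ and $g \colon b \to b'$ in $\D$ — I would send it to the optic $\optic{\phi_i^{-1} \circ f}{g \circ \phi_i}$ built from the unit $i \in \M$ and the structure map $\phi_i$ of the action, essentially the image of the ordinary hom under the unit $\eta$. Functoriality and identity-preservation here are exactly the unit-preservation laws checked earlier when $\Optic$ was shown to be a category, so this step is routine once the correct assignment is written down.

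Next I would identify the promonad induced by $J$ under Proposition \ref{orgb64b29f}. By that proposition, the induced promonad sends $((a,b),(s,t))$ to the set of $\Optic$-morphisms $(a,b) \to (s,t)$, which is by definition
\[
\psi((a,b),(s,t)) = \int^{m \in \M} \C(s, \mact a) \times \D(\mact b, t),
\]
so the stated $\psi$ is precisely this promonad; its unit is $\eta$ (the action of $J$ on hom-sets) and its multiplication is the coend composition coming from the monoidal structure, matching the composition of $\Optic$ spelled out in Section \ref{org2dbf2a9}. (The statement's formula has a typographical $\D(s,\mact b)$ where it should read $\D(\mact b,t)$, agreeing with the definition of $\Optic$.) Finally, to conclude that $\Optic$ \emph{is} the Kleisli object, I would invoke the second half of Proposition \ref{orgb64b29f}: every identity-on-objects functor arises from a promonad and factors universally, so the universal such factorization — the Kleisli category — has exactly the objects of $\C^{op}\times\D$ and hom-sets $\psi$, which is $\Optic$ by construction. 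Unwinding the universal property of the Kleisli object (Definition \ref{def:kleisli}) against the explicit composition gives the identification.

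The main obstacle I anticipate is the bookkeeping in verifying that the multiplication of the promonad coincides with the composition rule of $\Optic$ up to the associators and structure isomorphisms $\phi_{m,n}$. The coend-composition of the profunctor $\psi$ with itself is $\int^n \psi((a,b),(x,y)) \times \psi((x,y),(s,t))$, and reducing this via coend calculus and the strong-monoidal coherence of $\underline{(-)}$ to land on $\int^{m\otimes n}$ is exactly the computation that already appears when associativity of $\Optic$ composition was checked. I would therefore point back to those identities rather than redo them, so that the work reduces to observing that the promonad axioms are literally the category axioms for $\Optic$ already established.
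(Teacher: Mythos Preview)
Your proposal is correct and follows the same approach as the paper, which simply states that the lemma is a consequence of Proposition~\ref{orgb64b29f} without further detail. Your explicit description of the identity-on-objects functor $J$ and the identification of the promonad structure with the already-verified composition of $\Optic$ is a faithful (and usefully more explicit) unpacking of that citation; you also correctly spot the typo $\D(s,\mact b)$ for $\D(\mact b,t)$.
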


With this characterization, we can show that copresheaves over
\(\Optic\) are precisely modules over the promonad; and use
that to show that these are Tambara modules.

\begin{proposition}
\label{org66b0ed4}
Copresheaves over \(\Optic\) are equivalent to Tambara modules.
\end{proposition}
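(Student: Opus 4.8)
The plan is to prove the equivalence by a short chain of identifications, reusing the machinery already set up for promonads and for the Pastro--Street monad. Writing $\mathcal{E} = \C^{op}\times\D$, the preceding lemma tells us that $\Optic$ is the Kleisli object of the optic promonad $\psi$ on $\mathcal{E}$, with an identity-on-objects functor $J \colon \mathcal{E} \to \Optic$. The skeleton of the argument is then
\[
[\Optic, \Sets] \;\simeq\; \psi\text{-modules} \;\simeq\; \Psi\text{-algebras} \;\simeq\; \hirata,
\]
where the last equivalence is the characterization of Tambara modules as algebras for $\Psi$ recorded after Proposition \ref{orgb64efb5}.

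First I would establish that copresheaves over $\Optic$ are exactly modules over the promonad $\psi$. Given $P \colon \Optic \to \Sets$, restricting along $J$ produces a profunctor $p = P \circ J \colon \C^{op}\times\D \to \Sets$; the action of $P$ on the Kleisli morphisms, which are by definition the elements of $\psi((a,b),(s,t))$, supplies a natural family $\psi \diamond p \tonat p$, and the functoriality axioms of $P$ (preservation of identities and of Kleisli composition) become precisely the unit and associativity axioms of a $\psi$-module. Conversely, a module structure determines the action of $P$ on every Kleisli morphism, so the assignment is an equivalence; this is the content of the universal property of the Kleisli object (Definition \ref{def:kleisli}) together with Proposition \ref{orgb64b29f}. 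By Lemma \ref{org8d8bb19}, such modules are the same as algebras for the monad $(\psi \diamond -)$ on $\Prof(\C,\D)$.

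The remaining step is to identify the monad $(\psi \diamond -)$ with the Pastro--Street monad $\Psi$. Expanding the profunctor composition and substituting the definition of $\psi$ gives
\[
(\psi \diamond p)(s,t) \cong \int^{a\in\C,b\in\D} p(a,b) \times \int^{m\in\M}\C(s,\mact a)\times\D(\mact b,t),
\]
and pulling the $m$-coend outside (Fubini for coends) yields exactly the formula for $\Psi p(s,t)$ used in the proof of the representation theorem. One then checks that the unit and multiplication of $(\psi\diamond-)$, which come from the unit and multiplication of the promonad $\psi$, agree with those of $\Psi$; hence the two monads coincide and their Eilenberg--Moore categories are the same. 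Since the algebras for $\Psi$ are Tambara modules, this closes the chain.

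I expect the main obstacle to be the bookkeeping in the last identification: concretely, translating a module action $\psi\diamond p \tonat p$ into the Tambara family $\alpha_m \colon p(a,b)\to p(\mact a,\mact b)$. By the defining adjunction of the coend (continuity together with the Yoneda lemma), giving a natural map out of $\int^{a,b}\int^m p(a,b)\times\C(s,\mact a)\times\D(\mact b,t)$ is the same as giving, for each $m$, a map $p(a,b)\to p(\mact a,\mact b)$ natural in $a,b$ and dinatural in $m$; and the module unit and associativity laws must be shown to correspond to the Tambara conditions $\alpha_i = p(\phi^{-1},\phi)$ and $\alpha_{m\otimes n} = p(\phi^{-1},\phi)\circ\alpha_m\circ\alpha_n$. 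Verifying that these axioms match up precisely, rather than merely up to the obvious reindexing, is where the care is needed, although it is essentially forced once the two monads have been identified.
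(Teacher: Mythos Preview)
Your proposal is correct and follows essentially the same route as the paper: copresheaves on $\Optic$ are identified with $\psi$-modules via the Kleisli universal property, then with algebras for the associated monad, then with Tambara modules. The only difference is that you make explicit the identification $(\psi\diamond -)\cong\Psi$ via Fubini and Yoneda, whereas the paper's proof invokes Lemma \ref{org8d8bb19} and silently treats this identification as already known; your added paragraph about matching the module axioms to the Tambara axioms is exactly the bookkeeping the paper leaves implicit.
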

\begin{proof}
We start by showing that \(\mathbf{Cat}(\Optic, \Sets) \cong \mathbf{Mod}(\psi)\)
By the universal property of the Kleisli object, we know that
profunctors from the terminal category \(\mathbf{1} \to \Optic\) correspond to right
promodules over the promonad, \(\Prof( \mathbf{1} , \Optic) \cong \mathbf{Mod}(\psi)\).  We note
that profunctors of that form are functors \(\mathbf{1} \times \Optic \to \Sets\).

Now, because of Lemma \ref{org8d8bb19}, modules over the promonad \(\psi\)
are equivalent to algebras over \(\Psi\), which are precisely Tambara modules.
\end{proof}

We can use that to reprove the profunctor representation theorem with
a technique that is called \emph{double Yoneda} in \cite{milewski17}.

\begin{lemma}[``Double Yoneda'' from \cite{milewski17}]
\label{org9c383aa}
For any category \(\mathbf{A}\), morphisms between \(x\) and \(y\) are naturally isomorphic
to natural transformations between the functors that evaluate
copresheaves in \(x\) and \(y\).
\[
\mathbf{A}(x,y) \cong 
[ [ \mathbf{A} , \Sets ] , \Sets ](-(x),-(y)).
\]
\end{lemma}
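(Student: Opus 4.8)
The plan is to recognize the two evaluation functors $-(x)$ and $-(y)$ as \emph{representable} copresheaves on the functor category $[\mathbf{A},\Sets]$, and then apply the Yoneda lemma twice: once inside $\mathbf{A}$ to represent the evaluations, and once at the level of $[\mathbf{A},\Sets]$ to compute natural transformations between the resulting representables.

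First I would unpack the right-hand side. Writing $\mathrm{ev}_x \colon [\mathbf{A},\Sets] \to \Sets$ for the functor $F \mapsto F(x)$, the set in question is the end $\int_{F \in [\mathbf{A},\Sets]} \Sets(F(x), F(y))$ by Proposition \ref{org2c434cd} (natural transformation as an end). The key observation is the Yoneda reduction $F(x) \cong \Nat(\hirayo_x, F) = [\mathbf{A},\Sets](\hirayo_x, F)$, natural in $F$; in other words, $\mathrm{ev}_x$ is exactly the copresheaf on $[\mathbf{A},\Sets]$ represented by $\hirayo_x = \mathbf{A}(x,-)$.

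Substituting this identification into the end turns the domain of the hom-set into a representable, and I would read the result off as a second instance of Yoneda:
\begin{align*}
\int_{F} \Sets(F(x), F(y)) \cong \int_{F} \Sets\bigl([\mathbf{A},\Sets](\hirayo_x, F),\ \mathrm{ev}_y(F)\bigr) \cong \mathbf{A}(x,y).
\end{align*}
The middle term is precisely $\Nat\bigl([\mathbf{A},\Sets](\hirayo_x,-),\ \mathrm{ev}_y\bigr)$, the natural transformations out of the copresheaf represented by $\hirayo_x$ into $\mathrm{ev}_y$; the Yoneda lemma applied now \emph{in the category} $[\mathbf{A},\Sets]$, with the copresheaf $\mathrm{ev}_y$ evaluated at the object $\hirayo_x$, yields $\mathrm{ev}_y(\hirayo_x) = \hirayo_x(y) = \mathbf{A}(x,y)$, as desired.

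The main subtlety --- the hard part --- is keeping the two levels of Yoneda straight: the first is the Yoneda lemma in $\mathbf{A}$, turning an evaluation $F(x)$ into a hom-set of copresheaves, while the second is the Yoneda lemma in the (large) category $[\mathbf{A},\Sets]$, evaluating a copresheaf-on-copresheaves at a representable. I would also record that the composite isomorphism is natural in both $x$ and $y$, which follows because each reduction step is natural, and I would flag the size issue: $[\mathbf{A},\Sets]$ is in general not small, so the outer category $[[\mathbf{A},\Sets],\Sets]$ must be read relative to a chosen universe (or one restricts attention to an appropriate subcategory of copresheaves), but the formal (co)end manipulation above is unaffected by this bookkeeping.
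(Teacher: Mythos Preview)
Your proof is correct and is essentially the same argument as the paper's: both recognize that the evaluation functors are representable on $[\mathbf{A},\Sets]$ via Yoneda in $\mathbf{A}$, and then apply Yoneda once more at the level of $[\mathbf{A},\Sets]$. The only cosmetic difference is order of presentation---the paper starts from the Yoneda embedding in $[\mathbf{A},\Sets]$ applied to the pair $\hirayo_x,\hirayo_y$ and then reduces both sides with Yoneda in $\mathbf{A}$, whereas you start from the end $\int_F \Sets(F(x),F(y))$ and rewrite only one argument before invoking the outer Yoneda; your route is slightly more economical.
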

\begin{proof}
In a category of functors \([ \mathbf{A} , \Sets ]\), we can apply the Yoneda embedding
to two representable functors \(\mathbf{A}(y,-)\) and \(\mathbf{A}(x,-)\) to get the following.
\[
\mathrm{Nat}( \mathbf{A}(y,-) , \mathbf{A}(x,-) ) \cong
\int_{f \in [\mathbf{A}, \Sets]} 
\Sets\Big(\mathrm{Nat}( \mathbf{A}(x,-), f),
\mathrm{Nat}( \mathbf{A}(y,-) ,f)\Big).
\]
Here reducing by Yoneda lemma on both the left hand side and the two arguments
of the right hand side, we get the desired result.
\end{proof}

\begin{theorem}[Profunctor representation theorem]
\label{orga34cf7c}
\[\int_{p \in \hirata} \Sets(p(a,b) , p(s,t))
\cong
\Optic((a,b),(s,t)).
\]
\end{theorem}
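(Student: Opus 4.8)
The plan is to give a second, more structural derivation of the representation theorem by combining the double Yoneda lemma (Lemma \ref{org9c383aa}) with the identification of copresheaves on $\Optic$ as Tambara modules (Proposition \ref{org66b0ed4}), rather than reducing through the Pastro--Street monad as in Theorem \ref{org9a4afc7}. First I would instantiate Lemma \ref{org9c383aa} at the category $\mathbf{A} = \Optic$ with $x = (a,b)$ and $y = (s,t)$, which immediately yields
\[
\Optic((a,b),(s,t)) \cong [ [ \Optic , \Sets ] , \Sets ](-(a,b),-(s,t)),
\]
expressing a morphism of optics as a natural transformation between the two evaluation functors on the double presheaf category.

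The next step is to transport the right-hand side along the equivalence $[\Optic,\Sets] \simeq \hirata$ supplied by Proposition \ref{org66b0ed4}. An equivalence of categories induces an equivalence of their presheaf categories, so $[[\Optic,\Sets],\Sets] \simeq [\hirata,\Sets]$; the content I must verify is that under this transport the evaluation functor $-(a,b)\colon [\Optic,\Sets]\to\Sets$ corresponds to the evaluation functor $\hirata\to\Sets$ sending a Tambara module $(p,\alpha)$ to $p(a,b)$, and similarly for $(s,t)$. Granting this, the right-hand side becomes $\Nat(\mathrm{ev}_{(a,b)},\mathrm{ev}_{(s,t)})$ in $[\hirata,\Sets]$, and rewriting this natural-transformation set as an end via Proposition \ref{org2c434cd} produces exactly
\[
\int_{p \in \hirata} \Sets(p(a,b), p(s,t)),
\]
which is the claimed isomorphism.

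The main obstacle is precisely the compatibility check in the second step: I must trace how the equivalence of Proposition \ref{org66b0ed4} is built. It factors through the identity-on-objects functor $\C^{op}\times\D \to \Optic$ coming from the Kleisli presentation of the promonad $\psi$: a copresheaf $F$ on $\Optic$ restricts along this functor to an underlying profunctor, and the module/Tambara structure is recovered from $F$'s action on the optic morphisms. Because this functor is the identity on objects, the underlying profunctor of the Tambara module assigned to $F$ satisfies $p(a,b) = F(a,b)$ on the nose, so evaluation at the object $(a,b)$ of $\Optic$ and evaluation of the associated Tambara module at $(a,b)\in\C^{op}\times\D$ agree. This is what makes the transport of evaluation functors legitimate and closes the argument; the remaining steps are formal applications of the Yoneda-style results already established.
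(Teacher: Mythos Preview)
Your proposal is correct and follows essentially the same route as the paper: instantiate the double Yoneda lemma (Lemma \ref{org9c383aa}) at $\mathbf{A}=\Optic$ and then invoke the equivalence $[\Optic,\Sets]\simeq\hirata$ from Proposition \ref{org66b0ed4}. The paper's proof is a terse two-line sketch; you have simply unpacked the one nontrivial point it leaves implicit, namely that the identity-on-objects functor $\C^{op}\times\D\to\Optic$ ensures the evaluation functors at $(a,b)$ and $(s,t)$ are preserved under the equivalence.
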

\begin{proof}
We apply the double Yoneda lemma (Lemma \ref{org9c383aa}) to the
category \(\Optic\) and then use that copresheaves over it are precisely
Tambara modules (Lemma \ref{org66b0ed4}).
\end{proof}

\chapter{Composing optics}
\label{sec:org5794c45}
\section{Change of action}
\label{sec:orgba80d95}
The motivation of this section is to create functors between categories of
optics that are induced by morphisms of actions.

\begin{lemma}
\label{orgd31a70b}
Let \(\alpha \colon \M \to [\C,\C]\) and \(\beta \colon \N \to [\C,\C]\) be two actions. A morphism of
actions \(f \colon \M \to \N\) induces a comonad morphism \(f_{\ast} \colon \Theta_{\N} \to \Theta_{\M}\).  Moreover,
this assignment is contravariantly functorial, in the sense that for
any \(f \colon \M \to \N\) and \(g \colon \N \to \mathbf{L}\), we have \((g \circ f)_{\ast} = f_{\ast} \circ g_{\ast}\) and \(\id_{\ast} = \id\).
\end{lemma}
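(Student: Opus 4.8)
The plan is to build $f_*$ pointwise from the universal property of the end defining $\Theta_\M$, and then to derive both the comonad-morphism equations and the functoriality statement from that same universal property.

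First I would define $f_*$. Since $f$ is a morphism of actions in $\mathbf{MonCat}/[\C,\C]$, we have $\beta \circ f = \alpha$, so $\underline{f(m)} = \underline{m}$ as endofunctors of $\C$ for every $m \in \M$. Consequently each projection $\pi_{f(m)}$ of the end $\Theta_\N(p)(a,b) = \int_{n \in \N} p(\underline{n}a,\underline{n}b)$ is a morphism into $p(\underline{m}a,\underline{m}b)$. These maps, indexed by $m \in \M$, form a wedge over $\M$ — the $\N$-wedge condition restricts along $f$ — so by the universal property of $\Theta_\M(p)(a,b) = \int_{m\in\M} p(\underline{m}a,\underline{m}b)$ there is a unique map $f_*$ satisfying $\pi_m \circ f_* = \pi_{f(m)}$. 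This is natural in $a$, $b$ and in $p$, so it assembles into a natural transformation $f_* \colon \Theta_\N \to \Theta_\M$.

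Next I would check that $f_*$ is a morphism of comonads, i.e. $\varepsilon^\M \circ f_* = \varepsilon^\N$ and $\delta^\M \circ f_* = (f_* f_*) \circ \delta^\N$, where $f_* f_*$ denotes the whiskered composite $\Theta_\N\Theta_\N \to \Theta_\M\Theta_\M$. For the counit, $\varepsilon^\M_p = p(\phi^{-1},\phi)\circ \pi_{i_\M}$, and composing with $f_*$ replaces $\pi_{i_\M}$ by $\pi_{f(i_\M)}$; the unit structure isomorphism $\phi_i \colon i_\N \cong f(i_\M)$ of the strong monoidal functor $f$, together with the compatibility of the two actions' structure maps, identifies this with $\varepsilon^\N_p = p(\phi^{-1},\phi)\circ \pi_{i_\N}$. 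For the comultiplication I would use that $\delta_p$ is the unique map with $(\pi_m \circ \pi_n)\circ\delta_p = p(\phi^{-1},\phi)\circ \pi_{m\otimes n}$; composing the two candidate morphisms with the nested projections $\pi_m \circ \pi_n$ and invoking $\underline{f(m)\otimes f(n)} = \underline{f(m\otimes n)}$ from the strong monoidality of $f$ reduces both sides to the common projection indexed by $f(m\otimes n)$, so they agree by uniqueness.

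Finally, functoriality is immediate from uniqueness. The identity satisfies $\pi_m \circ \id = \pi_m = \pi_{\id_\M(m)}$, so $(\id_\M)_* = \id$; and for $f \colon \M \to \N$, $g \colon \N \to \mathbf{L}$ we compute $\pi_m \circ (f_* \circ g_*) = \pi_{f(m)}\circ g_* = \pi_{g(f(m))} = \pi_{(g\circ f)(m)}$, which is exactly the defining equation of $(g\circ f)_*$, whence $(g\circ f)_* = f_* \circ g_*$. The one genuinely delicate step is the comultiplication compatibility, where the strong-monoidal structure isomorphisms of $f$ must be matched against those of the two actions so that the nested projections line up correctly; the rest is formal coend bookkeeping.
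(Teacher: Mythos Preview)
Your approach is essentially the same as the paper's: construct $f_*$ from the universal property of the end $\Theta_\M$, verify the two comonad-morphism equations by projecting and using strong monoidality of $f$, then read off functoriality from uniqueness. The one substantive discrepancy is that you assume the triangle in $\mathbf{MonCat}/[\C,\C]$ commutes strictly, so that $\underline{f(m)} = \underline{m}$ on the nose and $f_*$ is characterised by $\pi_m \circ f_* = \pi_{f(m)}$. The paper instead works with a structure isomorphism $\phi_f \colon \underline{fm} \cong \underline{m}$ and defines $f_*$ by $\pi_m \circ f_* = p(\phi_f^{-1},\phi_f)\circ\pi_{f(m)}$; its functoriality claim then rests on $\phi_{g\circ f} = \phi_f \circ \phi_g$.

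This is not a fatal gap, but it does create an internal inconsistency in your write-up: you begin by asserting strict equality $\underline{f(m)} = \underline{m}$, yet in the counit and comultiplication checks you invoke the strong-monoidal isomorphisms $\phi_i \colon i_\N \cong f(i_\M)$ and $f(m)\otimes f(n) \cong f(m\otimes n)$. If the triangle really commuted strictly, these coherence cells would already be absorbed and nothing further would need to be said; if it does not, then your defining equation $\pi_m \circ f_* = \pi_{f(m)}$ is not even well-typed (the codomain of $\pi_{f(m)}$ is $p(\underline{f(m)}a,\underline{f(m)}b)$, not $p(\underline{m}a,\underline{m}b)$). The fix is simply to carry the isomorphism $\phi_f$ through the definition of $f_*$ as the paper does; the rest of your argument then goes through unchanged.
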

\begin{proof}
Recall that a comonad morphism is a natural transformation that commutes
with counits and comultiplications.  We start by defining the natural
transformation, using the universal property of the end, as the only
morphism making the following diagram commute for any profunctor \(p\) and
any morphism \(h \colon m \to m'\) in \(\M\).  Here the back of the diagram commutes
because of dinaturality of the first end and naturality of the structure
morphisms.  Here \(\phi_f\) is the natural isomorphism \(\underline{fm}a \cong \mact a\) giving the
morphism of actions.
\[\begin{tikzcd}
& \int_{n \in \N} p(\nact a, \nact b) \arrow[ld, swap, "\pi_{fm}"] \arrow[rd, "\pi_{fm'}"] \arrow[d, dashed, "f_\ast"] & \\
p(\underline{fm} a, \underline{fm} b)  \arrow[d, "\phi_{f}",swap] \arrow[rd] &
\int_{m \in \M} p(\underline{m} a, \underline{m} b)  \arrow[ld, crossing over] &
p(\underline{fm'} a, \underline{fm'} b) \arrow[d, "\phi_f"] \arrow[ld] \\
p(\underline{m} a, \underline{m} b) \ar[swap]{rd}{p(\id,h)} &
p(\underline{fm}a, \underline{fm'}b) \dar & 
p(\underline{m'} a, \underline{m'} b)\ar{ld}{p(h,\id)} \arrow[from=lu, crossing over] \\
& p(\underline{m} a, \underline{m'} b) & 
\end{tikzcd}\]
We can see here that \((g \circ f)_{\ast} = f_{\ast} \circ g_{\ast}\) follows from the fact that
composition of monoidal actions is defined to have \(\phi_{g \circ f} = \phi_f \circ \phi_g\).

We show now that this natural transformation \(f_{\ast}\) is in fact a comonad morphism.
Counitality follows from commutativity of the following diagram. Here the upper
square commutes by definition of \(f_{\ast}\), the triangle commutes because of the conditions
of the end, and the pentagon commutes by coherence.
\[\begin{tikzcd}
& {\int_n p(\nact a, \nact b)} \dar{\pi_{fi}} \rar{f_{\ast}} \arrow[swap]{ld}{\pi_i}
& {\int_m p(\mact a, \mact b)} \dar{\pi_i} \\
{p(\iact  a, \iact  b)} \arrow{rd}{\phi} \arrow{r}{\phi} 
& {p(\underline{fi} a, \underline{fi} b)} \arrow{r}{\phi}           
& {p(\iact a,\iact b)} \dar{\phi} \\
& {p(a,b)} \arrow{r}{\id} & {p(a,b)}                                    
\end{tikzcd}\]
Comultiplicativity follows from the commutativity of the following diagram.
The squares of this diagram commute because of the definitions of \(\delta\), \(\Theta\) and \(f_{\ast}\)
and because of coherence.
\[\begin{tikzcd}[column sep=-2ex]
{\int_n p(\underline{n}a,\underline{n}b)} \arrow[rr, "f_\ast"] \arrow[rd, "\pi_{f(m\otimes m')}"] \arrow[rdd, "\pi_{fm \otimes fm'}" near end, bend right] \arrow[ddd, "\delta"'] &                                                                             & {\int_m p(\underline{m}a,\underline{m}b)} \arrow[rd, "\pi_{m\otimes m'}"] \arrow[ddd, "\delta"] &                                                                                \\
                                                                                                                                                                                     & {p(f(m\otimes m')a , f(m\otimes m')b)} \arrow[rr] \arrow[d, "\phi"]         &                                                                                                 & {p(\underline{m\otimes m'}a, \underline{m \otimes m'} b)} \arrow[dddd, "\phi"] \\
                                                                                                                                                                                     & {p(\underline{fm \otimes fm'} a, \underline{fm' \otimes fm} b)} \arrow[ddd] &                                                                                                 &                                                                                \\
{\int_n\int_{n'} p(nn'a, nn'b)} \arrow[rr, "\Theta f_\ast"', near end] \arrow[d, "\pi_n"]                                                                                                      &                                                                             & {\int_m\int_{m'} p(mm'a,mm'b)} \arrow[d, "\pi_m"]                                               &                                                                                \\
{\int_{n'}p(fmn'a, fmn'b)} \arrow[rd, "\pi_{fm'}"] \arrow[rr, "f_\ast", near end]                                                                                                              &                                                                             & {\int_{m'}p(mm'a, mm'b)} \arrow[rd, "\pi_{m'}"]                                                 &                                                                                \\
                                                                                                                                                                                     & {p(fmfm'a,fmfm'b} \arrow[rr, "\phi"]                                        &                                                                                                 & {p(mm'a,mm'b)}                                                                
\end{tikzcd}\]
\end{proof}

A comonad morphism induces a functor between the Eilenberg-Moore
categories of the comonad, see Definition \ref{org4babfba}.
In this case, the coalgebras of the
Pastro-Street comonad are Tambara modules and we get a functor \(\hirata_{\N} \to \hirata_{\M}\).
These are also the Eilenberg-Moore categories of the adjoint
monads; and the functor between Eilenberg-Moore categories induces
in turn (see Proposition \ref{org88623d6}) a morphism of monads \(\Psi_{\M} \to \Psi_{\N}\).

\begin{proposition}
A morphism of actions \(f \colon \M \to \N\) induces a functor between the
categories of optics \(\Optic_{\M} \to \Optic_{\N}\).
\end{proposition}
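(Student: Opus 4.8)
The plan is to assemble the chain of constructions that the preceding lemmas have already prepared and then pass to Kleisli categories. By Lemma \ref{orgd31a70b} the morphism of actions $f \colon \M \to \N$ produces a comonad morphism $f_\ast \colon \Theta_\N \to \Theta_\M$, which in turn induces a functor $\hirata_\N \to \hirata_\M$ between the Eilenberg--Moore categories of these comonads (the categories of Tambara modules), commuting with the forgetful functors to $\Prof(\C,\D)$. Reading these same categories as the Eilenberg--Moore categories of the adjoint monads $\Psi_\M$ and $\Psi_\N$, Proposition \ref{org88623d6} turns this forgetful-preserving functor into a morphism of monads $\alpha \colon \Psi_\M \to \Psi_\N$. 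Note that the variance works out: the comonad morphism points $\Theta_\N \to \Theta_\M$, but the induced monad morphism points $\Psi_\M \to \Psi_\N$, which is exactly what is needed to obtain a functor in the direction $\Optic_\M \to \Optic_\N$.

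Next I would use the Kleisli description of optics. By the remark following the profunctor representation theorem (Theorem \ref{org9a4afc7}), an optic in $\Optic_\M$ from $(a,b)$ to $(s,t)$ is the same as an element of $\Psi_\M\hirayo_{(a,b)}(s,t)$, so that $\Optic_\M$ is (the opposite of) the full subcategory of representables inside the Kleisli category of $\Psi_\M$. The induced functor is then identity on objects $(a,b)$, and on morphisms it sends an optic, seen as an element of $\Psi_\M\hirayo_{(a,b)}(s,t)$, to its image under the component $\alpha_{\hirayo_{(a,b)}} \colon \Psi_\M\hirayo_{(a,b)} \to \Psi_\N\hirayo_{(a,b)}$ evaluated at $(s,t)$, which is an element of $\Psi_\N\hirayo_{(a,b)}(s,t) \cong \Optic_\N((a,b),(s,t))$. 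Because $\alpha$ acts on the functor argument without disturbing the index $(a,b)$, representables are sent to representables, so this indeed lands in $\Optic_\N$.

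What remains is to check functoriality, and this is where the genuine work sits: preservation of identities must follow from the fact that $\alpha$ is compatible with the units of $\Psi_\M$ and $\Psi_\N$, and preservation of composition must follow from its compatibility with the multiplications, via the identification that exhibits $\Optic_\M$ as a Kleisli object (Proposition \ref{orgb64b29f}). In other words, the general fact that a monad morphism induces an identity-on-objects functor between Kleisli categories specialises here, and I would verify that the Kleisli unit and Kleisli composition transported along $\alpha$ match the identity and composition of $\Optic$ defined through the structure maps of the actions. The main obstacle I expect is bookkeeping rather than conceptual: keeping the opposite/variance conventions consistent between the promonad $\psi$, the monad $\Psi$, and the category $\Optic$, and confirming that the isomorphism $\Optic((a,b),(s,t)) \cong \Psi\hirayo_{(a,b)}(s,t)$ is natural enough that the Kleisli composition it transports coincides with the composition of optics built in Definition \ref{org8ab3812}; once this identification is pinned down, the monad-morphism axioms deliver functoriality immediately.
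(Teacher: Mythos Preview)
Your proposal is correct and follows essentially the same route as the paper: pass from the morphism of actions to a comonad morphism (Lemma \ref{orgd31a70b}), then to a forgetful-preserving functor between Eilenberg--Moore categories, then via Proposition \ref{org88623d6} to a monad morphism $\Psi_\M \to \Psi_\N$, and finally to an identity-on-objects functor between Kleisli categories that restricts to the full subcategories of representables. The paper's proof is terser only because the chain up to the monad morphism is established in the paragraph immediately preceding the proposition, so the proof proper just invokes the Kleisli step.
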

\begin{proof}
We have seen that the morphism of actions induces a morphism of
monads.  The morphism of monads gives an identity-on-objects functor
between the Kleisli categories of the monads.  The category of optics
is the full subcategory on representable functors of that Kleisli
category; and because the functor is the identity on objects, it is
sent again to a subcategory of representable functors.
\end{proof}

\section{Distributive actions}
\label{sec:org526b44c}
\subsection{Distributive laws for Pastro-Street comonads}
\label{sec:org728df61}
If we have a pair of comonads \(\Theta_{\M}\) and \(\Theta_{\N}\) with left adjoint monads \(\Phi_{\M} \dashv \Theta_{\M}\)
and \(\Phi_{\N} \dashv \Theta_{\N}\), then a distributive law \(\Theta_{\M} \circ \Theta_{\N} \tonat \Theta_{\N} \circ \Theta_{\M}\) makes \(\Theta_{\M} \circ \Theta_{\N}\)
a comonad with a left adjoint given by the composition of the two adjunctions,
\(\Phi_{\N} \circ \Phi_{\M} \dashv \Theta_{\M} \circ \Theta_{\N}\). This makes \(\Phi_{\N} \circ \Phi_{\M}\) a monad.   This is particularly
useful because we can work in the Kleisli category of this monad and it will
contain in particular the Kleisli categories for each one of the monads.  That
is, we are creating a kind of optic that contains the two kinds we started with.

With this in mind, let us study what is a distributive law between two Pastro-Street comonads
as in Definition \ref{orgdb10a66}.  Given pairs of monoidal actions \(\M \to [\C,\C]\), \(\M \to [\D,\D]\),
and \(\N \to [\C,\C]\), \(\N \to [\D,\D]\), we are looking for a natural transformation \(\Theta_{\M} \circ \Theta_{\N} \tonat \Theta_{\N}\circ \Theta_{\M}\).
Let us apply the Yoneda lemma to reduce this; note that the category of presheaves is cocomplete
and coends exist in it.
\begin{align*}
& \Theta_{\M}\Theta_{\N} \tonat \Theta_{\N}\Theta_{\M} \\
\cong & \qquad\mbox{(Natural transformation as an end)} \\
& \int_{c \in \C}\int_{d \in \D}\int_{p \in \Prof(\C,\D)} \Theta_{\M}\Theta_{\N}p(c,d) \to \Theta_{\N}\Theta_{\M}p(c,d) \\
\cong & \qquad\mbox{(Fubini, Definition of $\Theta$)} \\
& \int_{c, d, p} 
\left(  \int_{m, n} p(\nact\mact c,\nact\mact d) \right) \to 
\left(  \int_{n', m'} p(\underline{m'n'} c, \underline{m'n'} d) \right)  \\
\cong & \qquad\mbox{(Yoneda lemma)} \\
& \int_{c, d, p} 
\left(  \int_{m,n} \Nat \left( \hirayo_{\nact\mact c,\nact\mact d} ,  p \right) \right) \to 
\left(  \int_{n', m'} p(\underline{m'n'} c, \underline{m'n'} d) \right)  \\
\cong & \qquad\mbox{(Continuity)} \\
& \int_{c, d, p} 
\Nat \left( \int^{m,n} \hirayo_{\nact\mact c,\nact\mact d} ,  p \right) \to 
\left(  \int_{n', m'} p(\underline{m'n'} c, \underline{m'n'} d) \right)  \\
\cong & \qquad\mbox{(Yoneda lemma, Fubini)} \\
& \int_{c, d, n', m'} \int^{m, n} \hirayo_{\underline{nm} c, \underline{nm} d}(\underline{m'n'} c, \underline{m'n'} d)  \\
\cong & \qquad\mbox{(Definition of the representable functor in $\C^{op} \times \D$)} \\
& \int_{c, d, n', m'} \int^{m, n} \C(\underline{m'n'} c, \nact\mact c) \times \D(\nact\mact d, \underline{m'n'} d).  \\
\end{align*}
The resulting reduction is slightly more general, but we will be interested in
the case where, for all \(m',n'\), we give a natural isomorphism \(\underline{nm} \cong \underline{m' n'}\) for
some \(m,n\).  This induces the required \(\C(\mact\nact c, \underline{m'n'} c) \times \D(\underline{m'n'} d,\mact\nact d)\)
that define a distributive law.  In that case, we will prove that the distributive
law allows us to create a monad that defines a new optic.

\begin{theorem}
\label{org61ce7de}
Let \(\M \to [\C,\C]\), \(\M \to [\D,\D]\) and \(\N \to [\C,\C]\), \(\N \to [\D,\D]\) be two pairs monoidal
actions.  If for all \(m' \in \M\) and \(n' \in \N\) we have a natural isomorphism \(\underline{nm} \cong \underline{m'n'}\)
for some \(m \in \M\) and \(n \in \N\), chosen dinaturally, then \(\Phi_{\N} \circ \Phi_{\M}\) is again a monad.
Moreover, \(\Phi_{\N} \circ \Phi_{\M} = \Phi_{\mathbf{H}}\) for a pair of actions \(\mathbf{H} \to [\C,\C]\) and \(\mathbf{H}\to [\D,\D]\) that
we will construct.
\end{theorem}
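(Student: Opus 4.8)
The plan is to produce the monad structure on $\Phi_{\N}\circ\Phi_{\M}$ from a \emph{distributive law} of the Pastro--Street comonads and then to recognise the composite as a single Pastro--Street monad. The reduction carried out immediately before the statement shows that a natural transformation $\Theta_{\M}\circ\Theta_{\N}\tonat\Theta_{\N}\circ\Theta_{\M}$ is the same datum as an element of
\[
\int_{c,d,n',m'}\int^{m,n}\C(\underline{m'n'}c,\underline{nm}c)\times\D(\underline{nm}d,\underline{m'n'}d).
\]
The hypothesis supplies, dinaturally in $m'\in\M$ and $n'\in\N$, a choice of $m,n$ together with an isomorphism $\underline{nm}\cong\underline{m'n'}$. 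Feeding its inverse into the $\C$-factor and the isomorphism itself into the $\D$-factor produces exactly such an element, hence a candidate natural transformation $\lambda\colon\Theta_{\M}\circ\Theta_{\N}\tonat\Theta_{\N}\circ\Theta_{\M}$.

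First I would verify that $\lambda$ is a genuine distributive law, i.e.\ that it is compatible with the two counits and the two comultiplications of $\Theta_{\M}$ and $\Theta_{\N}$ as recorded in Definition~\ref{orgdb10a66}. Projecting each coherence square along the ends defining $\Theta$, every one of them reduces to an identity between composites of the structure isomorphisms $\phi$ of the two strong monoidal actions and the chosen swap isomorphisms $\underline{nm}\cong\underline{m'n'}$; these hold by dinaturality of the swaps and the coherence axioms for strong monoidal functors. Granting this, the distributive-law machinery recalled at the start of this section makes $\Theta_{\M}\circ\Theta_{\N}$ a comonad, and taking mates across the composite adjunction $\Phi_{\N}\Phi_{\M}\dashv\Theta_{\M}\Theta_{\N}$ transports $\lambda$ to a distributive law of the adjoint monads, so that $\Phi_{\N}\circ\Phi_{\M}$ inherits a monad structure as claimed.

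For the \emph{moreover}, I would exhibit $\mathbf{H}$ directly. Let $\mathbf{H}$ have underlying category $\M\times\N$, with tensor given by the bicrossed (Zappa--Szép) product determined by the swap isomorphisms: the tensor of two pairs is formed by using $\underline{m'n'}\cong\underline{nm}$ to commute the inner $\M$- and $\N$-factors past one another and then regrouping via the monoidal structures of $\M$ and $\N$. Its actions on $\C$ and on $\D$ send $(m,n)$ to the composite endofunctor $\nact\circ\mact$; the structure isomorphisms of the two component actions together with the swaps make these strong monoidal. By the Fubini rule for ends,
\[
\Theta_{\mathbf{H}}(p)(a,b)=\int_{(m,n)\in\mathbf{H}}p(\nact\mact a,\nact\mact b)\cong\int_{m\in\M}\int_{n\in\N}p(\nact\mact a,\nact\mact b)=\Theta_{\M}\Theta_{\N}(p)(a,b),
\]
so $\Theta_{\mathbf{H}}=\Theta_{\M}\circ\Theta_{\N}$ on the nose; their left adjoints therefore agree, giving $\Phi_{\mathbf{H}}=\Phi_{\N}\circ\Phi_{\M}$.

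The main obstacle I expect is concentrated in the two verifications glossed above, which are in fact the same obstacle viewed twice: checking the four distributive-law coherence conditions for $\lambda$, equivalently checking that the twisted tensor on $\mathbf{H}$ is associative and unital with a strong monoidal action. These are precisely the coherence data of a monoidal distributive law, and the assumption that the swap isomorphisms are chosen \emph{dinaturally} is exactly what forces them to cohere; but confirming all the pentagon- and triangle-type diagrams is the laborious step, and the one I would spell out in full.
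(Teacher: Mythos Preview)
Your proposal is correct and follows essentially the same route as the paper: use the reduction preceding the theorem to produce the distributive law from the swap isomorphisms, then construct $\mathbf{H}$ as the product category $\N\times\M$ (you write $\M\times\N$, which is immaterial) equipped with a Zappa--Sz\'ep-style twisted tensor determined by the swaps, with action $(n,m)\mapsto\nact\circ\mact$, and conclude $\Theta_{\mathbf{H}}\cong\Theta_{\M}\circ\Theta_{\N}$ by Fubini. If anything you are more explicit than the paper, which does not spell out the distributive-law coherence checks that you correctly flag as the main labour.
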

\begin{proof}
We have already shown that the first part of the theorem is a valid way of constructing
the distributive law that makes \(\Phi_{\N} \circ \Phi_{\M}\) a monad. It is still left to show that
this new monad is the Pastro-Street monad for some action.

We will construct a pair of actions \(\mathbf{H} \to [\C,\C]\) and \(\mathbf{H} \to [\D,\D]\) such that
\(\Theta_{\M}\circ \Theta_{\N}\) is precisely \(\Theta_{\mathbf{H}}\).  We take \(\mathbf{H} = \N \times \M\) to be a product category,
but we endow it with a monoidal product given by \((n_2,m_2) \otimes (n_1,m_1) = (n_2\otimes n_0, m_0 \otimes m_1)\),
where \(\underline{n_0m_0} \cong \underline{m_2n_1}\).  The action is given by \((n,m) \mapsto \nact\circ \mact\). Because of Fubini,
we have that \(\Theta_{\mathbf{H}} \cong \Theta_\M \circ \Theta_\N\) as functors.
\[ \Theta_{\mathbf{H}} p(c,d) = \int_{(n,m) \in \mathbf{H}} p(\nact\mact c, \nact\mact d) \cong
\int_{m \in \M}\int_{n \in \N} p(\nact\mact c, \nact\mact d).
\]
Note that the comultiplication of the composition is given by the
distributive law and thus we can check that it is the same one \(\mathbf{H}\) uses.
\end{proof}

This is particularly interesting because we can compose pairs of optics
of different kinds whose monads have a distributive law \(d \colon \Phi_{\N}\circ \Phi_{\M} \tonat \Phi_{\M} \circ \Phi_{\N}\). 
If we have a pair of optics of different kinds written as Kleisli
morphisms, \(f \colon \hirayo_{s,t} \tonat \Phi_{\M}\hirayo_{a,b}\) and \(g \colon \hirayo_{a,b} \tonat \Phi_{\N} \hirayo_{x,y}\), regard both as 
Kleisli morphisms for the composite comonad \(\Phi_{\N} \circ \Phi_{\M} = \Phi_{\mathbf{H}}\) as follows.
\[\begin{tikzcd}[row sep=tiny]
l^{\ast} \colon \hirayo_{s,t} \rar{f}& \Phi_{\M}\hirayo_{a,b} \rar{\eta} & \Phi_{\N} \Phi_{\M} \hirayo_{x,y},\\
g^{\ast} \colon \hirayo_{a,b} \rar{\eta}& \Phi_{\M}\hirayo_{a,b} \rar{g} & \Phi_{\N} \Phi_{\M} \hirayo_{x,y} \rar{d} & \Phi_{\M} \Phi_{\N} \hirayo_{x,y}.
\end{tikzcd}\]
We will now discuss concrete examples of this construction.  In \S
\ref{org667a652} we construct two optics from the
fact that products distribute over sums and exponentials distribute
over products.

\subsection{Affine traversals and glasses}
\label{sec:org6be8620}
\label{org667a652}
Let \(\C\) be a bicartesian closed category, which implies that products
distribute over coproducts.  For any \(a,b \in \C\), we have
that \(a \times (b + (-)) \cong (a \times b) + a \times (-)\). The optic given by this action is
called an \emph{affine traversal}, has a concrete representation and was
first conjectured to exist in \cite{pickering17} and considered in
\cite{boisseau18}.   We can use Theorem \ref{org61ce7de} on the actions
of lenses and prisms to see that \(c + d \times (-)\) is an action describing
this new optic that contains both lenses and prisms.

\begin{proposition}
\label{prop:affine}
\textbf{Affine traversals} are optics for the action \(\mathbf{C}^2 \to [\mathbf{C},\mathbf{C}]\) that sends \(c, d \in \mathbf{C}\) to the
functor \(F(a) = c + d \times a\).  They have a concrete description
\[
\mathbf{Affine}
\left( \begin{pmatrix} a \\ b \end{pmatrix}, \begin{pmatrix} s \\ t \end{pmatrix} \right) 
= \mathbf{C}(s , t + a \times (b \to t)).
\]
\end{proposition}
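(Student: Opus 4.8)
The plan is to unfold the definition of the optic for this action and reduce the resulting coend by the Yoneda lemma, exactly in the style of the lens, prism, and achromatic-lens derivations (Propositions \ref{prop:lenses}, \ref{prop:prisms}, \ref{prop:achromatic}). Explicitly, the set to compute is
\[
\int^{(c,d) \in \mathbf{C}^2} \mathbf{C}(s, c + d \times a) \times \mathbf{C}(c + d \times b, t),
\]
and the goal is to collapse the coend over the product category $\mathbf{C}^2$ into the claimed concrete form $\mathbf{C}(s, t + a \times (b \to t))$.

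First I would rewrite the right-hand factor. Since the coproduct is a colimit and the contravariant hom-functor sends it to a product, $\mathbf{C}(c + d \times b, t) \cong \mathbf{C}(c, t) \times \mathbf{C}(d \times b, t)$; then the product-exponential adjunction gives $\mathbf{C}(d \times b, t) \cong \mathbf{C}(d, b \to t)$. This exposes two representable factors, $\mathbf{C}(c,t) = \hirayo^t c$ and $\mathbf{C}(d, b \to t) = \hirayo^{b\to t} d$, leaving
\[
\int^{(c,d)} \mathbf{C}(s, c + d \times a) \times \hirayo^t c \times \hirayo^{b \to t} d.
\]
Next I would split the coend over $\mathbf{C}^2$ into iterated coends by Fubini and apply the Coyoneda reduction twice. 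The functor $\mathbf{C}(s, c + d \times a)$ is covariant in both $c$ and $d$, so integrating against $\hirayo^t c$ substitutes $c := t$, and integrating against $\hirayo^{b\to t} d$ substitutes $d := b \to t$, yielding $\mathbf{C}(s, t + (b \to t) \times a)$. Commuting the product gives the claimed $\mathbf{C}(s, t + a \times (b \to t))$.

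Because every step is an instance of tools already established (Proposition \ref{orge75964b} for Coyoneda, together with the coproduct/product/exponential manipulations), there is no serious obstacle; the only point needing care is the double Coyoneda over the product category $\mathbf{C}^2$, where one must check that the variances line up so that both representables act simultaneously as Dirac deltas. Alternatively, this computation is precisely the distributive-law instance of Theorem \ref{org61ce7de} applied to the lens and prism actions, since $a \times (b + (-)) \cong (a\times b) + a \times (-)$ identifies $c + d\times(-)$ as the composite action; that route recovers the same concrete form as a corollary and simultaneously confirms that the resulting optic contains both lenses and prisms.
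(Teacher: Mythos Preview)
Your proof is correct and follows essentially the same route as the paper: split the hom out of the coproduct, use the exponential adjunction to expose representables in $c$ and $d$, then apply (Co)Yoneda twice to substitute $c:=t$ and $d:=b\to t$. One minor slip: the Coyoneda reduction is the ninja Yoneda lemma, not Proposition~\ref{orge75964b} (which is the Kan extension formula); otherwise your argument and the paper's coincide, with your write-up being slightly more explicit about Fubini and the two separate reductions.
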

\begin{proof}
\begin{align*}
& \int^{c,d} \mathbf{C}(s , c + d \times a) \times \mathbf{C}((c + d \times b), t) \\
 \cong & \qquad\mbox{{(Coproduct)}} \\
& \int^{c,d} \mathbf{C}(s , c + d\times a) \times \mathbf{C}(c , t) \times \mathbf{C}(d \times b , t) \\
\cong & \qquad\mbox{{(Exponential)}} \\
& \int^{d} \mathbf{C}(s , t + d \times a) \times \mathbf{C}(d , (b \to t)) \\
\cong & \qquad\mbox{{(Yoneda)}} \\
& \mathbf{C}(s , t + a \times (b \to t)). & \qedhere &
\end{align*}
\end{proof}
The affine traversal motivates us to look for other distributive laws.  Using
the theory previously developed, we will use that exponentials distribute over
products to create a new concrete optic. Let \(\C\) be a cartesian closed category.
For any \(a,b \in \C\), we can distribute exponentials over products as
\(a \to (b \times (-)) \cong (a \to b) \times (a \to (-))\).  We will call \emph{glass} to the 
kind of optic
generated by this action; lenses and grates are, in particular, glasses.
To the best of our knowledge, neither a description nor a derivation
of this optic were present in the literature; it remains to be seen what are its
potential applications.

\begin{proposition}
\label{prop:glasses}
\textbf{Glasses} are optics for the action \(\mathbf{C}^2 \to [ \mathbf{C} , \mathbf{C} ]\) that sends \(c, d \in \mathbf{C}\) to the
functor \(F(a) = d \times (c \to a)\).  They have a concrete description
\[
\mathbf{Glass}
\left( \begin{pmatrix} a \\ b \end{pmatrix}, \begin{pmatrix} s \\ t \end{pmatrix} \right) = 
\mathbf{C}(s \times ((s \to a) \to b) , t).
\]
\end{proposition}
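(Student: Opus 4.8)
The plan is to eliminate the coend by the same currying-and-Yoneda strategy used for grates (Proposition \ref{prop:grates}) and affine traversals (Proposition \ref{prop:affine}), removing the two coend variables one at a time. Unfolding Definition \ref{org8ab3812} for the action $(c,d)\mapsto(a\mapsto d\times(c\to a))$, the object to simplify is
\[
\int^{c,d\in\C}\C(s,\,d\times(c\to a))\times\C(d\times(c\to b),\,t).
\]
Both factors are functorial in $(c,d)$ with the variances needed for this to be a well-formed coend: the variable $d$ occurs covariantly in the first factor and contravariantly in the second, while $c$ occurs contravariantly in the first and, because $(c\to b)\to t$ is a double exponential, covariantly in the second. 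So the reductions below are legitimate.

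First I would split the product in the view, $\C(s,d\times(c\to a))\cong\C(s,d)\times\C(s,c\to a)$, exactly as in the lens derivation (Proposition \ref{prop:lenses}), and curry the update, $\C(d\times(c\to b),t)\cong\C(d,(c\to b)\to t)$. This isolates the dependence on $d$ into the two representable factors $\C(s,d)\times\C(d,(c\to b)\to t)$, so integrating over $d$ first (Fubini) and applying the coYoneda reduction collapses them to $\C(s,(c\to b)\to t)$. At this point the expression has become
\[
\int^{c\in\C}\C(s,c\to a)\times\C(s,(c\to b)\to t).
\]
Next I would curry the surviving first factor the other way, $\C(s,c\to a)\cong\C(c,s\to a)$, exposing a representable $\C(c,s\to a)$ in the contravariant slot; a second application of the coYoneda reduction then substitutes $c\mapsto(s\to a)$ into the covariant factor, yielding $\C(s,((s\to a)\to b)\to t)$. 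A final currying turns this into the claimed concrete form $\C(s\times((s\to a)\to b),t)$.

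The work is entirely routine (three curryings and two Yoneda reductions), so the only real hazard is bookkeeping: I must keep the variance of each coend variable straight so that each Yoneda step is applied to the correct (co- or contravariant) representable factor. The one genuinely non-obvious point is that $(c\to b)\to t$ is \emph{covariant} in $c$, which is exactly what makes the second coYoneda reduction — substituting $c\mapsto s\to a$ — applicable; misreading this variance is the easiest way to land on a wrong formula. Once the variances are fixed the chain of isomorphisms is forced, and the derivation matches the pattern already established for grates and lenses, from which glasses are assembled via the distributive law $a\to(b\times(-))\cong(a\to b)\times(a\to(-))$.
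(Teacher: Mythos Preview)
Your proposal is correct and follows essentially the same route as the paper: split the product in the view, eliminate the product variable by coYoneda, curry to expose a representable in the exponential variable, then eliminate that by a second coYoneda. The only cosmetic difference is that you curry the update factor $\C(d\times(c\to b),t)\cong\C(d,(c\to b)\to t)$ before the first Yoneda step and then uncurry at the end, whereas the paper leaves it uncurried throughout and applies Yoneda directly to $\C(s,d)\times\C((c\to b)\times d,t)$; the two are the same computation up to one application of the product--exponential adjunction.
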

\begin{proof}
\begin{align*}
& \int^{c,d} \mathbf{C}(s, c \times (d \to a)) \times \mathbf{C}((d \to b) \times c , t) \\
\cong & \qquad \mbox{{(Product)}} \\
& \int^{c,d} \mathbf{C}(s, c) \times \mathbf{C}(s , d \to a) \times \mathbf{C}((d \to b) \times c , t) \\
\cong & \qquad \mbox{{(Yoneda lemma)}}\\
& \int^{d} \mathbf{C}(s , d \to a) \times \mathbf{C}((d \to b) \times s , t) \\
\cong & \qquad \mbox{{(Product-Exponential)}}\\
& \int^{d} \mathbf{C}(d , s \to a) \times \mathbf{C}((d \to b) \times s , t) \\
\cong & \qquad \mbox{{(Yoneda lemma)}}\\
& \mathbf{C}(((s \to a) \to b) \times s , t). & \qedhere &
\end{align*}
\end{proof}

\section{Joining Tambara modules}
\label{sec:orgd23b5b7}
\label{orga66504e}
In Haskell, the composition of two optics of two different kinds is done as
follows.  Assume we have monoidal actions \(\alpha \colon \M \to [\C,\C]\) and \(\beta \colon \N \to [\C,\C]\),
determining two different kinds of optics. By virtue of Theorem \ref{org9a4afc7}, the optics can be written as functions
that will be polymorphic over Tambara modules for \(\M\) and \(\N\); that is, over coalgebras 
for \(\Theta_{\M}\) and \(\Theta_{\N}\).
\[\int_{p \in \hirata(\alpha)} \Sets(p(a,b), p(s,t)),\qquad
\int_{q \in \hirata(\beta)} \Sets(q(x,y), q(a,b)).\]
When we compose them, Haskell outputs a function that is polymorphic over profunctors
with Tambara module structure for both actions. These are profunctors with structure
of \emph{bicoalgebra}: coalgebra structure for both \(\Theta_{\M}\) and \(\Theta_{\N}\).
\[
\int_{(p,p) \in \hirata(\alpha) \times_{\Prof} \hirata(\beta)} \Sets(p(a,b), p(s,t))
\]
If we want to see this new function as an optic, we need to describe these bicoalgebras 
as coalgebras for another Pastro-Street comonad.  The action that will generate this 
new comonad will be the coproduct of the two actions we had, \(\alpha + \beta \colon \M + \N \to [\C,\C]\).
\[\begin{tikzcd}
\M \drar[swap, bend right]{\alpha}\rar{i} & \M + \N \dar[dashed]{\exists ! \alpha + \beta} & \N \lar[swap]{i} \dlar[bend left]{\beta} \\
& \left[ \C,\C \right]  &
\end{tikzcd}\]
That is, we are claiming that, for a fixed profunctor \(p\), a pair of coalgebra structures 
over \(\Theta_\M\) and \(\Theta_{\N}\) is exactly the same as a coalgebra structure over \(\Theta_{\M + \N}\). If
we call \(\hirata_{\M}p\) to the set of Tambara structures over \(p\) for the monoidal action
of \(\M\), we are claiming that \(\hirata_{\M+\N} p \cong \hirata_{\M}p \times \hirata_{\N}p\). In order to show this,
we will prove that \(\hirata_{(-)}p \colon \mathbf{MonCat}/[\C,\C] \to \Sets\) is representable.

\begin{theorem}
Let \(p \in \Prof(\C,\D)\) be a profunctor.  The associated functor
\[\hirata_{(-)}p \colon \mathbf{MonCat}/[\C,\C] \to \Sets
\]
is representable.
\end{theorem}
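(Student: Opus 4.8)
The plan is to prove representability by exhibiting an explicit representing object: a monoidal category $\mathbf{Tamb}(p)$ sitting over $[\C,\C]$ whose points, computed in the slice, are exactly the Tambara module structures on $p$. Concretely I would produce a natural isomorphism
\[
\hirata_{\M}\,p \;\cong\; \mathbf{MonCat}/[\C,\C]\big(\M,\ \mathbf{Tamb}(p)\big),
\]
so that $\hirata_{(-)}p$ is represented by $\mathbf{Tamb}(p)$. Note that this functor is contravariant in $\M$, exactly as it must be: by Lemma \ref{orgd31a70b} a morphism of actions $f \colon \M \to \N$ restricts Tambara structures backwards, giving $\hirata_{\N}p \to \hirata_{\M}p$, which matches precomposition into a fixed $\mathbf{Tamb}(p)$.

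First I would construct $\mathbf{Tamb}(p)$. For readability I describe the case $\C = \D$ (the mixed case is identical, recording a second endofunctor used on the $\D$-argument). Its objects are pairs $(F,\theta)$ with $F \in [\C,\C]$ an endofunctor and $\theta_{a,b} \colon p(a,b) \to p(Fa,Fb)$ a family natural in $a,b$. A morphism $(F,\theta) \to (F',\theta')$ is a natural transformation $\gamma \colon F \tonat F'$ satisfying the compatibility dictated by dinaturality, namely $p(\gamma_a,\id)\circ\theta' = p(\id,\gamma_b)\circ\theta$ as maps $p(a,b) \to p(Fa,F'b)$. The monoidal product is composition, $(F,\theta)\otimes(F',\theta') = (F\circ F',\ \theta_{F'(-),F'(-)}\circ\theta')$, composing the two lifts, and the unit is $(\Id,\id)$. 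The forgetful functor $U_p \colon \mathbf{Tamb}(p) \to [\C,\C]$, sending $(F,\theta)\mapsto F$, is strict monoidal, so $(\mathbf{Tamb}(p),U_p)$ is an object of $\mathbf{MonCat}/[\C,\C]$.

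The core of the argument is then a direct unwinding of definitions. A morphism $(\M,\underline{(-)}) \to (\mathbf{Tamb}(p),U_p)$ in the slice is a strong monoidal functor $S \colon \M \to \mathbf{Tamb}(p)$ with $U_p \circ S = \underline{(-)}$. The last condition forces $S(m) = (\mact,\alpha_m)$ for some family $\alpha_m \colon p(a,b) \to p(\mact a,\mact b)$; the well-definedness of $S$ on a morphism $h\colon m\to n$, i.e. that $\underline{h}$ is a $\mathbf{Tamb}(p)$-morphism, is precisely the dinaturality of $\alpha$ in $m$; and the structure isomorphisms $\phi_i,\phi_{m,n}$ of $S$ as a strong monoidal functor, imposed as $\mathbf{Tamb}(p)$-morphisms, reproduce exactly the Tambara axioms of Definition \ref{org0779c9e}. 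For instance the unit isomorphism $(\Id,\id) \cong (\iact,\alpha_i)$ forces $p(\phi_i,\id)\circ\alpha_i = p(\id,\phi_i)$, i.e. $\alpha_i = p(\phi_i^{-1},\phi_i)$; the multiplicativity of $S$ similarly yields $\alpha_{m\otimes n} = p(\phi_{m,n}^{-1},\phi_{m,n})\circ\alpha_m\circ\alpha_n$. Thus $S \mapsto (\alpha_m)_m$ is the desired bijection, and its naturality in $\M$ is the statement that it intertwines precomposition by a morphism of actions with the restriction functor on Tambara modules, which is the content of Lemma \ref{orgd31a70b}.

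The main obstacle here is bookkeeping rather than anything conceptual: I must define the morphisms and monoidal structure of $\mathbf{Tamb}(p)$ so that dinaturality in $m$ is captured on the nose, and I must thread the structure isomorphisms $\phi$ of a \emph{strong} (as opposed to strict) monoidal functor through correctly, since they appear explicitly in the Tambara axioms while $\mathbf{Tamb}(p)$ itself is strict. Verifying that $U_p$ is strong monoidal and that the bijection is compatible with these $\phi$'s is exactly where the care is needed. Once representability is established, the payoff is immediate: since $\mathbf{MonCat}/[\C,\C](-,\mathbf{Tamb}(p))$ sends coproducts to products, the coproduct action $\alpha+\beta \colon \M+\N \to [\C,\C]$ satisfies $\hirata_{\M+\N}p \cong \hirata_{\M}p \times \hirata_{\N}p$, which is exactly the bicoalgebra identification needed to model Haskell's composition of optics of different kinds.
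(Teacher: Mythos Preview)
Your proposal is correct and follows essentially the same approach as the paper: the representing object you call $\mathbf{Tamb}(p)$ is exactly the paper's $\mathbf{D}_p$, a monoidal category of endofunctors equipped with a $p$-lift, with composition as tensor and the forgetful projection to $[\C,\C]$ as the action. Your write-up is in fact slightly more careful than the paper's---you spell out the dinaturality-style morphism condition and explain explicitly how the structure isomorphisms $\phi_i,\phi_{m,n}$ of a strong monoidal functor over $[\C,\C]$ recover the two Tambara axioms, details the paper only sketches.
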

\begin{proof}
We will show that \(\hirata_{\M}p \cong \mathbf{MonCat}/[\C,\C](\M, \mathbf{D}_{p})\) for a monoidal action
\(\gamma \colon \mathbf{D}_p \to [\C,\C]\) from a category \(\mathbf{D}_p\) we are going to construct.  We define
the objects of \(\mathbf{D}_p\) to be pairs \((f, \eta)\) with \(f \in [\C,\C]\) and \(\eta \in \int\nolimits_{a,b} p(a,b) \to p(fa,fb)\).
Morphisms of \(\mathbf{D}_p\) between \((f,\eta)\) and \((g,\eta')\) are natural transformations
\(\alpha \colon f \tonat g\) such that the following triangle commutes.
\[\begin{tikzcd}[row sep=tiny]
& p(fa,fb) \ar{dd}{p(\alpha,\alpha)} \\
p(a,b) \urar{\eta}\drar[swap]{\eta'} & \\
& p(ga,gb)
\end{tikzcd}\]
Finally, the monoidal product of \(\mathbf{D}_p\) is given by \((f,\eta) \otimes (g,\eta') = (f \circ g , \eta \circ \eta')\).
The action \(\mathbf{D}_p \to [\C,\C]\) simply projects the first component.

A Tambara module over \(p\) is precisely choosing some \(p(a,b) \to p(fa,fb)\) for
every \(m \in \M\) such that \(\mact = f\).  This needs to be done in a natural way,
which is precisely the dinaturality of the Tambara module, and preserving the
monoidal structure, which precisely gives the Tambara axioms.
\end{proof}

The fact that it is representable implies \(\hirata_{\M+\N} p \cong \hirata_{\M}p \times \hirata_{\N}p\).
The conclusion is that, when Haskell joins the constraints of two profunctor
optics, it is creating a profunctor optic for the coproduct action.  By analogy
with the coproduct monoid, the coproduct action from \(\M+\N\) can be thought as
a string of objects of \(\M\) and \(\N\) acting consecutively.  For instance, if we
consider the actions given by the cartesian product and the disjoint union
of sets, the coproduct of these actions is an action taking a word of sets
\(x_1,y_1,\dots,x_n,y_n\) and acting on a set \(a\) as in the following formula.
\[
x_1 + y_1 \times (x_2 + y_2 \times (\dots (x_n + y_n \times a) \dots) )
\]

\begin{remark}
Algebras for the coproduct monad are precisely pairs of algebra structures
for each one of the components \cite{kelly80} \cite{adamek12}.  Coalgebras for the product comonad
are precisely pairs of coalgebra structures for each one of the components.
This suggests that the previous result is just trying to say \(\Theta_{\M + \N} \cong \Theta_{\M} \times \Theta_{\N}\),
where the product must be understood as the product in the category of comonads.
\end{remark}

\section{Clear optics and the lattice of optics}
\label{sec:orgaf5b80b}
\label{org7bbc905}
The two previous sections pose a problem: Haskell seems to be joining lenses and prisms into
an optic for a complicated action of the form \(x_1 + y_1 \times (\dots (x_n + y_n \times a) \dots)\),
but we actually only care about the action that determined affine traversals, \(x + y \times a\),
which is much simpler.  The relation between these two actions is that any action
with the form of the first one is always isomorphic in the category \([\C,\C]\) to some
action of the second form.  It is folklore that lenses and prisms \emph{compose} into affine
traversals but, to make that precise in our description of optics, we are implicitly 
quotienting by some isomorphisms.
We would like to know which quotienting is being done and have a formal description of
this construction.

A problem that looks vaguely related is that we can artificially
include irrelevant information in our optics.  Consider for instance
an adapter, described by the action of the functor from the terminal
category \(\mathbf{1} \to [\C,\C]\) picking the identity functor.  Consider now the
action given by the monoid (the discrete monoidal category) of natural
numbers \(\mathbb{N} \to [\C,\C]\) sending \(1 \in \mathbb{N}\) to the functor \((1 \times (-)) \in [\C,\C]\).
Both are essentially describing adapters, but the second one will \emph{``remember''}
the number that was used to construct it.  An element of the first optic
is just an adapter, while an element of the second optic is an adapter
together with a natural number.  The problem is that, inside \(\mathbb{N}\), the
functors \(1 \times (1 \times (-))\) and \(1 \times (-)\) are regarded as non-isomorphic;
that is,
\[
\int^{1} \C(s,a) \times \C(b,t) \not\cong \int^{n \in \mathbb{N}} \C(s,a) \times \C(b,t).
\]
One would think that these problems would dissapear when, instead of
considering arbitrary monoidal actions as in \cite{riley18}, we limit
the possible actions to be given by subcategories of endofunctors as
in \cite{boisseau18} and moreover we require them to be full.  However, that
comes with its own set of problems: the category where we are applying
Yoneda to get the concrete representations is \emph{not} always a full
subcategory of endofunctors.  We propose an intermediate way between these two
definitions. We will consider just the optics defined by \emph{pseudomonic} actions (as
in Definition \ref{orged771c2}): 
actions that determine subcategories that are full on isomorphisms.  
The second problem is solved because the
artificial description of adapters is not pseudomonic; the first problem
can be solved quotienting to the smallest pseudomonic action, which is
precisely the one that describes affine traversals.  Let us detail this
approach to what we will call \textbf{clear optics}.

\subsection{Clear optics}
\label{sec:org25e6ba0}
\begin{definition}
\label{org7d8d7fb}
An kind of optic is \textbf{clear} if it is given by a monoidal action \(\M \to [\C,\C]\)
that is a pseudomonic functor as in Definition \ref{orged771c2}.
\end{definition}

In other words, if we consider only replete subcategories, clear optics are 
these given by the subobjects of \([\C,\C]\) in the category \(\mathbf{MonCat}\).  
Let us show that some common optics are clear at least in the case of \(\C = \Sets\).

\begin{lemma}
Lenses are a clear optic.
\end{lemma}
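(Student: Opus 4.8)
Lenses are a clear optic.The plan is to unwind the definitions. By Proposition~\ref{prop:lenses}, lenses are the optic for the cartesian product, so the monoidal action in play is the strong monoidal functor $\Sets \to [\Sets,\Sets]$ sending an object $c$ to the endofunctor $(c \times -)$ and a morphism $f \colon c \to c'$ to the natural transformation whose component at $x$ is $f \times \id_x$. By Definition~\ref{orged771c2}, to prove this optic clear I must show this functor is \emph{pseudomonic}, i.e. faithful and full on isomorphisms. Faithfulness is immediate: if $f, g \colon c \to c'$ induce the same natural transformation, comparing components at the terminal object $1$ and using $c \times 1 \cong c$ forces $f = g$.

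The substance is fullness on isomorphisms. First I would take a natural isomorphism $\theta \colon (c \times -) \tonat (c' \times -)$ and define a candidate $\phi \colon c \to c'$ as the component $\theta_1$, transported along $c \times 1 \cong c$ and $c' \times 1 \cong c'$; since $\theta$ is invertible, $\phi$ is an isomorphism. Then I would show that $\theta$ is exactly the image of $\phi$, namely $\theta_x = \phi \times \id_x$ for every $x$. In $\Sets$ this is a short element-chase using naturality: for each point $x_0 \colon 1 \to x$, the naturality square reads $\theta_x \circ (\id_c \times x_0) = (\id_{c'} \times x_0) \circ \theta_1$, and evaluating both sides on an element $c_0 \in c$ gives $\theta_x(c_0, x_0) = (\phi(c_0), x_0)$. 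As every element of $c \times x$ has the form $(c_0, x_0)$, this yields $\theta_x = \phi \times \id_x$, so $\theta = (\phi \times -)$ lies in the image of the action, as required.

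The hard part will be precisely this last step: recognising that a natural family $\theta_x$ between product functors is \emph{forced} to split as a product of $\phi$ with identities, which is where naturality carries all the weight. In $\Sets$ the argument is clean because the points $1 \to x$ exhaust the elements of $x$; in a general cartesian $\C$ the same conclusion would follow by replacing the element-chase with the universal property of the product, testing $\theta_x$ against the two projections and invoking naturality along $x \to 1$ to pin down the first factor. Everything else — faithfulness, and the fact that $\phi$ inherits invertibility from $\theta$ — is formal.
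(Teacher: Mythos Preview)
Your proposal is correct and follows essentially the same approach as the paper: extract the candidate isomorphism from the component at the terminal object, then use naturality along points $1 \to x$ to show every component splits as that isomorphism times the identity. You are slightly more thorough in that you explicitly verify faithfulness of the action, which the paper's proof leaves implicit.
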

\begin{proof}
\label{orgc3863a4}
We want to show that the action of the product is pseudomonic.
For some pair of sets \(C,D \in \Sets\), assume an isomorphism \(u \colon (C \times A) \cong (D \times A)\)
natural in \(A\).  We need to show that it is the image of some isomorphism \(C \cong D\)
under the action.  We take \(v \colon C \cong C \times 1 \cong D \times 1 \cong D\).  The following
diagram commutes by naturality.
\[\begin{tikzcd}
C \times A \dar{u_{A}} & C \times 1 \lar[swap]{\id \times a} \dar{u_1}& C \lar[swap]{\cong} \dar{v} \\
D \times A & D \times 1 \lar{\id \times a} & D \lar{\cong}
\end{tikzcd}\]
For every \(a \colon 1 \to A\), we have that \(u_A(c,a) = (v(c),a)\). That shows that
\(u = (v \times -)\), and thus the action of the product is pseudomonic.
\end{proof}

\begin{lemma}
Prisms are a clear optic.
\end{lemma}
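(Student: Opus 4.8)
The plan is to dualise the proof of the previous lemma. Since prisms are the optic for the coproduct (Proposition \ref{prop:prisms}), it suffices to show that the action $(C + (-)) \colon \C \to [\C,\C]$ is pseudomonic in the sense of Definition \ref{orged771c2}. Fixing sets $C, D \in \Sets$, I start from an isomorphism $u_A \colon (C + A) \cong (D + A)$ natural in $A$ and must exhibit an isomorphism $v \colon C \cong D$ whose image under the action is $u$. Because the initial object $\emptyset$ is the unit for the coproduct, the right analogue of evaluating at the terminal object (as in the lens case) is to evaluate at $\emptyset$: I take $v \colon C \cong C + \emptyset \cong D + \emptyset \cong D$ by setting $v = u_{\emptyset}$.

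The first half, that $u_A$ sends the $C$-summand into the $D$-summand via $v$, will be immediate from naturality with respect to the unique map $!_A \colon \emptyset \to A$. The induced map $\id_C + {!_A} \colon C + \emptyset \to C + A$ is just the left injection $\iota_C$, so the naturality square forces $u_A \circ \iota_C = \iota_D \circ v$.
\[\begin{tikzcd}
C + \emptyset \rar{\id_C + !_A} \dar[swap]{v} & C + A \dar{u_A} \\
D + \emptyset \rar{\id_D + !_A} & D + A
\end{tikzcd}\]

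The genuinely new part, and the main obstacle, is the second half: showing that $u_A$ fixes the $A$-summand, i.e. $u_A \circ \iota_A = \iota_A$. Here a naive dualisation breaks down because $\Sets$ is not self-dual; the initial object has no global elements, so there is nothing dual to the points $1 \to A$ that drive the lens argument. Instead I will probe the $A$-summand with honest elements $a \colon 1 \to A$ and chase the naturality square for $\id_C + a \colon C + 1 \to C + A$ on the right copoint $\iota_1(\ast) \in C + 1$, obtaining $u_A(\iota_A(a)) = (\id_D + a)(u_1(\iota_1(\ast)))$. This reduces the whole summand to the single value $u_1(\iota_1(\ast)) \in D + 1$, which a priori could lie in either summand.

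The crux is to rule out the case $u_1(\iota_1(\ast)) = \iota_D(d)$: if it held, the displayed formula would give $u_A(\iota_A(a)) = \iota_D(d)$ independently of $a$, contradicting injectivity of $u_A$; more directly, the first half applied at $A = 1$ already places every $\iota_D(d)$ in the image of $\iota_C$ under the bijection $u_1$, so $\iota_1(\ast)$ cannot also map there. Hence $u_1(\iota_1(\ast)) = \iota_1(\ast)$ is forced, the formula collapses to $u_A(\iota_A(a)) = \iota_A(a)$ for every element $a$, and in $\Sets$ this pins down $u_A$ on all of the $A$-summand (the vacuous case $A = \emptyset$ being handled by $v$ itself). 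Combining the two halves yields $u_A = v + \id_A$, exhibiting $u$ as the image of the isomorphism $v \colon C \cong D$, so the coproduct action is pseudomonic and prisms are clear.
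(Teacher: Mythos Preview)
Your proof is correct and follows essentially the same route as the paper: both evaluate the natural isomorphism at the initial object to extract $v$, and both use naturality with respect to $\emptyset \to A$ to show that $u_A$ restricts to $v$ on the $C$-summand. The only difference is in how the $A$-summand is handled: the paper first argues that a bijection $C+A \cong D+A$ restricting to $v \colon C \cong D$ must split as $v + w_A$ for some bijection $w_A \colon A \to A$, and then invokes the Yoneda lemma to conclude that the natural family $w$ is the identity; you instead probe directly with points $a \colon 1 \to A$, reduce everything to the single value $u_1(\iota_1(\ast))$, and pin that down by a bijectivity count at $A = 1$. Your version is really just an unpacked, element-level rendering of the paper's ``split then Yoneda'' step --- in particular your bijectivity argument at $A=1$ is exactly what justifies the paper's terse ``must split'' claim --- so the two proofs should be regarded as the same argument at different levels of detail.
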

\begin{proof}
\label{org0ababde}
We want to show that the action of the coproduct is pseudomonic.
For some pair of sets \(C,D \in \Sets\), assume an isomorphism \(u \colon (C + A) \cong (D + A)\)
natural in \(A\). We need to show that it is the image of some isomorphism \(C \cong D\)
under the action.  We take \(v \colon C \cong C + 0 \cong D + 0 \cong D\) and because of naturality,
the isomorphism \(C + A \cong D + A\) must behave like \(v\) on elements of \(C\).
\[\begin{tikzcd}
C + A \dar{u_{A}} & C + 0 \lar[swap]{\id \times a} \dar{u_1}& C \lar[swap]{\cong} \dar{v} \\
D + A & D + 0 \lar{\id \times a} & D \lar{\cong}
\end{tikzcd}\]
An isomorphism between sets \(C + A \cong D + A\) that behaves like an isomorphism 
\(C \cong D\) in elements of \(C\) must split into this isomorphism and an isomorphism \(A \cong A\).
Being natural in \(A\), this needs to be the identity because of Yoneda lemma.
\end{proof}

\begin{lemma}
Affine traversals are a clear optic.
\end{lemma}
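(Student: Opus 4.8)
The plan is to show that the action $\Sets^2 \to [\Sets,\Sets]$ sending $(C,D)$ to the functor $a \mapsto C + D \times a$ is pseudomonic, in the sense of Definition \ref{orged771c2}, working in $\Sets$ as in the preceding two lemmas. The guiding idea is that the affine action combines a \emph{constant} summand $C + (-)$, of the shape governing prisms, with a \emph{linear} summand $D \times (-)$, of the shape governing lenses. So I would isolate each coefficient separately and, for the linear part, simply invoke the pseudomonicity of the product action already established for lenses.

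First I would dispatch faithfulness, which is immediate: a morphism $(f,g)\colon (C,D) \to (C',D')$ is sent to the natural transformation with components $f + g \times \id_a$; evaluating at the initial object $0$ recovers $f$, and evaluating at the terminal object $1$ together with $D \times 1 \cong D$ recovers $g$. Hence distinct pairs yield distinct natural transformations.

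For fullness on isomorphisms, suppose $u \colon C + D \times (-) \cong C' + D' \times (-)$ is a natural isomorphism. Evaluating at $0$ and using the distributivity isomorphism $D \times 0 \cong 0$ yields $u_0 \colon C \cong C'$. Next I would exploit naturality along the unique map $0 \to a$: since the induced map $C + D\times 0 \to C + D\times a$ is exactly the coproduct injection $C \hookrightarrow C + D \times a$, naturality forces $u_a$ to carry the $C$-summand onto the $C'$-summand via $u_0$. Because coproducts in $\Sets$ are disjoint, $u_a$ must then restrict to a bijection $D \times a \cong D' \times a$ on the complementary summands, and a routine check against the naturality squares of $u$ shows this restriction is natural in $a$. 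Applying the pseudomonicity of the product action established for lenses, I obtain $w \colon D \cong D'$ with the restriction equal to $w \times (-)$. Combining the two pieces, $u = u_0 + (w \times (-))$ is precisely the image of the isomorphism $(u_0, w) \colon (C,D) \cong (C',D')$ under the action, which is what pseudomonicity requires.

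The step I expect to be the main obstacle is the middle one: justifying that $u$ respects the coproduct decomposition and that the induced map on the $D \times (-)$ summand is genuinely natural. This is exactly where disjointness (extensivity) of the $\Sets$ coproduct is doing the work—in a general distributive category one would have to argue via the universal properties of the injections rather than reasoning on elements. Once the separation of the constant coefficient $C$ from the linear coefficient $D$ is in hand, the remainder collapses to the lens case, so the genuine content of the proof lies entirely in that decomposition.
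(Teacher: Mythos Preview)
Your proposal is correct and follows essentially the same strategy as the paper: evaluate at the initial object to isolate the constant coefficient, use the prism-style splitting argument (disjointness of coproducts and naturality along $0 \to a$) to peel off a natural isomorphism on the linear summand, and then invoke the pseudomonicity of the product action established for lenses. You are more explicit than the paper about faithfulness and about why the restriction to the linear summand is natural, but the underlying decomposition and the appeal to the two preceding lemmas are the same.
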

\begin{proof}
For some sets \(C,D,C',D' \in \Sets\), consider an isomorphism \(D + C \times A \cong D' + C' \times A\).
Taking \(A = 0\) and following the same reasoning as in Lemma \ref{org0ababde}, ew
get an isomorphism \(C \times A \cong C' \times A\) natural in \(A\).  Following \ref{orgc3863a4}, we get
an isomorphism \(C \cong C'\) that together with \(D \cong D'\) shows the original isomorphism
as image under the action of these.
\end{proof}

\begin{remark}
Not every optic we have considered so far is clear.
Glasses, for instance, are not a clear optic because \(0 \times (1 \to a) \cong 0 \times (2 \to a)\)
while \(1 \not\cong 2\).  We could, however, take the replete image of the action
determining glasses and define a clear version of the same optic.
Given an action \(\alpha \colon \M \to [\C,\C]\) determining a category of optics
\(\Optic_{\M}\), we can consider the replete image (as in \S \ref{orga37ce19}) of \(\M\) and take the
action \(\repl(\img(\alpha)) \colon \repl(\img(\M)) \to [\C,\C]\).  This is a clear
optic because of Proposition \ref{orge16820a}.  The inclusion
\(\M \to \repl(\img(\M))\) is determining a morphism of actions that,
by virtue of Lemma \ref{orgd31a70b}, induces a functor
\(\Optic_{\M} \to \Optic_{\repl(\img(\M))}\).  This functor is full, and this can
be checked realizing that the hom-sets of the second category are just
coends with extra conditions and that the map with the following signature
is a surjection.
\[
\left( \int^{m \in \M} \C(s,\mact a) \times \C(\mact b , t) \right) \to 
\left( \int^{m \in \repl(\img(\M))} \C(s,\mact a) \times \C(\mact b , t) \right)
\]
In other words, there exists a clear version of every optic.  Even
when we have derived concrete forms only for the original version, the
same concrete forms work for the clear version, up to some
identifications.
\end{remark}

\subsection{Composing clear optics}
\label{sec:org510d6d1}
We want to reconcile here how Haskell joins Tambara modules to compose
profunctor optics, as in \S \ref{orga66504e}, with the intuition that lenses and
prisms should compose into affine traversals, as in \ref{prop:affine}.

We have shown that, after joining the Tambara structures of a
profunctor lens with a profunctor prism, we should get a profunctor
optic for the coproduct action.  The coproduct action 
takes a word of sets \(x_1,y_1,\dots,x_n,y_n\) and acts on a set \(a\) as in the following formula.
\[
x_1 + y_1 \times (x_2 + y_2 \times (\dots (x_n + y_n \times a) \dots) )
\]
The category giving this action is \emph{not} equivalent to the one given by affine
traversals \(x + y \times (-)\). The morphisms on the first category are componentwise
morphisms of the words, whereas a morphism \(x_1,y_1,\dots,x_n,y_n \to x_1',y_1',\dots,x_n',y_n'\) is
given by a family of morphisms \(x_i \to x_i'\) and \(y_i \to y_i'\).  This means that, contrary
to intuition, \(1 + 1 \times (1 + \times (-)) \not\cong 2 + (-)\) in this category.  The problem is
solved if we consider the corresponding clear optic: the repletion of this category is the same
as the repletion of the category giving affine traversals. Moreover, because
affine traversals are a clear optic, that means that they are equivalent as
categories.  To sum up, when we join Tambara modules, lenses and prisms compose 
into some complicated optic; however, the clear version of this optic is equivalent (in the categorical sense)
to the one describing affine traversals.  If we are willing to accept that we are working with clear
optics, we can say that Haskell composes lenses and prisms into affine traversals
in a sound way that is compatible with the Tambara structure.

\subsection{Lattice of optics}
\label{sec:orgb118f3f}
If clear optics are given by replete subcategories of \([\C,\C]\), we
can consider the lattice they form and translate it via Lemma \ref{orgd31a70b}
to a lattice of optics.  We aim to recover and expand the lattice of optics
that was first described in \cite{pickering17}.

We start by constructing the lattice of replete subcategories. We will
be working in the full subcategory of \(\mathbf{MonCat}/[\C,\C]\) where objects
are pseudomonic. Given two objects \(\alpha \colon \M \to [\C,\C]\) and \(\beta \colon \N \to [\C,\C]\),
let \(\alpha \wedge \beta\) be their product on this category, which is the following
pullback.
\[\begin{tikzcd}[column sep=tiny]
& \M \wedge \N \dlar[swap]{\pi}\drar{\pi} & \\
\M\drar[swap]{\alpha} && \N \dlar{\beta} \\
& \left[ \C,\C \right] &
\end{tikzcd}\]
This is again pseudomonic because pseudomonic functors are stable
under pullback. In terms of replete subcategories, we are taking
an intersection of the two categories.  Let \(\alpha \vee \beta\) be the coproduct on this category;
it is given by the smallest replete subcategory containing both.
Because of the construction of the replete image, this is precisely
the replete image of the coproduct, \(\repl(\img(\alpha + \beta))\), as in the
following diagram.
\[\begin{tikzcd}
\M\rar{i} \ar[bend right]{ddr}[swap]{\alpha} & \M + \N \dar{\repl\img} & \N\lar[swap]{i} \ar[bend left]{ddl}{\beta} \\
   & \M \vee \N \dar & \\
   & \left[\C , \C\right] 
\end{tikzcd}\]
We will actually construct a bounded lattice.  The top element
is given by the identity action \(\top \colon [\C,\C] \to [\C,\C]\), and the
bottom element is given by the action of the trivial monoidal
category \(\bot \colon \mathbf{1} \to [\C,\C]\). 

\begin{lemma}
\label{org17b89ae}
The previously defined operations \((\wedge, \vee, \top, \bot)\) form a bounded lattice
up to isomorphisms.
\end{lemma}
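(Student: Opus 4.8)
The plan is to reduce the entire statement to a standard fact about closure systems, namely that the replete monoidal subcategories of a fixed category form a complete lattice under inclusion. By Proposition \ref{orge16820a}, a pseudomonic action $\alpha \colon \M \to [\C,\C]$ is, up to equivalence, the same datum as the replete subcategory $\repl(\img(\alpha)) \subseteq [\C,\C]$; moreover this subcategory is \emph{monoidal}, since $\alpha$ is strong monoidal and the repletion of a monoidal subcategory is again monoidal. First I would upgrade this correspondence to an order isomorphism: declaring $\alpha \le \beta$ whenever $\alpha$ factors through $\beta$ in $\mathbf{MonCat}/[\C,\C]$ corresponds exactly to the inclusion $\repl(\img(\alpha)) \subseteq \repl(\img(\beta))$, and two pseudomonic objects are isomorphic in the slice precisely when their replete images coincide. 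Thus, after quotienting by isomorphism, the full subcategory of pseudomonic objects becomes the poset $\mathcal{R}$ of replete monoidal subcategories of $[\C,\C]$ ordered by inclusion, and proving the lemma ``up to isomorphism'' amounts to proving that $(\mathcal{R}, \wedge, \vee, \top, \bot)$ is a bounded lattice in the strict sense.

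Next I would identify the four operations with their order-theoretic meanings in $\mathcal{R}$. The meet $\alpha \wedge \beta$ is defined as the pullback in the pseudomonic slice; I would check that on replete subcategories this pullback computes the intersection $\repl(\img(\alpha)) \cap \repl(\img(\beta))$, that this intersection is again a replete monoidal subcategory, and that it remains pseudomonic by the stated stability of pseudomonic functors under pullback. The join $\alpha \vee \beta = \repl(\img(\alpha+\beta))$ I would identify with the smallest replete monoidal subcategory containing both $\repl(\img(\alpha))$ and $\repl(\img(\beta))$, using the universal property of the coproduct $\M + \N$ together with the explicit construction of the repletion given after Definition \ref{org5ebfdb2}. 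Finally $\top \colon [\C,\C] \to [\C,\C]$ is the whole category, the greatest element, and $\bot \colon \mathbf{1} \to [\C,\C]$ is the one-object replete subcategory containing only the unit $\Id$ and its isomorphs, the least element.

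With these identifications the lattice axioms are essentially automatic. I would observe that replete monoidal subcategories are closed under arbitrary intersection and that $\repl(\img(-))$ is a closure operator, so $\mathcal{R}$ is a closure system and hence a complete lattice; in particular $\wedge$ is the greatest lower bound and $\vee$ the least upper bound for inclusion. Commutativity, associativity, idempotency and the two absorption laws then follow formally from $\alpha \wedge \beta$ and $\alpha \vee \beta$ being $\mathrm{GLB}$ and $\mathrm{LUB}$, while $\top$ and $\bot$ being the maximum and minimum of $\mathcal{R}$ gives boundedness. Transporting these equalities back through the order isomorphism produces the corresponding isomorphisms of actions, which is exactly the ``up to isomorphisms'' phrasing in the statement.

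The main obstacle I expect is not the lattice bookkeeping but verifying that the pullback in the pseudomonic slice genuinely computes the intersection of replete subcategories rather than some larger comma-style object, and dually that the replete image of the coproduct is genuinely the least upper bound. Both require care that the \emph{monoidal} structure is preserved: I would need the intersection to be closed under composition of endofunctors and to contain the identity, so that it is a monoidal subcategory, and the generated replete subcategory to likewise be monoidal, which holds because both $\alpha$ and $\beta$ are strong monoidal into $([\C,\C], \circ, \Id)$. A secondary subtlety is well-definedness on isomorphism classes, which is handled throughout by passing to replete images via Proposition \ref{orge16820a}.
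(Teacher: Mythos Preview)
Your proposal is correct and takes a genuinely different, more structural route than the paper. The paper verifies the lattice axioms one at a time: commutativity and associativity of $\wedge,\vee$ come from their being categorical products and coproducts in the slice; idempotency of $\wedge$ is deduced from the characterisation of pseudomonic functors as those $F$ with $F \times_{\D} F \cong F$ (Definition~\ref{orged771c2}); idempotency of $\vee$ and the two absorption laws are read off from containments of replete subcategories; and the bounds follow because every strong monoidal action hits $\Id$ and is contained in $[\C,\C]$. Your approach instead packages everything into a single order isomorphism with the poset $\mathcal{R}$ of replete monoidal subcategories and then invokes the closure-system argument, so that all the algebraic identities drop out simultaneously from $\wedge = \mathrm{GLB}$ and $\vee = \mathrm{LUB}$.

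What each buys: your route is cleaner and yields more (you get a \emph{complete} lattice, not just a bounded one, essentially for free), but it front-loads the work into verifying that the slice-categorical pullback and the replete image of the coproduct really compute intersection and generated-subcategory in $\mathcal{R}$, and that the monoidal structure survives both operations --- points you correctly flag as the main obstacles. The paper's argument sidesteps setting up the order isomorphism carefully and instead mixes the categorical and subcategory viewpoints opportunistically, which is quicker to write but less conceptual; in particular, its use of the pseudomonic pullback characterisation for idempotency of $\wedge$ is a nice shortcut that your closure-system argument does not need.
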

\begin{proof}
First, we note that \((\wedge, \vee)\) are commutative and associative
because they are products and coproducts. The meet \((\wedge)\) is
idempotent, \(\alpha \wedge \alpha = \alpha\), because pseudomonic functors are precisely those whose
pullback with themselves is again themselves, see Definition
\ref{orged771c2}.  The join \((\vee)\) is idempotent, \(\alpha \vee \alpha\), because of the construction of
the repleted image. As subcategories,
the image of \(\alpha \wedge \beta\) is necessarily contained in \(\alpha\), and that gives
\(\alpha \vee (\alpha \wedge \beta) = \alpha\).  On the other hand, \(\alpha\) is necessarily contained
in \(\alpha \vee \beta\), and that gives \(\alpha \wedge (\alpha \vee \beta) = \alpha\).

Every monoidal action must send the unit of the monoidal category
to the identity, this ensures \(\alpha \vee \bot = \alpha\).  Because they are all
subcategories of endofunctors, \(\alpha \wedge \top = \alpha\).
\end{proof}

\begin{theorem}
The lattice induces functors between categories of clear optics. 
We have the following hierarchy of optics.
\end{theorem}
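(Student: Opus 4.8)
The plan is to leverage the functoriality of the optic construction in the action, already established in Lemma~\ref{orgd31a70b} and the Proposition immediately following it. Recall from Lemma~\ref{org17b89ae} that clear optics, presented as pseudomonic (equivalently, replete) subcategories of $[\C,\C]$ inside $\mathbf{MonCat}$, form a bounded lattice under $(\wedge,\vee,\top,\bot)$, where $\alpha\leq\beta$ means exactly that the subcategory $\M\subseteq[\C,\C]$ underlying $\alpha$ is contained in the subcategory $\N$ underlying $\beta$. Each such containment is witnessed by an inclusion $i\colon\M\hookrightarrow\N$ that commutes with the two actions into $[\C,\C]$, and so is a morphism of actions in the sense of Lemma~\ref{orgd31a70b}.

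First I would invoke the Proposition immediately following Lemma~\ref{orgd31a70b}: every morphism of actions $f\colon\M\to\N$ induces a functor $\Optic_{\M}\to\Optic_{\N}$. Applying this to each lattice inclusion $\M\hookrightarrow\N$ with $\alpha\leq\beta$ yields a functor $\Optic_{\M}\to\Optic_{\N}$. Since $\bot$ and $\top$ are least and greatest, every clear optic receives a functor from the adapters $\Optic_{\bot}$ and maps to the setters $\Optic_{\top}$; and each binary meet and join sits in the evident span $\Optic_{\M\wedge\N}\to\Optic_{\M},\Optic_{\N}\to\Optic_{\M\vee\N}$.

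The step needing care is functoriality of the assignment on the lattice viewed as a poset category: composites of inclusions must map to composites of the induced functors, and identities to identities. Here I would chain the three functorialities already in hand. Lemma~\ref{orgd31a70b} gives that $f\mapsto f_{\ast}$ is contravariantly functorial at the level of Pastro--Street comonads, so $(g\circ f)_{\ast}=f_{\ast}\circ g_{\ast}$ and $\id_{\ast}=\id$. Passing to Eilenberg--Moore categories turns $f_{\ast}$ into a functor $\hirata_{\N}\to\hirata_{\M}$ covariant in $f$; Proposition~\ref{org88623d6} transports this to a morphism of the adjoint monads $\Psi_{\M}\to\Psi_{\N}$, and Proposition~\ref{orgb64b29f} realises the induced identity-on-objects functor between the Kleisli categories, whose restriction to representables is precisely $\Optic_{\M}\to\Optic_{\N}$. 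Tracking identities and composites through this chain is routine but must be done once to certify a genuine functor out of the lattice.

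The main obstacle is not any single hard computation but the bookkeeping of directions: the comonad-morphism assignment is contravariant whereas the resulting functor on optics is covariant in the lattice order, so I would orient every arrow so that smaller actions (fewer allowed reshapings, e.g.\ adapters) map \emph{into} larger ones (e.g.\ setters), matching the intended hierarchy. With the orientation fixed, placing the previously derived optics---adapters at $\bot$, setters at $\top$, lenses and prisms in the middle with their join equivalent to the affine traversal as in \S\ref{org7bbc905}---at their lattice positions yields the displayed hierarchy of functors.
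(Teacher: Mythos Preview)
Your proposal is correct and follows essentially the same approach as the paper: both arguments observe that lattice inclusions between replete submonoidal categories of $[\C,\C]$ are morphisms of actions, and then invoke Lemma~\ref{orgd31a70b} (and the Proposition immediately after it) to transport these to functors between the corresponding categories of optics. Your write-up is in fact more careful than the paper's, which simply asserts that the lattice is ``transported'' via Lemma~\ref{orgd31a70b}; you make explicit the chain through comonad morphisms, Eilenberg--Moore categories, monad morphisms, and Kleisli categories, and you check functoriality of the assignment on the lattice poset---points the paper leaves implicit.
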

\begin{proof}
We can consider many evident inclusions between the different actions
that define the optics in the lattice defined in Lemma \ref{org17b89ae}.
For instance, affine maps can be written as power series functors, so
the action defining affine traversals is contained in the one defining
traversals.  Because of Lemma \ref{orgd31a70b}, that lattice is transported to
functors between categories of optics.  Note that the bottom and the top
correspond to \(\mathbf{Adapter}\) and \(\mathbf{Setter}\), respectively.
\[\begin{tikzcd} && \mathbf{Setter} & \\ && \mathbf{Traversal} \ar{u}&& \mathbf{Kaleidoscope} \ar{ull}
\\ & \mathbf{Glass}\ar{uur} & \mathbf{Affine} \ar{u}& &
\\ \mathbf{Grate} \ar{ur} & \mathbf{Prism}\ar{ur}&& \mathbf{Lens}
\ar{ull}\ar{ul} &  \\ &&  & \mathbf{ListLens} \uar \ar{uuur} &
\\ && \mathbf{Adapter} \ar{uull}\ar{uul}\ar{ur} &
 \end{tikzcd}\]
Note that these are not the only optics we have considered. It is interesting
to compare how this matches the original lattice of optics in \cite{pickering17} and the
one in \cite{boisseau17}, described for Haskell constraints.
\end{proof}

\chapter{Applications}
\label{sec:org99f7cff}

\section{Library of optics}
\label{sec:org373af17}
\label{orge864844}
As an application, we have implemented a Haskell library that makes
use of the general form of the profunctor representation theorem
(Theorem \ref{org9a4afc7}) to provide a translation
between the existential and profunctor representations, parametric in
the type of optic.  This development follows closely the definitions in
\cite{boisseau18}, in that it considers optics as given by submonoids of
endofunctors instead of monoidal actions.

In order to achieve this, we consider the submonoid of endofunctors
described by the monoidal action as given by some Haskell \emph{constraint}. Constraints
are then passed to the relevant data constructors (in GADT style) using
the \texttt{ConstraintKinds} extension of the Glasgow Haskell Compiler.  The complete
code for the full description and translation between the two representations
of optics is surprisingly concise and it is included here with references
to the relevant theorems.

\begin{itemize}
\item Existential optic as in Definition \ref{org8ab3812}.

\begin{minted}[]{haskell}
data ExOptic mon a b s t where
  ExOptic :: (mon f) => (s -> f a) -> (f b -> t) 
                     -> ExOptic mon a b s t
\end{minted}

\item A class that witnesses that the given constraint declares a
submonoid of endofunctors. Note that the following morphisms can be
constructed when the constraint is defining a class of endofunctors
closed under identity and composition, see the construction of the
category of optics in \S \ref{org2dbf2a9}.

\begin{minted}[]{haskell}
class MonoidalAction mon where
  idOptic   :: ExOptic mon a b a b
  multOptic :: (mon f) => ExOptic mon a b s t 
                       -> ExOptic mon a b (f s) (f t)
\end{minted}

\item Definition of a Tambara module (Defintion \ref{org0779c9e}) and the
profunctor representation.

\begin{minted}[]{haskell}
class (Profunctor p) => Tambara mon p where
  action :: forall f a b . (mon f) => p a b -> p (f a) (f b)

type ProfOptic mon a b s t = 
  forall p . Tambara mon p => p a b -> p s t
\end{minted}

\item The equivalence given by the profunctor representation theorem
(Theorem \ref{org9a4afc7}) is constructed
explicitly here.

\begin{minted}[]{haskell}
instance Profunctor (ExOptic mon a b) where
  dimap u v (ExOptic l r) = ExOptic (l . u) (v . r)

instance (MonoidalAction mon) => Tambara mon (ExOptic mon a b) where
  action = multOptic

crOptic :: (MonoidalAction mon) => 
  ProfOptic mon a b s t -> ExOptic mon a b s t
crOptic p = p idOptic

mkOptic :: forall mon a b s t . 
  ExOptic mon a b s t -> ProfOptic mon a b s t
mkOptic (ExOptic l r) = dimap l r . (action @mon)
\end{minted}
\end{itemize}

Note that the code included here is enough for implementing the
translation between representations of optics in general.  We believe
this can lead to more concise optic libraries in the future.
Completing the library, including an implementation of the full range
of the combinators that optics libraries such as Kmett's \cite{kmett15}
provide, is left for further work.

\subsection{Example usage}
\label{sec:org479f013}
The code, together with implementations for the most common optics and
examples of usage, can be found at the following HTML address.

\begin{itemize}
\item \url{https://github.com/mroman42/vitrea/}
\end{itemize}

We take seriously Examples \ref{org729dff4}, \ref{org47c3ee8} and
\ref{orgadaa07e} to give an idea of how this library works. The exact
code for these examples can be found in the library.

\begin{exampleth}[Lenses and prisms]
Examples \ref{org729dff4}, \ref{org47c3ee8} show how we can compose
a \emph{lens} (\texttt{street}) with a \emph{prism} (\texttt{postal}) and still be able
to access the contents.  In the first code line we declare a string called \texttt{address}.
In the second line, we match it into a postal address using the prism
\texttt{postal}; this step could have failed.  Then we compose a prism with a lens
to get the affine traversal \texttt{postal.street}, that accesses the street inside
a string.  We can update that internal field and have the changes
propagate upwards.  The same prism can be reused with a different lens
(\texttt{city}) to access a different part of the address. This solution is modular.
\end{exampleth}
\begin{minted}[]{haskell}
  let address = "45 Banbury Rd, OX1 3QD, Oxford"
  address^.postal
    -- Street:  45 Banbury Rd
    -- Code:    OX2 6LH
    -- City:    Oxford
  address^.postal.street
    -- "45 Banbury Rd"
  address^.postal.street %~ "7 Banbury Rd"
    -- "7 Banbury Rd, OX1 3QD, Oxford"
  address^.postal.city <>~ " (UK)"
    -- "45 Banbury Rd, OX1 3QD, Oxford (UK)"
\end{minted}

\begin{exampleth}[Traversals]
The traversal (\texttt{mail}) from Example \ref{orgadaa07e} can be
used how to compose traversals with lenses.  In the first line of
code, we have a mailing list. In the second line of code, we extract
the emails using the traversal. In the third line of code we compose
it with two lenses (\texttt{email}, \texttt{domain}) to access particular subfields.  Note
that lenses and traversals compose again to traversals (Lemma \ref{org17b89ae}). In
the last line of code, we apply an \texttt{uppercase} function the contents of
these subfields; the changes propagate to the initial mailing list.
\end{exampleth}

\begin{minted}[]{haskell}
  someMailingList
    -- | Name           | Email                   | Frequency |
    -- |----------------+-------------------------+-----------|
    -- | Turing, Alan   | turing@manchester.ac.uk | Daily     |
    -- | Noether, Emily | emmynoether@fau.eu      | Monthly   |
    -- | Gauss, Carl F. | gauss@goettingen.de     | Weekly    |
  someMailingList^..mails
    -- ["turing@manchester.ac.uk", "emmynoether@fau.eu", "gauss@goettingen.de"]
  someMailingList^..mails.email.domain
    -- ["manchester.ac.uk", "fau.eu", "goettingen.de"]
  someMailingList^..mails.email.domain <~~~ uppercase
    -- | Name           | Email                   | Frequency |
    -- |----------------+-------------------------+-----------|
    -- | Turing, Alan   | turing@MANCHESTER.AC.UK | Daily     |
    -- | Noether, Emily | emmynoether@FAU.EU      | Monthly   |
    -- | Gauss, Carl F. | gauss@GOETTINGEN.DE     | Weekly    |
\end{minted}

\subsection{A case study}
\label{sec:org9371e5a}
The code presented at the beginning of this section makes the profunctor
representation theorem works for any arbitrary action. We can go beyond
the usual optics and elaborate an example that starts from the observation
of Remark \ref{org2930c80}: a \emph{list lens} can be composed with a \emph{kaleidoscope}
to create a new \emph{kaleidoscope}.   

\begin{itemize}
\item We start by defining \emph{kaleidoscopes} as the optic for applicative
functors. The definition of a new optic is simple under this
framework: we only need to witness the fact that applicative
functors are closed under composition and that they contain the
identity functor.
\begin{minted}[]{haskell}
type ExKaleidoscope   s t a b = ExOptic   Applicative a b s t
type ProfKaleidoscope s t a b = ProfOptic Applicative a b s t

instance MonoidalAction Applicative where
  idOptic = ExOptic Identity runIdentity
  multOptic (ExOptic l r) = ExOptic 
    (Compose . fmap l) (fmap r . getCompose)
\end{minted}
List lenses are defined in a similar fashion as the optic for
the product of a monoid.
\end{itemize}

Now assume we have a large dataset where, instead of reading all the entries,
we want to learn about the aggregate data.  In our case we pick the standard
\emph{iris} dataset \cite{fisher36}.  The entries of the dataset represent a \(\mathbf{Flower}\),
characterized by its \(\mathbf{Species}\) (which can be \emph{Iris setosa}, \emph{Iris versicolor} and \emph{Iris virginica})
and four numeric \(\mathbf{Measurements}\) on the length and width of its sepal and petal.
One could take \(\mathbf{Species} = \mathbf{3}\), \(\mathbf{Measurements} = \mathbb{R}_{+}^4\) and \(\mathbf{Flower} = \mathbf{Species} \times \mathbf{Measurements}\).

We consider a type-invariant \emph{list lens} of the following type.
\[ \mathrm{measureNearest} \in \mathbf{ListLens}((\mathbf{Flower},\mathbf{Flower}), (\mathbf{Measurements} , \mathbf{Measurements})),\]
It is given by a \emph{view} function \(\mathbf{Flower} \to \mathbf{Measurements}\), that projects the measurements of
a flower; and a \emph{classify} function \(\mathbf{Flower}^{\ast} \times \mathbf{Measurements} \to \mathbf{Flower}\), that
takes a list of flowers and uses it to classify the input measurements into some
species, outputting a new flower.  Internally, our classify function will use the
\emph{1-nearest neighbour algorithm} \cite{cover67}.  We also consider a type-invariant
\emph{kaleidoscope} of the following type.
\[ \mathrm{aggregate} \in \mathbf{ListLens}((\mathbf{Measurements},\mathbf{Measurements}), (\mathbb{R}_+ , \mathbb{R}_+)),\]
It is given by a function that takes as an input some monoid structure in
the positive real numbers, given as a map \(\mathbb{R}_+^{\ast} \to \mathbb{R}\) from the free monoid,
and induces a componentwise monoid structure on \(\mathbf{Measurements}\).

\begin{itemize}
\item List lenses are, in particular, lenses; we can use them to \emph{view} the
measurements of the first element of our dataset.
\begin{minted}[]{haskell}
  (iris !! 1)^.measureNearest
    --      Sepal length: 4.9
    --      Sepal width:  3.0
    --      Petal length: 1.4
    --      Petal width:  0.2
\end{minted}
\item They are more abstract than a lens in the sense that they can be used
to classify some measurements into a new species taking into account
all the examples of the dataset.
\begin{minted}[]{haskell}
  (iris ?. measureNearest) (Measurements 4.8 3.1 1.5 0.1)
    --  Flower: 
    --      Sepal length: 4.8
    --      Sepal width:  3.1
    --      Petal length: 1.5
    --      Petal width:  0.1
    --      Species:      Iris setosa
\end{minted}

\item Now we compose the list lens with the kaleidoscope and we get back a new
kaleidoscope that, from a monoid structure on the positive real numbers,
induces first a monoid structure on the measurements, uses it to 
aggregate the measurements of the dataset, and finally classifies the aggregated
measurements into a species.
\begin{minted}[]{haskell}
  (iris >- measureNearest.aggregate) mean
    -- Flower: 
    --     Sepal length: 5.843
    --     Sepal width:  3.054
    --     Petal length: 3.758
    --     Petal width:  1.198
    --     Species:      Iris versicolor
\end{minted}

\item Any other monoidal structure will work the same. Here we use the
maximum instead of the average.
\begin{minted}[]{haskell}
  (iris >- measureNearest.aggregate) maximum
    -- Flower: 
    --     Sepal length: 7.9
    --     Sepal width:  4.4
    --     Petal length: 6.9
    --     Petal width:  2.5
    --     Species:      Iris virginica
\end{minted}
\end{itemize}

\section{Agda implementation}
\label{sec:org099461e}
\label{orgcfdf01e}
We have also developed an Agda implementation of the library. The Agda
language \cite{norell08} is dependently typed, and the enhanced expressivity of the type
system justifies some changes from the Haskell version.  As a major
difference, optics are defined as arising from a monoidal action (as
in \cite{riley18}) instead of a submonoid of endofunctors.  The code can 
be found in the following link.

\begin{itemize}
\item \url{https://github.com/mroman42/vitrea-agda}
\end{itemize}

Foundations become relevant when writing a formalization of our
reasoning, and there is an important aspect of this text that we have
not discussed yet: even when we have been agnostic regarding
foundations, the theory of optics is \emph{constructive} in nature (in the
sense of \cite{troelstra14}).  The reader can check that we have avoided
the use of the excluded middle or the axiom of choice (which implies
excluded middle). There is a clear exception in Proposition
\ref{orge16820a}, but we are not considering it part of the main theory
and we think the use of the axiom of choice could be avoided in this
case considering \emph{anafunctors}, which are motivated by foundational
concerns exactly like this one (see \cite{makkai96}).  This makes it
possible to formalize parts of this text in some variant of Martin-Löf
type theory using a proof assistant.

\begin{itemize}
\item The library is built in turn over a small library we have developed
to deal automatically with trivial isomorphisms in the category
\(\mathbf{Sets}\).  The automation works on top of Agda's instance
resolution algorithm, and given two types depending on some
variables, it tries to find a isomorphism between them, natural on
the variables.

\item There is a partial formalization and a construction of the
bijection of the profunctor representation theorem (Theorem
\ref{org9a4afc7}). It provides a definition of monoidal
action (Definition \ref{org7b5a63d}) and of Tambara module over an
arbitrary action (Definition \ref{org0779c9e}).  From this partial
formalization, the algorithm that translates between the existential
and the profunctor form of an optic can be extracted.

\item The formal coend derivations are explicitly written down in the
code of the library.  From these proof-relevant derivations, the
algorithm that translates between the concrete and the existential
form can be extracted.  An example of the code for such a
derivation can be seen in Figure \ref{fig:orgd409da5} and Figure
\ref{fig:org88a5bb3}. Compare this to the proofs of Proposition \ref{prop:lenses}
and Proposition \ref{prop:prisms}.  This makes the library extremely
close to the theory and makes it possible for the code to justify
its own correctness.
\end{itemize}

\begin{figure}[htbp]
\centering
\includegraphics[width=12cm]{./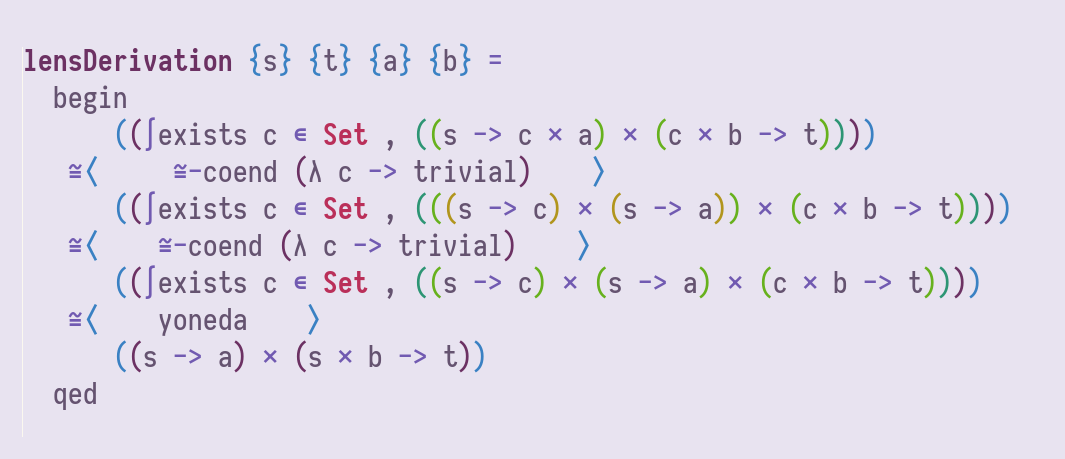}
\caption{\label{fig:orgd409da5}
(Co)end derivation of the concrete form of a lens in Agda. Compare it to the proof of Proposition \ref{prop:lenses}.}
\end{figure}

\begin{figure}[htbp]
\centering
\includegraphics[width=12cm]{./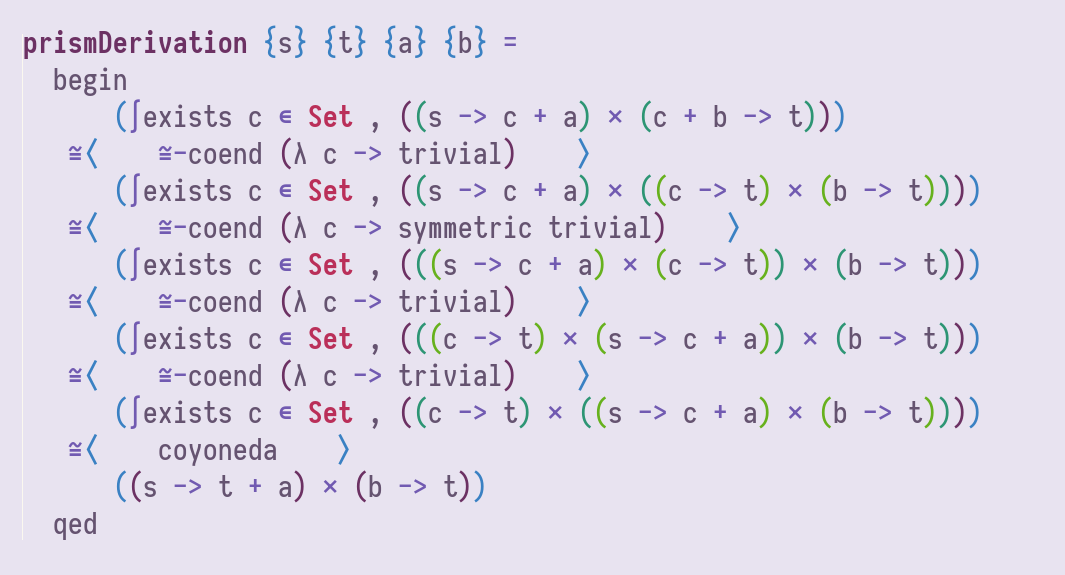}
\caption{\label{fig:org88a5bb3}
(Co)end derivation of the concrete form of a prism in Agda. Compare it to the proof of Proposition \ref{prop:prisms}.}
\end{figure}
\chapter{Conclusions}
\label{sec:org5fa989a}

\section{Conclusions and further work}
\label{sec:orga31939e}
The definition of optics in terms of monoidal actions \cite{riley18}
\cite{boisseau18} captures in an elementary way all the basic optics,
including traversals (Proposition \ref{prop:traversal}).  It is also a
valuable tool for constructing new optics (\S \ref{org087c8c4}), even
if getting an interesting concrete form is not done in general.
When studying the profunctor representation of optics, there seem to
be reasons to prefer the more restrictive notion of \emph{clear optics}
(Definition \ref{org7d8d7fb}).  We think it is interesting to
study how the different variations on the notion of \emph{submonoidal category} 
of \([\C,\C]\) give rise to different variants of the definition of optic.

Among optics, the case of \textbf{traversals} was particularly interesting to
us.  Milewski \cite{milewski17} posed the problem of finding an
\emph{as elementary as possible}
description of the traversal.  Our
derivation in Proposition \ref{prop:traversal} tries to achieve precisely
this goal.  The derivation described in \cite{riley18} for \emph{traversables}
was using a parameterised comonad from \cite{jaskelioff15}, so it made
sense to go back and try to simplify that approach with the
\emph{shape-contents} comonad we described in \S \ref{orgab52623}. Finally,
describing power series functors as linear species is what guided us
to the definition of \emph{unsorted} \emph{traversal} in Proposition
\ref{prop:unsortedtraversal}.   There is still work to be done for traversals
that we summarize in the following questions.

\begin{itemize}
\item Can we explain the description of traversables in terms of
monoidal, cartesian and cocartesian functors from \cite{pickering17}?
Note that \emph{monoidal} profunctors do not fit into the usual pattern of
Tambara modules we have been discussing so far.
\end{itemize}

\emph{Lawful optics} are a topic that we have decided not to consider in this
text. They are studied in detail in the brilliant work of Mitchell Riley \cite{riley18}. It would be
particularly interesting to check what are sensible laws for the
optics we have derived.

Combining different optics is one of the main motivations of the
theory.  With the description of the lattice in \S \ref{org7bbc905}, they
form a family of intercomposable bidirectional accessors.  Going up
in the lattice we forget the structure of our context, going down in
the lattice we make it more explicit.  Lenses, with their many
applications (\cite{ghani18}, \cite{fong19}) seem to be a sweet spot in
this hierarchy, but we believe the whole family of optics can have
potential applications.  Part of the work that remains to be done is
to determine what are the practical applications of many optics we are
describing, and to develop useful intuitions for them.

A final direction is to generalize the theory of optics.  There is a
particular way of doing so that seems to follow naturally from the
proofs in this text: repeat the same theory in the context of \emph{enriched}
category theory.  The reader can check that most of the text would work
the same if we substitute \(\mathbf{Sets}\) for an arbitrary cartesian Benabou cosmos \(\V\).
In some sense, this captures better its computational flavour, as it shows
that the same theory could have been developed, for instance, over the category of 
directed-complete partial orders \(\mathbf{Dcpo}\) (which is cartesian closed, complete
and cocomplete \cite[Theorem 3.3.3]{abramsky94}).  Recall that we discussed
in \S \ref{orgcfdf01e} how the theory was constructive in nature; we would expect to
encounter no major problems when repeating our reasoning internally to topoi
other than \(\Sets\). 

\section{A zoo of set-based optics}
\label{sec:orgbae8b4b}

\begin{center}
\begin{tabular}{ccc}
\hline
Name & Description (concrete form/monoidal action) & Proposition \\
\hline
Adapter            & \begin{tabular}{@{}c@{}}Identity functor \\ $(s \to a) \times (b \to t)$        \end{tabular} &  {\ref{prop:adapters}} \\[0.5cm]
Lens               & \begin{tabular}{@{}c@{}}Product \\ $(s \to a) \times (b \times s \to t)$        \end{tabular} &  {{\ref{prop:lenses}}} \\[0.5cm]
Prism              & \begin{tabular}{@{}c@{}}Coproduct \\ $(s \to t + a) \times (b \to t)$           \end{tabular} &  {{\ref{prop:prisms}}} \\[0.5cm]
Grate              & \begin{tabular}{@{}c@{}}Exponential \\ $((s \to a) \to b) \to t$                \end{tabular} &  {{\ref{prop:grates}}} \\[0.5cm]
Glass              & \begin{tabular}{@{}c@{}}Product and Exponential \\ $((s \to a) \to b) \to s \to t$ \end{tabular} &  {{\ref{prop:glasses}}} \\[0.5cm]
Affine Traversal   & \begin{tabular}{@{}c@{}}Product and Coproduct \\ $s \to t + a \times (b \to t)$ \end{tabular} &  {{\ref{prop:affine}}} \\[0.5cm]
Unsorted Traversal & \begin{tabular}{@{}c@{}}Combinatorial species \\ $s \to \sum\nolimits_n a^n/n! \times (b^n/n! \to t)$ \end{tabular} &  {{\ref{prop:unsortedtraversal}}} \\[0.5cm]
Traversal          & \begin{tabular}{@{}c@{}}Power series, Traversables \\ $s \to \Sigma n . a^n \times (b^n \to t)$ \end{tabular} &  {{\ref{prop:traversal}}} and {{\ref{prop:traversal2}}} \\[0.5cm]
Achromatic lens    & \begin{tabular}{@{}c@{}}Pointed product \\ $(s \to (b \to t) + 1) \times (s \to a) \times (b \to t)$ \end{tabular} &  {{\ref{prop:achromatic}}} \\[0.5cm]
Algebraic lens     & \begin{tabular}{@{}c@{}}Product by a monoid \\ $(s \to a) \times (\psi s \times b \to t)$ \end{tabular} &  {{\ref{prop:algebraiclens}}} \\[0.5cm]
Coalgebraic prism  & \begin{tabular}{@{}c@{}}Coproduct by a monoid \\ $(s \to \psi t + a) \times (b \to t)$ \end{tabular} &  {{\ref{prop:algebraiclens}}} \\[0.5cm]
Kaleidoscope       & \begin{tabular}{@{}c@{}}Applicative functors \\ $\prod\nolimits_n(a^n \to b) \to (s^n \to t)$ \end{tabular} &  {{\ref{prop:kaleidoscope}}} \\[0.5cm]
Setter             & \begin{tabular}{@{}c@{}}Any functor \\ $(a \to b) \times (s \to t)$ \end{tabular} &  {{\ref{prop:setters}}}\\
\end{tabular}
\end{center}

\newpage

\section*{Appendix: an alternative approach to traversables}
\label{sec:orgf00a6e9}
\label{appendixtraversables}

\begin{conjecture}
Let \(\mathbf{S} \to [\C,\C]\) be the monoidal action of applicative functors such that
the higher-order functor
\(\int_{F \in \mathbf{S}} \Ran_{F}F(-)\) is a comonad. Families of morphisms \(\int_{F \in \mathbf{S}} TF \tonat FT\)
satisfying linearity and unitarity correspond to comonad algebras
under the adjunction given by the right Kan extension.
\end{conjecture}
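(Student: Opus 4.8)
The plan is to promote the bijection already noted after the definition of traversable functors---the one presenting a family $\trv_F \colon TF \tonat FT$ satisfying \emph{naturality} as a single natural transformation $T \tonat \int_{F \in \mathbf{S}} \Ran_F FT$---into an isomorphism of Eilenberg--Moore data, matching the remaining two traversable axioms with the two coalgebra axioms. Write $G = \int_{F \in \mathbf{S}} \Ran_F F(-)$ for the comonad of the statement (whose coalgebras are the ``comonad algebras'' referred to). First I would pin down the underlying bijection abstractly: applying the defining adjunction of the right Kan extension, $\Nat(TF, FR) \cong \Nat(T, \Ran_F FR)$, and then continuity of $\Nat$ in its second argument, yields the representability
\[ \Nat(T, GR) \cong \int_{F \in \mathbf{S}} \Nat(TF, FR), \]
natural in $T$ and $R$. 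Taking $R = T$ identifies a naturality-respecting family $\trv$ with a natural transformation $\sigma \colon T \tonat GT$; this is precisely the correspondence ``under the adjunction given by the right Kan extension'' in the statement.

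Next I would make the comonad structure of $G$ explicit in terms of the monoidal (composition) structure of $\mathbf{S}$, exactly as the Pastro--Street comonad $\Theta$ of Definition \ref{orgdb10a66} is built from its action. The counit $\varepsilon_T \colon GT \tonat T$ is the projection at the monoidal unit $\Id \in \mathbf{S}$, using $\Ran_{\Id}(\Id \circ T) \cong T$; the comultiplication $\delta_T \colon GT \tonat G^2T$ is determined, through the universal property of the end, by a family $n_{T,F} \colon GT \circ F \tonat F \circ GT$ that distributes an applicative $F$ past $G$, built from the composition structure of $F$ in $\mathbf{S}$---the exact analogue of the map $n_{T,F}$ constructed in the proof of Theorem \ref{orge517918}. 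Under the representability isomorphism, $\id_{GT}$ names a universal family $m_{T,F} \colon GT \circ F \tonat FT$, $\delta_T$ names $n_{T,F}$, and $m_{T,F} = F\varepsilon_T \circ n_{T,F}$. The traverse attached to a $\sigma$ is then $\trv_{T,F} = m_{T,F} \circ \sigma_T$.

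With these identifications fixed, I would transport the two coalgebra laws through the bijection, reusing the diagram chases of Theorem \ref{orge517918} but now driven only by the comonad axioms of $G$ and the universal families $m, n$ rather than by the concrete shape--contents formula. Counitality $\varepsilon_T \circ \sigma = \id$ transports to the $F = \Id$ instance $\trv_{T,\Id} = \id$, which is unitarity, using $n_{T,\Id} = \id$. Coassociativity $\delta_T \circ \sigma = G\sigma \circ \sigma$ transports to linearity $\trv_{T,FG} = F\trv_{T,G} \circ \trv_{T,F}$, the bridge being the identity $m_{T,FG} = Fm_{T,G} \circ m_{T,F} \circ \delta_T$, which follows from $n_{FG} = Fn_G \circ n_F$ and counitality exactly as in the computation of $m_{F \circ G}$ in that proof. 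Since each law on one side is equivalent to its counterpart on the other, the bijection of the first step restricts to the claimed correspondence.

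The main obstacle is the second step, and it is exactly the gap flagged in Remark \ref{remark:algebrafromlinearity}: one must prove that the \emph{given} comonad structure on $G$ is the canonical one induced by composition in $\mathbf{S}$---i.e.\ that $G$ is a comonad \emph{because} $\mathbf{S}$ is monoidal and $\Ran$ is pseudofunctorial---so that $\delta$ is genuinely named by the family $n_{T,-}$ and obeys $m_{T,FG} = Fm_{T,G}\circ m_{T,F}\circ\delta_T$. Equivalently, one must check that the representability isomorphism is a morphism of comonads from $G$ to the ``composition'' comonad on $\int_{F} \Nat((-)F, F(-))$, of which the isomorphism of Lemma \ref{org8b21ea5} is meant to be the concrete shadow. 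Once the comonad structure is constrained to this canonical one, the axiom-matching diagrams become formal.
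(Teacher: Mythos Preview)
The paper does not prove this statement: it is labelled a \emph{conjecture} and left open. Immediately after stating it the author writes ``Assuming this result holds\ldots'', and the remainder of the appendix proves a different, weaker fact (the existence of cofree traversables) precisely so that Proposition~\ref{prop:traversal2} survives without the conjecture. So there is no paper proof to compare against.

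Your outline is the natural abstraction of the concrete argument in Theorem~\ref{orge517918}, and you are right that the representability isomorphism $\Nat(T,GR)\cong\int_{F}\Nat(TF,FR)$ is the correct carrier for the correspondence. You have also correctly located the obstruction: nothing in the hypothesis ``$\int_{F}\Ran_FF(-)$ is a comonad'' forces the counit and comultiplication to be the canonical ones coming from the unit and composition in $\mathbf{S}$, and without that you cannot identify $\delta$ with the family $n_{T,-}$ or derive $m_{T,FG}=Fm_{T,G}\circ m_{T,F}\circ\delta_T$. This is exactly the gap the author flags in Remark~\ref{remark:algebrafromlinearity} (that the isomorphism of Lemma~\ref{org8b21ea5} is not known to be a comonad morphism), and your proposal does not close it---it names it. As written, then, your argument is a faithful reformulation of why the statement is plausible and of what remains to be checked, but it is not a proof, and neither is anything in the paper.
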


Assuming this result holds for the particular case where \(\mathbf{S}\) is
the category of traversable functors \(\mathbf{App}\), we can again derive the traversal as the optic
for traversables just by applying the construction of Proposition
\ref{orgb2fe82b} to that comonad and then the Lemma \ref{org8b21ea5}.

In any case, the only thing we need to prove Proposition \ref{prop:traversal2}
is the construction of cofree traversables.  We prove it here, to be able
to claim that result even after considering Remark \ref{remark:algebrafromlinearity}.
The main idea on that proof would be that the comonad defined there preserves ends.

\begin{proposition}
Let \(T \colon \Sets \to \Sets\) be a traversable functor and let \(H  \colon \Sets \to \Sets\)
be an arbitrary functor.  There exists an adjunction \(\mathbf{Trv}(T,KH) \cong [\Sets,\Sets]({\cal U}T,H)\),
where \({\cal U}\) is the forgetful functor.
\end{proposition}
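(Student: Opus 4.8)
The plan is to recognise the claimed adjunction as the standard \emph{cofree coalgebra} adjunction for the shape-contents comonad $K$ of \S\ref{orgab52623}. By Theorem \ref{orge517918}, $\mathbf{Trv}$ is the Eilenberg--Moore category of $K$-coalgebras with coalgebra morphisms, and ${\cal U}$ is the functor forgetting a coalgebra structure $\sigma\colon T \tonat KT$. I would exhibit the cofree functor $R(H)=(KH,\delta_H)$, where $\delta$ is the comultiplication of $K$; that $(KH,\delta_H)$ is a genuine coalgebra is exactly counitality and coassociativity of the comonad $K$, already established when $K$ was shown to be a comonad. The only data the argument consumes are the counit $\varepsilon$ and comultiplication $\delta$ of $K$ together with the coalgebra axioms, so it never appeals to the explicit isomorphism of Lemma \ref{org8b21ea5} and is therefore insensitive to the gap flagged in Remark \ref{remark:algebrafromlinearity}.

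The bijection itself I would build from the two canonical maps. From a coalgebra morphism $g\colon(T,\sigma)\to(KH,\delta_H)$ produce the natural transformation $\varepsilon_H\circ g\colon T\tonat H$; from a natural transformation $f\colon {\cal U}T\tonat H$ produce $Kf\circ\sigma\colon T\tonat KH$, and check it is a coalgebra morphism using naturality of $\delta$ and coassociativity of $\sigma$, so that $\delta_H\circ Kf\circ\sigma = K(Kf)\circ\delta_T\circ\sigma = K(Kf)\circ K\sigma\circ\sigma = K(Kf\circ\sigma)\circ\sigma$. That the two assignments are mutually inverse is then one line each way: $\varepsilon_H\circ Kf\circ\sigma = f\circ\varepsilon_T\circ\sigma = f$ by naturality of $\varepsilon$ and the counit law $\varepsilon_T\circ\sigma=\mathrm{id}$; and $K(\varepsilon_H\circ g)\circ\sigma = K\varepsilon_H\circ Kg\circ\sigma = K\varepsilon_H\circ\delta_H\circ g = g$ by the coalgebra-morphism condition $Kg\circ\sigma=\delta_H\circ g$ and the comonad identity $K\varepsilon\circ\delta=\mathrm{id}$. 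Naturality of the correspondence in both $T$ and $H$ follows from naturality of $\varepsilon$, $\delta$ and functoriality of $K$, which I would record but not belabour.

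The step that genuinely deserves care, and where the observation that $K$ \emph{preserves ends} enters, is confirming that $KH$ is the \emph{universal} $K$-coalgebra on $H$ rather than merely a coalgebra receiving a map from $H$, especially if one prefers to argue through the Kan-extension presentation $K \cong \int_{F\in\mathbf{App}}\Ran_F F(-)$ of Lemma \ref{org8b21ea5} instead of the bare comonad. Since limits in $[\Sets,\Sets]$ are computed pointwise and $K$ is assembled from right Kan extensions, which are themselves ends, the requirement is that $K$ send ends to ends, so that evaluating the cofree traversable commutes with the limits defining it; this is the main obstacle, though I expect it to be routine once one invokes that coproducts commute with the connected limits involved, exactly as in the proof of Lemma \ref{org8b21ea5}. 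Everything else reduces to the formal comonadic computation above.
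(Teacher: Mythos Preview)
Your second paragraph is a correct and complete proof of the cofree-coalgebra adjunction for the comonad $K$: once $\mathbf{Trv}$ is taken to be the Eilenberg--Moore category of $K$, the bijection is the textbook one, and your verification of both round-trips is clean. Your third paragraph, however, is unnecessary and a bit muddled: the mutual-inverse calculation you have already given \emph{is} the proof of universality of $(KH,\delta_H)$; nothing about end-preservation is required, and the remark in the paper about $K$ preserving ends refers to the approach via the conjecture at the start of the appendix, not to this proposition.

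The paper takes a genuinely different route. Rather than working with $K$-coalgebra morphisms, the appendix treats a morphism in $\mathbf{Trv}$ as a natural transformation compatible with the traversal structure $\trv_F$ for every applicative $F$; it equips $KH$ with the traversable structure $n_F\colon KH\circ F\to F\circ KH$ built in Theorem~\ref{orge517918}, checks that $\sigma\colon T\to KT$ is a traversable morphism in this sense, and then establishes uniqueness of the factorisation by transporting the $\trv$-compatibility square across the right-Kan-extension adjunction and specialising to $F=\mathrm{Id}$. The point of doing it this way is exactly what the appendix preamble says: to obtain cofree traversables directly from the $\trv$-based definition, so that Proposition~\ref{prop:traversal2} stands regardless of whether the identification of traversables with $K$-coalgebras in Theorem~\ref{orge517918} is on the nose. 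Your argument, by contrast, takes that identification as the definition of $\mathbf{Trv}$ and then invokes the generic Eilenberg--Moore fact; this is shorter and perfectly valid for that reading, but it does not bypass the gap of Remark~\ref{remark:algebrafromlinearity} in the way you claim, since it presupposes precisely the equivalence of the two notions of traversable morphism that the appendix is trying not to lean on.
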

\begin{proof}
We give \(KH\) traversable structure taking the morphisms \(n_F \colon KH \circ F \to F \circ KH\)
defined in Theorem \ref{orge517918}, which satisfies unitarity and linearity. 
Note that \(\trv \colon \int\nolimits_{F} TF \to FT\) determines a natural transformation \(\sigma \colon T \tonat KT\)
by Lemma \ref{org8b21ea5}.  This is a morphism of traversables \(\sigma \in \mathbf{Trv}(T, KT)\)
because the two sides of the
relevant commutative diagram have the same adjoint under the adjunction that
defines right Kan extensions, as the following diagram shows.
\begin{prooftree}
\AXC{\begin{tikzcd}[ampersand replacement=\&] TF \rar{\sigma} \& (\Ran_FFT) \circ F \rar{\varepsilon} \ar[bend left=15]{rr}{n_F} \& FT \rar{\sigma} \& F\circ (\Ran_FFT) \end{tikzcd}}
\UIC{\begin{tikzcd}[ampersand replacement=\&] T \rar{\sigma}  \& \Ran_FFT \rar{\id} \& \Ran_FFT \rar{\sigma} \& \Ran_FF(\Ran_FFT) \end{tikzcd}}
\UIC{\begin{tikzcd}[ampersand replacement=\&] TF \rar{\trv}\& FT \rar{\id} \& FT \rar{F\sigma} \& F (\Ran_FFT) \end{tikzcd}}
\end{prooftree}
Now, assume a morphism of traversables \(f \in \mathbf{Trv}(T, KH)\); that is, a natural
transformation between the underlying functors such that the following diagram
commutes for every \(F \in \mathbf{App}\).
\[\begin{tikzcd}
T \circ F \dar{f} \rar{\trv} & F \circ T \dar{Ff} \\
KH \circ F \rar{n_F} & F \circ KH
\end{tikzcd}\]
We want to show that there exists a unique natural transformation \(g \colon {\cal U}T \tonat H\) making
\(f\) factor uniquely as \(Kg \circ \sigma\).  Note that the previous diagram becomes,
under the adjunction, the following square, where \(\delta\) and \(\varepsilon\) were defined in
the comonad structure.
\[\begin{tikzcd}
T \dar{f} \rar{\sigma} & KT \dar{Kf} \\
KH \rar{\delta} & K^2H \lar[bend left=15]{K\varepsilon}
\end{tikzcd}\]
In the case where \(f = Kg \circ \sigma\), the fact that \(K\varepsilon \circ \delta = \id\)  implies \(K(\varepsilon \circ f) \circ \sigma = Kg \circ \sigma\).
Note that \(K(\varepsilon \circ f)\circ \sigma = f\). We will show that \(g = \varepsilon \circ f\) from \(K(\varepsilon \circ f) \circ \sigma = Kg \circ \sigma\),
proving uniqueness. In fact, again because of the adjunction determining right Kan extensions, we
have the following two adjunctions that must be equal for any \(F \in \mathbf{App}\).
\begin{prooftree}
\AXC{\begin{tikzcd}[ampersand replacement=\&] T \rar{\sigma} \& KT \rar[yshift=1ex]{Kg} \rar[yshift=-1ex,swap]{K(\varepsilon \circ f)} \& KH \end{tikzcd}}
\UIC{\begin{tikzcd}[ampersand replacement=\&] T\circ F \rar{\trv}  \& F\circ T \rar[yshift=1ex]{Fg} \rar[yshift=-1ex,swap]{F(\varepsilon \circ f)} \& F\circ KH \end{tikzcd}}
\end{prooftree}
In the particular case where \(F = \id\), we get \(\trv_{\id} = \id\) because unitarity and then
\(g = \varepsilon \circ f\).  This shows \(KH\) is the cofree traversal over \(H\).
\end{proof}

\bibliographystyle{alpha}
\bibliography{optics}
\end{document}